\def\@IEEEsectpunct{\ \,}
\def\paragraph{\@startsection{paragraph}{4}{\z@}{1.5ex plus 1.5ex minus 0.5ex}%
{0ex}{\normalfont\normalsize\textbf}}
\newtheorem{theorem}{Theorem}[section]
\newtheorem{lemma}[theorem]{Lemma}
\newtheorem{corollary}[theorem]{Corollary}
\newtheorem{proposition}[theorem]{Proposition}
\newtheorem{definition}[theorem]{Definition}
\newcommand{\indep}{\rotatebox[origin=c]{90}{$\models$}}
\renewcommand{\mathbf}{\boldsymbol}
\newcommand{\conv}{\circledast}
\newcommand{\mb}{\mathbf}
\newcommand{\mc}{\mathcal}
\newcommand{\bb}{\mathbb}
\newcommand{\reals}{\bb R}
\newcommand{\eps}{\varepsilon}
\newcommand{\R}{\reals}
\newcommand{\Cp}{\bb C}
\newcommand{\N}{\bb N}
\newcommand{\indicator}[1]{\mathbbm 1_{#1}}
\newcommand{ \Brac }[1]{\left\lbrace #1 \right\rbrace}
\newcommand{ \brac }[1]{\left[ #1 \right]}
\newcommand{ \paren }[1]{ \left( #1 \right) }
\newcommand{\col}{\mathrm{col}}
\DeclareMathOperator{\dist}{dist}
\DeclareMathOperator{\poly}{poly}
\DeclareMathOperator{\supp}{supp}
\DeclareMathOperator{\diag}{diag}
\DeclareMathOperator{\sign}{sign}
\newcommand{\im}{\mathrm{i}}
\newcommand{\wh}{\widehat}
\newcommand{\wt}{\widetilde}
\newcommand{\ol}{\overline}
\newcommand{\norm}[2]{\left\| #1 \right\|_{#2}}
\newcommand{\abs}[1]{\left| #1 \right|}
\newcommand{\innerprod}[2]{\left\langle #1,  #2 \right\rangle}
\newcommand{\edited}[1]{{\color{black}{#1}}}
\numberwithin{equation}{section}
\def \endprf{\hfill {\vrule height6pt width6pt depth0pt}\medskip}
\newenvironment{proof}{\noindent {\bf Proof} }{\endprf\par}
\begin{document}
\title{Convolutional Phase Retrieval via Gradient Descent}
\author{Qing Qu, Yuqian Zhang, Yonina C. Eldar, and John Wright \thanks{\edited{QQ is with Center for Data Science, New York University (Email: \href{mailto:qq213@nyu.edu}{\color{blue}{qq213@nyu.edu}}); YZ is with Department of Computer Science, Cornell University (Email: \href{mailto:yz2557@cornell.edu}{\color{blue}{yz2557@cornell.edu}}); YE is with the Weizmann Faculty of Mathematics and Computer Science, Weizmann Institute of Science, Rehovot, Israel (Email: \href{mailto:yonina.eldar@weizmann.ac.il}{\color{blue}{yonina.eldar@weizmann.ac.il}}); JW is with Department of Electrical Engineering, Department of Applied Physics and Applied Mathematics, and Data Science Institute at Columbia University (Email: \href{mailto:jw2966@columbia.edu}{\color{blue}{jw2966@columbia.edu}}). Most of the work has been conducted when QQ and YZ were with Department of Electrical Engineering and Data Science Institute at Columbia University.} An extended abstract of the current work has appeared in NeurIPS'17~ \cite{qu2017convolutional}.}
}
\maketitle

\begin{abstract}
We study the convolutional phase retrieval problem, of recovering an unknown signal $\mathbf x \in \mathbb C^n $ from $m$ measurements consisting of the magnitude of its cyclic convolution with a given kernel $\mathbf a \in \mathbb C^m $. This model is motivated by applications such as channel estimation, optics, and underwater acoustic communication, where the signal of interest is acted on by a given channel/filter, and phase information is difficult or impossible to acquire. We show that when $\mathbf a$ is random and the number of observations $m$ is sufficiently large, with high probability $\mathbf x$ can be efficiently recovered up to a global phase shift using a combination of spectral initialization and generalized gradient descent. The main challenge is coping with dependencies in the measurement operator. We overcome this challenge by using ideas from decoupling theory, suprema of chaos processes and the restricted isometry property of random circulant matrices, and recent analysis of alternating minimization methods.
\end{abstract}


\section{Introduction}

We consider the problem of \emph{convolutional phase retrieval} where our goal is to recover an unknown signal $\mb x \in \bb C^n$ from the magnitude of its cyclic convolution with a given filter $\mb a \in \bb C^m$. Specifically, the measurements take the form 
\begin{align}\label{eqn:pr-conv}
   \mb y \;=\; \abs{ \mb a \conv \mb x },	
\end{align}
where $\conv$ is cyclic convolution modulo $m$ and $\abs{ \cdot }$ denotes entrywise absolute value. This problem can be rewritten in the common matrix-vector form\footnote{The convolution can be written in matrix-vector form as $\mb a \conv \mb x = \mb C_{\mb a} \iota_{n \rightarrow m } \mb x$, where $\mb C_{\mb a}\in \bb R^{m \times m}$ denotes the circulant matrix generated by $\mb a$ and $\iota_{n \rightarrow m }$ is a zero padding operator. In other words, $\mb a \conv \mb x  = \mb A \mb x $ with $\mb A \in \bb C^{m \times n}$ being a matrix formed by the first $n$ columns of $\mb C_{\mb a}$.}
\begin{align}\label{eqn:pr-conv-1}
  \text{find}\quad \mb z, \quad \text{s.t.}\quad  \mb y = \abs{\mb A \mb z}.
\end{align}
This problem is motivated by applications in areas such as channel estimation \cite{Walk:asilomar2015}, noncoherent optical communication \cite{gagliardi1976optical}, and underwater acoustic communication \cite{stojanovic1994phase}. For example, in millimeter-wave (mm-wave) wireless communications for 5G networks \cite{shahmansoori20155g}, one important problem is to estimate the angle of arrival (AoA) of a signal from measurements taken by the convolution of an antenna pattern and AoAs. As the phase measurements are often very noisy, unreliable, and expensive to acquire, it may be preferred to only take measurements of signal magnitude in which case the phase information is lost. 


\edited{Most known results on the exact solution of phase retrieval problems  \cite{candes2013phaselift,soltanolkotabi2014algorithms,chen2015solving,wang2016solving,waldspurger2015phase,waldspurger2016phase} pertain to \emph{generic random matrices}, where the entries of $\mb A$ are independent subgaussian random variables. We term problem \eqref{eqn:pr-conv-1} with a generic sensing matrix $\mb A$ as \emph{generalized phase retrieval}}\footnote{\edited{We use this term to make distinctions from Fourier phase retrieval, where the matrix $\mb A$ is an oversampled DFT matrix. Here, generalized phase retrieval refers to the problem with any generic measurement other than Fourier.} }. However, in practice it is difficult to implement purely random measurement matrices. In most applications, the measurement is much more structured -- the convolutional model studied here is one such structured measurement operator. Moreover, structured measurements often admit more efficient numerical methods: by using the \emph{fast Fourier transform} for matrix-vector products, the benign structure of the convolutional model \eqref{eqn:pr-conv} allows to design methods with $\mc O(m)$ memory and $\mc O(m \log m)$ computation cost per iteration. In contrast, for generic measurements, the cost is around $\mc O(mn)$.

In this work, we study the convolutional phase retrieval problem \eqref{eqn:pr-conv} under the assumption that the kernel $\mb a = \brac{a_1,\cdots,a_m}^\top $ is randomly generated from an i.i.d. \emph{standard complex Gaussian distribution} $\mc {CN}\paren{\mb 0,\mb I}$, i.e.,
\begin{align}\label{eqn:complex-Gaussian}
	\edited{\mb a \sim \mc {CN}(\mb 0,\mb I), \quad \text{if}\quad  \mb a \;=\; \mb u + \im \mb v, \quad \mb u, \mb v \sim_{\mathrm{i.i.d.}} \mc N \left(\mb 0, \tfrac{1}{2} \mb I \right).}
\end{align}
Compared to the generalized phase retrieval problem, the random convolution model \eqref{eqn:pr-conv} we study here is far more structured: it is parameterized by only $\mc O(m)$ independent complex normal random variables, whereas the generic model involves $\mc O(mn)$ random variables. This extra structure poses significant challenges for analysis: the rows and columns of the sensing matrix $\mb A$ are probabilistically dependent, so that classical probability tools (based on concentration of functions of independent random vectors) do not apply. 

We propose and analyze a local gradient descent type method, minimizing a weighted, \emph{nonconvex} and \emph{nonsmooth} objective 
\begin{align}\label{eqn:pr-weighted}
   \min_{\mb z \in \bb C^n} f(\mb z) = \frac{1}{2m}  \norm{ \mb b^{1/2} \odot \paren{ \mb y - \abs{\mb A\mb z} }  }{}^2,
\end{align}
where $\odot$ denotes the Hadamard product. Here, $\mb b \in \bb R_{++}^m$ is a weighting vector, which is introduced mainly for analysis purposes. The choice of $\mb b$ is discussed in \Cref{sec:analysis}. Our result can be informally summarized as follows.

\begin{framed}
\centering 
    With $m \geq   \Omega \paren{\frac{ \norm{\mb C_{\mb x}}{}^2}{ \norm{\mb x}{}^2  } n \poly \log n }$ samples, generalized gradient descent starting from a data-driven initialization converges \emph{linearly} to $\mb x$ up to a global phase.
\end{framed}


\vspace{.1in}

Here, $\mb C_{\mb x}\in \bb C^{m\times m}$ denotes the circulant matrix corresponding to cyclic convolution with a length $m$ zero padding of $\mb x$, and $\poly \log n$ denotes a polynomial in $\log n$. Compared to the results of \edited{generalized phase retrieval with i.i.d. Gaussian measurement}, the sample complexity $m$ here has extra dependency on $\norm{\mb C_{\mb x}}{}/\norm{\mb x}{}$. The operator norm $\norm{\mb C_{\mb x}}{}$ is inhomogeneous over $\bb {CS}^{n-1}$: for a typical\footnote{e.g., $\mb x$ is drawn uniformly at random from $\bb {CS}^{n-1}$.} $\mb x \in \bb {CS}^{n-1}$, $\norm{\mb C_{\mb x}}{}$ is of the order $\mc O(\log n)$ and the sample complexity matches that of the generalized phase retrieval up to $\log$ factors; the ``bad'' case is when $\mb x$ is \emph{sparse} in the Fourier domain: $\norm{\mb C_{\mb x}}{} \sim \mc O(\sqrt{n})$ and $m$ can be as large as $\mc O(n^2 \poly \log n)$. 

Our proof is based on ideas from \emph{decoupling theory} \cite{de1999decoupling}, the \emph{suprema of chaos processes} and \emph{restricted isometry property} of random circulant matrices \cite{rauhut2010compressive,krahmer2014suprema}, and is also inspired by a new iterative analysis of alternating minimization methods \cite{waldspurger2016phase}. Our analysis draws connections between the convergence properties of gradient descent and the classical alternating direction method. This allows us to avoid the need to argue uniform concentration of high-degree polynomials in the structured random matrix $\mb A$, as would be required by a straightforward translation of existing analysis to this new setting. Instead, we control the bulk effect of phase errors uniformly in a neighborhood around the ground truth. This requires us to develop new decoupling and concentration tools for controlling nonlinear phase functions of circulant random matrices, which could be potentially useful for analyzing other random circulant convolution problems, such as sparse blind deconvolution \cite{zhang2017global} and convolutional dictionary learning \cite{heide2015fast}.

\subsection{Comparison with literature}

\paragraph{Prior arts on phase retrieval.}~The challenge of developing efficient, guaranteed methods for phase retrieval has attracted substantial interest over the past several decades \cite{shechtman2015phase,jaganathan2015phase}. The problem is motivated by applications such as X-ray crystallography \cite{millane1990phase,robert1993phase}, microscopy \cite{jianwei2002high}, astronomy \cite{fienup1987phase}, diffraction and array imaging \cite{bunk2007diffractive,anwei2011array}, optics \cite{walther1963question}, and more. The most classical method is the \emph{error reduction algorithm} derived by Gerchberg and Saxton \cite{gerchberg1972practical}, also known as the alternating direction method. This approach has been further improved by the \emph{hybrid input-output} (HIO) algorithm \cite{fienup1982phase}. For oversampled Fourier measurements, it often works surprisingly well in practice, while its global convergence properties still largely remains as a mystery \cite{pauwels2017fienup}. 

For the generalized phase retrieval where the sensing matrix $\mb A$ is \edited{i.i.d. Gaussian}, the problem is better-studied: in many cases, when the number of measurements is large enough, the target solution can be exactly recovered by using either convex or nonconvex optimization methods. The first theoretical guarantees for global recovery of generalized phase retrieval with i.i.d. Gaussian measurement are based on convex optimization -- the so-called \emph{Phaselift/Phasemax} methods \cite{candes2013phaselift,candes2013phase-matrix,waldspurger2015phase}. These methods lift the problem to a higher dimension and solve a semi-definite programming (SDP) problem. However, the high computational cost of SDP limits their practicality. Quite recently, \cite{bahmani2016phase,goldstein2016phasemax,hand2016elementary} reveal that the problem can also be solved in the natural parameter space via linear programming.

\edited{Recently, nonconvex approaches have led to new computational guarantees for global optimizations of generalized phase retrieval.} \edited{The first result of this type is due to Netrapalli et al. \cite{netrapalli2013phase}, showing} that the alternating minimization method provably converges to the truth when initialized using a spectral method and provided with fresh samples at each iteration. Later on, Cand\`{e}s et al. \cite{candes2015phase} showed that with the same initialization, gradient descent for the nonconvex least squares objective, 
\begin{align}\label{eqn:pr-smooth}
	\min_{\mb z\in \bb C^n} f_1(\mb z) = \frac{1}{2m} \norm{ \mb y^2  - \abs{\mb A\mb z}^2 }{}^2,
\end{align}
provably recovers the ground truth, with near-optimal sample complexity $m \geq \Omega(n\log n)$. The subsequent work \cite{chen2015solving,zhang2016reshaped,wang2016solving} further reduced the sample complexity to $m \geq \Omega(n)$ by using different nonconvex objectives and truncation techniques. In particular, recent work by \cite{zhang2016reshaped,wang2016solving} studied a nonsmooth objective that is similar to ours \eqref{eqn:pr-weighted} with weighting $\mb b = \mb 1$. Compared to the SDP-based techniques, these methods are more scalable and closer to the approaches used in practice. Moreover, Sun et.\ al.\ \cite{sun2016geometric} reveal that the nonconvex objective \eqref{eqn:pr-smooth} actually has a benign \emph{global geometry}: with high probability, it has no bad critical points with $m \geq \Omega(n \log^3n)$ samples\footnote{\cite{soltanolkotabi2017} further tightened the sample complexity to $m \geq \Omega(n \log n)$ by using more advanced probability tools.}. Such a result enables initialization-free nonconvex recovery\footnote{For convolutional phase retrieval, it would be nicer to characterize the global geometry of the problem as in \cite{ge2015escaping,sun2015nonconvex,sun2016geometric,sun2016complete_a,sun2016complete_b}. However, the inhomogeneity of $\norm{\mb C_{\mb x}}{}$ over $\bb {CS}^{n-1}$ causes tremendous difficulties for concentration with $m \geq \Omega(n\poly \log n)$ samples.} \cite{chen2018gradient,gilboa2018efficient}.

\paragraph{Structured random measurements.}~The study of structured random measurements in signal processing has quite a long history \cite{krahmer2014structured}. For compressed sensing \cite{candes2006robust}, the work \cite{candes2006stable,candes2006near,eldar2012compressed} studied random Fourier measurements, and later \cite{rauhut2010compressive,krahmer2014suprema} proved similar results for partial random convolution measurements. However, the study of structured random measurements for phase retrieval is still quite limited. In particular, \cite{david2013partial} and \cite{candes2015diffraction} studied t-designs and coded diffraction patterns (i.e., random masked Fourier measurements) using semidefinite programming. Recent work studied nonconvex optimization using coded diffraction patterns \cite{candes2015phase} and STFT measurements \cite{bendory2016non}, both of which minimize a nonconvex objective similar to \eqref{eqn:pr-smooth}. These measurement models are motivated by different applications. For instance, coded diffraction is designed for imaging applications such as X-ray diffraction imaging, STFT can be applied to frequency resolved optical gating \cite{tsang1996frequency} and some speech processing tasks \cite{lim1979enhancement}. Both of the results show iterative contraction in a region that is at most $\mc O(1/\sqrt{n})$-close to the optimum. Unfortunately, for both results either the radius of the contraction region is not large enough for initialization to reach, or they require extra artificial technique such as resampling the data. In comparison, the contraction region we show for the random convolutional model is larger $\mc O(1/\mathrm{polylog}(n))$, which is achievable in the initialization stage via the spectral method. For a more detailed review of this subject, we refer the readers to Section 4 of \cite{krahmer2014structured}.

The convolutional measurement can also be reviewed as a single masked coded diffraction patterns \cite{candes2015diffraction,candes2015phase}, since $\mb a \conv \mb x = F^{-1} (  \wh{\mb a} \odot \wh{\mb x} )$, where $\wh{\mb a}$ is the Fourier transform of $\mb a$ and $\wh{\mb x}$ is the oversampled Fourier transform of $\mb x$. The sample complexity for coded diffraction patterns $m \geq \Omega(n \log^4n)$ in \cite{candes2015phase} suggests that the dependence of our sample complexity on $\norm{\mb C_{\mb x}}{}$ for convolutional phase retrieval might not be necessary and can be improved in the future. On the other hand, our results suggest that the contraction region is larger than  $\mc O(1/\sqrt{n})$ for coded diffraction patterns, and resampling for initialization might not be necessary.

\subsection{Notations, Wirtinger Calculus, and Organizations}

\paragraph{Basic notations.}~We use $\mb C_{\mb a} \in \bb C^{m \times m}$ to denote a circulant matrix generated from $\mb a$, i.e.,
\begin{align}\label{eqn:matrix-C-a}
	\mb C_{\mb a} = \begin{bmatrix}
		a_1 & a_m & \cdots & a_3 & a_2 \\
		a_2 & a_1 & a_m & & a_3 \\
		\vdots & a_2 & a_1 & \ddots  & \vdots \\
		a_{m-1} & & \ddots & \ddots & a_m \\
		a_m & a_{m-1} & \cdots & a_2 & a_1
	\end{bmatrix} = \begin{bmatrix}
   s_0[\mb a] & s_1[\mb a]& \cdots & s_{m-1} [\mb a]
 \end{bmatrix},
\end{align}
where $s_\ell [\cdot]\;(0\leq \ell \leq m-1)$ denotes a circulant shift by $\ell$ samples. We use $ \bb {CS}^{n-1} $ to represent the unit complex sphere in $\bb C^n$, and let $\mc {CN}(\mb 0,\mb I)$ be the standard complex Gaussian distribution as introduced in \eqref{eqn:complex-Gaussian}. We use $(\cdot)^\top$ and $(\cdot)^*$ to denote the real and Hermitian transpose of a vector or matrix, respectively, and use $\Re(\cdot)$ and $\Im(\cdot)$ to denote the real and imaginary parts of a complex variable, respectively. We use $g_1\indep g_2$ to denote the independence of two random variables $g_1,\; g_2$. Given a matrix $\mb X\in \bb C^{m\times n}$, $\mathrm{col}(\mb X)$ and $\mathrm{row}(\mb X)$ are its column and row space. For any vector $\mb v \in \bb C^n$, we define
\begin{align*}
	\mb P_{\mb v} = \frac{\mb v \mb v^*}{ \norm{\mb v}{}^2 },\quad \mb P_{\mb v^\perp} = \mb I - \frac{ \mb v \mb v^* }{ \norm{\mb v}{}^2 }
\end{align*}
to be the projection onto the span of $\mb v$ and its orthogonal complement, respectively. We use $\norm{\cdot}{F}$ and $\norm{\cdot}{}$ to denote the Frobenius norm and spectral norm of a matrix, respectively. For a random variable $X$, its $L^p$ norm is defined as $\norm{X}{L^p} = \bb E\brac{ \abs{X}^p  }^{1/p}$
for any positive $p\geq 1$. For a smooth function $f\in \mc C^1$, its $L^\infty$ norm is defined as $ \norm{ f }{ L^\infty } = \sup_{t \in \text{dom}(f)} \abs{f(t)}$. For an arbitrary set $\Omega$, we use $\abs{\Omega}$ to denote the cardinality of $\Omega$, and use $\supp(\Omega)$ to denote the support set of $\Omega$. If $\mb 1_\Omega$ is the indicator function of the set $\Omega$, then
\begin{align*}
	[\mb 1_{\Omega}]_j = \begin{cases}
	1 & \text{if } j \in \Omega,\\
	0 & \text{otherwise,}
\end{cases}
\end{align*}
where $[\cdot]_j$ is the $j$th coordinate of a given vector. If $ \abs{\Omega}= \ell$, we use $\mb R_{\Omega}: \bb R^m \mapsto \bb R^\ell$ to denote a mapping that maps a vector into its coordinates restricted to the set $\Omega$. Let $\mb F_n \in \bb C^{n \times n}$ denote a unnormalized $n\times n$ Fourier matrix with $\norm{\mb F_n}{} = \sqrt{n}$, and let $\mb F_n^m \in \bb C^{m \times n} \;(m\geq n)$ be an oversampled Fourier matrix. \edited{For all theorems and proofs, we use $c_i$ and $C_i$ $(i=1,2,\cdots)$ to denote positive numerical constants.}

\paragraph{Wirtinger calculus.}~Consider a real-valued function $g(\mb z): \bb C^n \mapsto \bb R$. The function is not holomorphic, so that it is not complex differentiable unless it is constant \cite{ken2009complex}. However, if one identifies $\Cp^n$ with $\R^{2n}$ and treats $g$ as a function in the real domain, $g$ can be differentiable in the real sense. Doing calculus for $g$ directly in the real domain tends to produce cumbersome expressions. A more elegant way is adopting the Wirtinger calculus \cite{wirtinger1927formalen}, which can be considered as a neat way of organizing the real partial derivatives \edited{(see also \cite{soltanolkotabi2014algorithms} and Section 1 of \cite{sun2016geometric})}. The Wirtinger derivatives can be defined {\em formally} as 
\begin{align*}
	\frac{\partial g}{\partial \mb z} 
	\;& \doteq\; \left. \frac{\partial g(\mb z, \ol{\mb z}) }{\partial \mb z}\right |_{\ol{\mb z} \text{ constant} } = 
	\left.
	\brac{
	\frac{\partial g(\mb z, \ol{\mb z}) }{\partial z_1}, \dots, \frac{\partial g(\mb z, \ol{\mb z}) }{\partial z_n}
	}
	\right|_{\ol{\mb z} \text{ constant} }\\
	\frac{\partial g}{\partial \ol{\mb z}} 
	\;& \doteq\; \left. \frac{\partial g(\mb z, \ol{\mb z}) }{\partial \ol{\mb z}}\right |_{\mb z \text{ constant} } = 
	\left.
	\brac{
	\frac{\partial g(\mb z, \ol{\mb z}) }{\partial \ol{z_1}}, \dots, \frac{\partial g(\mb z, \ol{\mb z}) }{\partial \ol{z_n}}
	}
	\right|_{\mb z \text{ constant} }. 
\end{align*}
\edited{Basically it says that when evaluating $\partial g/\partial \mb z$, one just writes $\partial g/\partial \mb z$ in the pair of $(\mb z,\ol{\mb z})$, and conducts the calculus by treating $\ol{\mb z}$ as if it was a constant. We compute $\partial g/\partial \ol{\mb z}$ in a similar fashion.} To evaluate the individual partial derivatives, such as $\frac{\partial g(\mb z, \ol{\mb z}) }{\partial z_i}$, all the usual rules of calculus apply. \edited{For more details on Wirtinger calculus, we refer interested readers to \cite{ken2009complex}.} 

\paragraph{Organization.}~The rest of the paper is organized as follows. In \Cref{sec:alg}, we introduce the basic formulation of the problem and the proposed algorithm. In \Cref{sec:analysis}, we present the main results and \edited{a sketch of the proof}; detailed analysis is postponed to \Cref{sec:proofs}. In \Cref{sec:exp}, we corroborate our analysis with numerical experiments. We discuss the potential impacts of our work in \Cref{sec:discuss}. Finally, all the basic probability tools that are used in this paper are described in the appendices.

\section{Nonconvex Optimization via Gradient Descent}\label{sec:alg}

In this work, we develop an approach to convolutional phase retrieval based on local nonconvex optimization. Our proposed algorithm has two components: (1) a careful data-driven initialization using a spectral method; (2) local refinement by gradient descent. We introduce the two steps below. 
\subsection{Minimization of a nonconvex and nonsmooth objective}
\edited{We consider minimizing a weighted \emph{nonconvex} and \emph{nonsmooth} objective introduced in \eqref{eqn:pr-weighted}.} The adoption of the positive weights $\mb b$ facilitates our analysis, by enabling us to compare certain functions of the dependent random matrix $\mb A$ to functions involving more independent random variables. We will substantiate this claim in the next section. \edited{As aforementioned, we consider the \emph{generalized} Wirtinger gradient of \eqref{eqn:pr-weighted},
\begin{align*}
  	\frac{\partial}{ \partial \mb z} f(\mb z) \;=\; \frac{1}{m} \mb A^*\diag\paren{  \mb b } \brac{ \mb A \mb z - \mb y \odot \exp\paren{ \im \phi(\mb A\mb z) }  }.
\end{align*}
Here, because of the nonsmoothness of \eqref{eqn:pr-weighted}, $f(\cdot)$ is not differentiable everywhere even in the real sense. To deal with this issue, we specify 
\begin{align*}
\exp\paren{ \im \phi(u) } \; \doteq \; \begin{cases}
 u/\abs{u} & \text{if} \abs{u}\not = 0, \\
 1 & \text{otherwise},	
 \end{cases}
\end{align*}
for any complex number $u \in \bb C$ and $\phi(u)\in [0,2\pi)$.} Starting from some initialization $\mb z^{(0)}$, we minimize the objective \eqref{eqn:pr-weighted} by generalized gradient descent
\begin{align}\label{eqn:grad-step-0}
   \mb z^{(r+1)} \;=\; \mb z^{(r)} - \tau \frac{\partial}{ \partial \mb z} f(\mb z^{(r)} ),
\end{align}
where $\tau>0$ is the stepsize. Indeed, $\frac{\partial}{ \partial \mb z} f(\mb z)$ can be interpreted as the subgradient of $f(\mb z)$ in the real case; this method can be seen as a variant of \emph{amplitude flow} \cite{wang2016solving}. 

\subsection{Initialization via spectral method}

\begin{algorithm}
\caption{Spectral Initialization}\label{alg:init}
\begin{algorithmic}[1]
\renewcommand{\algorithmicrequire}{\textbf{Input:}}
\renewcommand{\algorithmicensure}{\textbf{Output:}}
\Require~\
Observations $\Brac{y_k}_{k=1}^m$.
\Ensure~~\
The initial guess $\mb z^{(0)}$.

\State Estimate the norm of $\mb x$ by
\begin{align*}
	\lambda = \sqrt{\frac{1}{m} \sum_{k=1}^m y_k^2}
\end{align*}
\State Compute the leading eigenvector $\wt{\mb z}^{(0)} \in \bb {CS}^{n-1}$ of the matrix,
\begin{align*}
	\mb Y = \frac{1}{m}\sum_{k=1}^m y_k^2 \mb a_k \mb a_k^* = \frac{1}{m} \mb A^* \diag\paren{ \mb y^2} \mb A,
\end{align*}
\State Set $\mb z^{(0)} = \lambda \wt{\mb z}^{(0)}$.
\end{algorithmic}
\end{algorithm}

Similar to \cite{netrapalli2013phase,soltanolkotabi2014algorithms}, we compute the initialization $\mb z^{(0)}$ via a spectral method, detailed in Algorithm \ref{alg:init}. More specifically, $\mb z^{(0)}$ is a scaled version of the leading eigenvector of the following matrix
\begin{align}\label{eqn:Y-matrix}
	\mb Y \; =\; \frac{1}{m}\sum_{k=1}^m y_k^2 \mb a_k \mb a_k^* \;=\; \frac{1}{m} \mb A^* \diag\paren{ \mb y^2} \mb A,
\end{align}
which is constructed from the knowledge of the sensing vectors and observations. The leading eigenvector of $\mb Y$ can be efficiently computed via the power method. Note that $\bb E\brac{\mb Y} = \norm{\mb x}{}^2 \mb I + \mb x \mb x^*$, so the leading eigenvector of $\bb E\brac{\mb Y}$ is proportional to the target solution $\mb x$. Under the random convolutional model of $\mb A$, by using probability tools from \cite{krahmer2014structured}, we show that $\mb v^* \mb Y \mb v$ concentrates to its expectation $\mb v^* \bb E\brac{\mb Y}\mb v$ for all $\mb v \in \bb {CS}^{n-1}$ whenever $m\geq \Omega( n \poly\log n)$, ensuring that the initialization $\mb z^{(0)}$ is close to the optimal set $\mc X $. It should be noted that several variants of this initialization approach in \Cref{alg:init} have been introduced in the literature. They improve upon the $\log$ factors of sample complexity for generalized phase retrieval with i.i.d.\ measurements. Those methods include the truncated spectral method \cite{chen2015solving}, null initialization \cite{chen2015phase} and orthogonality-promoting initialization \cite{wang2016solving}. For the simplicity of analysis, here we only consider Algorithm \ref{alg:init} for the convolutional model.

\section{Main Result and Sketch of Analysis}\label{sec:analysis}
In this section, we introduce our main theoretical result, and sketch the basic ideas behind the analysis. \edited{Without loss of generality, we assume the ground truth signal to be $\mb x\in \bb {CS}^{n-1}$. Because the problem can only be solved up to a global phase shift, we define the optimal solution set as $\mc X = \Brac{ \mb x e^{\im \phi } \mid \phi \in [0,2\pi ) }$, and correspondingly define 
\begin{align*}
	\dist(\mb z,\mc X) \doteq \inf_{\phi \in [0,2\pi)} \norm{\mb z - \mb x e^{\im \phi} }{},
\end{align*}
which measures the distance from a point $\mb z \in \bb C^{n}$ to the optimal set $\mc X$.
}
\subsection{Main Result}
\edited{Suppose the weighting vector $\mb b = \zeta_{\sigma^2}(\mb y)$ in \eqref{eqn:pr-weighted}, where
    \begin{align}\label{eqn:weighting-b}
       	\zeta_{\sigma^2}(t) = 1 - 2\pi \sigma^2 \xi_{\sigma^2}(t),\qquad \xi_{\sigma^2}(t) = \frac{1}{2\pi \sigma^2} \exp\paren{ - \frac{ \abs{t}^2}{2\sigma^2} },
    \end{align}
    with $\sigma^2>1/2$. Our main theoretical result shows that with high probability, the generalized gradient descent \eqref{eqn:grad-step-0} with spectral initialization converges \emph{linearly} to the optimal set $\mc X$.}
\begin{theorem}[Main Result]\label{thm:main}
	If $m \geq C_0 n \log^{31} n $, then \Cref{alg:init} produces an initialization $\mb z^{(0)}$ that
	\begin{align*}
	\dist\paren{\mb z^{(0)},\mc X}  \;\leq\; c_0\log^{-6} n  \norm{\mb x}{},
	\end{align*}
    with probability at least $ 1 - c_1 m^{-c_2} $. Starting from $\mb z^{(0)}$, with $\sigma^2 = 0.51$ and stepsize $\tau = 2.02$, whenever $m \geq  C_1\frac{\norm{\mb C_{\mb x}}{}^2}{\norm{\mb x}{}^2} \max \Brac{  \log^{17} n, n \log^4 n } $, for all iterates $\mb z^{(r)}$ ($r\geq 1$) in \eqref{eqn:grad-step-0}, we have
    \begin{align}\label{eqn:contraction}
       \dist\paren{\mb z^{(r)},\mc X} \;\leq \; (1- \varrho )^r \dist\paren{\mb z^{(0)}	,\mc X},
    \end{align}
    with probability at least $1 - c_3 m^{-c_4}$ for some numerical constant $\varrho \in (0,1)$.
\end{theorem}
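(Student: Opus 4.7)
My plan splits the theorem into its two natural halves: analyzing the spectral initialization $\mb z^{(0)}$, and proving a one-step contraction $\dist(\mb z^{(r+1)},\mc X)\le(1-\varrho)\dist(\mb z^{(r)},\mc X)$ that, by induction, yields the linear-rate bound \eqref{eqn:contraction}. Both parts hinge on concentration for quadratic/quartic forms in the \emph{dependent} rows of $\mb A$; the classical tools that would apply if $\mb A$ were i.i.d.\ Gaussian must be replaced by decoupling together with the suprema-of-chaos and circulant-RIP machinery of Rauhut and Krahmer--Mendelson--Rauhut.

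\textbf{Part I: the initialization.} A direct fourth-moment computation for a single row (which is marginally $\mc{CN}(\mb 0,\mb I)$) gives $\E[\mb Y] = \norm{\mb x}{}^2 \mb I + \mb x \mb x^*$, whose top eigenvector is $\mb x/\norm{\mb x}{}$. What I actually need is the operator-norm deviation $\norm{\mb Y - \E[\mb Y]}{}\le c\log^{-6} n$. I would expand $\mb v^*(\mb Y - \E[\mb Y])\mb v$ as a mean-zero quartic polynomial in the Gaussian entries of $\mb a$, decouple the two ``copies'' of $\mb a$ appearing in the quartic via a de la Pe\~na--Gin\'e inequality, and then apply a suprema-of-chaos bound uniformly over $\mb v\in\bb{CS}^{n-1}$; this is the source of the $\log^{31}n$ factor in the sample complexity. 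Davis--Kahan/Wedin converts the operator-norm bound into an angular bound on the unit-norm eigenvector $\wt{\mb z}^{(0)}$, while a Hanson--Wright-style estimate on $\lambda^2 = \tfrac1m\sum_k y_k^2$ (a quadratic form in $\mb a$) controls $|\lambda - \norm{\mb x}{}|$. Combining these gives the claimed $\dist(\mb z^{(0)},\mc X)\lesssim\log^{-6}n\,\norm{\mb x}{}$.

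\textbf{Part II: the contraction.} Fix an iterate $\mb z$, choose $\phi^\star=\arg\min_\phi\norm{\mb z-\mb x e^{\im\phi}}{}$, absorb the phase into $\mb x$, and set $\mb h=\mb z-\mb x$. The key identity is
\begin{align*}
\mb A\mb z - \mb y\odot e^{\im\phi(\mb A\mb z)} \;=\; \mb A\mb h \;-\; \mb A\mb x\odot\paren{e^{\im\Delta\phi}-1},
\end{align*}
where $\Delta\phi=\phi(\mb A\mb z)-\phi(\mb A\mb x)$. Writing $\mb M=\tfrac1m\mb A^*\diag(\mb b)\mb A$, the gradient step becomes $\mb h^+ = (\mb I-\tau\mb M)\mb h + \tau\mb r$ with
\begin{align*}
\mb r \;=\; \tfrac{1}{m}\mb A^*\diag(\mb b)\brac{\mb A\mb x\odot\paren{e^{\im\Delta\phi}-1}}.
\end{align*}
The specific choice $\mb b=\zeta_{\sigma^2}(\mb y)$ with $\sigma^2=0.51$ is engineered so that the one-dimensional complex Gaussian integral $\E[b_k\,\mb a_k\mb a_k^*]$ equals a multiple of $\mb I$ plus a rank-one correction along $\mb x$ with the right coefficients to make $\norm{\mb I-\tau\E[\mb M]}{}$ strictly below one on $\{\mb x\}^\perp$ for $\tau=2.02$. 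With this in place, a contraction with rate $1-\varrho$ will follow from: (i) a uniform bound $\norm{(\mb I-\tau\mb M)\mb h}{}\le(1-\varrho_1)\norm{\mb h}{}$, and (ii) a uniform bound $\norm{\mb r}{}\le\varrho_2\norm{\mb h}{}$ with $\varrho_2<\varrho_1$. Item~(i) I would obtain by fusing the RIP of random circulant matrices with a suprema-of-chaos bound for the reweighted form $\tfrac1m\mb A^*\diag(\mb b)\mb A$; this is where the factor $\norm{\mb C_{\mb x}}{}^2/\norm{\mb x}{}^2$ in the sample complexity enters, because the relevant chaos-process $\gamma_2$-functional scales with the Fourier spread of $\mb x$.

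\textbf{The main obstacle} is item~(ii): controlling a structured, nonsmooth, nonlinear function of $\mb A$ uniformly over a neighborhood of $\mb x$. The pointwise inequality $|e^{\im\Delta\phi_k}-1|\lesssim|\mb a_k^*\mb h|/|\mb a_k^*\mb x|$ makes $\mb r$ look formally linear in $\mb h$, but the ratio couples the rows of $\mb A$ in a way that blocks any direct row-wise independence argument, and a naive $\epsilon$-net on the contraction ball would demand concentration of a high-degree polynomial that the circulant structure does not supply. My plan is to follow the iterative-analysis viewpoint highlighted in the introduction (drawing on Waldspurger's analysis of alternating minimization), decompose $\mb r$ into a deterministic linear-in-$\mb h$ expectation that is absorbed into $\mb M$ and a mean-zero residual, apply decoupling (de la Pe\~na--Montgomery--Smith) to separate the ``indicator'' appearance of $\mb a$ inside $\Delta\phi$ from the outer quadratic $\mb A^*\diag(\cdot)\mb A$, and then invoke chaos-process bounds on the decoupled process. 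A fine net on the contraction ball, at scale $1/\poly(m)$, combined with Lipschitz-in-$\mb h$ control of $e^{\im\Delta\phi}$ away from $\{\mb a_k^*\mb x=0\}$, upgrades the pointwise bound to the required uniform bound. Once items~(i) and~(ii) are in hand for every $\mb z$ with $\dist(\mb z,\mc X)\le c\log^{-6}n\,\norm{\mb x}{}$, a direct induction from the $\mb z^{(0)}$ guaranteed by Part~I closes the proof.
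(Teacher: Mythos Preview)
Your Part I is essentially the paper's approach: decouple the quartic $\mb Y$, apply a suprema-of-chaos bound to concentrate $\mb Y$ around $\E[\mb Y]=\norm{\mb x}{}^2\mb I+\mb x\mb x^*$, then use an eigenvector perturbation argument together with concentration of $\lambda^2$. That part is fine.

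Part II has two genuine gaps.

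\emph{First, the (i)+(ii) decomposition via the triangle inequality does not yield a contraction.} With $\mb b=\zeta_{\sigma^2}(\mb y)$ and $\tau=2\sigma^2+1=2.02$, one computes $\E[\tau\mb M]=\mb P_{\mb x^\perp}+\tfrac{1+4\sigma^2}{1+2\sigma^2}\mb x\mb x^*$, so indeed $\norm{\mb I-\tau\mb M}{}\approx\tfrac{2\sigma^2}{1+2\sigma^2}\approx0.505$ and your $\varrho_1\approx0.495$. But the residual you call $\tau\mb r$ is \emph{not} small enough to add to this: the sharpest bound the paper's machinery produces on its $\mb x^\perp$-component is $\norm{\mb P_{\mb x^\perp}\tau\mb r}{}\le\sqrt{(1+\Delta_\infty(\eps))/2}\,\norm{\mb h}{}\approx0.84\,\norm{\mb h}{}$ (this is exactly Lemma~\ref{lem:bound-2}, with $\Delta_\infty(\eps)\le0.404$), so the triangle inequality gives $0.505+0.84>1$. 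What the paper does instead is decompose orthogonally, $\norm{\mb d}{}^2=\norm{\mb P_{\mb x^\perp}\mb d}{}^2+\norm{\mb P_{\mb x}\mb d}{}^2$, and exploit a \emph{complementary} structure: the $(\mb I-\tau\mb M)$ contribution is $O(\delta)$ on $\mb x^\perp$ (since $\mb P_{\mb x^\perp}\E[\tau\mb M]=\mb P_{\mb x^\perp}$) but $\approx0.505$ on $\mb x$, while the residual is $\approx0.84$ on $\mb x^\perp$ but $O(\delta)$ on $\mb x$ (because $\mb x^*\E[\tau\mb M]\mb w=0$ for $\mb w\perp\mb x$, so $\mb P_{\mb x}\tau\mb r$ collapses to fluctuation terms; see Lemma~\ref{lem:L-x-parallel}). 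The Pythagorean combination $0.84^2+0.505^2<1$ is what closes the argument; your additive combination cannot.

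\emph{Second, you misidentify the role of the weighting $\mb b$.} It is not there to shape the spectrum of $\E[\mb M]$ --- with $\mb b=\mb 1$ one gets $\E[\mb M]=\mb I$, which would be even better for your item~(i). The weighting exists to make the \emph{decoupling} of the phase term nontrivial. The obstruction sits in the quantity $\mc L(\mb a,\mb w)=\mb w^\top\mb A^\top\diag(\psi(\mb A\mb x))\mb A\mb w$ with $\psi(t)=e^{-2\im\phi(t)}$; to bound $\sup_{\mb w\perp\mb x}|\mc L|$ one introduces $\mb g^1=\mb a+\mb\delta$, $\mb g^2=\mb a-\mb\delta$, and the ``recoupling'' step $\E_{\mb\delta}[\psi((\mb a-\mb\delta)\conv\mb x)]$ is \emph{convolution of $\psi$ with a Gaussian}. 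Because $\psi$ is discontinuous at $0$, this smoothed function $h(t)$ is a poor uniform approximation to $\psi$ ($\norm{h-\psi}{L^\infty}=1$), and the decoupled bound is vacuous. The purpose of $\zeta_{\sigma^2}$ is to pre-smooth: replacing $\mc L$ by $\mc L_s=\mb w^\top\mb A^\top\diag(\zeta_{\sigma^2}(\mb y)\odot\psi(\mb A\mb x))\mb A\mb w$ makes $\norm{h-\zeta_{\sigma^2}\psi}{L^\infty}$ small enough to give $\Delta_\infty(\eps)\le0.404$, and only then is the conditioned chaos bound informative. Your sketch (``separate the indicator appearance of $\mb a$ inside $\Delta\phi$ from the outer quadratic'') gestures at the right mechanism but misses that the discontinuity of $\psi$ blocks it without this smoothing trick; this is precisely what fixes the specific values $\sigma^2=0.51$ and $\tau=2.02$ in the statement.
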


\paragraph*{Remark.}~Our result shows that by initializing the problem $\mc O(1/\mathrm{polylog}(n) )$-close to the optimum via the spectral method, the gradient descent \eqref{eqn:grad-step-0} converges linearly to the optimal solution. As we can see, the sample complexity here also depends on $\norm{\mb C_{\mb x}}{}$, which is quite different from the i.i.d.\ case. For a typical $\mb x \in \bb {CS}^{n-1}$ (e.g., $\mb x$ is drawn uniformly random from $\bb {CS}^{n-1}$), $\norm{\mb C_{\mb x}}{}$ is on the order of $\mc O(\log n)$, and the sample complexity $m \geq \Omega\paren{n \poly \log n }$ matches the i.i.d.\ case up to log factors. However, $\norm{\mb C_{\mb x}}{}$ is nonhomogeneous over $\mb x \in \bb {CS}^{n-1}$: if $\mb x$ is sparse in the Fourier domain (e.g., $\mb x = \frac{1}{\sqrt{n}} \mb 1$), the sample complexity can be as large as $m \geq \Omega\paren{n^2 \poly \log n }$. Such a behavior is also demonstrated in the experiments of Section \ref{sec:exp}. We believe the (very large!) number of logarithms in our result is an artifact of our analysis, rather than a limitation of the method. We expect to reduce the sample complexity to $m \geq \Omega\paren{ \frac{\norm{\mb C_{\mb x}}{}^2}{\norm{\mb x}{}^2} n \log^{6} n }$ by a tighter analysis, which is left for future work. The choices of the weighting $\mb b \in \bb R^m$ in \eqref{eqn:weighting-b}, $\sigma^2 = 0.51$, and the stepsize $\tau=2.02$ are purely for the purpose of analysis. In practice, the algorithm converges with $\mb b = \mb 1$ and a choice of small stepsize $\tau$, or by using backtracking linesearch for the stepsize $\tau$.

\subsection{A Sketch of the Analysis}

In this subsection, we briefly highlight some major challenges and new ideas behind the analysis. All the detailed proofs are postponed to \Cref{sec:proofs}. The core idea behind the analysis is to show that the iterate contracts once we initialize close enough to the optimum. In the following, we first describe the basic ideas of proving iterative contraction, which critically depends on bounding a certain nonlinear function of a random circulant matrix. We sketch the core ideas of how to bound such a complicated term via the decoupling technique.

\subsubsection{Proof sketch of iterative contraction}\label{subsec:APM-connect}

Our iterative analysis is inspired by the recent analysis of \emph{alternating direction method} (ADM) \cite{waldspurger2016phase}. In the following, we draw connections between the gradient descent method \eqref{eqn:grad-step-0} and ADM, and sketch the basic ideas of convergence analysis.

\paragraph{ADM iteration.}~ADM is a classical method for solving phase retrieval problems \cite{gerchberg1972practical,netrapalli2013phase,waldspurger2016phase}, which can be considered as a heuristic method for solving the following nonconvex problem
\begin{align*}
	\min_{ \mb z \in \bb C^n, \abs{\mb u}= \mb 1} \tfrac{1}{2} \norm{ \mb A \mb z - \mb y \odot \mb u }{}^2.
\end{align*}
At every iterate $\wh{\mb z}^{(r)}$, ADM proceeds in two steps: 
\begin{align*}
  \mb c^{(r+1)} \;&=\; \mb y \odot \exp\paren{ \mb A\wh{\mb z}^{(r)} }, \\
  \wh{\mb z}^{(r+1)} \;&=\;  \arg\min_{\mb z} \frac{1}{2} \norm{\mb A\mb z - \mb c^{(r+1)} }{}^2, 
\end{align*}
which leads to the following update
\begin{align*}
  \wh{\mb z}^{(r+1)} \;=\; \mb A^\dagger \paren{ \mb y \odot \exp\paren{ \mb A \wh{\mb z}^{(r)} } },	
\end{align*}
where $\mb A^\dagger = \paren{\mb A^*\mb A}^{-1}\mb A^*$ is the pseudo-inverse of $\mb A$. Let $\wh{\theta}_r = \arg \min_{ \ol{\theta} \in [0,2\pi) } \norm{ \wh{\mb z}^{(r)} - \mb x e^{ \im \ol{\theta} } }{}  $. The distance between $\wh{\mb z}^{(r+1)}$ and $\mc X$ is bounded by

\begin{align}\label{eqn:ADM-contract}
   \dist\paren{ \wh{\mb z}^{(r+1)} ,\mc X } \;=\;	\norm{\wh{\mb z}^{(r+1)} - \mb x e^{\im \wh{\theta}_{r+1}} }{} \;\leq\; \norm{\mb A^\dagger }{} \norm{  \mb A \mb x e^{\im \wh{\theta}_{r}} -  \paren{ \mb y \odot \exp\paren{ \mb A \wh{\mb z}^{(r)} } }  }{}.
\end{align}

\paragraph{Gradient descent with $\mb b= \mb 1$.}~For simplicity and illustration purposes, let us first consider the gradient descent update \eqref{eqn:grad-step-0} with $\mb b = \mb 1$. Let $\theta_r = \arg \min_{ \ol{\theta} \in [0,2\pi) } \norm{ \mb z^{(r)} - \mb x e^{ \im \ol{\theta} } }{}  $, with stepsize $\tau =1$. The distance between the iterate $\mb z^{(r+1)}$ and the optimal set $\mc X$ is bounded by
\begin{align}
  \dist\paren{ \mb z^{(r+1)} ,\mc X } =  \norm{\mb z^{(r+1)} - \mb x e^{\im \theta_{r+1} }  }{} \;\leq\;& \norm{ \mb I - \frac{1}{m} \mb A^*\mb A }{} \norm{ \mb z^{(r)} - \mb x e^{ \im \theta_r }  }{} \nonumber \\
  &+ 
    \frac{1}{m} \norm{\mb A}{}  \norm{ \mb A \mb x e^{\im \theta_{r} } - \mb y \odot \exp\paren{ \im \phi(\mb A\mb z^{(r)} ) }    }{}. \label{eqn:grad-contract}
\end{align}

\paragraph{Towards iterative contraction.}~By measure concentration, it can be shown that
\begin{align}\label{eqn:concentration-quantities}
  	\norm{ \mb I - \frac{1}{m} \mb A^*\mb A }{} = o(1),\quad\norm{\mb A}{} \approx \sqrt{m}, \quad \norm{\mb A^\dagger }{} \approx 1/\sqrt{m},
\end{align}
holds with high probability whenever $m \geq \Omega \left( n  \poly \log n \right)$. \edited{ Therefore, based on \eqref{eqn:ADM-contract} and \eqref{eqn:grad-contract}, to show iterative contraction, it is sufficient to prove}
\begin{align}\label{eqn:contract-iterate}
   	\norm{ \mb A \mb x e^{ \im \theta}   - \mb y \odot \exp\paren{ \im \phi(\mb A\mb z ) }    }{} \leq (1-\eta) \sqrt{m} \norm{  \mb z - \mb x e^{\im \theta} }{},
\end{align}
for some constant $\eta \in (0,1)$ sufficiently small, where $\theta = \arg \min_{ \ol{\theta} \in [0,2\pi) } \norm{ \mb z - \mb x e^{ \im \ol{\theta} } }{}  $ such that $ e^{\im \theta} = \mb x^*\mb z / \abs{\mb x^*\mb z} $. \edited{By borrowing ideas from controlling \eqref{eqn:contract-iterate} in} the ADM method \cite{waldspurger2016phase}, this observation provides a new way of analyzing the gradient descent method. As an attempt to show \eqref{eqn:contract-iterate} for the random circulant matrix $\mb A$, we invoke Lemma \ref{lem:phase-diff} in the appendix, which controls the error in a first order approximation to $\exp( \im \phi( \cdot ) )$. Let us decompose 
\begin{align*}
	\mb z \;=\;  \alpha \mb x \;+\; \beta \mb w,
\end{align*}
where $\mb w \in \bb {CS}^{n-1}$ with $\mb w \perp \mb x$, and $\alpha,\beta \in \bb C$. Notice that $\phi(\alpha) = \theta$, so that by Lemma \ref{lem:phase-diff}, for any $\rho \in (0,1)$ we have
\vspace{-.05in}
\begin{eqnarray*}
  \lefteqn{  \norm{ \mb A \mb x e^{ \im \theta }   - \mb y \odot \exp\paren{ \im \phi(\mb A\mb z ) }    }{} \quad =\quad \norm{ \abs{\mb A\mb x} \odot \brac{ \exp\paren{ \im \phi\paren{ \mb A\mb x } }  - \exp\paren{ \im \phi \paren{ \mb A \mb x+ \frac{\beta}{ \alpha }  \mb A \mb w }  } }   }{} } \\
    &\leq & 2\cdot \underbrace{ \norm{ \abs{\mb A\mb x} \odot \indicator{ \abs{ \frac{\beta}{\alpha} }\abs{\mb A\mb w} \geq \rho \abs{\mb A\mb x} } }{} }_{\mc T_1}  \quad + \quad \frac{1}{1 - \rho} \abs{\frac{\beta}{\alpha } } \underbrace{\norm{ \Im\paren{ \paren{  \mb A\mb w} \odot \exp\paren{ - \im \phi(\mb A\mb x)  } } }{} }_{\mc T_2} . \qquad \qquad \qquad 
\end{eqnarray*}
\edited{The first term $\mc T_1$ is relatively much smaller than $\mc T_2$}, which can be bounded by a small numerical constant using the \emph{restricted isometry property} of a random circulant matrix \cite{krahmer2014suprema}, together with some auxiliary analysis. The detailed analysis is provided in \Cref{app:lem-bound-1}. The second term $\mc T_2$ involves a nonlinear function $\exp\paren{ - \im \phi(\mb A\mb x)  } $ of the random circulant matrix $\mb A$. Controlling this nonlinear, highly dependent random process for all $\mb w$ is a nontrivial task. In the next subsection, we explain why bounding $\mc T_2$ is technically challenging, and describe the key ideas on how to control a smoothed variant of $\mc T_2$, by using the weighting $\mb b$ introduced in \eqref{eqn:weighting-b}. We also provide intuitions for why the weighting $\mb b$ is helpful.

\subsubsection{Controlling a smoothed variant of the phase term $\mc T_2$}

As elaborated above, the major challenge of showing iterative contraction is 
bounding the suprema of the nonlinear, dependent random process $\mc T_2(\mb w)$ over the set
\begin{align*}
	\mc S\;\doteq\; \Brac{ \mb w \in \bb {CS}^{n-1} \mid \mb w \perp \mb x }.
\end{align*}
By using the fact that $\Im(u) = \frac{1}{2 \im } \paren{ u - \ol{u} }$ for any $u \in \bb C$, we have
\begin{align*}
   \sup_{ \mb w 
   \in \mc S }  \mc T_2^2(\mb w) \; \leq \; \tfrac{1}{2} \norm{\mb A}{}^2 + \tfrac{1}{2}  \sup_{ \mb w \in \mc S } \abs{ \underbrace{ \mb w^\top \mb A^\top \diag \paren{ \psi(\mb A\mb x)  } \mb A \mb w  }_{\mc L(\mb a,\mb w)  } },
\end{align*}
where we define $\psi(t) \doteq \exp\paren{ -2\im \phi(t) }$. As from \eqref{eqn:concentration-quantities}, we know that $\norm{\mb A}{} \approx \sqrt{m}$. Thus, to show \eqref{eqn:contract-iterate}, the major task left is to prove that 
\begin{align}\label{eqn:bound-desired}
	\sup_{ \mb w \in \mc S } \abs{ \mc L(\mb a,\mb w) }< (1-\eta') m
\end{align}
for some constant $\eta' \in(0,1)$.

\paragraph{Why decoupling?}~Let $\mb a_k$ ($1\leq k \leq m$) be a row vector of $\mb A$, then the term
\begin{align*}
   \mc L(\mb a,\mb w) \;=\; \mb w^\top \mb A^\top \diag \paren{ \psi(\mb A\mb x)  } \mb A \mb w \;=\; \sum_{k=1}^m  \underbrace{\psi(\mb a_k^* \mb x)   \mb w^\top \ol{\mb a}_k \ol{\mb a}_k^\top \mb w}_{\text{dependence across } k}
\end{align*}
is a summation of dependent random variables. To address this
problem, we deploy ideas from \emph{decoupling} \cite{de1999decoupling}. Informally, decoupling allows us to compare moments of random functions to functions of more independent random variables, which are usually easier to analyze. The book \cite{de1999decoupling} provides a beautiful introduction to this area. In our problem, notice that the random vector $\mb a$ occurs twice in the definition of $\mc L(\mb a, \mb w )$ -- one in the phase term $\psi(\mb A\mb x) = \exp( -2 \im \phi( \mb A \mb x ) )$, and another in the quadratic term. The general spirit of decoupling is to seek to replace one of these copies of $\mb a$ with an {\em independent} copy $\mb a'$ of the same random vector, yielding a random process with fewer dependencies. Here, we seek to replace $\mc L( \mb a, \mb w )$ with 
\begin{align}\label{eqn:L-Qec}
   \mc Q_{dec}^{\mc L}(\mb a,\mb a',\mb w)  = \mb w^\top \mb A^\top \diag\paren{ \psi( \mb A' \mb x ) }  \mb A \mb w.	
\end{align}
The utility of this new, decoupled form $\mc Q_{dec}^{\mc L}(\mb a,\mb a',\mb w)$ of $\mc L(\mb a,\mb w)$ is that it introduces extra randomness --- $\mc Q_{dec}^{\mc L}(\mb a,\mb a',\mb w)$ is now a \emph{chaos} process of $\mb a$ conditioned on $\mb a'$. This makes analyzing $\sup_{\mb w \in \mc S}\mc Q_{dec}^{\mc L}(\mb a,\mb a',\mb w)$ amenable to existing analysis of \emph{suprema of chaos processes} for random circulant matrices \cite{krahmer2014structured}. However, achieving the decoupling requires additional work; the most general existing results on decoupling pertain to {\em tetrahedral polynomials}, which are polynomials with no monomials involving any power larger than one of any random variable. By appropriately tracking cross terms, these results can also be applied to more general (non-tetrahedral) polynomials in Gaussian random variables \cite{kwapien1987decoupling}. However, our random process $\mc L(\mb a,\mb w)$ involves a nonlinear phase term $\psi(\mb A\mb w)$ which is not a polynomial, and hence is not amenable to a direct appeal to existing results. 

\paragraph{Decoupling is ``recoupling''.}~Existing results \cite{kwapien1987decoupling} for decoupling polynomials of Gaussian random variables are derived from two simple facts:
\renewcommand\labelenumi{(\theenumi)}
\begin{enumerate}
\item orthogonal projections of Gaussian variables are independent\footnote{\edited{If two random variables are jointly Gaussian, they are statistically independent if and only if they are uncorrelated.}};\label{num:1}
\item Jensen's inequality.\label{num:2}
\end{enumerate}
For the random vector $\mb a \sim \mc C \mc N( \mb 0, \mb I )$, let us introduce an independent copy $\mb \delta \sim \mc C \mc N( \mb 0, \mb I )$. Write 
\begin{align*}
   \mb g^1 = \mb a + \mb \delta,\qquad \mb g^2 = \mb a - \mb \delta.
\end{align*}
Because of \Cref{num:1}, $\mb g^1$ and $\mb g^2$ are two {\em independent} $\mc C \mc N(\mb 0,2 \mb I)$ vectors. Now, by taking conditional expectation with respect to $\mb \delta$, we have
\begin{align}\label{eqn:recouple-1}
\bb E_{\mb \delta} \left[ \mc Q_{dec}^{\mc L}( \mb g^1 , \mb g^2, \mb w ) \right] \quad =\quad \bb E_{\mb \delta} \left[ \mc Q_{dec}^{\mc L}( \mb a + \mb \delta, \mb a - \mb \delta, \mb w ) \right]  \quad  \doteq \quad \widehat{\mc L}( \mb a, \mb w ).  
\end{align} 
Thus, we can see that the key idea of decoupling $\mc L(\mb a,\mb w)$ into $\mc Q_{dec}^{\mc L}(\mb a,\mb a',\mb w)$, is essentially ``recoupling'' $\mc Q_{dec}^{\mc L}(\mb g^1,\mb g^2,\mb w)$ via conditional expectation -- the ``recoupled'' term $\wh{\mc L}$ can be reviewed as an approximation of $\mc L(\mb a,\mb w)$. Notice that by Fact 2, Jensen's inequality, for any convex function $\varphi$,
\begin{align*}
\bb E_{\mb a}\brac{  \sup_{\mb w \in \mc S} \varphi \paren{ \wh{\mc L}(\mb a,\mb w) } } \quad & = \quad \bb E_{\mb a} \brac{ \sup_{\mb w \in \mc S} \varphi\paren{  \bb E_{\mb \delta} \brac{ \mc Q_{dec}^{\mc L}(\mb a + \mb \delta,\mb a - \mb \delta, \mb w)   }   }   } \\
    & \leq \quad \bb E_{\mb a, \mb \delta}\brac{ \sup_{\mb w \in \mc S} \varphi \paren{  \mc Q_{dec}^{\mc L} (\mb a + \mb \delta,\mb a - \mb \delta, \mb w)  }   } \\
    & = \quad \bb E_{\mb g^1,\mb g^2} \brac{   \sup_{\mb w \in \mc S} \varphi \paren{  \mc Q_{dec}^{\mc L} (\mb g^1,\mb g^2, \mb w)  }  }.	
\end{align*}
Thus, by choosing $\varphi$ appropriately, i.e., as $\varphi(t) = \abs{t}^p$, we can control all the moments of $\sup_{\mb w \in \mc S}\widehat{\mc L}(\mb a,\mb w)$ via
\begin{align}\label{eqn:jensen-moments}
  \norm{ \sup_{\mb w \in \mc S} \abs{ \wh{\mc L}(\mb a,\mb w) } }{L^p}	 \leq \norm{  \sup_{\mb w \in \mc S} \abs{  \mc Q_{dec}^{\mc L} (\mb g^1,\mb g^2, \mb w)  } }{L^p}.
\end{align}
This type of inequality is very useful because it relates the moments of $\sup_{\mb w \in \mc S} \abs{ \wh{\mc L}(\mb a,\mb w) }$ to \edited{those of} $\sup_{\mb w \in \mc S} \abs{ \mc Q_{dec}^{\mc L}  (\mb a,\mb a',\mb w) }$. As discussed previously, $\mc Q_{dec}^{\mc L}$ is a chaos process of $\mb g^1$ \edited{conditioned on $\mb g^2$}. Its moments can be bounded using existing results \cite{krahmer2014suprema}.

If $\mc L$ was a tetrahedral polynomial, then we have $\widehat{\mc L} = \mc L$, i.e., the approximation is exact. As the tail bound of $\sup_{\mb w \in \mc S} \abs{\mc L(\mb a,\mb w) }$ can be controlled via its moments bounds \cite[Chapter 7.2]{foucart2013mathematical}, this allows us to directly control the object of interest $\mc L$. The reason of achieving this bound is because the conditional expectation operator $\bb E_{\mb \delta}\brac{ \cdot \mid \mb a }$ ``recouples'' $\mc Q_{dec}^{\mc L}(\mb a,\mb a',\mb w)$ back to the target $\mc L(\mb a,\mb w)$. In other words, {\em (Gaussian)  decoupling is recoupling}.

\paragraph{``Recoupling'' is Gaussian smoothing.}~\edited{In convolutional phase retrieval,
a distinctive feature of the term $\mc L(\mb a,\mb w)$ is that $\psi(\cdot)$ is a phase function and therefore $\mc L$ is {\em not} a polynomial}. Hence, it may be challenging to posit a $\mc Q_{dec}^{\mc L}$ which ``recouples'' back to $\mc L$. In other words, as $\widehat{\mc L} \ne \mc L$ in the existing form, we need to tolerate an {\em approximation error}. Although $\wh{\mc L}$ is not exactly $\mc L$, we can still control $\sup_{\mb w \in \mc S} \abs{\mc L(\mb a,\mb w)}$ through its approximation $\wh{\mc L}$, 
\begin{align}\label{eqn:decouple-triangle}
   	\sup_{\mb w \in \mc S } \abs{\mc L(\mb a, \mb w) } \quad \leq \quad \sup_{\mb w \in \mc S} \abs{ \wh{\mc L}(\mb a,\mb w) } \;+\;  \sup_{\mb w \in \mc S} \abs{ \wh{\mc L}(\mb a,\mb w) - \mc L(\mb a,\mb w) }.
\end{align}
As we discussed above, the term $\sup_{\mb w \in \mc S} \abs{ \wh{\mc L}(\mb a,\mb w) }$ can be controlled by using decoupling and the moments bound in \eqref{eqn:jensen-moments}. Therefore, the inequality \eqref{eqn:decouple-triangle} is useful to derive a sufficiently tight bound for $\mc L( \mb a, \mb w )$ if $\wh{\mc L}( \mb a, \mb w )$ is very close to $\mc L( \mb a, \mb w)$ uniformly, i.e., the approximation error is small. Now the question is: {\em for what $\mc L$ is it possible to find a ``well-behaved'' $\mc Q_{dec}^{\mc L}$ \edited{such that} the approximation error is small?} To understand this question, recall that the mechanism that links $\mc Q_{dec}$ to $\wh{\mc L}$ is the conditional expectation operator $\bb E_{\mb \delta}\brac{ \cdot \mid \mb a }$. For our case, from \eqref{eqn:recouple-1} orthogonality leads to
\begin{align}
  \wh{\mc L}(\mb a,\mb w) \;=\; \mb w^\top \mb A^\top  \diag \paren{ h(\mb A \mb x)  } \mb A\mb w, \qquad h(t) \;\doteq\; \bb E_{s\sim \mc {CN}(0,\norm{\mb x}{}^2 ) }\brac{ \psi(t+s) }	.\label{eqn:recouple-2}
\end{align}

Thus, by using the results in \eqref{eqn:decouple-triangle} and \eqref{eqn:recouple-2}, we can bound $\sup_{\mb w \in \mc S }  \abs{\mc L(\mb a, \mb w) }$ as
\begin{align}\label{eqn:bound-L}
  	\sup_{\mb w \in \mc S } \abs{\mc L(\mb a, \mb w) }  \quad \leq \quad \sup_{\mb w \in \mc S} \abs{ \wh{\mc L}(\mb a,\mb w) } \; + \; \underbrace{ \norm{ h - \psi }{ L^\infty } }_{\text{approximation error} } \norm{ \mb A }{}^2.
\end{align}
Note that the function $h$ is not exactly $\psi$, but generated by convolving $\psi$ with a multivariate Gaussian \emph{pdf}: indeed, {\em recoupling is Gaussian smoothing}. The Fourier transform of a multivariate Gaussian is again a Gaussian; it decays quickly with frequency. So, in order to admit a small approximation error, the target $\psi$ must be {\em smooth}. However, in our case, the function $\psi(t) = \exp( - 2 \im \phi( t ) )$ is discontinuous at $t = 0$; it changes extremely rapidly in the vicinity of $t = 0$, and hence its Fourier transform (appropriately defined) does not decay quickly at all. Therefore, the term $\mc L(\mb a,\mb w)$ is a poor target for approximation by using a smooth function $\widehat{\mc L}(\mb a,\mb w) = \bb E_{\mb \delta} [ \mc Q_{dec}^{\mc L}(\mb g^1,\mb g^2,\mb w) ]$. From \Cref{fig:func_psi_h_zeta}, the difference between $h$ and $\psi$ increases as $ \abs{t} \searrow 0 $. The poor approximation error $\norm{\psi - f }{L^\infty}=1$ results in a trivial bound for $\sup_{ \mb w \in \mc S } \abs{ \mc L(\mb a,\mb w) }$ instead of \edited{the desired bound \eqref{eqn:bound-desired}.}

\begin{figure*}[!htbp]
\centering
\includegraphics[width = 0.5\linewidth]{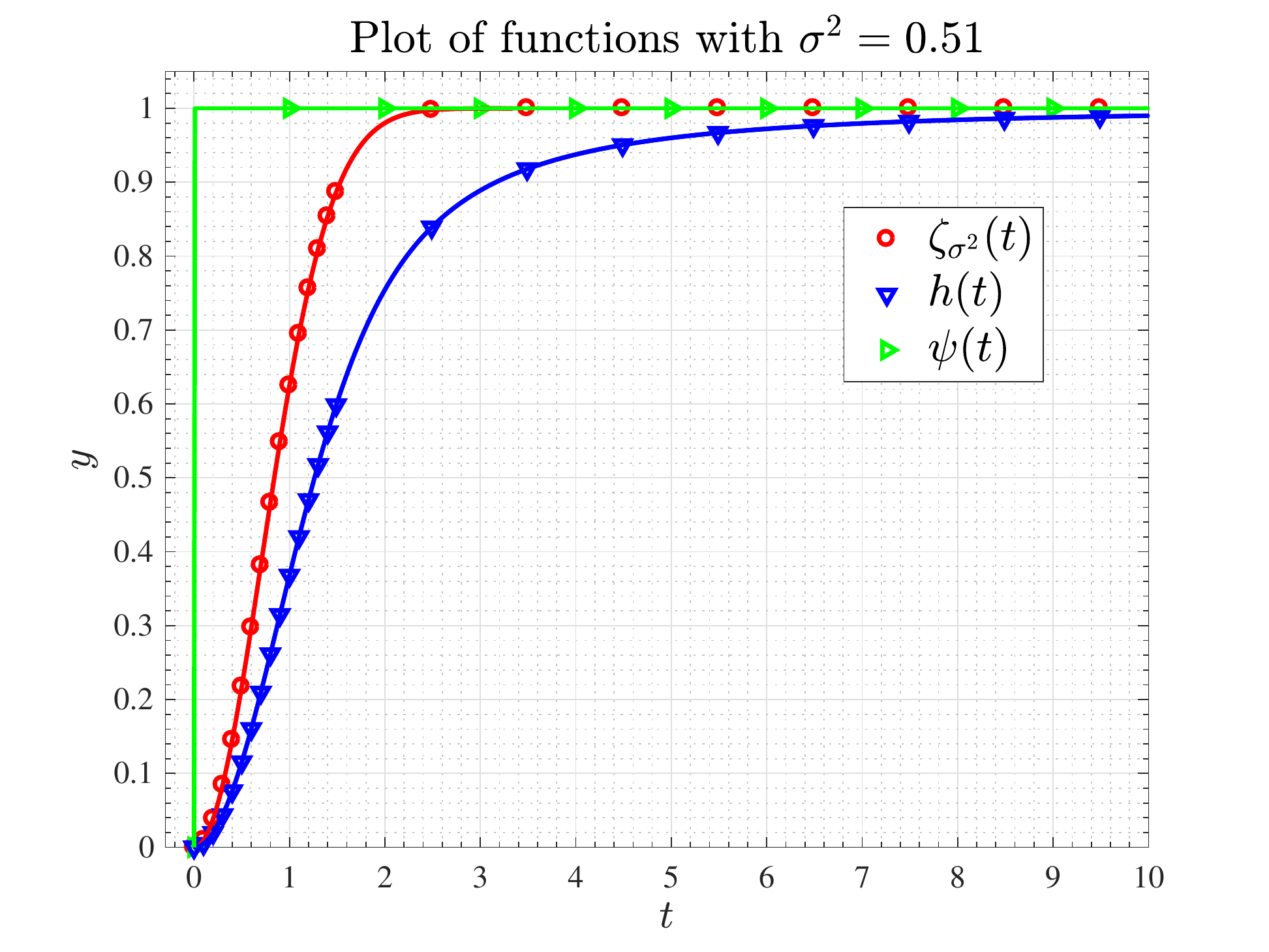}
\caption{\textbf{Plots of functions $h(t)$, $\psi(t)$ and $\zeta_{\sigma^2}(t)$ over the real line with $\sigma^2 = 0.51$.} The $\psi(t)$ function is discontinuous at $0$, and cannot be uniformly approximated by $h(t)$. On the other hand, the function $h(t)$ serves as a good approximation of the weighting $\psi(t)$.}
\label{fig:func_psi_h_zeta} 
\end{figure*}

\paragraph{Decoupling and convolutional phase retrieval.}~To reduce the approximation error caused by the nonsmoothness of $\psi$ at $t=0$, we smooth $\psi$. More specifically, we introduce a new weighted objective \eqref{eqn:pr-weighted} with Gaussian weighting $\mb b = \zeta_{\sigma^2}(\mb y)$ in \eqref{eqn:pr-conv-1} , replacing the analyzing target $\mc T_2$ with
\begin{align*}
	 \wh{\mc T}_2 \;=\; \norm{ \diag\paren{\mb b^{1/2}} \Im\paren{ \paren{  \mb A\mb w} \odot \exp\paren{ - \im \phi(\mb A\mb x)  } } }{}.
\end{align*}
Consequently, we obtain a smoothed variant $\mc L_s(\mb a,\mb w)$ of $\mc L(\mb a,\mb w)$,
\begin{align*}
   \mc L_s(\mb a,\mb w) \;=\; \mb w^\top \mb A^\top \diag \paren{ \zeta_{\sigma^2}(\mb y) \odot  \psi(\mb A\mb x)  } \mb A \mb w.
\end{align*}
Similar to \eqref{eqn:bound-L}, we obtain
\begin{align*}
  	\sup_{\mb w \in \mc S } \abs{\mc L_s(\mb a, \mb w) }  \quad \leq \quad \sup_{\mb w \in \mc S} \abs{ \wh{\mc L}(\mb a,\mb w) } \; + \; \norm{ h(t) -  \zeta_{\sigma^2}(t) \psi(t) }{ L^\infty }  \norm{ \mb A }{}^2.
\end{align*}
Now the approximation error $\norm{ h - \psi   }{L^\infty}$ in \eqref{eqn:bound-L} is replaced by $\norm{ h(t) -  \zeta_{\sigma^2}(t)\psi(t)   }{L^\infty}$. As observed from \Cref{fig:func_psi_h_zeta}, the function $\zeta_{\sigma^2}(t)$ smoothes $\psi(t)$ especially near the vicinity of $t=0$, such that the new approximation error $\norm{ f(t) -  \zeta_{\sigma^2}(t)\psi(t)   }{L^\infty}$ is significantly reduced. Thus, by using similar ideas above, we can provide a \emph{nontrivial} bound
\begin{align*}
	\sup_{ \mb w \in \mc S } \abs{ \mc L_s(\mb a,\mb w) }\;<\; (1 - \eta_s)\;m,
\end{align*}
for some $\eta_s \in (0,1)$, which is sufficient for showing iterative contraction. Finally, because of the weighting $\mb b = \zeta_{\sigma^2}(\mb y)$, it should be noticed that the overall analysis needs to be slightly modified accordingly. For a more detailed analysis, we refer the readers to \Cref{sec:proofs}.

\section{Experiments}\label{sec:exp}

In this section, we conduct experiments on both synthetic and \edited{real datasets} to demonstrate the effectiveness of the proposed method.

\begin{figure}[t]
  \centering
  \begin{minipage}[b]{0.49\textwidth}
  \centering
    \includegraphics[width=\textwidth]{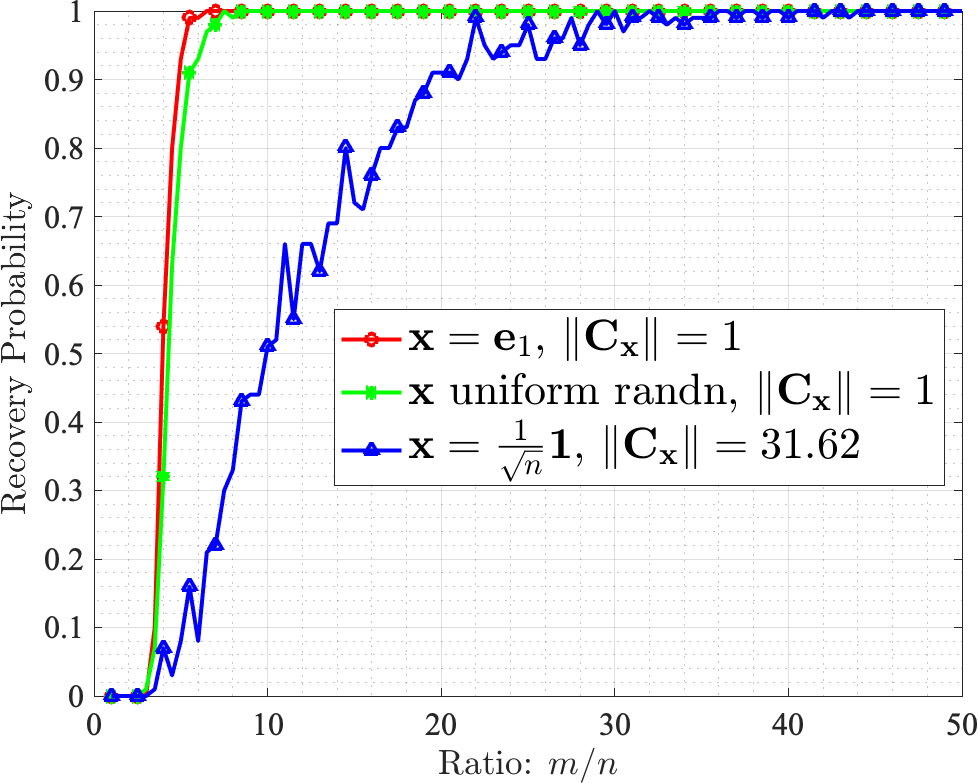}
\caption{\textbf{Phase transition for signals $\mb x \in \bb {CS}^{n-1}$ with different signal patterns.} We fix $n=1000$ and vary the ratio $m/n$. }
\label{fig:phase_transition-C_x}
  \end{minipage}
  \hfill
  \begin{minipage}[b]{0.49\textwidth}
  \centering
    \includegraphics[width=\textwidth]{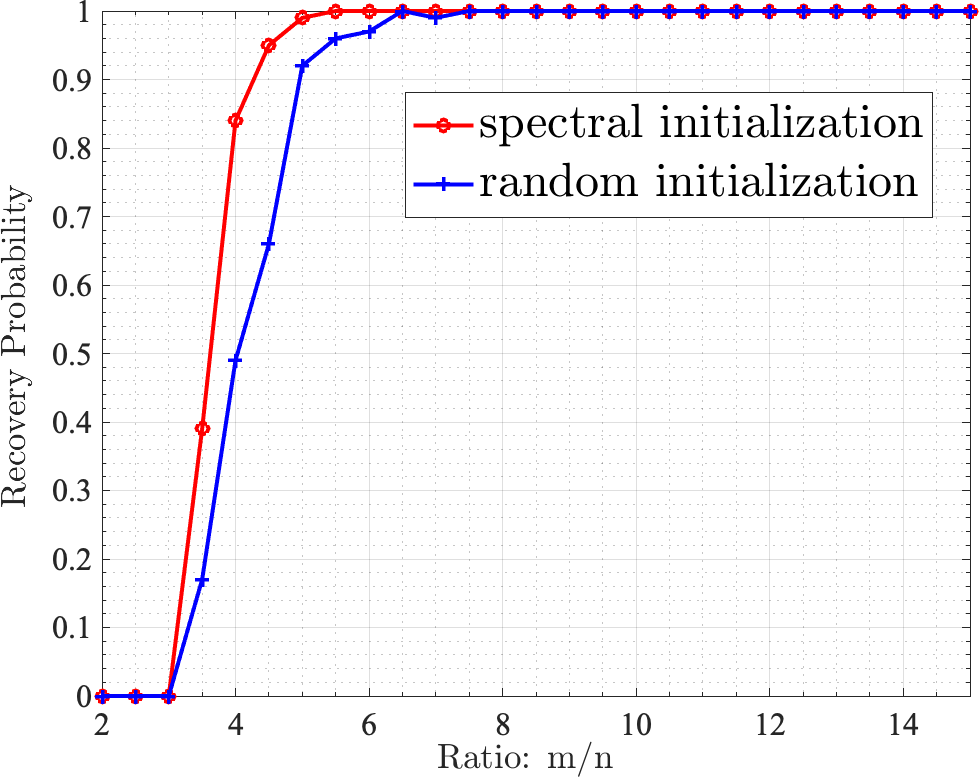}
\caption{\textbf{Phase transition with different initializations schemes.} We fix $n=1000$ and $\mb x$ is generated uniformly random from $\bb {CS}^{n-1}$. We vary the ratio $m/n$.}
\label{fig:phase_transition-initialization}
  \end{minipage}
\end{figure}

\subsection{Experiments on synthetic dataset}

\paragraph{Dependence of sample complexity on $\norm{\mb C_{\mb x}}{}$.}~First, we investigate the dependence of the sample complexity $m$ on $\norm{\mb C_{\mb x}}{}$.  We assume the ground truth $\mb x\in \bb {CS}^{n-1}$, and consider three cases:
\begin{itemize}
\item $\mb x = \mb e_1$ with $\mb e_1$ to be the standard basis vector, such that $\norm{\mb C_{\mb x}}{} =1 $;
\item $\mb x$ is uniformly random generated on the complex sphere $\bb {CS}^{n-1}$;
\item $\mb x = \frac{1}{\sqrt{n}} \mb 1$, such that $\norm{\mb C_{\mb x}}{} = \sqrt{n} $.
\end{itemize}
For each case, we fix the signal length $n =1000$ and vary the ratio $m/n$. For each ratio $m/n$, we randomly generate the kernel $\mb a \sim \mc {CN}(\mb 0,\mb I)$ in \eqref{eqn:pr-conv} and repeat the experiment $100$ times. We initialize the algorithm by the spectral method in \Cref{alg:init} and run the gradient descent \eqref{eqn:grad-step-0}. Given the algorithm output $\wh{\mb x}$, we judge the success of recovery by  
\begin{align}\label{eqn:tolerance-error}
   	\inf_{\phi \in [0,2\pi)} \norm{\wh{\mb x} - \mb x e^{\im \phi} }{} \;\leq\; \epsilon,
\end{align}
where $\epsilon=10^{-5}$. From \Cref{fig:phase_transition-C_x}, for the case when $\norm{\mb C_{\mb x}}{} =\mc O(1)$, the number of measurements needed is far less than \Cref{thm:main} suggests. Bridging the gap between the practice and theory is left for the future work.

Another observation is that the larger $\norm{\mb C_{\mb x}}{}$ is, the more samples we needed for the success of recovery. One possibility is that the sample complexity depends on $\norm{\mb C_{\mb x}}{}$, another possibility is that the extra logarithmic factors in our analysis are truly necessary for worst case (here, spectral sparse) inputs.

\paragraph{Necessity of initializations.}~As has been shown in \cite{sun2016geometric,soltanolkotabi2017}, for phase retrieval with generic measurement, when the sample complexity satisfies $m \geq \Omega(n \log n)$, with high probability the landscape of the nonconvex objective \eqref{eqn:pr-smooth} is nice enough that it enables initialization free global optimization. This raises an interesting question \edited{of whether spectral initialization is necessary} for the random convolutional model. We consider a similar setting as the previous experiment, where the ground truth $\mb x \in \bb C^n$ is drawn uniformly at random from $\bb {CS}^{n-1}$. We fix the dimension $n =1000$ and change the ratio $m/n$. For each ratio, we randomly generate the kernel $\mb a \sim \mc {CN}(\mb 0,\mb I)$ in \eqref{eqn:pr-conv} and repeat the experiment $100$ times. For each instance, we start the algorithm from random and spectral initializations, respectively. We choose the stepsize via backtracking linesearch and terminate the experiment either when the number of iterations is larger than $2 \times 10^{4}$ or the distance of the iterate to the solution is smaller than $1\times 10^{-5}$. As we can see from \Cref{fig:phase_transition-initialization}, the number of samples required for successful recovery with random initializations is only slightly more than that with the spectral initialization. \edited{This implies that the requirement of spectral initialization is an artifact of our analysis. For convolutional phase retrieval, the result in \cite{chen2018gradient} shows some promises for analyzing global convergence of gradient methods with random initializations.}

\begin{figure*}[t]
\centering
\captionsetup[sub]{font=small,labelfont={bf,sf}}
\centering
\begin{minipage}[c]{0.3\textwidth}
\subcaption{$\mb x = \mb e_1$, $\norm{\mb C_{\mb x}}{}=1$.}
\centering
	\includegraphics[width = \linewidth,height=1.8in]{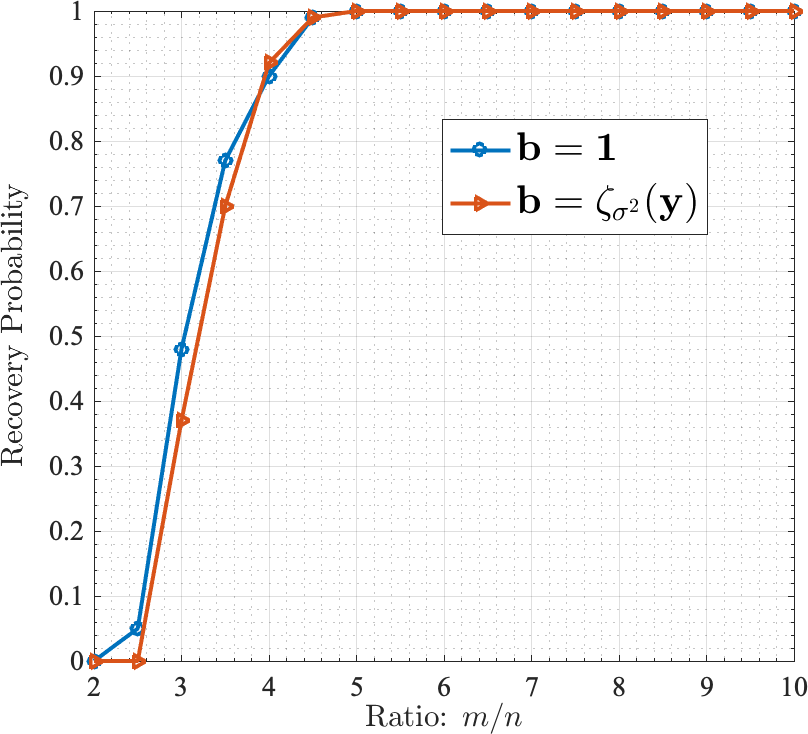}
\end{minipage}
\begin{minipage}[c]{0.3\textwidth}
\subcaption{$\mb x \sim \mc U(\bb {CS}^{n-1})$, $\norm{\mb C_{\mb x}}{} = 4.04 $.}
\centering
	\includegraphics[width = \linewidth,height=1.8in]{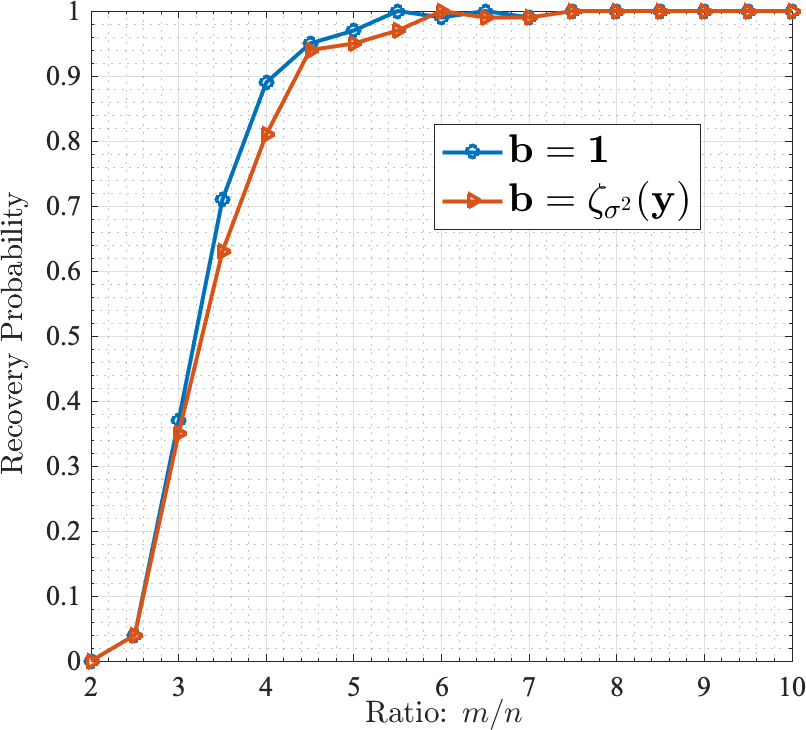}
\end{minipage}
\begin{minipage}[c]{0.3\textwidth}
\subcaption{$\mb x = \frac{1}{\sqrt{n}} \mb 1$, $\norm{\mb C_{\mb x}}{} = 10 $.}
\centering
	\includegraphics[width = \linewidth,height=1.8in]{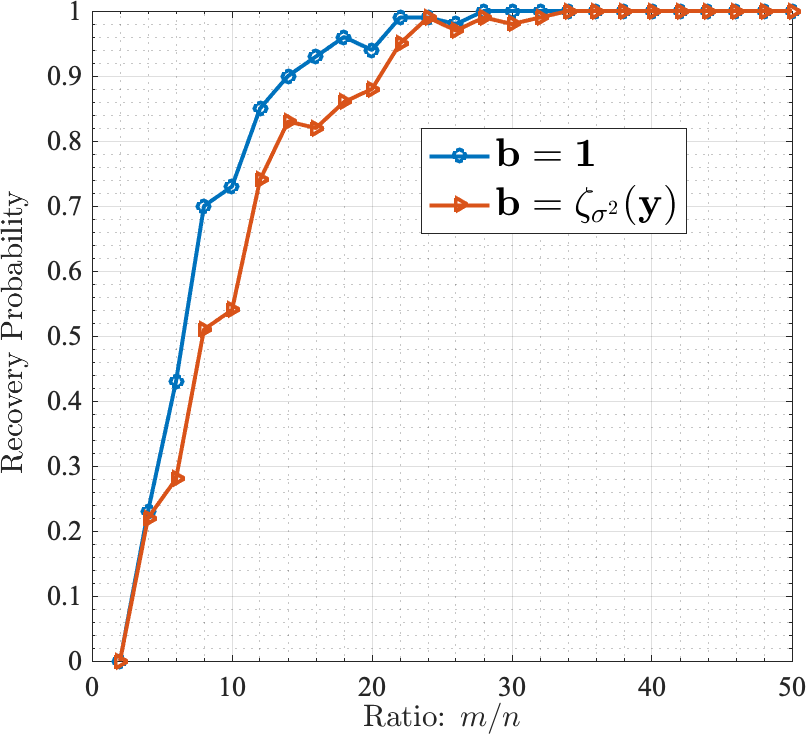}
\end{minipage}
\caption{\textbf{Phase transition for different signal patterns with weightings $\mb b$.} We fix $n=1000$ and vary the ratio $m/n$. }
\label{fig:weighting}
\end{figure*}

\paragraph{Effects of weighting $\mb b$.}~Although the weighting $\mb b$ in \eqref{eqn:weighting-b} that we introduced in \Cref{thm:main} is mainly for analysis, here we investigate its effectiveness in practice. We consider the same three cases for $\mb x$ as we did before. For each case, we fix the signal length $n =100$ and vary the ratio $m/n$. For each ratio $m/n$, we randomly generate the kernel $\mb a \sim \mc {CN}(\mb 0,\mb I)$ in \eqref{eqn:pr-conv} and repeat the experiment $100$ times. We initialize the algorithm by the spectral method in \Cref{alg:init} and run the gradient descent \eqref{eqn:grad-step-0} with weighting $\mb b = \mb 1$ and $\mb b $ in \eqref{eqn:weighting-b}, respectively. We judge success of recovery once the error \eqref{eqn:tolerance-error} is smaller than $10^{-5}$. From \Cref{fig:weighting}, we can see that the sample complexity is slightly larger for $\mb b = \zeta_{\sigma^2}(\mb y) $, \edited{the benefit of weighting here is} more for the ease of analysis.

\begin{figure*}[t]
\centering
\captionsetup[sub]{font=small,labelfont={bf,sf}}
\centering
\begin{minipage}[c]{0.3\textwidth}
\subcaption{$\mb x = \mb e_1$, $\norm{\mb C_{\mb x}}{}=1$.}
\centering
	\includegraphics[width = \linewidth,height=1.8in]{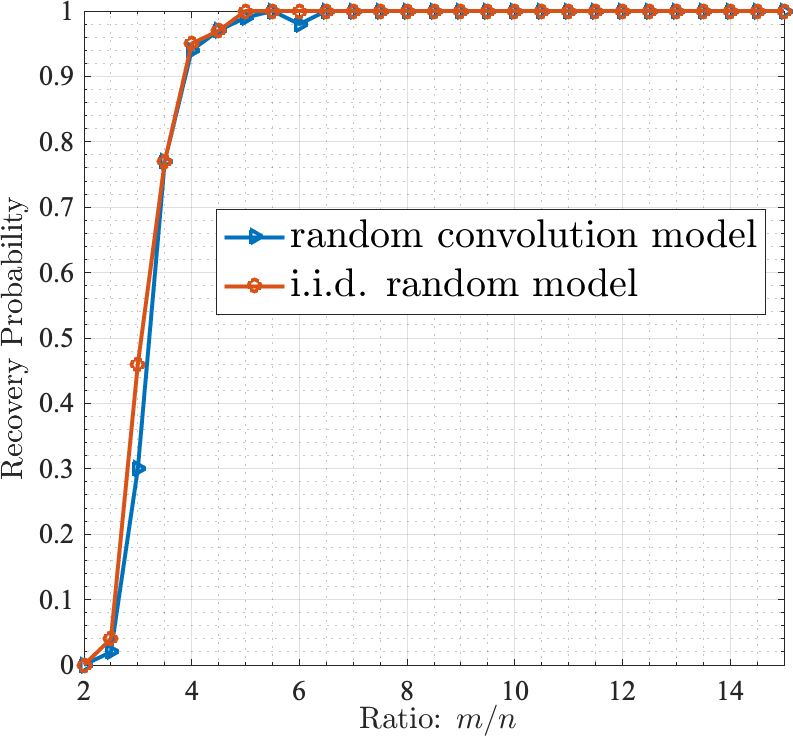}
\end{minipage}
\begin{minipage}[c]{0.3\textwidth}
\subcaption{$\mb x \sim \mc U(\bb {CS}^{n-1})$, $\norm{\mb C_{\mb x}}{} = 4.04 $.}
\centering
	\includegraphics[width = \linewidth,height=1.8in]{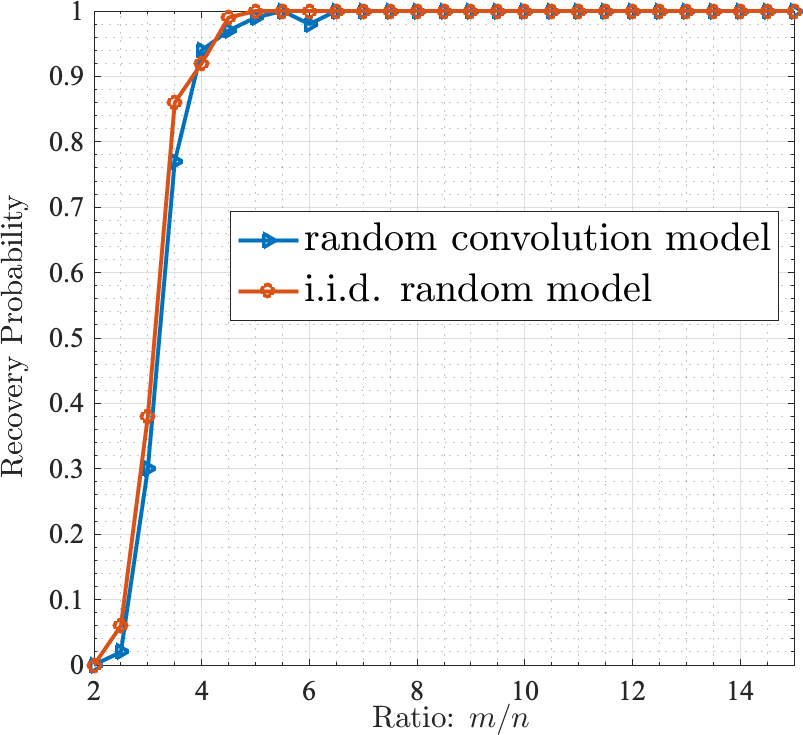}
\end{minipage}
\begin{minipage}[c]{0.3\textwidth}
\subcaption{$\mb x = \frac{1}{\sqrt{n}} \mb 1$, $\norm{\mb C_{\mb x}}{} = 10 $.}
\centering
	\includegraphics[width = \linewidth,height=1.8in]{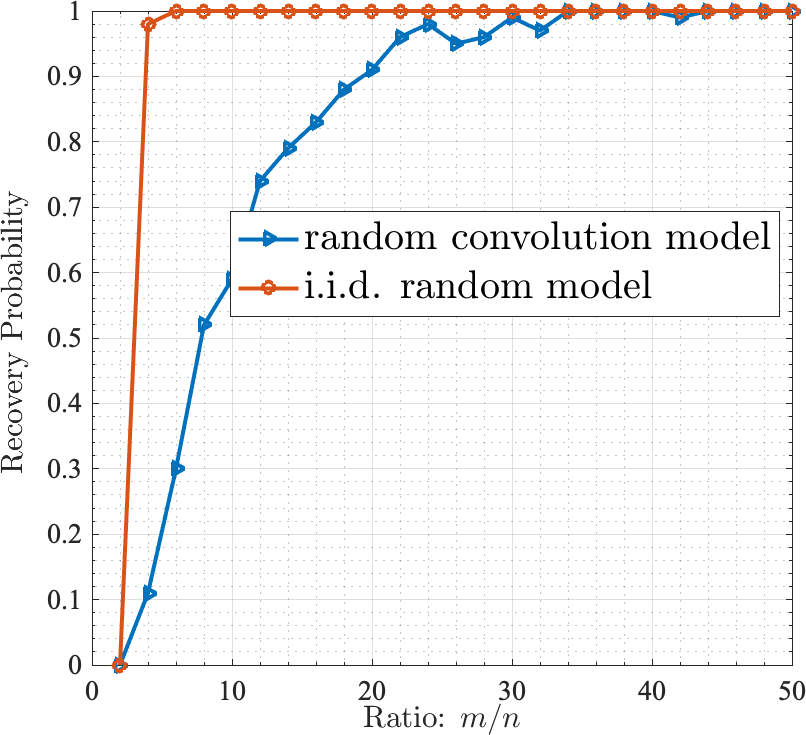}
\end{minipage}
\caption{\textbf{Phase transition of random convolution model vs. i.i.d.\ random model.} We fix $n=1000$ and vary the ratio $m/n$. }
\label{fig:convolution-iid}
\end{figure*}

\paragraph{Comparison with generic random measurements.}~Another interesting question is that, in comparison with a pure random model, \edited{how many} more samples are needed for the random convolutional model in practice? We investigate this question numerically. We consider the same three cases for $\mb x$ as we did before, and consider two random measurement models
\begin{align*}
  \mb y_1 = \abs{ \mb a \conv \mb x },\qquad \mb y_2 = \abs{ \mb A\mb x },	
\end{align*}
where $\mb a \sim \mc {CN}(\mb 0,\mb I)$, and $\mb a_k \sim_{i.i.d.} \mc {CN}(\mb 0,\mb I)$ is a row vector of $\mb A$. For each case, we fix the signal length $n =100$ and vary the ratio $m/n$. We repeat the experiment $100$ times. We initialize the algorithm by the spectral method in Algorithm \ref{alg:init} for both models, and run gradient descent \eqref{eqn:grad-step-0}. We judge success of recovery once the error \eqref{eqn:tolerance-error} is smaller than $10^{-5}$. From \Cref{fig:convolution-iid}, we can see that when $\mb x$ is typical (e.g., $\mb x = \mb e_1$ or $\mb x$ is uniformly random generated from $\bb {CS}^{n-1}$), under the same settings, the samples needed for the two random models are almost the same. However, when $\mb x$ is Fourier sparse (e.g., $\mb x = \frac{1}{\sqrt{n}}\mb 1$), more samples are required for the random convolution model.

\begin{figure*}[!htbp]
\centering
\captionsetup[sub]{font=small,labelfont={bf,sf}}
\begin{minipage}[c]{0.33\textwidth}
\subcaption{Magnitude of antenna pattern}
\label{fig:calcium_2D_img}
	\includegraphics[width = \linewidth]{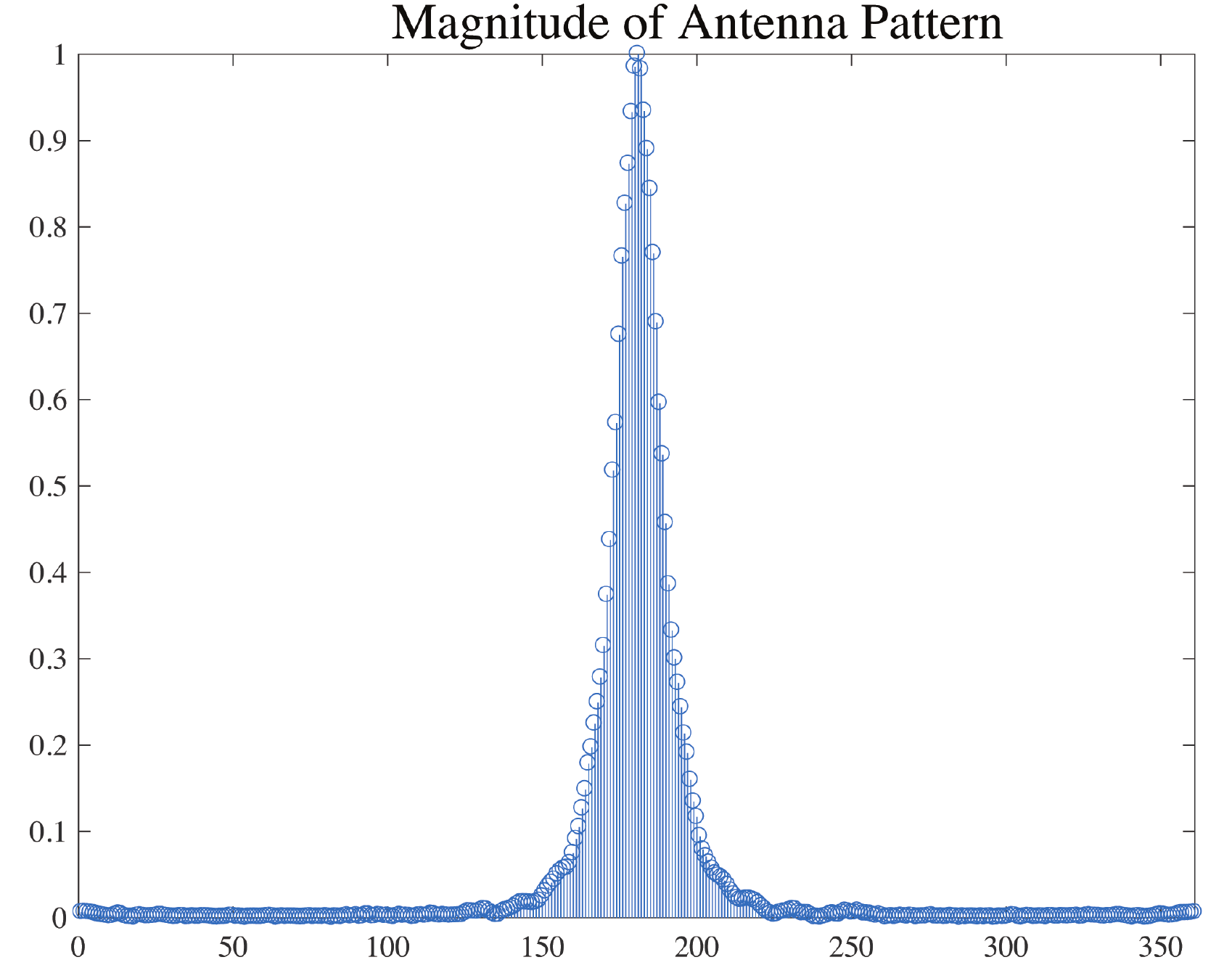}
\end{minipage}
\begin{minipage}[c]{0.65\textwidth}
\subcaption{Truth, $\mb x = \mb e_1$ (left), $\mb x \sim \mc U(\bb {CS}^{n-1}) $ (middle), $\mb x = \frac{1}{\sqrt{n}} \mb 1$ (right)}
	\includegraphics[width = \linewidth]{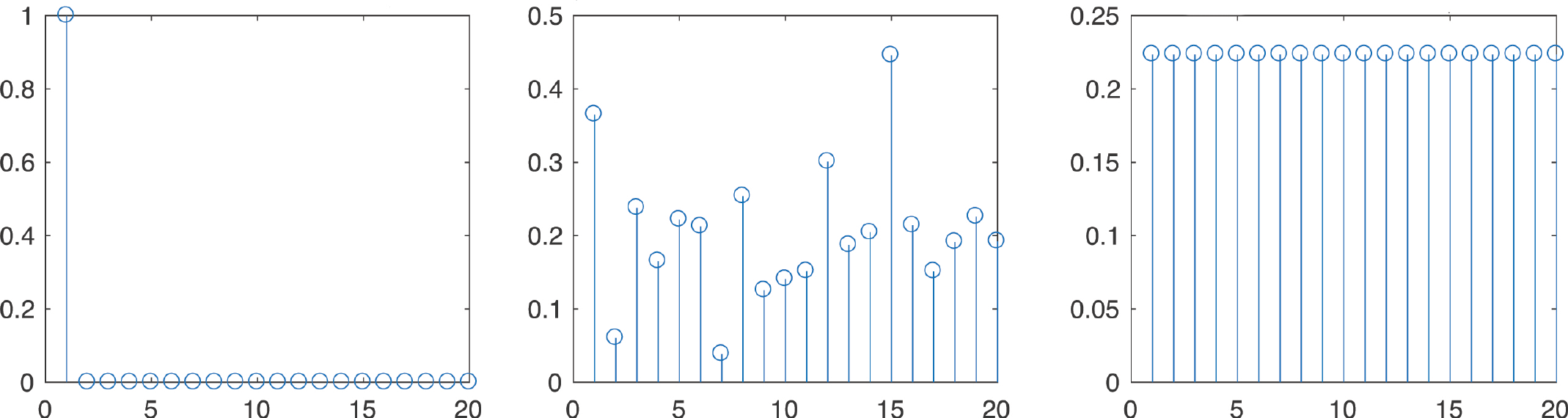}
	\subcaption{Recovered, $\mb x = \mb e_1$ (left), $\mb x \sim \mc U(\bb {CS}^{n-1}) $ (middle), $\mb x = \frac{1}{\sqrt{n}} \mb 1$ (right)}
	\includegraphics[width = \linewidth]{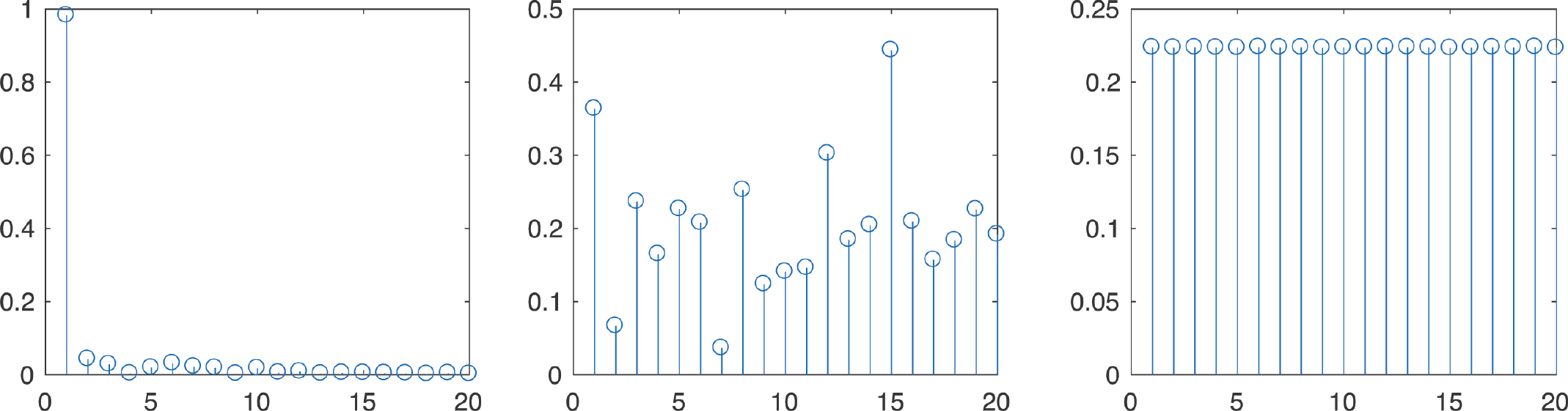}
\end{minipage}
\caption{\textbf{Experiments on real antenna filter for 5G communication.}}
\label{fig:real_data_antenna}
\end{figure*}

\subsection{Experiments on real problems}

\paragraph{Experiments on real antenna data for 5G communication.}~We demonstrate the effectiveness of the proposed method on a problem arising in 5G communication, as we mentioned in the introduction. \Cref{fig:real_data_antenna} (left) shows an antenna pattern $\mb a\in \bb C^{361}$ obtained from Bell labs. We observe the modulus of the convolution of this pattern with the signal of interest. For three different types of signals with length $n=20$, (1) $\mb x = \mb e_1$ , (2) $\mb x$ is uniformly random generated from $\bb {CS}^{n-1}$, (3) $\mb x = \frac{1}{\sqrt{n}} \mb 1$, our result in \Cref{fig:real_data_antenna} (right) shows that we can achieve almost perfect recovery.

\begin{figure*}[!htbp]
\captionsetup[sub]{font=small,labelfont={bf,sf}}

\begin{minipage}[c]{0.49\textwidth}
\subcaption{Original colored image.}
\centering
\includegraphics[width = \linewidth]{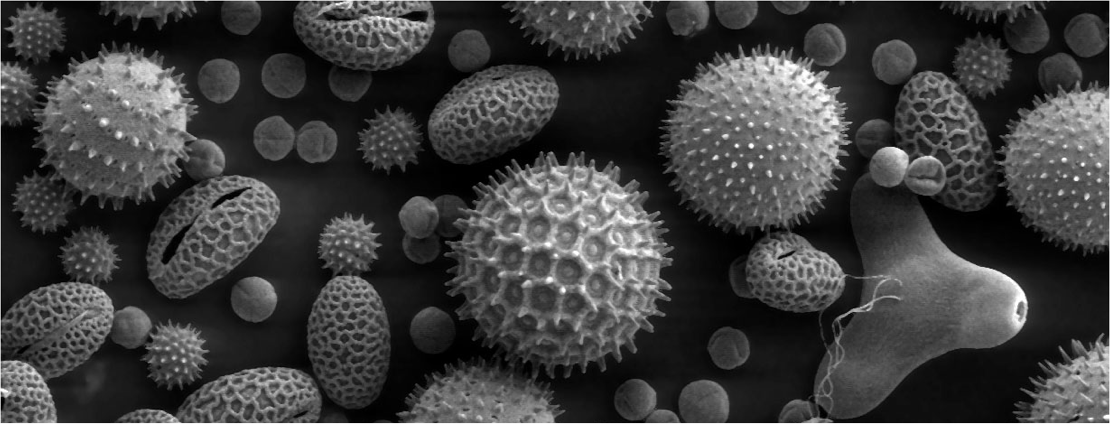}
\subcaption{Initialization.}
\centering
\includegraphics[width = \linewidth]{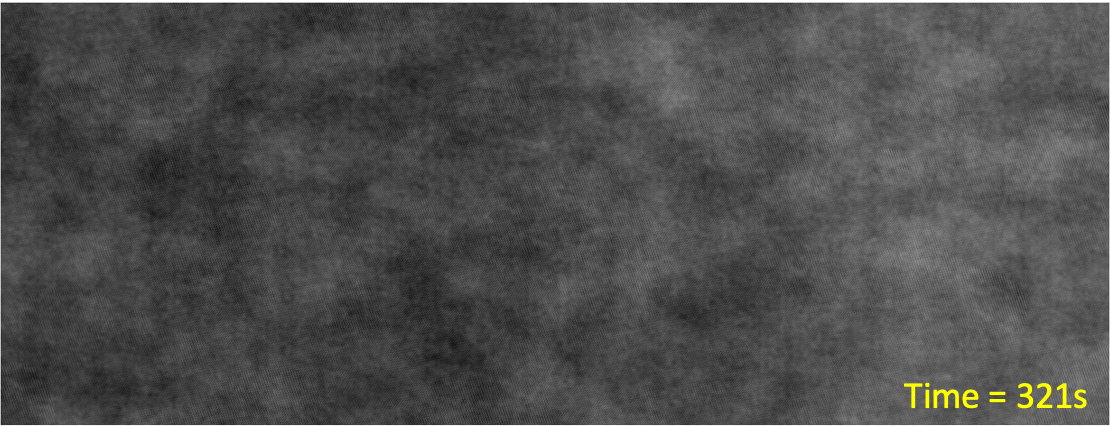}
\subcaption{$8$-th iteration.}
\centering
\includegraphics[width = \linewidth]{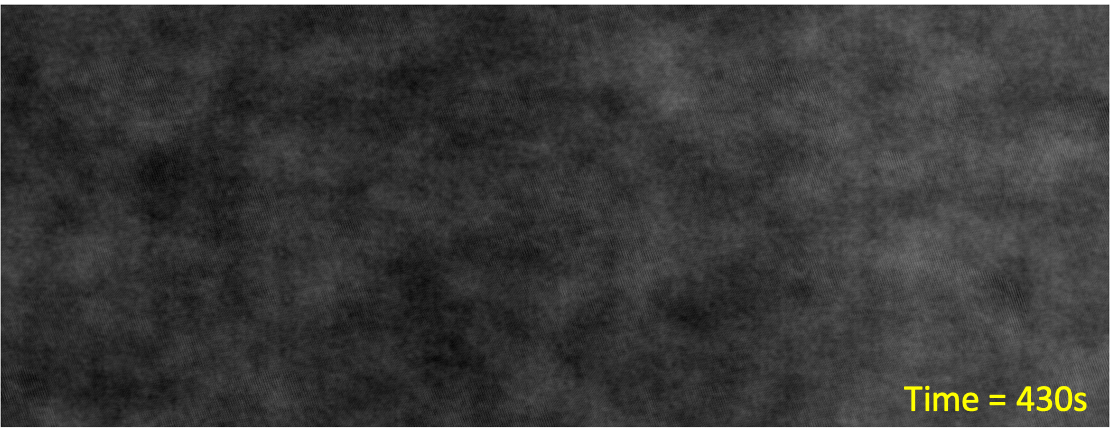}
\end{minipage}
\hfill
\begin{minipage}[c]{0.49\textwidth}
\subcaption{$16$-th iteration.}
\centering
\includegraphics[width = \linewidth]{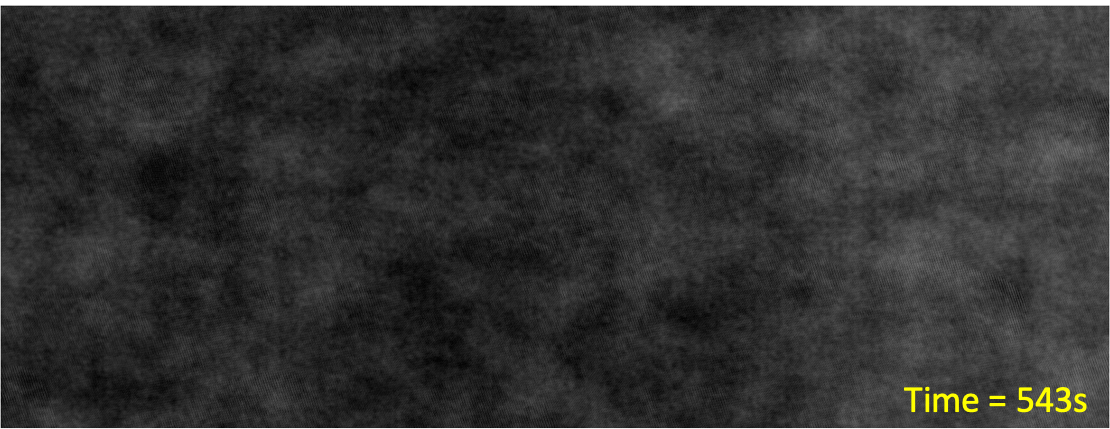}
\subcaption{$32$-th iteration.}
\centering
\includegraphics[width = \linewidth]{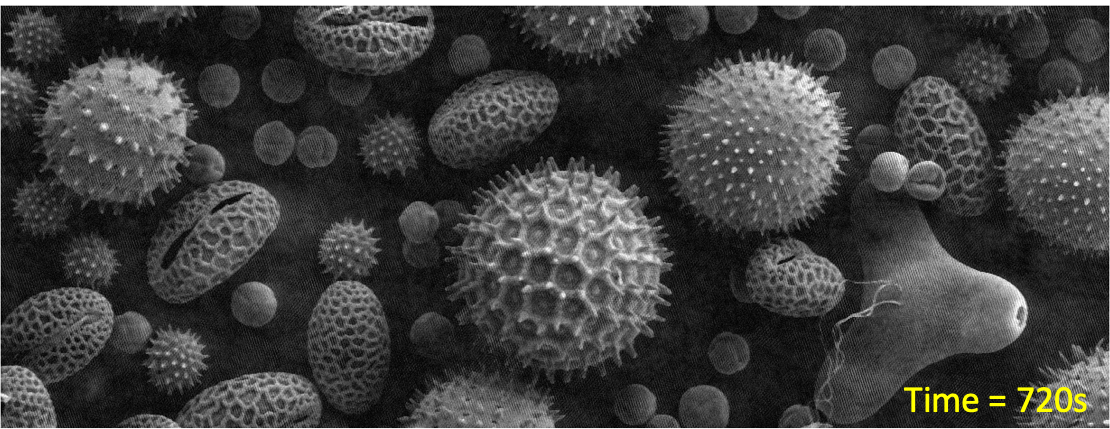}
\subcaption{$64$-th iteration.}
\centering
\includegraphics[width = \linewidth]{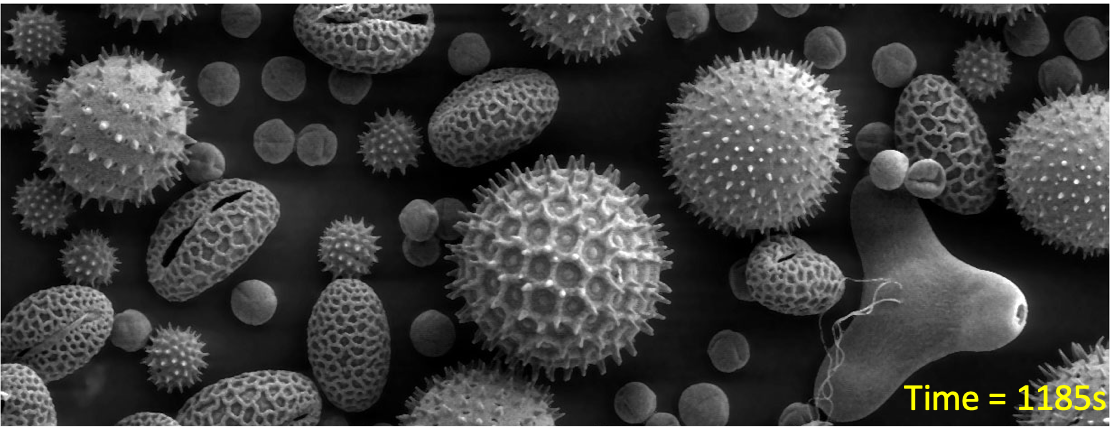}
\end{minipage}

\caption{\textbf{Experiment on a gray $468\times 1228$ electron microscopy image.}}
\label{fig:real_data_1}
\end{figure*}

\paragraph{Experiments on real images.}~\edited{Finally, we run the experiment on some real images to demonstrate the effectiveness and the efficiency of the proposed method. We use $m = 5 n\log n$ samples for reconstruction. The kernel $\mb a \in \bb C^m$ is randomly generated as complex Gaussian $\mc {CN}(\mb 0,\mb I)$. We run the power method for $100$ iterations for initialization, and stop the algorithm once the error is smaller than $1\times 10^{-4}$. We first test the proposed method on a gray $468\times 1228$ electron microscopy image. As shown in \Cref{fig:real_data_1}, the gradient descent method with spectral initialization converges to the target solution in around $64$ iterations. Second, we test our method on a color image of size $200 \times 300$ as shown in \Cref{fig:real_data}, it takes $197.08s$ to reconstruct all the RGB channels. In contrast, methods using general Gaussian measurements $\mb A\in \bb C^{m \times n}$ could easily run out of memory on a personal computer for problems of this size.}

\begin{figure*}[!htbp]
\centering
\captionsetup[sub]{font=small,labelfont={bf,sf}}
\centering
\begin{minipage}[c]{0.45\textwidth}
\subcaption{Original colored image.}
\centering
	\includegraphics[width = \linewidth]{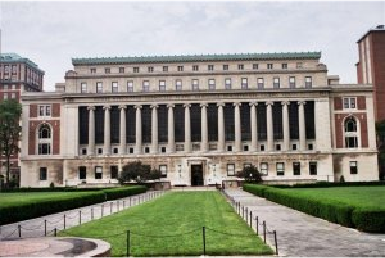}
\end{minipage}
\hfill
\begin{minipage}[c]{0.45\textwidth}
\subcaption{Recovered colored image.}
\centering
	\includegraphics[width = \linewidth]{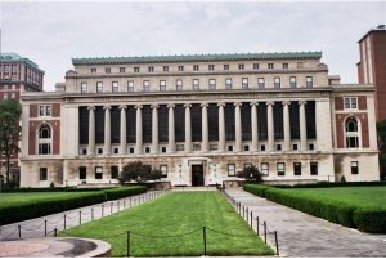}
\end{minipage}
\caption{\textbf{Experiment on colored images.}}
\label{fig:real_data}
\end{figure*}

\section{Discussion and Future Work}\label{sec:discuss}

In this work, we showed that via nonconvex optimization, the phase retrieval problem with random convolutional measurement can be solved to global optimum with $m \geq \Omega\paren{ \frac{ \norm{\mb C_{\mb x}}{}^2  }{ \norm{\mb x}{}^2 } n \poly \log n  }$ samples. Our result raises several interesting questions that we discuss below.

\paragraph{Tightening sample complexity.}~Our estimate of the sample complexity is only tight up to logarithm factors: there is a substantial gap between our theory and practice for the dependence of the logarithm factors. We believe the high order dependence of the logarithm factors is an artifact of our analysis. In particular, our analysis in Appendix \ref{app:lem-bound-1} is based on the result of RIP conditions for partial circulant random matrices, which is in no way tight. We believe that by using advanced tools in probability, the sample complexity can be tightened to at least $m \geq \Omega\paren{ n  \log^6 n  }$.

\paragraph{Geometric analysis and global result.}~Our convergence analysis is based on showing iterative contraction of gradient descent methods. However, it would be interesting if we could characterize the function landscape of nonconvex objectives as in \cite{sun2016geometric}. Such a result would provide a better explanation of why the gradient descent method works, and help us design more efficient algorithms. The major difficulty we encountered is the lack of probability tools for analyzing the random convolutional model: because of the nonhomogeneity of $\norm{\mb C_{\mb z}}{}$ over the sphere, it is hard to tightly uniformize quantities of random convolutional matrices over the complex sphere $\bb {CS}^{n-1}$. Our preliminary analysis results in suboptimal bounds for sample complexity. 

\paragraph{Tools for analyzing other structured nonconvex problems.}~This work is part of a recent surge of research efforts on deriving provable and practical nonconvex algorithms to central problems in modern signal processing and machine learning
\cite{jain2013low,hardt2014understanding,hardt2014fast,netrapalli2014non,jain2014fast,sun2014guaranteed, jain2014provable, wei2015guarantees,sa2015global,zheng2015convergent,tu2015low,chen2015fast, anandkumar2014analyzing,anandkumar2014tensor,ge2015escaping,qu2014finding,hopkins2015speeding,arora2015simple,sun2016complete_a,sun2016complete_b,lee2015blind,ge2016matrix,ge2017no,boumal2016nonconvex}. On the other hand, we believe the probability tools of decoupling and measure concentration we developed here can form a solid foundation for studying other nonconvex problems under the random convolutional model. Those problems include blind calibration \cite{ling2015self,cambareri2016through,ling2016self}, sparse blind deconvolution \cite{levin2009understanding,NIPS2011_4416,choudhary2014sparse,ahmed2014blind,lee2015blind,li2016rapid,lee2016fast,ling2017regularized,zhang2018structured,kuo2019geometry,li2018global,qu2019nonconvex}, and convolutional dictionary learning \cite{bristow2013fast,heide2015fast,chun2017convolutional,garcia2018convolutional,lau2019short}.



\section{Proofs of Technical Results}\label{sec:proofs}

In this section, we provide the detailed proof of \Cref{thm:main}.
The section is organized as follows. In \Cref{app:init}, we show that the the initialization produced by \Cref{alg:init} is close to the optimal solution. In \Cref{app:main}, we sketch the proof of our main result, i.e., \Cref{thm:main}, where some key details are provided in \Cref{app:contraction}. All the other supporting results are provided subsequently. We provide detailed proofs of two key supporting lemmas in \Cref{app:lem-bound-1} and \Cref{app:lem-bound-2}, respectively. Finally, other supporting lemmas are postponed to the appendices: {\em (i)} in Appendix \ref{app:tools}, we introduce the elementary tools and results that are useful throughout analysis; {\em (ii)} in Appendix \ref{app:moments-circulant-matrix}, we provide results of bounding the suprema of chaos processes for random circulant matrices; {\em (iii)} in Appendix \ref{app:decoupling}, we provide concentration results for suprema of some dependent random processes via decoupling.

\subsection{Spectral Initialization}\label{app:init}

\begin{proposition}\label{prop:init}
	Suppose $\mb z_0$ is produced by \Cref{alg:init}. Given a fixed scalar $\delta>0$, whenever $m\geq C \delta^{-2} n  \log^7 n$, we have
	\begin{align*}
		\dist^2\paren{\mb z_0, \mc X} \;\leq\; \delta \norm{\mb x}{}^2
	\end{align*}
	with probability at least $1- c_1 m^{-c_2}$. 
\end{proposition}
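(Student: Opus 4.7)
My plan is to control $\mb z^{(0)} = \lambda \wt{\mb z}^{(0)}$ by separately bounding the leading eigenvector $\wt{\mb z}^{(0)}$ and the scalar $\lambda$. Since each row $\mb a_k$ of $\mb A$ is marginally $\mc{CN}(\mb 0,\mb I)$, the Gaussian moment identity $\E\brac{\abs{\mb a_k^*\mb x}^2 \mb a_k\mb a_k^*} = \norm{\mb x}{}^2\mb I + \mb x\mb x^*$ yields $\E\brac{\mb Y} = \norm{\mb x}{}^2\mb I + \mb x\mb x^*$, whose top eigenvector is $\mb x/\norm{\mb x}{}$ with eigenvalue $2\norm{\mb x}{}^2$ and spectral gap $\norm{\mb x}{}^2$. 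By the Davis--Kahan $\sin\Theta$ theorem, once $\norm{\mb Y - \E\brac{\mb Y}}{} \leq \eta\norm{\mb x}{}^2$ for sufficiently small $\eta$, the leading unit eigenvector of $\mb Y$ satisfies $\dist\paren{\wt{\mb z}^{(0)}, \Brac{e^{\im\phi}\mb x/\norm{\mb x}{} : \phi\in[0,2\pi)}} \leq C_1 \eta$. The scalar $\lambda^2 = \tfrac{1}{m}\norm{\mb A\mb x}{}^2 = \tfrac{1}{m}\norm{\mb a\conv\mb x}{}^2$ is a Gaussian quadratic form in $\mb a$, for which a Hanson--Wright-type bound gives $\abs{\lambda^2 - \norm{\mb x}{}^2} \leq \eta' \norm{\mb x}{}^2$ with high probability once $m$ is large. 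Writing $\wt{\mb z}^{(0)} = e^{\im\phi^\star}\mb x/\norm{\mb x}{} + \mb e$ with $\norm{\mb e}{} \leq C_1 \eta$ then gives $\dist^2\paren{\mb z^{(0)},\mc X} \leq C_2 \paren{\eta^2 + \eta'^2}\norm{\mb x}{}^2$, so choosing $\eta = \eta' = c_3\sqrt{\delta}$ for small enough $c_3$ reduces the task to proving $\norm{\mb Y - \E\brac{\mb Y}}{} \leq c_3\sqrt{\delta}\,\norm{\mb x}{}^2$.

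\textbf{Operator-norm concentration of $\mb Y$.} This is the substantive step. I would start from the decomposition
\begin{align*}
\mb v^*\mb Y\mb v \;=\; \norm{\mb x}{}^2 \mb v^*\paren{\tfrac{1}{m}\mb A^*\mb A}\mb v \;+\; \tfrac{1}{m}\sum_{k=1}^m \paren{\abs{\mb a_k^*\mb x}^2 - \norm{\mb x}{}^2}\abs{\mb a_k^*\mb v}^2.
\end{align*}
The first term concentrates uniformly in $\mb v\in\bb{CS}^{n-1}$ around $\norm{\mb x}{}^2\norm{\mb v}{}^2$ via the restricted isometry property of partial random circulants \cite{krahmer2014suprema}, at a cost of only $\poly \log n$ factors in sample complexity. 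The second is the main obstacle: it is a degree-four polynomial in $\mb a$ whose coefficients are tightly coupled through the circulant structure. I would tackle it by the decoupling-recoupling template outlined in \Cref{subsec:APM-connect}: introduce an independent copy $\mb a'$ of $\mb a$, compare to the decoupled form $\tfrac{1}{m}\sum_k \paren{\abs{(\mb a_k')^*\mb x}^2 - \norm{\mb x}{}^2}\abs{\mb a_k^*\mb v}^2$, and bound the latter conditionally on $\mb a'$. Crucially, unlike the nonsmooth phase term analyzed in the proof of the main theorem, here the ``coefficients'' are smooth polynomials of $\mb a$, so classical Gaussian polynomial decoupling via conditional expectation recovers the original object with no essential approximation error. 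Conditional on $\mb a'$, the decoupled expression is a quadratic chaos in $\mb a$ indexed by $\mb v$, whose suprema over $\bb{CS}^{n-1}$ can be controlled by the Krahmer--Mendelson--Rauhut bound on suprema of chaos processes for partial random circulants.

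\textbf{Main obstacle.} The hard part will be carefully estimating the $\gamma_2$ and radius parameters of the conditional matrix set (both in $\mb a$ and after integrating over $\mb a'$) and combining the two randomness sources into a single tail bound at the stated rate $m \geq C \delta^{-2} n \log^7 n$. The logarithmic exponent reflects the combination of the partial-circulant RIP contribution (already $\log^4 n$) with the chaos-process bound on the $\mb a$-conditional quadratic form, together with $\poly \log n$ losses from the covering/union-bound step needed to lift pointwise concentration to uniform control over $\bb{CS}^{n-1}$.
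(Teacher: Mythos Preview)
Your approach is correct and matches the paper's: both reduce to operator-norm concentration of $\mb Y$ (the paper uses a direct Rayleigh-quotient argument in place of Davis--Kahan, and Theorem~\ref{thm:moments-norm-circ} in place of Hanson--Wright for $\lambda$, but these are interchangeable) and then handle $\norm{\mb Y-\bb E[\mb Y]}{}$ via the $\mb g\pm\mb\delta$ decoupling trick followed by the Krahmer--Mendelson--Rauhut chaos bound conditional on the decoupled weights. One minor caveat on your phrasing: recoupling a degree-four polynomial via conditional expectation does \emph{not} return the original exactly---writing out $\bb E_{\mb\delta}[\mc Q_{dec}^{\mb Y}(\mb g+\mb\delta,\mb g-\mb\delta)]$ leaves several lower-order correction terms (cross terms in $\mb x\mb x^*$ and $\tfrac{1}{m}\mb A^*\mb A-\mb I$) that must be tracked, and it is the unbounded conditional weight $\norm{\mb g^2\conv\mb x}{\infty}^2$ (handled via Lemma~\ref{lem:infty-moments}) that lifts the exponent from $\log^4 n$ to $\log^7 n$.
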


\edited{The proof is similar to that of \cite{soltanolkotabi2014algorithms}. However, the proof in  \cite{soltanolkotabi2014algorithms} only holds for generic random measurements. Our proof here is tailored for random circulant matrices.} We sketch the main ideas of the proof below: more detailed analysis for \edited{concentration of random circulant matrices} is retained to Appendix \ref{app:moments-circulant-matrix} and Appendix \ref{app:decoupling}.

\begin{proof}
Without loss of generality, we assume that $\norm{\mb x}{}=1$. Let $\wt{\mb z}_0$ be the leading eigenvector of
\begin{align*}
	\mb Y \;=\; \frac{1}{m}\sum_{k=1}^m \abs{ \mb a_k^*\mb x }^2 \mb a_k \mb a_k^*
\end{align*}
with $\norm{\wt{\mb z}_0}{} =1$, and let $\sigma_1$ be the corresponding eigenvalue. We have
\begin{align*}
  \dist(\mb z_0,\mc X) \leq \norm{\mb z_0 -  \wt{\mb z}_0}{}+ \dist\paren{ \wt{\mb z}_0, \mc X}.
\end{align*}
First, since $\mb z_0 = \lambda \wt{\mb z}_0$, we have
\begin{align*}
	\norm{\mb z_0 -  \wt{\mb z}_0}{} = \abs{\lambda -1}.
\end{align*}
By \Cref{thm:moments-norm-circ} in Appendix \ref{app:moments-circulant-matrix}, for any $\eps > 0$, whenever $m \geq C\eps^{-2} n \log^4 n $, we know that
\begin{align}\label{eqn:init-1}
	\abs{\lambda - 1} \leq \abs{\lambda^2-1 } = \abs{ \frac{1}{m} \sum_{k=1}^m \abs{\mb a_k^*\mb x}^2 -1  } \leq \eps/2
\end{align}
with probability at least $ 1- 2 m^{-c\log^3 n}$, where $c,C>0$ are some numerical constants. On the other hand, we have
\begin{align*}
	\dist^2(\wt{\mb z}_0,\mc X) = \arg \min_{\theta}  \norm{\wt{\mb z}_0 - \mb x e^{\im \theta} }{}^2 = 2 - 2 \abs{\mb x^*\wt{\mb z}_0 }.
\end{align*}
\Cref{thm:concentration-Y} in Appendix \ref{app:decoupling} implies that for any $\delta>0$, whenever $m \geq C' \delta^{-2} n \log^7 n$
\begin{align*}
	\norm{ \mb Y - \paren{\mb x \mb x^* + \norm{\mb x}{}^2 \mb I } }{} \;\leq\; \delta,
\end{align*}
with probability at least $1-2m^{-c_1}$. Here $c_1>0$ is some numerical constant. It further implies that
\begin{align*}
	\abs{ \wt{\mb z}_0^* \mb Y \wt{\mb z}_0 - \abs{\wt{\mb z}_0^*\mb x}^2 -1  } \;\leq\; \delta,
\end{align*}
so that
\begin{align*}
	\abs{\wt{\mb z}_0^*\mb x }^2 \;\geq\; \sigma_1 - 1 - \delta,
\end{align*}
where $\sigma_1$ is the top singular value of $\mb Y$. Since $\sigma_1$ is the top singular value, we have
\begin{align*}
	\sigma_1 \geq \mb x^*\mb Y \mb x \;=\; \mb x^*(\mb Y - \mb x\mb x^* - \norm{\mb x}{}^2 \mb I ) \mb x +2 \geq 2 - \delta.
\end{align*}
Thus, for $\delta>0$ sufficiently small, we obtain
\begin{align}\label{eqn:init-2}
	\dist^2(\wt{\mb z}_0,\mc X) \leq 2 - 2 \sqrt{1- 2\delta} 
\;\leq\; 2\delta.
\end{align}
Choose $\delta = \eps^2/8$. Combining the results in \eqref{eqn:init-1} and \eqref{eqn:init-2}, we obtain that
\begin{align*}
	\dist(\mb z_0,\mc X) \leq \norm{\mb z_0 -  \wt{\mb z}_0}{}+ \dist\paren{ \wt{\mb z}_0, \mc X} \;\leq\; \eps,
\end{align*}
holds with high probability.
\end{proof}

\subsection{Proof of Main Result}\label{app:main}

\edited{In this section, we \edited{prove} \Cref{thm:main}. Without loss of generality, we assume $\norm{\mb x}{}=1$ for the rest of the section. Given the function
\begin{align*}
	f(\mb z) = \frac{1}{2m}  \norm{ \mb b^{1/2} \odot \paren{ \mb y - \abs{\mb A\mb z} }  }{}^2,
\end{align*}
we show that simple generalized gradient descent 
\begin{align}
   \wh{\mb z} &= \mb z - \tau \frac{\partial}{ \partial \mb z} f(\mb z ),\label{eqn:grad-step} \\
  	\frac{\partial}{ \partial \mb z} f(\mb z) &= \frac{1}{m} \mb A^*\diag\paren{  \mb b } \brac{ \mb A \mb z - \mb y \odot \exp\paren{ \im \phi(\mb A\mb z) }  }, \label{eqn:grad-app}
\end{align}
with spectral initialization converges linearly to the target solution.} \edited{We restate our main result below}.
\begin{theorem}[Main Result]\label{thm:main-app}
Whenever $m \geq C_0 n \log^{31} n $, \Cref{alg:init} produces an initialization $\mb z^{(0)}$ that satisfies
\begin{align*}
\dist\paren{\mb z^{(0)},\mc X}  \;\leq\; c_0\log^{-6} n  \norm{\mb x}{}	
\end{align*}
with probability at least $ 1 - c_1 m^{-c_2} $. Suppose $\mb b = \zeta_{\sigma^2}(\mb y)$, where
\begin{align}\label{eqn:xi-zeta}
    \zeta_{\sigma^2}(t) \;=\; 1 - 2\pi \sigma^2 \xi_{\sigma^2}(t),\qquad \xi_{\sigma^2}(t) \;=\; \frac{1}{2\pi \sigma^2} \exp\paren{ - \frac{ \abs{t}^2}{2\sigma^2} },  
\end{align}
with $\sigma^2>1/2$. Starting from $\mb z^{(0)}$, with $\sigma^2 = 0.51$ and stepsize $\tau = 2.02$, whenever $m \geq  C_1\frac{\norm{\mb C_{\mb x}}{}^2}{\norm{\mb x}{}^2} \max \Brac{  \log^{17} n, n \log^4 n } $, with probability at least $1 - c_3 m^{-c_4}$ for all iterate $\mb z^{(r)} \;(r\geq 1)$ defined in \eqref{eqn:grad-step}, we have
 \begin{align*}
      \dist\paren{\mb z^{(r)},\mc X} \;\leq \; (1- \varrho )^r \dist\paren{\mb z^{(0)}	,\mc X},
  \end{align*}
holds for some small numerical constant $\varrho \in (0,1)$. 
\end{theorem}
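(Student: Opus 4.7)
The argument splits cleanly into initialization and contraction. \Cref{prop:init} already delivers $\dist(\mb z^{(0)}, \mc X) \leq c_0 \log^{-6} n$ whenever $m \geq C_0 n \log^{31} n$, so I would invoke it as a black box for the first claim. The substantive task is a one-step contraction: on a high-probability event depending only on $\mb a$, whenever $\dist(\mb z, \mc X) \leq c_0 \log^{-6} n$, the generalized gradient iterate $\wh{\mb z} = \mb z - \tau \frac{\partial}{\partial \mb z} f(\mb z)$ obeys $\dist(\wh{\mb z}, \mc X) \leq (1-\varrho)\, \dist(\mb z, \mc X)$ for some absolute $\varrho \in (0,1)$. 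Because this event is shared by all iterates (the kernel $\mb a$ is drawn once), a single union bound and a trivial induction then deliver the linear rate. The final probability statement is obtained by intersecting the initialization event from \Cref{prop:init} with the contraction event.

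\noindent\textbf{Contraction setup.} Fix $\mb z$ close to $\mc X$, let $\theta$ achieve $\dist(\mb z, \mc X) = \norm{\mb z - \mb x e^{\im\theta}}{}$, decompose $\mb z = \alpha \mb x + \beta \mb w$ with $\mb w \in \bb{CS}^{n-1}$, $\mb w \perp \mb x$, $\phi(\alpha) = \theta$, and set $\tau = 2\sigma^2 + 1$ together with $\mb M = \tfrac{\tau}{m} \mb A^* \diag(\mb b) \mb A$. Rearranging \eqref{eqn:grad-step}--\eqref{eqn:grad-app} gives
\begin{align*}
\wh{\mb z} - \mb x e^{\im\theta}
= (\mb I - \mb M)\bigl(\mb z - \mb x e^{\im\theta}\bigr) - e^{\im\theta} \mb M \mb x + \tfrac{\tau}{m}\mb A^* \diag(\mb b)\bigl[\mb y \odot \exp(\im\phi(\mb A \mb z))\bigr].
\end{align*}
I would then bound $\norm{\mb P_{\mb x^\perp}(\wh{\mb z} - \mb x e^{\im\theta})}{}$ and $\norm{\mb P_{\mb x}(\wh{\mb z} - \mb x e^{\im\theta})}{}$ separately. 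The parallel component is controlled by a direct calculation together with concentration of $\mb x^* \mb M \mb x$, giving a factor of roughly $2\sigma^2/(2\sigma^2+1) + O(\delta)$ times $\dist(\mb z, \mc X)$. The perpendicular component is attacked via the ADM-style decomposition into the terms $\mc T_1$ and $\wh{\mc T}_2$ from Section 3.2.1: $\mc T_1$ is a small residual that I would handle via a restricted isometry argument for random circulant matrices (as in \Cref{app:lem-bound-1}), while $\wh{\mc T}_2$ is the dominant, nonlinear term. The specific numerical choices $\sigma^2 = 0.51$ and $\tau = 2.02$ are tuned precisely so that the resulting sum of squared factors beats $1$ by a constant margin.

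\noindent\textbf{The main obstacle.} The heart of the proof is a uniform bound on
\begin{align*}
\mc L_s(\mb a, \mb w) = \mb w^\top \mb A^\top \diag\bigl(\zeta_{\sigma^2}(\mb y) \odot \psi(\mb A \mb x)\bigr) \mb A \mb w
\end{align*}
over $\mb w \in \mc S = \{\mb w \in \bb{CS}^{n-1} : \mb w \perp \mb x\}$, where $\psi(t) = \exp(-2\im\phi(t))$ is discontinuous at $t=0$ and $\mb a$ appears in both the diagonal phase factor and the quadratic form. I would follow the ``decoupling is Gaussian smoothing'' template of Section 3.2.2: introduce an independent copy $\mb \delta$ of $\mb a$, set $\mb g^1 = \mb a + \mb \delta$ and $\mb g^2 = \mb a - \mb \delta$ (independent $\mc{CN}(\mb 0, 2\mb I)$), form the decoupled chaos $\mc Q_{dec}^{\mc L_s}(\mb g^1, \mb g^2, \mb w)$, and let $\wh{\mc L}_s(\mb a, \mb w) = \bb E_{\mb \delta}[\mc Q_{dec}^{\mc L_s}(\mb a + \mb \delta, \mb a - \mb \delta, \mb w)]$. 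Jensen's inequality transfers moment bounds from the decoupled chaos to $\wh{\mc L}_s$; conditioning on $\mb g^2$ reduces $\mc Q_{dec}^{\mc L_s}$ to an ordinary chaos in $\mb g^1$, for which the suprema-of-chaos inequalities from \cite{krahmer2014suprema} for random circulant matrices apply and, after Markov, give a supremum tail bound. The approximation gap is controlled by
\begin{align*}
\sup_{\mb w \in \mc S} \bigl|\mc L_s(\mb a, \mb w) - \wh{\mc L}_s(\mb a, \mb w)\bigr| \leq \norm{h_s - \zeta_{\sigma^2} \psi}{L^\infty}\, \norm{\mb A}{}^2,
\end{align*}
where $h_s$ is the Gaussian smoothing of $\zeta_{\sigma^2}\psi$ at scale $\norm{\mb x}{}^2$. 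The whole reason for introducing the weighting $\zeta_{\sigma^2}$ is to kill the $t=0$ singularity of $\psi$ so that $\norm{h_s - \zeta_{\sigma^2}\psi}{L^\infty}$ is a small numerical constant; I would verify this by a direct pointwise estimate on the real line, reading off the critical numerics from \Cref{fig:func_psi_h_zeta}.

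\noindent\textbf{Assembly.} Combining the supremum bound on $\mc L_s$ with the standard facts $\norm{\mb A}{} \approx \sqrt{m}$ and $\norm{\mb I - \tfrac{1}{m}\mb A^*\mb A}{} = o(1)$ (again from random-circulant concentration), together with the $\mc T_1$ bound from \Cref{app:lem-bound-1}, yields the required perpendicular factor. Adding the parallel bound, the sum of squared factors is strictly less than $1$, producing a contraction constant $\varrho$ bounded away from $0$. I expect the most delicate step to be quantitatively matching the $L^\infty$ smoothing error against the chaos-process moment bound so that the final constants align with $\sigma^2 = 0.51$ and $\tau = 2.02$; the sample-complexity factor $\norm{\mb C_{\mb x}}{}^2/\norm{\mb x}{}^2$ then emerges from the suprema-of-chaos bound because the Talagrand $\gamma_2$-functional of the relevant set of matrices scales with $\norm{\mb C_{\mb x}}{}$.
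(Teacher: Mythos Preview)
Your proposal is correct and mirrors the paper's proof: \Cref{thm:main-app} is derived by invoking \Cref{prop:init} for the initialization bound and then iterating the one-step contraction of \Cref{prop:iterate-contract}, whose proof bounds $\norm{\mb P_{\mb x^\perp}\mb d}{}$ and $\norm{\mb P_{\mb x}\mb d}{}$ exactly as you outline (Lemmas~\ref{lem:L-x-perp} and~\ref{lem:L-x-parallel}), with the perpendicular piece resting on the decoupling/chaos argument you describe.

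Two small technical points to fix before you write the details. First, in the paper the decoupled form carries only $\psi(\mb g^2\conv\mb x)$ on the diagonal, \emph{not} $\zeta_{\sigma^2}\psi$; the smoothed function is therefore $h(t)=\bb E_{s\sim\mc{CN}(0,1)}[\psi(t+s)]$ and the approximation error appearing in $\Delta_\infty(\eps)$ is $\norm{(1+\eps)h-\zeta_{\sigma^2}\psi}{L^\infty}$ (Lemma~\ref{lem:cross-term-concentration} and~\ref{lem:Delta-infty-bound}). Second, the factor $\norm{\mb C_{\mb x}}{}^2$ in the sample complexity does not come from the $\gamma_2$-functional of the chaos set (that scales only with $\sqrt{n/m}$, see Lemma~\ref{lem:gamma-2}); it enters through the Lipschitz constants in the Gaussian-concentration steps for functions of $\mb a\mapsto\mb C_{\mb x}\mb a$, specifically in the concentration of $\mb M$ and $\mb H$ (\Cref{thm:M-H-concentration}) and in the proof of \Cref{lem:bound-1}.
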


Our proof critically depends on the following result, where we show that with high probability  for every $\mb z\in \bb C^n$ close enough to the optimal set $\mc X$, \edited{the iterate produced by \eqref{eqn:grad-step} is a contraction}.

\begin{proposition}[Iterative Contraction]\label{prop:iterate-contract}
Let $\sigma^2=0.51$ and $\tau = 2.02$. There exists some positive constants $c_1, c_2, c_3$ and $C$, such that whenever $m \geq  C  \frac{ \norm{\mb C_{\mb x}}{}^2 }{ \norm{\mb x }{}^2 } \max \Brac{  \log^{17} n, n \log^4 n } $, with probability at least $1- c_1m^{-c_2}$ for every $\mb z \in \bb C^n$ satisfying $ \dist\paren{ \mb z, \mc X } \leq c_3\log^{-6} n \norm{\mb x}{}$, we have
    \begin{align*}
       \dist\paren{ \mb z - \tau \frac{\partial}{ \partial \mb z } f(\mb z),\mc X } \;\leq\; (1-\varrho  ) \dist\paren{\mb z,\mc X}	
    \end{align*}
    holds for some small constant $\varrho \in (0,1)$. Here, $\frac{\partial}{ \partial \mb z } f(\mb z)$ is defined in \eqref{eqn:grad-app}.
\end{proposition}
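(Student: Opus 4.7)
The plan is to rewrite one gradient step in closed form: with $\tau = 2\sigma^2 + 1 = 2.02$ and $\mb M \doteq \frac{\tau}{m} \mb A^* \diag(\mb b) \mb A$,
\[
   \wh{\mb z} \;=\; (\mb I - \mb M)\mb z \;+\; \frac{\tau}{m}\mb A^* \diag(\mb b)\bigl[\mb y \odot \exp(\im \phi(\mb A \mb z))\bigr].
\]
A direct calculation with $g = \mb a^*\mb x \sim \mc{CN}(0,1)$ gives $\mathbb E[\zeta_{\sigma^2}(\abs{g})] = 1/(1+2\sigma^2)$, hence $\mathbb E[\mb M] = \mb I$, so the choice of $\tau$ is precisely such that $\mb I - \mb M$ is mean-zero and concentrates in spectral norm. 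Set $\theta = \arg\min_{\bar\theta \in [0,2\pi)}\|\mb z - \mb x e^{\im\bar\theta}\|$ so that $e^{\im\theta} = \mb x^*\mb z/\abs{\mb x^*\mb z}$, and decompose $\mb z = \alpha \mb x + \beta \mb w$ with $\mb w \in \bb{CS}^{n-1}$, $\mb w \perp \mb x$, and $\phi(\alpha) = \theta$; the hypothesis $\dist(\mb z,\mc X)\leq c_3\log^{-6} n$ gives $\abs{\beta/\alpha} \lesssim \log^{-6} n$.

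Subtracting $\mb x e^{\im\theta}$ and rearranging yields
\[
  \mb L \;\doteq\; \wh{\mb z} - \mb x e^{\im\theta} \;=\; (\mb I - \mb M)(\mb z - \mb x e^{\im\theta}) \;+\; \frac{\tau}{m}\mb A^* \diag(\mb b)\bigl[\mb y \odot \exp(\im\phi(\mb A\mb z)) - \mb A\mb x e^{\im\theta}\bigr].
\]
I would then bound $\dist^2(\wh{\mb z},\mc X) \leq \|\mb L\|^2 = \|\mb P_{\mb x^\perp}\mb L\|^2 + \|\mb P_{\mb x}\mb L\|^2$ by treating the two projections separately. The parallel piece $\|\mb P_{\mb x}\mb L\|$ is handled by concentration of the scalar $\mb x^*\mb M\mb x$ around its expectation and an elementary computation; with $\sigma^2 = 0.51$ this produces a prefactor of about $2\sigma^2/(1+2\sigma^2)+o(1)\approx 0.505$ times $\|\mb z - \mb x e^{\im\theta}\|$, which is strictly below one.

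The work is in the perpendicular piece. Using $\mb y = \abs{\mb A\mb x}$ I would write
\[
 \mb y\odot\exp(\im\phi(\mb A\mb z)) - \mb A\mb x e^{\im\theta} \;=\; \abs{\mb A\mb x} \odot \bigl[\exp(\im\phi(\mb A\mb x e^{\im\theta} + (\beta/\alpha)\mb A\mb w e^{\im\theta})) - \exp(\im\phi(\mb A\mb x e^{\im\theta}))\bigr],
\]
and invoke the phase-difference Lemma~A.1 to split it, for any $\rho \in (0,1)$, into $2\mc T_1$ (the ``bad-coordinate'' truncation) and $\frac{1}{1-\rho}\abs{\beta/\alpha}\widehat{\mc T}_2$ (the first-order linearization). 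Combined with $\|\mb A\|/\sqrt{m} = 1+o(1)$ and $\|\mb I - \mb M\| = o(1)$, reducing $\|\mb P_{\mb x^\perp}\mb L\|$ to the task of uniformly bounding $\mc T_1(\mb w)$ and $\widehat{\mc T}_2(\mb w)$ over $\mb w \in \mc S = \{\mb w \in \bb{CS}^{n-1} : \mb w \perp \mb x\}$. The term $\mc T_1$ is controlled via a RIP estimate for partial random circulant matrices plus the smallness of $\abs{\beta/\alpha}$: this is the content of the appendix ``\Cref{app:lem-bound-1}'' referenced in the sketch.

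The main obstacle is the smoothed phase quadratic form $\widehat{\mc T}_2$. Using $\Im(u) = (u-\bar u)/(2\im)$, its square is controlled by $\sup_{\mb w \in \mc S}\abs{\mc L_s(\mb a,\mb w)}$ where
\[
 \mc L_s(\mb a,\mb w) \;=\; \mb w^\top \mb A^\top \diag(\zeta_{\sigma^2}(\mb y) \odot \psi(\mb A\mb x))\mb A\mb w, \qquad \psi(t) = \exp(-2\im\phi(t)),
\]
and this is the genuinely new quantity whose analysis requires the full decoupling machinery sketched in Section~3.2.2. I would follow that sketch: approximate $\mc L_s$ by its Gaussian-smoothed recoupling $\widehat{\mc L}(\mb a,\mb w) = \mb w^\top \mb A^\top \diag(h(\mb A\mb x))\mb A\mb w$, where $h(t) = \mathbb E_{s \sim \mc{CN}(0,1)}[\psi(t+s)]$, paying an error $\|h-\zeta_{\sigma^2}\psi\|_{L^\infty}\|\mb A\|^2$; the choice $\sigma^2 = 0.51$ is precisely what keeps this $L^\infty$ error a numerical constant strictly less than one. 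Writing $\widehat{\mc L}$ as the conditional expectation of the genuinely decoupled bilinear form $\mc Q_{dec}^{\mc L}(\mb a+\bm\delta,\mb a - \bm\delta,\mb w)$ over an independent $\mc{CN}(\mb 0,\mb I)$ copy $\bm\delta$, Jensen's inequality produces the moment comparison $\|\sup_{\mc S}|\widehat{\mc L}|\|_{L^p} \leq \|\sup_{\mc S}|\mc Q_{dec}^{\mc L}(\mb g^1,\mb g^2,\cdot)|\|_{L^p}$ for independent $\mb g^1,\mb g^2$, and the right-hand side is then a chaos process in $\mb g^1$ conditioned on $\mb g^2$ to which the Krahmer--Mendelson--Rauhut suprema-of-chaos bounds for random circulant matrices apply. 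A standard moment-to-tail passage produces, on an event of probability at least $1 - c_1 m^{-c_2}$ whenever $m \gtrsim \|\mb C_{\mb x}\|^2\max\{\log^{17} n, n\log^4 n\}$, the bound $\sup_{\mb w \in \mc S}\abs{\mc L_s(\mb a,\mb w)} \leq (1-\eta)m$ for some fixed $\eta \in (0,1)$. Plugging these estimates back and optimizing over the free parameters $\rho$ and the slack $\delta$ in the concentration statements yields $\|\mb P_{\mb x^\perp}\mb L\|^2 + \|\mb P_{\mb x}\mb L\|^2 \leq (1-\varrho)^2\dist^2(\mb z,\mc X)$ for an explicit numerical $\varrho \in (0,1)$, completing the proof.
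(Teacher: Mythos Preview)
Your outline matches the paper's proof closely: same gradient-step rewrite, same orthogonal decomposition $\mb z = \alpha\mb x + \beta\mb w$, same split of $\mb d = \wh{\mb z} - \mb x e^{\im\theta}$ into $\mb P_{\mb x^\perp}$ and $\mb P_{\mb x}$ components, and the same decoupling/chaos route for the smoothed phase quadratic $\mc L_s$. However, one explicit claim is wrong and creates an internal inconsistency.

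You assert $\mathbb E[\mb M] = \mb I$ from $\mathbb E[\zeta_{\sigma^2}(|g|)] = 1/(1+2\sigma^2)$. That deduction only works if the weight $\zeta_{\sigma^2}(|\mb a_k^*\mb x|)$ were independent of $\mb a_k\mb a_k^*$; it is not, since both depend on $\mb a_k^*\mb x$. The correct computation (paper's Lemma~\ref{lem:M-H-expectation}) gives
\[
\mathbb E[\mb M] \;=\; \mb P_{\mb x^\perp} + \tfrac{1+4\sigma^2}{1+2\sigma^2}\,\mb x\mb x^* \;=\; \mb I + \tfrac{2\sigma^2}{1+2\sigma^2}\,\mb x\mb x^*,
\]
because $\mathbb E[|g|^2\zeta_{\sigma^2}(|g|)] = (4\sigma^2+1)/(2\sigma^2+1)^2$. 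Consequently $\|\mb I - \mb M\|$ does \emph{not} concentrate to zero; what is small is $\|\mb P_{\mb x^\perp}(\mb I - \mb M)\|$ (this is exactly \eqref{eqn:H-bound} in Theorem~\ref{thm:M-H-concentration}). Since the perpendicular bound only ever sees $\mb P_{\mb x^\perp}(\mb I - \mb M)$, your argument there survives once you replace the wrong claim by the right one---and in fact your $0.505$ prefactor for the parallel piece is precisely $|1 - \mb x^*\mathbb E[\mb M]\mb x| = 2\sigma^2/(1+2\sigma^2)$, so you were implicitly using the correct expectation there already.

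One further understatement: the parallel piece is not handled by ``concentration of $\mb x^*\mb M\mb x$ and an elementary computation'' alone. The phase term contributes $\mb x^*\mb A^*\diag(\mb b)[\abs{\mb A\mb x}\odot(\exp(\im\phi(\mb A\mb z)-\im\phi(\mb A\mb x)) - e^{\im\theta}\mb 1)]$, and the paper (Lemma~\ref{lem:L-x-parallel}) needs the second-order phase approximation Lemma~\ref{lem:phase-approx} (not just Lemma~\ref{lem:phase-diff}) together with the same truncation argument of Lemma~\ref{lem:bound-1} to show this contribution is $O(\delta)\,\|\mb z - \mb x e^{\im\theta}\|$. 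With those two fixes, your proof is the paper's.
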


\edited{To prove this proposition, let us first define
\begin{align}
   \mb M \;&=\; \mb M(\mb a) \; \doteq \; \frac{2\sigma^2+1}{m} \mb A^*\diag\paren{  \zeta_{\sigma^2}(\mb y) } \mb A, \label{eqn:M-def-main} \\
   \mb H \;&=\; \mb H(\mb a) \; \doteq \; \mb P_{\mb x^\perp} \mb M(\mb a) \mb P_{\mb x^\perp}, \label{eqn:H-def-main}
\end{align}
and introduce
\begin{align}\label{eqn:h-def-main}
	h(t) \;\doteq\; \bb E_{s\sim \mc N(0,1)}\brac{ \psi(t+s)}.
\end{align}
Given some scalar $\eps>0$ and $\sigma^2>1/2$, let us introduce a quantity
\begin{align}\label{eqn:Delta-infty}
   \Delta_{\infty}(\eps) \doteq \paren{1+2\sigma^2} \norm{ (1+\eps) \bb E_{s\sim \mc {CN}(0,1)}\brac{ \psi(t+s) } - \zeta_{\sigma^2}(t) \psi(t) }{L^\infty},
\end{align}
where $\psi(t) = \exp\paren{ - 2\im \phi(t) } $ and $\zeta_{\sigma^2}$ is defined in \eqref{eqn:xi-zeta}.} We sketch the main idea of the proof below. More detailed analysis is postponed to Appendix \ref{app:contraction}, Appendix \ref{app:lem-bound-1} and Appendix \ref{app:lem-bound-2}.

\begin{proof}[Proof of Proposition \ref{prop:iterate-contract}]
By \eqref{eqn:grad-step} and \eqref{eqn:grad-app}, and with the choice of stepsize $\tau = 2\sigma^2+1$, we have
\begin{align*}
	\wh{\mb z} \;&=\; \mb z - \frac{2\sigma^2+1}{m} \mb A^* \diag\paren{ \zeta_{\sigma^2}(\mb y) } \brac{  \mb A \mb z - \mb y \odot \exp\paren{ \im \phi\paren{\mb A\mb z  } } } \\
	\;&=\; \mb z - \mb M \mb z + \frac{2\sigma^2+1}{m} \mb A^* \diag\paren{ \zeta_{\sigma^2}(\mb y) } \brac{  \mb y \odot \exp\paren{ \im \phi\paren{ \mb A \mb z } } }.
\end{align*}
For any $\mb z \in \bb C^n$, let us decompose $\mb z$ as
\begin{align}\label{eqn:z-decomposition}
  \mb z \;=\; \alpha \mb x \;+\; \beta \mb w,	
\end{align}
where $\alpha,\beta \in \bb C$, and $\mb w \in \bb {CS}^{n-1}$ with $\mb w \perp \mb x$, and $\alpha = \abs{\alpha} e^{\im \phi(\alpha) }$ with the phase $\phi(\alpha)$ of $\alpha$ satisfies $ e^{\im \phi(\alpha) } = \mb x^*\mb z / \abs{\mb x^*\mb z} $. Therefore, if we let
\begin{align}\label{eqn:def-theta}
    \theta = {\arg \min}_{ \ol{\theta} \in [0,2\pi) } \norm{\mb z - \mb x e^{\im \ol{\theta} } }{},
\end{align}
then we also have $ \phi\paren{ \alpha } = \theta $. Thus, by using the results above, we observe
\begin{align*}
\dist^2\paren{ \wh{\mb z}, \mc X} =  \min_{ \ol{\theta} \in [0,2\pi) } \norm{ \wh{\mb z} - \mb x e^{ \im \ol{\theta} } }{}^2 &\leq \norm{ \wh{\mb z} - e^{\im \theta }\mb x }{}^2 \leq \norm{ \mb P_{\mb x^\perp} \mb d }{}^2 + \norm{ \mb P_{\mb x} \mb d  }{}^2,
\end{align*}
where we define
\begin{align}\label{eqn:L-def}
	\mb d(\mb z) \doteq \paren{\mb I - \mb M} \paren{\mb z - e^{\im \theta }\mb x } - e^{\im \theta } \mb M \mb x + \frac{2\sigma^2+1}{m} \mb A^* \diag\paren{ \zeta_{\sigma^2}(\mb y) } \brac{ \mb y \odot \exp\paren{ \im \phi\paren{\mb A\mb z } } }.
\end{align}
Let $\delta>0$, by Lemma \ref{lem:L-x-perp} and Lemma \ref{lem:L-x-parallel}, whenever $m\geq C \norm{\mb C_{\mb x}}{}^2 \max \Brac{  \log^{17} n, \delta^{-2}  n \log^4 n } $, with probability at least $1- c_1m^{-c_2}$ for all $\mb z\in \bb C^n$ such that $\norm{\mb z - \mb x e^{\im \theta } }{} \leq c_3 \delta^3 \log^{-6} n$, we have
\begin{align*}
   \norm{ \mb P_{\mb x^\perp} \mb d}{} &\leq \brac{\delta 	+ (1+\delta)\paren{ \frac{4\delta }{\rho}  \sqrt{2\sigma^2+1 } + \frac{1}{1-\rho} \frac{1}{1- \delta }   \sqrt{ \frac{1+ (2+\eps)\delta+ (1+\delta)\Delta_{\infty}(\eps) }{2}  }    }  } \norm{ \mb z -  \mb x e^{ \im \theta }  }{} \\
   \norm{ \mb P_{\mb x} \mb d}{} &\leq \paren{ \frac{2\sigma^2}{1+2\sigma^2 } + c_{\sigma^2} \delta  } \norm{\mb z - \mb x e^{\im \theta }}{}
\end{align*}
holds for any $\rho\in (0,1)$, where $\Delta_\infty(\eps)$ be defined in \eqref{eqn:Delta-infty} with $\eps\in( 0,1)$. Here, $c_{\sigma^2} $ is a numerical constant only depending on $\sigma^2$. With $\eps = 0.2$ and $\sigma^2 =0.51$, Lemma \ref{lem:L-x-perp} implies that $\Delta(\eps)\leq 0.404$. Thus, we have
\begin{align*}
   \norm{ \mb P_{\mb x^\perp} \mb d}{} \;&\leq\; \brac{\delta 	+ (1+\delta)\paren{ \frac{5.686\delta }{\rho}  + \frac{1}{1-\rho} \frac{1}{1- \delta }   \sqrt{ \frac{1+ 2.2\delta+ 0.404(1+\delta) }{2}  }    }  } \norm{ \mb z -  \mb x e^{ \im \theta }  }{} \\
   \norm{ \mb P_{\mb x} \mb d}{}\;&\leq\; (0.505 + c_{\sigma^2} \delta)\norm{\mb z - \mb x e^{\im \theta }}{}.
\end{align*}
By choosing the constants $\delta$ and $\rho$ sufficiently small, direct calculation reveals that
\begin{align*}
	\dist^2\paren{ \wh{\mb z}, \mc X} \leq \norm{ \mb P_{\mb x^\perp} \mb d }{}^2 + \norm{ \mb P_{\mb x} \mb d  }{}^2 \leq 0.96 \norm{ \mb z -\mb x e^{ \im \theta } }{}^2 = 0.96 \dist ^2\paren{ \mb z, \mc X },
\end{align*}
as desired.
\end{proof}

Now with Proposition \ref{prop:iterate-contract} in hand, we are ready to prove \edited{\Cref{thm:main-app} (in other words, \Cref{thm:main})}.

\begin{proof}[Proof of \Cref{thm:main-app}]
We prove the theorem by recursion. Let us assume that the properties in Proposition \ref{prop:iterate-contract} holds, which happens on an event $\mc E$ with probability at least $1 - c_1m^{-c_2}$ for some numerical constants $c_1,c_2>0$. By Proposition \ref{prop:init} in Appendix \ref{app:init}, for any numerical constant $\delta>0$, whenever $m \geq C \delta^{-12} n\log^{31}n$, the initialization $\mb z^{(0)}$ produced by \Cref{alg:init} satisfies 
\begin{align*}
	\dist\paren{\mb z^{(0)},\mc X}  \leq c_3 \delta^3 \log^{-6} n  \norm{\mb x}{},
\end{align*}
with probability at least $ 1 - c_4 m^{-c_5} $. Therefore, conditioned on the event $\mc E$, we know that 
\begin{align*}
   \dist\paren{ \mb z^{(1)},\mc X } = \dist\paren{  \mb z^{(0)} -\tau \frac{ \partial }{ \partial \mb z } f(\mb z), \mc X}	\leq (1-\varrho ) \dist \paren{ \mb z^{(0)},\mc X}
\end{align*}
holds for some small constant $\varrho \in (0,1)$. This proves \eqref{eqn:contraction} for the first iteration $\mb z^{(1)}$. Notice that the inequality above also implies that $\dist\paren{\mb z^{(1)},\mc X } \leq c_3 \delta^3  \log^{-6}n \norm{\mb x}{}$. Therefore, by reapplying the same reasoning, we can prove \eqref{eqn:contraction} for the iterations $r = 2,3,\cdots$.
\end{proof}

\subsection{Bounding \boldmath{$\norm{ P_{x^\perp} d(z) }{}$} and \boldmath{ $\norm{ P_{x} d(z) }{}$}  }\label{app:contraction}
Let $\mb d(\mb z)$ be defined as in \eqref{eqn:L-def} and assume that $\norm{\mb x}{} =1$. In this section, we provide bounds for $\norm{\mb P_{\mb x} \mb d }{}$ and $\norm{\mb P_{\mb x^\perp} \mb d }{}$ under the condition that $\mb z$ and $\mb x$ are close. Before presenting the main results, let us first introduce some useful preliminary lemmas. First, based on the decomposition of $\mb z$ in \eqref{eqn:z-decomposition} and the definition of $\theta$ in \eqref{eqn:def-theta}, we can show the following result.
\begin{lemma}\label{lem:beta-alpha-bound}
Let $\theta = {\arg \min}_{ \ol{\theta} \in [0,2\pi) } \norm{\mb z - \mb x e^{\im \ol{\theta} } }{} $ and suppose $\dist\paren{\mb z,\mb x} = \norm{ \mb z - \mb x e^{\im \theta } }{} \leq \epsilon$  for some $\epsilon\in (0,1)$, then we have
\begin{align*}
   \abs{ \frac{\beta }{ \alpha } 	} \;\leq\;  \frac{1}{1-\epsilon} \norm{\mb z - \mb x e^{\im \theta} }{}.
\end{align*}
\end{lemma}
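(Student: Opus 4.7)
The plan is to unfold the definitions using the orthogonality $\mb w\perp\mb x$ and the phase condition $\phi(\alpha)=\theta$. Since $\|\mb x\|=1$, the decomposition $\mb z=\alpha\mb x+\beta\mb w$ in \eqref{eqn:z-decomposition} gives $\alpha=\mb x^*\mb z$, and by construction $\alpha=|\alpha|e^{i\theta}$. Substituting into the error vector yields
\[
\mb z-\mb x e^{i\theta} \;=\; (|\alpha|-1)e^{i\theta}\mb x \;+\; \beta\mb w,
\]
and because $\mb x\perp\mb w$ these two components are orthogonal in $\bb C^n$, so the Pythagorean identity gives
\[
\|\mb z-\mb x e^{i\theta}\|^2 \;=\; (|\alpha|-1)^2 \;+\; |\beta|^2.
\]

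From this identity the two bounds we need fall out immediately: first, $|\beta|\le\|\mb z-\mb x e^{i\theta}\|$; second, $\bigl||\alpha|-1\bigr|\le\|\mb z-\mb x e^{i\theta}\|\le\epsilon$, which rearranges to $|\alpha|\ge 1-\epsilon$. Dividing the first by the second yields
\[
\left|\tfrac{\beta}{\alpha}\right| \;=\; \tfrac{|\beta|}{|\alpha|} \;\le\; \tfrac{1}{1-\epsilon}\,\|\mb z-\mb x e^{i\theta}\|,
\]
which is exactly the claimed inequality.

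The only subtlety worth flagging is verifying that $\phi(\alpha)=\theta$ really does imply $\alpha-e^{i\theta}=(|\alpha|-1)e^{i\theta}$; this uses the definition \eqref{eqn:def-theta} of $\theta$ as the best phase, together with the fact that for $\mb z$ close to $\mc X$ the inner product $\mb x^*\mb z$ is nonzero, so $\phi(\alpha)$ is well-defined and coincides with $\theta$ as stated in the paragraph preceding \eqref{eqn:def-theta}. No other obstacles arise; the argument is a one-line consequence of orthogonality, so I would present it as three short displays (the decomposition, the Pythagorean equality, and the final division).
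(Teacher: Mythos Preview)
Your proof is correct and follows essentially the same route as the paper's: both use the orthogonal decomposition $\mb z=\alpha\mb x+\beta\mb w$ with $\phi(\alpha)=\theta$, derive the Pythagorean identity $\|\mb z-\mb x e^{i\theta}\|^2=(|\alpha|-1)^2+|\beta|^2$, and read off $|\beta|\le\|\mb z-\mb x e^{i\theta}\|$ and $|\alpha|\ge 1-\epsilon$ to conclude. The only cosmetic difference is that the paper first bounds $|\alpha|\ge 1-\|\mb z-\mb x e^{i\theta}\|$ before applying $\|\mb z-\mb x e^{i\theta}\|\le\epsilon$, whereas you go directly to $1-\epsilon$.
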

\begin{proof}
Given the facts in \eqref{eqn:z-decomposition} and \eqref{eqn:def-theta} that $\mb z = \alpha \mb x + \beta \mb w $ with $\mb w\in \bb {CS}^{n-1}$ and $\mb w \perp \mb x$, and $\phi(\alpha) = \theta$, we have
\begin{align*}
   \norm{\mb z - \mb x e^{\im \theta } }{}^2 = \paren{ \abs{\alpha } - 1}^2 + \abs{ \beta }^2.
\end{align*}
This implies that 
\begin{align*}
   \abs{\beta} \leq \norm{\mb z - \mb x e^{\im \theta } }{}, \;
   \abs{\alpha}  \geq 1 - \norm{\mb z - \mb x e^{\im \theta } }{} \; \Longrightarrow \;\abs{ \frac{ \beta }{\alpha }	} \leq \frac{  \norm{\mb z - \mb x e^{\im \theta } }{} }{ 1 - \norm{\mb z - \mb x e^{\im \theta } }{} } \leq  \frac{1}{1-\epsilon} \norm{\mb z - \mb x e^{\im \theta } }{},
\end{align*}
as desired.
\end{proof}

On the other hand, our proof is also critically depends on the concentration of $\mb M(\mb a)$ in \edited{\Cref{thm:M-H-concentration} of Appendix \ref{app:decoupling}}, and the following \edited{lemmas}. Detailed proofs are given in Appendix \ref{app:lem-bound-1} and Appendix \ref{app:lem-bound-2}.

\begin{lemma}\label{lem:bound-1}
	For any given scalar $\delta\in (0,1)$, let $\gamma = c_0\delta^3\log^{-6} n $, whenever \\
	$m\geq C\max \Brac{  \norm{\mb C_{\mb x}}{}^2 \log^{17} n, \delta^{-2} n \log^4 n } $,  with probability at least $1- c_1m^{-c_2}$ for all $\mb w$ with $\norm{\mb w}{} \leq \gamma \norm{\mb x}{} $, we have the inequality
	\begin{align*}
		\norm{ \mb A\mb x\odot \mb 1_{\abs{\mb A \mb w} \geq \abs{\mb A\mb x} } }{} \leq \delta \sqrt{m} \norm{\mb w}{}.
	\end{align*}
\end{lemma}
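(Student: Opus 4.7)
Write $S(\mb w)=\{k:|\mb a_k^*\mb w|\ge |\mb a_k^*\mb x|\}$, so that $\norm{\mb A\mb x\odot\mb 1_{\abs{\mb A\mb w}\ge\abs{\mb A\mb x}}}{}^2=\sum_{k\in S(\mb w)}|\mb a_k^*\mb x|^2$. When $\norm{\mb w}{}\le\gamma\norm{\mb x}{}$ with $\gamma\ll 1$, the event $k\in S(\mb w)$ is rare: it forces either $|\mb a_k^*\mb w|$ to be atypically large (of order $\norm{\mb x}{}\gg\norm{\mb w}{}$) or $|\mb a_k^*\mb x|$ to be atypically small (of order $\norm{\mb w}{}$). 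A single-row heuristic (exact when $\mb w\perp\mb x$, since $(\mb a_k^*\mb x,\mb a_k^*\mb w)$ is then a pair of independent complex Gaussians) gives $\bb E\brac{|\mb a_k^*\mb x|^2\mb 1_{k\in S(\mb w)}}\asymp \norm{\mb w}{}^4/\norm{\mb x}{}^2$, so the sum is of order $m\gamma^2\norm{\mb w}{}^2$ in expectation; this already meets the target $\delta^2 m\norm{\mb w}{}^2$ once $\gamma\lesssim\delta$. The general case decomposes $\mb w$ into a component along $\mb x$, which can only shrink $|\mb a_k^*\mb w|$ relative to $|\mb a_k^*\mb x|$, and a component orthogonal to $\mb x$, to which the same heuristic applies.

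I will turn this heuristic into a proof by truncation. Picking $T\asymp \sqrt{\log m}\,\norm{\mb w}{}$, the elementary inequality
\[
|\mb a_k^*\mb x|^2\mb 1_{k\in S(\mb w)}\;\le\; T^2\,\mb 1_{|\mb a_k^*\mb x|\le T} \;+\;|\mb a_k^*\mb w|^2\,\mb 1_{|\mb a_k^*\mb w|>T}
\]
separates the contribution into a small-$|\mb a_k^*\mb x|$ count and a $|\mb a_k^*\mb w|$ tail. For the first, I bound $\sum_k \mb 1_{|\mb a_k^*\mb x|\le T}$ by $O(mT^2/\norm{\mb x}{}^2)$ using the complex-Gaussian anti-concentration $\bb P(|\mb a_k^*\mb x|\le T)\lesssim T^2/\norm{\mb x}{}^2$ combined with a concentration inequality that must accommodate the row dependence; this is handled by the suprema-of-chaos machinery from Appendix~\ref{app:moments-circulant-matrix}. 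For the second, sub-exponential tails of $\sum_k|\mb a_k^*\mb w|^2\mb 1_{|\mb a_k^*\mb w|>T}$ are controlled by the restricted isometry property of partial random circulant matrices from~\cite{krahmer2014suprema}, and this step is the source of the $\norm{\mb C_{\mb x}}{}^2\log^{17}n$ factor in the sample complexity. Summing the two pieces yields the pointwise bound $\delta^2 m\norm{\mb w}{}^2$ at each fixed $\mb w$ once $\gamma\lesssim \delta/\mathrm{polylog}(n)$.

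The main obstacle is uniformization over the ball $\{\mb w:\norm{\mb w}{}\le\gamma\norm{\mb x}{}\}$, because $\mb 1_{|\mb a_k^*\mb w|\ge |\mb a_k^*\mb x|}$ is discontinuous in $\mb w$ and a naive net-plus-Lipschitz argument fails. My plan is to replace the indicator by a monotone Lipschitz surrogate $\phi_\epsilon$ sandwiched between $\mb 1_{t\ge 1-\epsilon}$ and $\mb 1_{t\ge 1}$; the original indicator is then dominated by the surrogate applied to a slightly inflated $\mb w$, and the resulting upper bound is Lipschitz in $\mb w$ with explicitly controlled constant. Taking an $\epsilon$-net of the ball at scale proportional to a small power of $\delta$ times $\norm{\mb w}{}$ gives a net of size $(C/\epsilon)^n$; union-bounding the pointwise estimate over this net introduces the $\delta^{-2}n\log^4 n$ term in the sample complexity. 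Tracking the cumulative loss through the surrogate, the net discretization, and the truncation threshold then explains the restriction $\gamma\le c_0\delta^3\log^{-6}n$, as each approximation layer consumes either a power of $\delta$ or a polylog factor.
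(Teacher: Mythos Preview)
Your route is genuinely different from the paper's, and it has a gap in the second (``large $|\mb a_k^*\mb w|$'') term.

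The paper avoids the $\epsilon$-net over $\mb w$ entirely. Its key observation is that $|\mc S(\mb w)|$ can be bounded \emph{uniformly} by a contradiction: Lemma~\ref{lem:lem-1-1} shows, on a single high-probability event, that \emph{every} subset $\mc S\subseteq[m]$ with $|\mc S|\geq\rho m$ satisfies $\norm{(\mb A\mb x)\odot\mb 1_{\mc S}}{}\gtrsim\rho^{3/2}\norm{\mb A\mb x}{}$. But by definition of $\mc S(\mb w)$ one has $\norm{(\mb A\mb x)\odot\mb 1_{\mc S(\mb w)}}{}\leq\norm{(\mb A\mb w)\odot\mb 1_{\mc S(\mb w)}}{}\leq\norm{\mb A\mb w}{}\leq 2\gamma\norm{\mb A\mb x}{}$; choosing $\gamma\asymp\rho^{3/2}$ forces $|\mc S(\mb w)|<\rho m$ for every such $\mb w$. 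Then RIP of $\mb A^*$ over $\rho m$-sparse vectors (Lemma~\ref{lem:lem-1-2}) gives $\norm{(\mb A\mb w)\odot\mb 1_{\mc S}}{}\leq\delta\sqrt m\norm{\mb w}{}$ uniformly in both $\mb w$ and small $\mc S$, and the inequality above finishes.

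The gap in your plan is that your tail term $\sum_k|\mb a_k^*\mb w|^2\mb 1_{|\mb a_k^*\mb w|>T}$ is not controlled by RIP in the way you suggest. RIP of $\mb A^*$ (Lemma~\ref{lem:rip-large-perturb}) gives $\norm{(\mb A\mb w)\odot\mb 1_{\mc S}}{}\leq\delta\sqrt m\norm{\mb w}{}$ only when $|\mc S|\lesssim n$. With $T=\sqrt{\log m}\,\norm{\mb w}{}$, the only uniform-in-$\mb w$ cardinality bound available is $|\{k:|\mb a_k^*\mb w|>T\}|\leq T^{-2}\norm{\mb A\mb w}{}^2\lesssim m/\log m$, which is far too large. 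Pushing $T$ up to $\sqrt{m/n}\,\norm{\mb w}{}$ restores the RIP hypothesis but makes your first term $T^2 N(\mb x,T)$ exceed $\delta^2 m\norm{\mb w}{}^2$ unless $\gamma\lesssim\delta n/m$, much smaller than the claimed $\delta^3\log^{-6}n$. Also a misattribution: the $\norm{\mb C_{\mb x}}{}^2$ dependence comes from the Lipschitz concentration of the small-$|\mb a_k^*\mb x|$ count (your \emph{first} term, cf.\ Lemma~\ref{lem:lem-1-1}, whose Lipschitz constant is $\sqrt m\,\norm{\mb C_{\mb x}}{}/\rho$), not from the RIP step, which does not involve $\mb x$ at all. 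The paper's cardinality-by-contradiction argument is precisely what bridges these two pieces and makes the net and surrogate unnecessary.
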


\begin{lemma}\label{lem:bound-2}
For any scalar $\delta\in (0,1)$, whenever $m \geq C \norm{\mb C_{\mb x}}{}^2 \delta^{-2} n \log^4 n$, with probability at least $1 - cm^{-c'\log^3n}$ for all $\mb w \in \bb C^n$ with $\mb w \perp \mb x$, we have
\begin{align*}
  \norm{  \sqrt{ \frac{2\sigma^2+1 }{m } } \diag\paren{ \zeta_{\sigma^2}^{1/2}(\mb y) } \Im\paren{ \mb A \mb w \odot \exp\paren{ - \im ·\phi(\mb A\mb x) } } }{}^2 \leq \frac{ 1+(2+\eps) \delta +(1+\delta) \Delta_\infty(\eps) }{2}\norm{\mb w}{}^2.
\end{align*}
\edited{Here, $\Delta_\infty(\eps)$ is defined in \eqref{eqn:Delta-infty} for any scalar $\eps \in (0,1)$.} In particular, when $\sigma^2 = 0.51$ and $\eps = 0.2$, we have $\Delta(\eps)\leq 0.404$. With the same probability for all $\mb w \in \bb C^n$ with $\mb w \perp \mb x$, we have
\begin{align*}
  \norm{  \sqrt{ \frac{2\sigma^2+1 }{m } } \diag\paren{ \zeta_{\sigma^2}^{1/2}(\mb y) } \Im\paren{ \mb A \mb w \odot \exp\paren{ - \im ·\phi(\mb A\mb x) } } }{}^2 \leq \frac{ 1+2.2 \delta + 0.404 (1+\delta)  }{2}\norm{\mb w}{}^2.
\end{align*}
\end{lemma}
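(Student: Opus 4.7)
The plan is to reduce the claimed bound to two separately controlled pieces by exploiting the algebraic identity $|\Im u|^2 = \tfrac{1}{2}(|u|^2 - \Re(u^2))$. Applied entrywise to $u_k = (\mb A\mb w)_k \exp(-\im \phi((\mb A\mb x)_k))$ and summed against the weights $\zeta_{\sigma^2}(\mb y)$, this rewrites the left-hand side exactly as
\begin{align*}
\tfrac{1}{2}\mb w^* \mb M \mb w \;-\; \tfrac{2\sigma^2+1}{2m} \Re \mc L_s(\mb a, \mb w), \qquad \mc L_s(\mb a, \mb w) \doteq \mb w^\top \mb A^\top \diag\paren{\zeta_{\sigma^2}(\mb y) \odot \psi(\mb A\mb x)} \mb A \mb w,
\end{align*}
with $\mb M$ as in \eqref{eqn:M-def-main} and $\psi(t) = \exp(-2\im \phi(t))$. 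Since $-\Re \mc L_s \leq |\mc L_s|$, it suffices to bound each summand uniformly in $\mb w \in \mc S$.

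For the first summand, I would invoke \Cref{thm:M-H-concentration} applied to $\mb H = \mb P_{\mb x^\perp}\mb M \mb P_{\mb x^\perp}$. For any row $\mb a_k$ the pair $(\mb a_k^*\mb x, \mb a_k^*\mb w)$ is uncorrelated (hence independent) complex Gaussian when $\mb w \perp \mb x$, and a direct computation gives $\bb E[\zeta_{\sigma^2}(|\mb a_k^*\mb x|)] = 1/(2\sigma^2+1)$, so $\bb E[\mb H] = \mb P_{\mb x^\perp}$. Under the hypothesis $m \geq C\norm{\mb C_{\mb x}}{}^2 \delta^{-2} n\log^4 n$, the concentration bound of \Cref{thm:M-H-concentration} then yields $\mb w^* \mb M \mb w \leq (1+\delta)\norm{\mb w}{}^2$ uniformly over $\mb w \in \mc S$, with the required failure probability.

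The heart of the argument is the bound on $\tfrac{2\sigma^2+1}{m}|\mc L_s(\mb a, \mb w)|$. Following the recoupling-as-Gaussian-smoothing strategy of \Cref{sec:analysis}, I would introduce the smoothed surrogate $\widehat{\mc L}(\mb a, \mb w) = \mb w^\top \mb A^\top \diag(h(\mb A\mb x)) \mb A \mb w$ with $h$ from \eqref{eqn:h-def-main} and split
\begin{align*}
\tfrac{2\sigma^2+1}{m}|\mc L_s(\mb a,\mb w)| \;\leq\; \tfrac{(1+\eps)(2\sigma^2+1)}{m}|\widehat{\mc L}(\mb a,\mb w)| \;+\; \tfrac{2\sigma^2+1}{m}\abs{\mc L_s(\mb a,\mb w) - (1+\eps)\widehat{\mc L}(\mb a,\mb w)}.
\end{align*}
The second piece is a quadratic form whose diagonal weights have $L^\infty$ magnitude $\norm{(1+\eps)h - \zeta_{\sigma^2}\psi}{L^\infty} = \Delta_\infty(\eps)/(2\sigma^2+1)$ by the definition \eqref{eqn:Delta-infty}, so a standard random circulant RIP estimate $\norm{\mb A}{}^2 \leq (1+\delta)m$ (valid under the same sample complexity) delivers the contribution $(1+\delta)\Delta_\infty(\eps)\norm{\mb w}{}^2$. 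The first piece is controlled by decoupling: with $\mb \delta$ an independent copy of $\mb a$, set $\mb g^1 = \mb a + \mb \delta$ and $\mb g^2 = \mb a - \mb \delta$, which are independent by orthogonality; by the construction of $h$ one has $\widehat{\mc L}(\mb a,\mb w) = \bb E_{\mb \delta}[\mc Q^{\mc L}_{dec}(\mb g^1, \mb g^2, \mb w)]$, so Jensen's inequality transfers $L^p$ moments of $\sup_{\mb w}|\widehat{\mc L}|$ to those of the decoupled chaos. Conditional on $\mb g^2$, the decoupled process is a quadratic chaos in the independent circulant matrix $\mb G^1$ indexed by $\mb w \in \mc S$, to which the suprema-of-chaos bound of Krahmer-Mendelson-Rauhut (\Cref{app:moments-circulant-matrix}) applies; after a Markov tail inequality this yields $\sup_{\mb w \in \mc S}|\widehat{\mc L}| \leq \delta m/(2\sigma^2+1)$ under the sample complexity $m \geq C \norm{\mb C_{\mb x}}{}^2 \delta^{-2} n \log^4 n$.

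Assembling the three bounds yields
\begin{align*}
\text{LHS} \;\leq\; \tfrac{1}{2}\paren{(1+\delta) + (1+\eps)\delta + (1+\delta)\Delta_\infty(\eps)}\norm{\mb w}{}^2 \;=\; \tfrac{1+(2+\eps)\delta + (1+\delta)\Delta_\infty(\eps)}{2}\norm{\mb w}{}^2,
\end{align*}
as claimed. The main obstacle is the suprema-of-chaos estimate for $\sup_{\mb w \in \mc S}|\widehat{\mc L}|$: once conditioned on $\mb g^2$, one must verify that the $\gamma_2$-functional of the index set $\{\mb A^\top \diag(h(\mb G^2 \mb x))\mb A : \mb w \in \mc S\}$ is controlled by $\sqrt{n\log n}\,\norm{\mb C_{\mb x}}{}$, which is where the $\norm{\mb C_{\mb x}}{}$ dependence in the sample complexity arises. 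The concrete case $\sigma^2 = 0.51$, $\eps = 0.2$, $\Delta_\infty(\eps) \leq 0.404$ reduces to the one-dimensional check $\norm{(1.2) h(t) - \zeta_{0.51}(t)\psi(t)}{L^\infty} \leq 0.2$, which follows from direct analysis since $\psi$ is unimodular off the origin while $h$ is a smooth Gaussian average of $\psi$ and $\zeta_{\sigma^2}$ damps the discontinuity near $t = 0$ as illustrated in \Cref{fig:func_psi_h_zeta}.
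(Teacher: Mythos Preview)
Your proof is correct and follows essentially the same strategy as the paper: the identity $|\Im u|^2 = \tfrac{1}{2}(|u|^2 - \Re u^2)$ splits the claim into the $\mb H$-concentration piece and the cross term $\mc L_s$, which is then handled by the same add-and-subtract of $(1+\eps)\widehat{\mc L}$, decoupling via Jensen, and suprema-of-chaos moment bounds (the paper packages this cross-term argument as Lemma~\ref{lem:cross-term-concentration}). One small correction: the $\norm{\mb C_{\mb x}}{}$ factor in the sample complexity enters through the concentration of $\mb H$ (\Cref{thm:M-H-concentration}), not through the chaos bound on $\widehat{\mc L}$; the paper's Lemma~\ref{lem:cross-term-concentration} together with \Cref{thm:moments-T-2} shows the latter needs only $m \geq C\delta^{-2} n\log^4 n$, with no $\norm{\mb C_{\mb x}}{}$ dependence.
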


\subsubsection{Bounding the ``\boldmath{$x$}-perpendicular'' term \boldmath{$\norm{P_{ x^\perp } d}{} $} }

\begin{lemma}\label{lem:L-x-perp}
Let $\mb d$ be defined in \eqref{eqn:L-def}, and suppose $\sigma^2>1/2$ be a constant. For any $\delta>0$, whenever $m\geq C  \norm{\mb C_{\mb x}}{}^2 \max \Brac{ \log^{17} n, \delta^{-2} n \log^4 n } $, with probability at least $1- c_1m^{-c_2}$ for all $\mb z\in \bb C^n $ such that $\norm{\mb z - \mb x e^{\im \theta} }{} \leq c_3 \delta^3 \log^{-6} n$, we have
\begin{align*}
   \norm{ \mb P_{\mb x^\perp} \mb d}{} \;\leq\; \brac{\delta 	+ (1+\delta)\paren{ \frac{4\delta }{\rho}  \sqrt{2\sigma^2+1 } + \frac{1}{1-\rho} \frac{1}{1- \delta }   \sqrt{ \frac{1+ 2\delta+ (1+\delta)\Delta_{\infty}(\eps) }{2}  }    }  } \norm{ \mb z -  \mb x e^{ \im \theta }  }{}. 
\end{align*}
Here, \edited{$\Delta_\infty(\eps)$ is} defined in \eqref{eqn:Delta-infty} for any scalar $\eps \in (0,1)$. In particular, when $\eps = 0.2$ and $\sigma^2 = 0.51$, we have $\Delta_\infty(\varepsilon) \leq 0.404 $.
\end{lemma}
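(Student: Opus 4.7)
The plan is to start by simplifying $\mb d(\mb z)$ using the identity $\mb y \odot \exp(\im\phi(\mb A\mb x)) = \mb A\mb x$ (the entries of $\mb A\mb x$ are nonzero almost surely), which shows the term $-e^{\im\theta}\mb M\mb x$ in \eqref{eqn:L-def} is exactly $-\tfrac{2\sigma^2+1}{m}\mb A^*\diag(\zeta_{\sigma^2}(\mb y))(e^{\im\theta}\mb A\mb x)$ and combines cleanly with the final term. After this cancellation,
\begin{equation*}
  \mb d(\mb z) \;=\; (\mb I - \mb M)(\mb z - e^{\im\theta}\mb x) \;+\; \frac{2\sigma^2+1}{m}\mb A^*\diag(\zeta_{\sigma^2}(\mb y))\bigl[\mb y \odot \bigl(\exp(\im\phi(\mb A\mb z)) - e^{\im\theta}\exp(\im\phi(\mb A\mb x))\bigr)\bigr].
\end{equation*}
I would then project onto $\mb x^\perp$ and handle each piece separately.

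For the linear piece $\mb P_{\mb x^\perp}(\mb I - \mb M)(\mb z - e^{\im\theta}\mb x)$, I would use the decomposition \eqref{eqn:z-decomposition} to write $\mb z - e^{\im\theta}\mb x = (|\alpha|-1)e^{\im\theta}\mb x + \beta\mb w$, and then split $\mb P_{\mb x^\perp}(\mb I - \mb M) = (\mb P_{\mb x^\perp} - \mb H)\mb P_{\mb x^\perp} - \mb P_{\mb x^\perp}\mb M\mb P_{\mb x}$, where $\mb H$ is as in \eqref{eqn:H-def-main}. The concentration statements of Theorem~\ref{thm:M-H-concentration} (which I am assuming from the appendix) bound both $\|\mb P_{\mb x^\perp}-\mb H\|$ and $\|\mb P_{\mb x^\perp}\mb M\mb P_{\mb x}\|$ by $\delta$ with high probability, yielding an overall contribution $\le \delta\,\|\mb z - e^{\im\theta}\mb x\|$. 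This is the source of the leading $\delta$ in the claimed inequality.

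For the phase-mismatch piece, I would factor the weight symmetrically as $\diag(\zeta_{\sigma^2}(\mb y)) = \diag(\zeta_{\sigma^2}^{1/2}(\mb y))^2$ and use the identity $\bigl\|\sqrt{(2\sigma^2+1)/m}\,\diag(\zeta_{\sigma^2}^{1/2}(\mb y))\mb A\bigr\|^2 = \|\mb M\| \le 1+\delta$ to pull out a prefactor $\sqrt{1+\delta}\le 1+\delta$. It then remains to bound $\bigl\|\sqrt{(2\sigma^2+1)/m}\,\diag(\zeta_{\sigma^2}^{1/2}(\mb y))\bigl[|\mb A\mb x|\odot(\exp(\im\phi(\mb A\mb z)) - e^{\im\theta}\exp(\im\phi(\mb A\mb x)))\bigr]\bigr\|$. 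Since $e^{-\im\theta}\mb A\mb z = |\alpha|\mb A\mb x + (\beta/\alpha)|\alpha|\mb A\mb w$ and $|\alpha|$ is a positive real scalar, this difference is $\exp(\im\phi(\mb A\mb x + (\beta/\alpha)\mb A\mb w)) - \exp(\im\phi(\mb A\mb x))$ (up to the overall phase $e^{\im\theta}$), to which I would apply Lemma~\ref{lem:phase-diff} with parameter $\rho\in(0,1)$. This yields a bound of the form $2\,\mc T_1 + \tfrac{1}{1-\rho}|\beta/\alpha|\,\mc T_2$, with $\mc T_1$ the ``indicator'' norm and $\mc T_2$ the ``imaginary part'' norm from the sketch in Section~\ref{subsec:APM-connect}. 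Bounding $\mc T_1$ via Lemma~\ref{lem:bound-1} applied to $(\beta/\alpha)\mb w$ (using $|\zeta_{\sigma^2}^{1/2}|\le 1$) gives $\tfrac{4\delta}{\rho}|\beta/\alpha|\sqrt{2\sigma^2+1}$, and bounding the weighted $\mc T_2$ via Lemma~\ref{lem:bound-2} gives $\tfrac{1}{1-\rho}|\beta/\alpha|\sqrt{(1+2\delta+(1+\delta)\Delta_\infty(\eps))/2}$. Finally I would use Lemma~\ref{lem:beta-alpha-bound} to convert $|\beta/\alpha|$ into $\tfrac{1}{1-\delta}\|\mb z - e^{\im\theta}\mb x\|$. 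Adding the linear and phase contributions gives the stated expression.

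The main obstacle is Lemma~\ref{lem:bound-2}: it requires uniform (in $\mb w\in\mc S$) control of a highly nonlinear, dependent random process $\mc L(\mb a,\mb w)=\mb w^\top \mb A^\top \diag(\zeta_{\sigma^2}(\mb y)\odot \psi(\mb A\mb x))\mb A\mb w$ over a random circulant matrix, which is exactly the decoupling-plus-Gaussian-smoothing argument described in Section~\ref{sec:analysis}. In addition, the explicit $L^\infty$ estimate $\Delta_\infty(0.2)\le 0.404$ at $\sigma^2=0.51$ — although formally a one-variable calculation — is what makes the sum of the three contributions strictly less than one after plugging into \Cref{prop:iterate-contract}, so the quantitative threshold also hinges on this step.
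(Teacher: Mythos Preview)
Your outline is essentially the paper's proof: combine $-e^{\im\theta}\mb M\mb x$ with the phase term, bound $\|\mb P_{\mb x^\perp}(\mb I-\mb M)\|$ via Theorem~\ref{thm:M-H-concentration}, factor the weight symmetrically, apply Lemma~\ref{lem:phase-diff}, and finish with Lemmas~\ref{lem:bound-1}, \ref{lem:bound-2}, and~\ref{lem:beta-alpha-bound}. One slip to fix: after projecting onto $\mb x^\perp$ the operator you extract is $\sqrt{(2\sigma^2+1)/m}\,\mb P_{\mb x^\perp}\mb A^*\diag(\zeta_{\sigma^2}^{1/2}(\mb y))$, whose squared norm is $\|\mb H\|$, not $\|\mb M\|$; this matters because $\bb E[\mb M]=\mb I+\tfrac{2\sigma^2}{1+2\sigma^2}\mb x\mb x^*$ has spectral norm $\tfrac{1+4\sigma^2}{1+2\sigma^2}>1$, so the claim $\|\mb M\|\le 1+\delta$ is false, whereas $\|\mb H\|\le 1+\delta$ (from \eqref{eqn:H-bound} together with $\bb E[\mb H]=\mb P_{\mb x^\perp}$) is exactly what the paper invokes to get the $(1+\delta)$ prefactor.
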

The analysis of bounding $\norm{ \mb P_{\mb x^\perp} \mb d }{}$ is similar to that of \cite{waldspurger2016phase}.

\begin{proof}
By the definition \eqref{eqn:L-def} of $\mb d(\mb z)$, notice that
\begin{align*}
   \norm{ \mb P_{\mb x^\perp} \mb d }{} \;\leq\;& \norm{ \mb P_{\mb x^\perp} \Brac{ \frac{2\sigma^2+1}{m} \mb A^* \diag\paren{ \zeta_{\sigma^2}(\mb y) } \brac{ \mb y \odot \exp\paren{ \im \phi\paren{\mb A\mb z } } } - e^{\im \theta } \mb M \mb x  } }{} \\
   &+\norm{ \mb P_{\mb x^\perp} (\mb I - \mb M)}{}\norm{  \mb z - e^{ \im \theta}\mb x  }{} 
\end{align*}
For the second term, \edited{by \eqref{eqn:H-bound} in} \Cref{thm:M-H-concentration}, for any $\delta>0$, whenever $m \geq C_1 \delta^{-2} \norm{\mb C_{\mb x}}{}^2 n \log^4 n$, we have 
\begin{align}
   \norm{\mb P_{\mb x^\perp} \paren{ \mb I - \mb M }  }{} \leq \delta,	
\end{align}
with probability at least $1 - c_1m^{-c_2\log^3 n}$. For the first term, we observe
\begin{align*}
   	&\norm{ \mb P_{\mb x^\perp} \Brac{ \frac{2\sigma^2+1}{m} \mb A^* \diag\paren{ \zeta_{\sigma^2}(\mb y) } \brac{ \mb y \odot \exp\paren{ \im \phi\paren{\mb A\mb z } } } - e^{\im \theta } \mb M \mb x  } }{} \\
 = \;& \norm{ \mb P_{\mb x^\perp} \Brac{ \frac{2\sigma^2+1}{m} \mb A^* \diag\paren{ \zeta_{\sigma^2}(\mb y) } \paren{ \abs{\mb A\mb x} \odot \brac{ \exp\paren{ \im \phi\paren{ \mb A\mb z }  } - \exp\paren{ \im \theta + \im \phi\paren{\mb A\mb x} }   }  }  }  }{} \\
 \leq \;& \norm{  \sqrt{\frac{ 2\sigma^2+1}{m}  } \mb P_{\mb x^\perp}\mb A^* \diag\paren{ \zeta_{\sigma^2}^{1/2} (\mb y) }  }{} \times\\
 & \norm{ \sqrt{\frac{2\sigma^2+1}{m} } \diag\paren{ \zeta_{\sigma^2}^{1/2}(\mb y)  } \paren{\abs{\mb A\mb x} \odot  \brac{ \exp\paren{ \im \phi\paren{ \mb A\mb z } } - \exp\paren{ \im \theta + \im \phi\paren{\mb A \mb x} }  } } }{}.
\end{align*}
\edited{By \eqref{eqn:H-bound} in} \Cref{thm:M-H-concentration} and Lemma \ref{lem:M-H-expectation} in Appendix \ref{app:decoupling}, for any $\delta>0$, whenever $m \geq C_1 \delta^{-2} \norm{\mb C_{\mb x}}{}^2 n \log^4 n$, we have 
\begin{align*}
 \norm{  \sqrt{\frac{ 2\sigma^2+1}{m}  } \mb P_{\mb x^\perp}\mb A^* \diag\paren{ \zeta_{\sigma^2}^{1/2} (\mb y) }  }{} \;\leq\; \norm{ \mb H }{}^{1/2} \;&\leq\; \paren{ \norm{\bb E\brac{\mb H}}{} + \norm{ \mb H - \bb E\brac{\mb H} }{} }^{1/2} \\
 & \;\leq\; \paren{1 + \delta }^{1/2} \;\leq\; 1+\delta,
\end{align*}
with probability at least $1- c_1m^{-c_2\log^3 n}$. And by Lemma \ref{lem:phase-diff} and decomposition of $\mb z$ in \eqref{eqn:z-decomposition} with $\phi(\alpha) = \theta$, we obtain
\begin{align*}
   &\norm{ \sqrt{\frac{2\sigma^2+1}{m} } \diag\paren{ \zeta_{\sigma^2}^{1/2}(\mb y)  } \paren{ \abs{\mb A\mb x} \odot  \brac{ \exp\paren{ \im \phi\paren{ \mb A\mb z } } - \exp\paren{ \im \theta + \im \phi\paren{\mb A \mb x} }  }  } }{} \\
   = \;& \norm{ \sqrt{\frac{2\sigma^2+1}{m} } \diag\paren{ \zeta_{\sigma^2}^{1/2}(\mb y)  } \paren{ \abs{\mb A\mb x} \odot  \brac{ \exp\paren{ \im \phi\paren{\mb A\mb x} } - \exp\paren{ \im \phi \paren{ \mb A\mb x + \frac{\beta}{\alpha} \mb A \mb w }  }  }  } }{} \\
   \leq \;&  \frac{1}{1-\rho} \abs{ \frac{\beta}{\alpha } } \norm{ \sqrt{ \frac{2\sigma^2+1 }{m} } \diag\paren{ \zeta_{\sigma^2}^{1/2}(\mb y)  } \Im\paren{  \mb A\mb w \odot \exp\paren{ - \im \phi\paren{\mb A\mb x} } }   }{} \\
   &+2\sqrt{\frac{2\sigma^2+1}{m} } \norm{  \abs{\mb A\mb x} \odot \mb 1_{ \abs{\frac{\beta}{\alpha}} \abs{\mb A \mb w } \geq \rho \abs{\mb A\mb x}  }  }{}  ,
\end{align*}
for any $\rho \in (0,1)$. By Lemma \ref{lem:beta-alpha-bound}, we know that $ \rho^{-1} \abs{\frac{\beta}{\alpha }} \leq \frac{2}{\rho} \norm{\mb z - \mb x e^{\im \theta} }{} < c_\rho \delta^3 \log^{-6}n$ holds under our assumption, where $c_\rho$ is a constant depending on $\rho$. Thus, whenever $ m \geq C_2\max \Brac{  \norm{\mb C_{\mb x}}{}^2 \log^{17} n, \delta^{-2} n \log^4 n }$ for any $\delta \in (0,1)$, with probability at least $1- c_1m^{-c_2}$ for all $\mb w \in \bb {CS}^{n-1}$, Lemma \ref{lem:bound-1} implies that
\begin{align*}
 \norm{  \abs{\mb A\mb x} \odot \mb 1_{ \abs{\frac{\beta}{\alpha}} \abs{\mb A \mb w } \;\geq\; \rho \abs{\mb A\mb x}  }  }{} \;\leq \; \frac{\delta }{\rho} \abs{ \frac{\beta}{ \alpha }  }  \sqrt{m} \;\leq\; \frac{2\delta }{\rho} \sqrt{m} \norm{\mb z - \mb x e^{\im \theta}}{}.
\end{align*}
Moreover, for any $\delta\in (0,1)$, whenever $m \geq C_3 \norm{\mb C_{\mb x}}{}^2 n \log^4 n$, with probability at least $ 1- c_3 m^{-c_4\log^3 n }$ for all $\mb w \in \bb {CS}^{n-1}$ with $\mb w \perp \mb x$, Lemma \ref{lem:bound-2} implies that
\begin{align*}
\norm{ \sqrt{ \frac{2\sigma^2+1 }{m} } \diag\paren{ \zeta_{\sigma^2}^{1/2}(\mb y)  } \Im\paren{  \mb A\mb w \odot \exp\paren{ - \im \phi\paren{\mb A\mb x} } }   }{} \leq \sqrt{ \frac{1+ 2\delta+ (1+\delta)\Delta_{\infty}(\eps) }{2}  },
\end{align*}
where $\Delta_\infty(\eps)$ is defined in \eqref{eqn:Delta-infty} for some $\eps \in (0,1) $. In addition, whenever $ \norm{\mb z - \mb x e^{ \im \theta}}{}\leq c_5 \delta^3 \log^{-6}n \norm{\mb z - \mb x e^{ \im \theta}}{}$ for some constant $c_5>0$, Lemma \ref{lem:beta-alpha-bound} implies that
\begin{align*}
\abs{\frac{\beta}{\alpha }	} \leq \frac{1}{ 1-c_5 \delta^3 \log^{-6}n } \norm{\mb z - \mb x e^{ \im \theta}}{} \leq \frac{1}{1-\delta} \norm{\mb z - \mb x e^{ \im \theta}}{},
\end{align*}
for $\delta>0$ sufficiently small. Thus, combining the results above, we have the bound
\begin{align*}
   \norm{ \mb P_{\mb x^\perp} \mb d}{} \leq \brac{\delta 	+ (1+\delta)\paren{ \frac{4\delta }{\rho}  \sqrt{2\sigma^2+1 } + \frac{1}{1-\rho} \frac{1}{1- \delta }   \sqrt{ \frac{1+ 2\delta+ (1+\delta)\Delta_{\infty}(\eps) }{2}  }    }  } \norm{ \mb z -  \mb x e^{ \im \theta }  }{}
\end{align*}
holds as desired. Finally, when $\sigma^2 = 0.51$ and $\eps = 0.2$, the bound for $\Delta_\infty(\eps)$ can be found in Lemma \ref{lem:Delta-infty-bound} in Appendix \ref{app:lem-bound-2}.
\end{proof}

\subsubsection{Bounding the ``\boldmath{$x$}-parallel'' term \boldmath{$\norm{P_{ x} d}{} $} }
\begin{lemma}\label{lem:L-x-parallel}
Let $\mb d(\mb z)$ be defined in \eqref{eqn:L-def}, and let $\sigma^2>1/2$ be a constant. For any $\delta>0$, whenever $m\geq C  \norm{\mb C_{\mb x}}{}^2 \max \Brac{  \log^{17} n, \delta^{-2}  n \log^4 n } $,  with probability at least $1- c_1m^{-c_2} $ for all $\mb z$ such that $\norm{\mb z - \mb x e^{\im \theta} }{} \leq c_3 \delta^3 \log^{-6} n$, we have
\begin{align*}
   \norm{\mb P_{\mb x} \mb d}{} \;\leq\; \paren{ \frac{2\sigma^2}{1+2\sigma^2 } + c_{\sigma^2} \delta  } \norm{\mb z - \mb x e^{\im \theta }}{}.
\end{align*}
Here, $c_{\sigma^2}>0$ is some numerical constant depending only on $\sigma^2$.
\end{lemma}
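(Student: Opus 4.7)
The plan is to write $\mathbf{P}_{\mathbf{x}}\mathbf{d} = \mathbf{x}(\mathbf{x}^{*}\mathbf{d})$ and bound $|\mathbf{x}^{*}\mathbf{d}|$. Using the identity $\mathbf{y}\odot e^{i\theta+i\phi(\mathbf{A}\mathbf{x})} = e^{i\theta}\mathbf{A}\mathbf{x}$, the $-e^{i\theta}\mathbf{M}\mathbf{x}$ summand in the definition of $\mathbf{d}$ cancels exactly against the corresponding part of the final term, leaving
\begin{align*}
\mathbf{x}^{*}\mathbf{d} \;=\; (\alpha - e^{i\theta})\bigl(1-\mathbf{x}^{*}\mathbf{M}\mathbf{x}\bigr) \;-\; \beta\,\mathbf{x}^{*}\mathbf{M}\mathbf{w} \;+\; T,
\end{align*}
where $\mathbf{z} = \alpha \mathbf{x} + \beta \mathbf{w}$ with $\mathbf{w}\perp\mathbf{x}$, $\|\mathbf{w}\|=1$, $\phi(\alpha)=\theta$, and $T \doteq \tfrac{2\sigma^{2}+1}{m}\mathbf{x}^{*}\mathbf{A}^{*}\diag(\zeta_{\sigma^{2}}(\mathbf{y}))\bigl[\mathbf{y}\odot(e^{i\phi(\mathbf{A}\mathbf{z})}-e^{i\theta+i\phi(\mathbf{A}\mathbf{x})})\bigr]$. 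Since $|\alpha-e^{i\theta}|,\, |\beta| \le \|\mathbf{z}-e^{i\theta}\mathbf{x}\|$, controlling $\|\mathbf{P}_{\mathbf{x}}\mathbf{d}\|$ reduces to bounding the scalar $1-\mathbf{x}^{*}\mathbf{M}\mathbf{x}$, the bilinear form $\mathbf{x}^{*}\mathbf{M}\mathbf{w}$, and the phase-difference term $T$.

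The first two are handled by concentration. A direct Gaussian integration using $\zeta_{\sigma^{2}}(t) = 1 - e^{-|t|^{2}/(2\sigma^{2})}$ and $\mathbf{a}^{*}\mathbf{x} \sim \mathcal{CN}(0,1)$ yields $\mathbb{E}[\mathbf{x}^{*}\mathbf{M}\mathbf{x}] = (2\sigma^{2}+1)\cdot \tfrac{4\sigma^{2}+1}{(2\sigma^{2}+1)^{2}} = \tfrac{4\sigma^{2}+1}{2\sigma^{2}+1}$, so $|1-\mathbb{E}[\mathbf{x}^{*}\mathbf{M}\mathbf{x}]| = \tfrac{2\sigma^{2}}{2\sigma^{2}+1}$, exactly the leading constant in the lemma. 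Combined with Lemma \ref{lem:M-H-expectation} (which shows that $\mathbb{E}[\mathbf{M}]$ is diagonal in the $\mathbf{x}/\mathbf{x}^{\perp}$ decomposition, so $\mathbf{x}^{*}\mathbb{E}[\mathbf{M}]\mathbf{w}=0$) and the deviation bound in Theorem \ref{thm:M-H-concentration}, one obtains $|1-\mathbf{x}^{*}\mathbf{M}\mathbf{x}|\le \tfrac{2\sigma^{2}}{2\sigma^{2}+1}+c_{\sigma^{2}}\delta$ and $|\mathbf{x}^{*}\mathbf{M}\mathbf{w}|\le c_{\sigma^{2}}\delta$ uniformly in $\mathbf{w}\in\mathbb{CS}^{n-1}$ with $\mathbf{w}\perp\mathbf{x}$, so that the first two summands contribute at most $\bigl(\tfrac{2\sigma^{2}}{2\sigma^{2}+1}+c_{\sigma^{2}}\delta\bigr)\|\mathbf{z}-e^{i\theta}\mathbf{x}\|$.

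The hard part is showing $|T| \le c_{\sigma^{2}}\delta\|\mathbf{z}-e^{i\theta}\mathbf{x}\|$; a naive Cauchy--Schwarz through $\sqrt{\mathbf{x}^{*}\mathbf{M}\mathbf{x}}$ (as was used in Lemma \ref{lem:L-x-perp}) only gives $|T| = O(|\beta|)$, which is off by a factor of $1/\delta$ and would invalidate the bound when $\alpha\approx e^{i\theta}$. My plan is to exploit the special structure $\mathbf{x}^{*}\mathbf{A}^{*} = (\mathbf{A}\mathbf{x})^{*}$, which lets us write $T$ entrywise as $\tfrac{(2\sigma^{2}+1)e^{i\theta}}{m}\sum_k \zeta_{\sigma^{2}}(|u_k|)|u_k|^{2}\bigl(e^{i\phi(|\alpha|+\beta' v_k/u_k)}-1\bigr)$ where $u_k = (\mathbf{A}\mathbf{x})_k$, $v_k = (\mathbf{A}\mathbf{w})_k$, $\beta' = \beta e^{-i\theta}$. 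Using Lemma \ref{lem:phase-diff} to expand $e^{i\phi(\,\cdot\,)}-1$ to first order in $\beta'v_k/u_k$, the dominant contribution collapses algebraically to $\tfrac{ie^{i\theta}}{|\alpha|}\Im\bigl(\beta'\cdot \mathbf{x}^{*}\mathbf{M}\mathbf{w}\bigr)$, which is already bounded by $c_{\sigma^{2}}\delta|\beta|$ via the same concentration. The quadratic Taylor remainder is $O(|\beta/\alpha|^{2}) = O(\|\mathbf{z}-e^{i\theta}\mathbf{x}\|^{2})$, absorbed into $c_{\sigma^{2}}\delta\|\mathbf{z}-e^{i\theta}\mathbf{x}\|$ because $\dist(\mathbf{z},\mathcal{X}) \le c_{3}\delta^{3}\log^{-6} n$; and the ``large angle'' truncation error, corresponding to the event $|\beta/\alpha||v_k| \ge \rho |u_k|$ where the first-order expansion is invalid, is handled by Lemma \ref{lem:bound-1}, while any weighted norm residual that survives is controlled through Lemma \ref{lem:bound-2}. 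Assembling everything yields $\|\mathbf{P}_{\mathbf{x}}\mathbf{d}\| \le (\tfrac{2\sigma^{2}}{2\sigma^{2}+1}+c_{\sigma^{2}}\delta)\|\mathbf{z}-e^{i\theta}\mathbf{x}\|$, as required. The main obstacle is precisely this identification of the leading $T$-contribution with $\mathbf{x}^{*}\mathbf{M}\mathbf{w}$; without it, Cauchy--Schwarz alone cannot close the argument.
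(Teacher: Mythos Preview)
Your proposal is correct and follows essentially the same route as the paper's proof. The paper also splits $\mathbf{x}^*\mathbf{d}$ into the $(|\alpha|-1)$ piece (yielding the leading constant $\tfrac{2\sigma^2}{1+2\sigma^2}$ via $\mathbb{E}[\mathbf{M}]$), a deviation term controlled by $\|\mathbf{M}-\mathbb{E}[\mathbf{M}]\|$, and the phase-difference term $T$; it then Taylor-expands the phase and shows the linear part reduces to $\mathbf{x}^*\mathbf{M}\mathbf{w}$ (plus its conjugate), the quadratic remainder to $|\beta/\alpha|^2\,\mathbf{w}^*\mathbf{M}\mathbf{w}$, and the truncation to Lemma~\ref{lem:bound-1}. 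Two small corrections: the expansion you need is Lemma~\ref{lem:phase-approx} (the second-order bound $|1-e^{i\phi(1+z)}+i\Im(z)|\le C|z|^2$), not Lemma~\ref{lem:phase-diff}; and Lemma~\ref{lem:bound-2} is not actually used here---all surviving linear pieces are absorbed into $|\mathbf{x}^*\mathbf{M}\mathbf{w}|\le\|\mathbf{M}-\mathbb{E}[\mathbf{M}]\|$.
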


\begin{proof}
Given the decomposition of $\mb z$ in \eqref{eqn:z-decomposition} with $\mb w \perp \mb x$ and $\phi(\alpha) = \theta$, 
and by the definition of $\mb d(\mb z)$ in \eqref{eqn:L-def}, we observe
\begin{align*}
   \norm{ \mb P_{\mb x} \mb d }{} 
    \;=&\; \abs{ \mb x^* \Brac{  \paren{\mb I - \mb M} \paren{\mb z - e^{\im \theta }\mb x } - e^{\im \theta } \mb M \mb x + \frac{2\sigma^2+1}{m} \mb A^* \diag\paren{ \zeta_{\sigma^2}(\mb y) } \brac{ \mb y \odot \exp\paren{ \im \phi\paren{\mb A\mb z } } } }  } \\
   \;\leq &\; \abs{ \paren{ 1 - \mb x^*\bb E\brac{\mb M} \mb x }\paren{ \abs{\alpha}  - 1}  e^{\im \theta} - e^{\im \theta} \mb x^* \mb M \mb x + \frac{2\sigma^2+1}{m} \mb x^* \mb A^* \diag\paren{ \zeta_{\sigma^2}(\mb y) } \brac{ \mb y \odot \exp\paren{ \im \phi\paren{\mb A\mb z } } }  }  \\
   &+ \norm{ \mb M - \bb E\brac{\mb M} }{} \norm{ \mb z - \mb x e^{\im \theta} }{}  \\
   \leq& \underbrace{\abs{ \paren{ 1 - \mb x^*\bb E\brac{\mb M} \mb x }  } \abs{ \abs{\alpha}  - 1 } }_{\mc T_1} +  \norm{ \mb M - \bb E\brac{\mb M} }{} \norm{ \mb z - \mb x e^{\im \theta} }{}  \\
   & +\underbrace{   \abs{     \frac{2\sigma^2+1}{m} \mb x^* \mb A^* \diag\paren{\zeta_{\sigma^2}(\mb y) } \brac{ (\mb A\mb x) \odot \paren{ \exp\paren{ \im \phi\paren{\mb A \mb z - \im \phi\paren{\mb A\mb x}  }  - e^{\im \theta} \mb 1  }    }     } }    }_{\mc T_2},
\end{align*}
where for the second inequality, we used Lemma \ref{lem:M-H-expectation} such that $\mb x^*\paren{\mb I - \bb E\brac{ \mb M } }\mb w = 0$. For the first term $\mc T_1$, notice that
\begin{align*}
   \norm{ \mb z - \mb x e^{\im \theta} }{} \;=\; \sqrt{ \abs{ \abs{\alpha } - 1 }^2 + \norm{\beta \mb w } {}^2  }	 \;\geq\; \abs{ \abs{\alpha } - 1 },
\end{align*}
and by using the fact that $\bb E\brac{\mb M} = \mb I + \frac{2\sigma^2}{1+2\sigma^2 } \mb x\mb x^* $ in Lemma \ref{lem:M-H-expectation}, we have
\begin{align*}
  \mc T_1 \;=\; \frac{2\sigma^2}{1+2\sigma^2} \abs{ \abs{\alpha} - 1 } \; \leq \; \frac{2\sigma^2}{1+2\sigma^2} \norm{ \mb z - \mb x e^{\im \theta} }{}.
\end{align*}
For the term $\mc T_2$, using the fact that $\mb z = \alpha \mb x + \beta \mb w $ and $\theta = \phi(\alpha) $, and by Lemma \ref{lem:phase-approx}, notice that
\begin{align*}
   &\abs{ \exp\paren{ \im \phi\paren{\mb A \mb z} - \im \phi\paren{\mb A\mb x}  }  - e^{\im \theta} \mb 1 + \im e^{\im \theta} \Im\paren{ \frac{ \beta \mb A \mb w }{ \alpha \mb A\mb x } }  }	\\
   =\;& \abs{ \exp\paren{ \im\phi\paren{ 1+ \frac{\beta \mb A\mb w }{\alpha \mb A\mb x }  }  }  -\mb 1 + \im \Im\paren{ \frac{ \beta \mb A \mb w }{ \alpha \mb A\mb x } } } \;\leq\; 6 \abs{\frac{ \beta }{ \alpha } }^2 \abs{\frac{ \mb A \mb w }{ \mb A \mb x} }^2,
\end{align*}
whenever $  \abs{\frac{ \beta \mb A \mb w }{ \alpha \mb A \mb x} } \leq 1/2 $. Thus, by using the result above, we observe
\begin{align*}
   \mc T_2 \;\leq\;&  2 \abs{ \frac{2\sigma^2+1}{m} \mb x^* \mb A^* \diag\paren{\zeta_{\sigma^2}(\mb y) } \brac{ (\mb A\mb x) \odot \mb 1_{ \abs{\frac{\beta}{ \alpha }} \abs{\mb A\mb w  } \geq \frac{1}{2} \abs{\mb A\mb x} }  } } \\
   & + \abs{ \frac{2\sigma^2+1}{m} \mb x^* \mb A^* \diag\paren{\zeta_{\sigma^2}(\mb y) \odot \paren{ \exp\paren{ \im \phi\paren{\mb A \mb z }- \im \phi\paren{\mb A\mb x}  }  - e^{\im \theta} \mb 1  } \odot \mb 1_{ \abs{\frac{\beta}{ \alpha }} \abs{\mb A\mb w  } \leq \frac{1}{2} \abs{\mb A\mb x} }  }  \mb A\mb x    } \\
   \leq\;& 2 \abs{ \frac{2\sigma^2+1}{m} \mb x^* \mb A^* \diag\paren{\zeta_{\sigma^2}(\mb y) } \brac{ (\mb A\mb x) \odot \mb 1_{ \abs{\frac{\beta}{ \alpha }} \abs{\mb A\mb w  } \geq \frac{1}{2} \abs{\mb A\mb x} }  } }  \\
   & + \abs{ \frac{2\sigma^2+1}{m} \mb x^* \mb A^* \diag\paren{\zeta_{\sigma^2}(\mb y) } \brac{ (\mb A\mb x) \odot  \Im\paren{   \frac{  \beta  \mb A\mb w }{  \alpha \mb A\mb x } } } } \\
   &+ 6 \abs{\frac{\beta}{\alpha }}^2 \abs{ \frac{2\sigma^2+1}{m} \mb x^*\mb A^* \diag\paren{\zeta_{\sigma^2}(\mb y) \odot \frac{ \abs{\mb A\mb w }^2 }{ \abs{\mb A\mb x}^2 } } \mb A\mb x  } \\
   \leq \;& 2 \frac{ 2\sigma^2+1 }{m} \norm{\mb A}{} \norm{ \mb A\mb x\odot \mb 1_{ \abs{\frac{\beta}{ \alpha }} \abs{\mb A\mb w  } \leq \frac{1}{2} \abs{\mb A\mb x} }   }{} + 6 \abs{\frac{\beta}{\alpha }}^2 \abs{ \frac{2\sigma^2+1}{m}\mb w^* \mb A^* \diag\paren{ \zeta_{\sigma^2}(\mb y) }\mb A \mb w } \\
   & + \frac{1}{2} \abs{\frac{\beta}{\alpha} } \abs{ \frac{2\sigma^2+1}{m} \mb x^*\mb A^* \diag\paren{ \zeta_{\sigma^2}(\mb y) } \mb A\mb w } \\
   &+ \frac{1}{2}\abs{\frac{\beta}{\alpha} } \abs{ \frac{2\sigma^2+1}{m} \mb x^*\mb A^* \diag\paren{ \zeta_{\sigma^2}(\mb y) } \brac{ \exp\paren{ 2\im \phi\paren{\mb A\mb x} }\odot \ol{\mb A\mb w } } } 
\end{align*}
Given the fact that $\mb x \perp \mb w$, by Lemma \ref{lem:M-H-expectation} again we have $\mb x^*\bb E\brac{\mb M} \mb w  = 0$. Thus 
\begin{align*}
\abs{ \frac{2\sigma^2+1}{m} \mb x^*\mb A^* \diag\paren{ \zeta_{\sigma^2}(\mb y) } \mb A\mb w } \;= \; \abs{  \mb x^* \mb M \mb w - \mb x^* \bb E\brac{\mb M}\mb w } \;\leq\; \norm{ \mb M - \bb E\brac{\mb M} }{},
\end{align*}
and similarly we have
\begin{align*}
  &\abs{ \frac{2\sigma^2+1}{m} \mb x^*\mb A^* \diag\paren{ \zeta_{\sigma^2}(\mb y) } \brac{ \exp\paren{ 2\im \phi\paren{\mb A\mb x} }\odot \ol{\mb A\mb w } } } \\
  =\;& \abs{ \frac{2\sigma^2+1}{m} \mb w^\top \mb A^\top \diag\paren{ \zeta_{\sigma^2}(\mb y) }\brac{ \exp \paren{ -2\im \phi\paren{\mb A\mb
   x} }\odot \mb A\mb x } } \\
   =\;& \abs{ \frac{2\sigma^2+1}{m} \mb w^\top \mb A^\top \diag\paren{ \zeta_{\sigma^2}(\mb y) } \ol{ \mb A\mb x} } \\
   =\;& \abs{ \frac{2\sigma^2+1}{m} \mb x^* \mb A^* \diag\paren{ \zeta_{\sigma^2}(\mb y) } \mb A \mb w } \leq \norm{\mb M - \bb E\brac{\mb M}}{}.
\end{align*}
Thus, suppose $\norm{\mb z - \mb x e^{\im \theta } }{} \leq \frac{1}{2}$, by using Lemma \ref{lem:beta-alpha-bound} we know that $\abs{ \frac{\beta }{ \alpha }  } \leq 2\norm{\mb z - \mb x e^{\im \theta } }{} $. \edited{Combining the estimates above, we obtain}
\begin{align*}
   \mc T_3 \;\leq \;&  2 \frac{ 2\sigma^2+1 }{m} \norm{\mb A}{} \norm{ \mb A\mb x\odot \mb 1_{ \abs{\frac{\beta}{ \alpha }} \abs{\mb A\mb w  } \leq \frac{1}{2} \abs{\mb A\mb x} }   }{} + 24 \norm{ \mb z - \mb x e^{\im \theta} }{}^2  \norm{\mb M}{} \\
   &+ 2\norm{ \mb M - \bb E\brac{\mb M} }{} \norm{ \mb z - \mb x e^{\im \theta} }{} .
\end{align*}
Combining the estimates for $\mc T_1$ and $\mc T_2$, we have
\begin{align*}
   \norm{ \mb P_{\mb x}\mb d }{} \;\leq \;& \frac{2\sigma^2}{1+2\sigma^2} \norm{\mb z - \mb x^{\im\theta} }{} + 3\norm{ \mb M - \bb E\brac{\mb M} }{} \norm{\mb z - \mb x^{ \im \theta }}{} \\
   &+ 2 \frac{ 2\sigma^2+1 }{m} \norm{\mb A}{} \norm{ \mb A\mb x\odot \mb 1_{ \abs{\frac{\beta}{ \alpha }} \abs{\mb A\mb w  } \leq \frac{1}{2} \abs{\mb A\mb x} }   }{} + 24 \norm{\mb M}{} \norm{ \mb z - \mb x e^{\im \theta} }{}^2.
\end{align*}
By \Cref{thm:M-H-concentration}, for any $\delta>0$, whenever $m \geq C_1\delta^{-2} \norm{\mb C_{\mb x}}{}^2 n \log^4 n$, we have
\begin{align*}
    \norm{ \mb M - \bb E\brac{\mb M } }{} \leq  \delta, \quad \norm{\mb M}{} \leq \norm{ \bb E\brac{\mb M} }{} +  \delta =\frac{1+4\sigma^2}{1+2\sigma^2}+ \delta
\end{align*}
holds with probability at least $1 - c_1m^{-c_2\log^3n}$. By Corollary \ref{cor:circ-spectral-norm}, for any $\delta \in (0,1)$, whenever $m \geq C_2\delta^{-2} n \log^4 n$, we have
\begin{align*}
   \norm{\mb A}{} \;\leq\; (1+\delta) \sqrt{m}
\end{align*}
holds with probability at least $1- 2m^{-c_3\log^3 n}$ for some constant $c_3>0$. If $  \frac{1}{2}\abs{\frac{\beta}{\alpha }} \leq  \norm{\mb z - \mb x e^{\im \theta} }{} \leq c_4\delta^3 \log^{-6}n$, whenever $ m \geq C_3\max \Brac{  \norm{\mb C_{\mb x}}{}^2 \log^{17} n, \delta^{-2} n \log^4 n }$, Lemma \ref{lem:bound-1} implies that
\begin{align*}
 \norm{  \abs{\mb A\mb x} \odot \mb 1_{ \abs{\frac{\beta}{\alpha}} \abs{\mb A \mb w } \;\geq \;\frac{1}{2} \abs{\mb A\mb x}  }  }{} \;\leq\;  2 \delta  \abs{ \frac{\beta}{ \alpha }  }  \sqrt{m} \leq 4\delta  \sqrt{m} \norm{\mb z - \mb x e^{\im \theta}}{}
\end{align*}
holds for all $\mb w \in \bb {CS}^{n-1}$ with probability at least $1- c_5m^{-c_6} $. Given $\norm{\mb z - \mb x e^{\im \theta} }{} \leq \frac{c_4}{4}\delta^3 \log^{-6}n$, combining the estimates above, we have
\begin{align*}
   \norm{ \mb P_{\mb x} \mb d }{} \;&\leq\; \brac{ \frac{2\sigma^2}{1+2\sigma^2} + 3\delta +8(1+\delta)\delta \paren{2\sigma^2+1 } + 24c_4 \paren{  \frac{1+4\sigma^2}{1+2\sigma^2} + \delta }  \delta^3 \log^{-6}n  } \norm{ \mb z - \mb x e^{ \im \theta } }{}\\
   \;&\leq\; \paren{  \frac{2\sigma^2}{1+2\sigma^2} +  c_{\sigma^2} \delta   } \norm{ \mb z - \mb x e^{ \im \theta } }{}
\end{align*}
for $\delta$ sufficiently small. Here, $c_{\sigma^2}$ is some positive numerical constant depending only on $\sigma^2$.
\end{proof}

\subsection{Proof of Lemma \ref{lem:bound-1} }\label{app:lem-bound-1}
 
 In this section, we \edited{prove Lemma \ref{lem:bound-1}} in \Cref{app:contraction}, which can be restated as follows.
 \begin{lemma}
	For any given scalar $\delta\in (0,1)$, let $\gamma = c_0\delta^3\log^{-6} n $, whenever \\
	$m\geq C\max \Brac{  \norm{\mb C_{\mb x}}{}^2 \log^{17} n, \delta^{-2} n \log^4 n } $, with probability at least $1- c_1m^{-c_2}$ for all $\mb w$ with $\norm{\mb w}{}\leq \gamma \norm{\mb x}{}$, we have the inequality
	\begin{align}\label{eqn:bound-1}
		\norm{ \mb A\mb x\odot \mb 1_{\abs{\mb A \mb w} \geq \abs{\mb A\mb x} } }{} \;\leq\; \delta \sqrt{m} \norm{\mb w}{}.
	\end{align}
\end{lemma}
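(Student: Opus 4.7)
Let $\Omega = \Omega(\mb w) = \{k : \abs{(\mb A\mb w)_k} \geq \abs{(\mb A\mb x)_k}\}$. The idea is to introduce a threshold $\lambda>0$ (to be optimized) and split $\Omega = \Omega_1 \sqcup \Omega_2$ according to whether $\abs{(\mb A\mb x)_k} \leq \lambda$ or $>\lambda$. On $\Omega_1$ the summand is at most $\lambda^2$; on $\Omega_2$ the defining inequality $\abs{(\mb A\mb w)_k} \geq \abs{(\mb A\mb x)_k} > \lambda$ yields simultaneously $\abs{(\mb A\mb x)_k}^2 \leq \abs{(\mb A\mb w)_k}^2$ and $\abs{(\mb A\mb w)_k} > \lambda$. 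This produces the master decomposition
\begin{align*}
\norm{\mb A\mb x \odot \mb 1_\Omega}{}^2 \;\leq\; \lambda^2 N_\lambda \;+\; \sum_{k=1}^m \abs{(\mb A\mb w)_k}^2 \indicator{\abs{(\mb A\mb w)_k} > \lambda},
\end{align*}
with $N_\lambda = \abs{\{k : \abs{(\mb A\mb x)_k}\leq \lambda\}}$. The first term is independent of $\mb w$, eliminating the uniformity problem there; the second term is what needs to be controlled uniformly in $\mb w$.

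For the $\mb w$-independent piece, each $(\mb A\mb x)_k$ is a mean-zero complex Gaussian of variance $\norm{\mb x}{}^2$, so the bounded pdf of $\abs{(\mb A\mb x)_k}^2$ at the origin yields $\bb P(\abs{(\mb A\mb x)_k}\leq \lambda) \lesssim \lambda^2/\norm{\mb x}{}^2$ and hence $\bb E N_\lambda \lesssim m\lambda^2/\norm{\mb x}{}^2$. Although the Bernoulli indicators are dependent through $\mb a$, concentration $N_\lambda \lesssim m\lambda^2/\norm{\mb x}{}^2$ with high probability can be obtained by smoothing the indicator and invoking the moment/RIP estimates for partial random circulant matrices of Krahmer--Mendelson--Rauhut cited in the paper (the $\norm{\mb C_{\mb x}}{}^2$ factor enters here, since it controls the operator norm of the associated chaos). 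This contributes $\lambda^2 N_\lambda \lesssim m\lambda^4/\norm{\mb x}{}^2$.

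For the $\mb w$-dependent piece the main hurdle is the discontinuity of the indicator, which breaks a naive $\varepsilon$-net plus Lipschitz argument. I would sidestep it by the Markov-type replacement $\indicator{\abs{(\mb A\mb w)_k}>\lambda} \leq \abs{(\mb A\mb w)_k}^{2}/\lambda^{2}$, reducing the task to bounding $\norm{\mb A\mb w}{4}^4/\lambda^2$. The relevant quantitative input is the uniform fourth-moment bound $\sup_{\norm{\mb w}{}\leq \gamma\norm{\mb x}{}} \norm{\mb A\mb w}{4}^4 \lesssim m\norm{\mb w}{}^4 \poly\log n$, which is a suprema-of-chaos statement of degree four in $\mb a$ on an $n$-ball; since the integrand $\norm{\mb A\mb w}{4}^4$ is a smooth polynomial in $\mb w$, the standard $\varepsilon$-net + Lipschitz-continuity scheme applies, and the per-point concentration is supplied by the partial-circulant moment/RIP estimates of Krahmer--Mendelson--Rauhut. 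Balancing $m\lambda^4/\norm{\mb x}{}^2$ against $m\norm{\mb w}{}^4\poly\log n/\lambda^2$ via $\lambda^2 \asymp \norm{\mb x}{}^{2/3}\norm{\mb w}{}^{4/3}(\poly\log n)^{1/3}$ yields a total bound of order $m\norm{\mb w}{}^{8/3}\norm{\mb x}{}^{-2/3}\poly\log n$, which drops below $\delta^2 m\norm{\mb w}{}^2$ as soon as $\norm{\mb w}{}/\norm{\mb x}{} \leq c\,\delta^3/\poly\log n$, matching the hypothesis $\gamma = c_0 \delta^3 \log^{-6} n$.

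The principal obstacle is the uniform fourth-moment bound in the last paragraph: one must combine the partial-random-circulant RIP/suprema-of-chaos estimates with an $\varepsilon$-net argument on the complex $n$-ball of radius $\gamma\norm{\mb x}{}$, and carefully balance the log-cost of the union bound (roughly $n\log(1/\varepsilon)$) against the intrinsic $\log^4 n$ factor of each circulant RIP application. This careful bookkeeping is exactly what produces both the sample-complexity requirement $m\geq C\max\{\norm{\mb C_{\mb x}}{}^2 \log^{17}n,\; \delta^{-2} n\log^4 n\}$ in the statement and the precise exponent $-6$ in the radius $\gamma = c_0\delta^3\log^{-6}n$.
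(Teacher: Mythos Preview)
Your decomposition into the $\lambda$-threshold pieces is reasonable, and your treatment of $N_\lambda$ is in the same spirit as the paper's Lemma~\ref{lem:lem-1-1}. The gap is in the second piece: the uniform fourth-moment bound
\[
\sup_{\norm{\mb w}{}\leq \gamma\norm{\mb x}{}}\norm{\mb A\mb w}{4}^4 \;\lesssim\; m\norm{\mb w}{}^4 \poly\log n
\]
is \emph{false} under the stated sample complexity, so the Markov replacement $\indicator{\abs{(\mb A\mb w)_k}>\lambda}\leq \abs{(\mb A\mb w)_k}^2/\lambda^2$ is too crude. Take $\hat{\mb w}=\mb a_1/\norm{\mb a_1}{}$, where $\mb a_1^*$ is the first row of $\mb A$. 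Then $(\mb A\hat{\mb w})_1=\norm{\mb a_1}{}$, so $\norm{\mb A\hat{\mb w}}{4}^4\geq \norm{\mb a_1}{}^4\gtrsim n^2$ with high probability. When $m\asymp n\,\poly\log n$ this is $\gg m\,\poly\log n$, so your claimed bound cannot hold uniformly. The Krahmer--Mendelson--Rauhut machinery you cite controls degree-two chaoses $\norm{\mb B\mb \xi}{}^2$; a uniform $\ell^4$ bound is genuinely a degree-four object, and the single ``spike'' term above shows no amount of $\varepsilon$-net bookkeeping will rescue it at this sample size.

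The paper sidesteps this obstruction by never touching $\norm{\mb A\mb w}{4}$. Instead it first bounds the \emph{cardinality} of $\mc S=\{k:\abs{(\mb A\mb w)_k}\geq\abs{(\mb A\mb x)_k}\}$: if $\abs{\mc S}>\rho m$, Lemma~\ref{lem:lem-1-1} forces $\norm{(\mb A\mb x)\odot\mb 1_{\mc S}}{}\gtrsim \rho^{3/2}\norm{\mb A\mb x}{}$, while on $\mc S$ we also have $\norm{(\mb A\mb x)\odot\mb 1_{\mc S}}{}\leq \norm{\mb A\mb w}{}\lesssim\gamma\norm{\mb A\mb x}{}$, a contradiction once $\gamma\lesssim\rho^{3/2}$. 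With $\abs{\mc S}\leq \rho m$ established, the paper bounds $\norm{(\mb A\mb w)\odot\mb 1_{\mc S}}{}=\sup_{\supp(\mb r)\subseteq\mc S}\norm{\mb A^*\mb r}{}$ via the circulant RIP for $(\rho m)$-sparse vectors (Lemma~\ref{lem:lem-1-2}), which is a degree-two statement and falls squarely within the cited toolkit. The pairing $\rho\sim\delta^2/\log^4 n$ and $\gamma\sim\rho^{3/2}$ is exactly what produces the exponent $\delta^3\log^{-6}n$.
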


We prove this lemma using the results in Lemma \ref{lem:lem-1-1} and Lemma \ref{lem:lem-1-2}. 

\begin{proof}
  By Corollary \ref{cor:circ-spectral-norm}, for some small scalar $\eps \in (0,1)$, whenever $m \geq C n \log^4 n$, with probability at least $1 - m^{- c\log^3 n}$ for every $\mb w$ with $\norm{\mb w}{} \leq \gamma \norm{\mb x}{}$, we have
  \begin{align*}
  	     \norm{\mb A\mb w}{} \leq (1+\eps) \sqrt{m} \norm{\mb w}{} \leq (1+\eps) \gamma \sqrt{m} \norm{\mb x}{} \leq \paren{ \frac{1+\eps}{1-\eps}}^{1/2} \gamma \norm{\mb A\mb x}{} \leq 2 \gamma \norm{\mb A \mb x}{}.
  \end{align*}
  Let us define a set 
  \begin{align*}
   \mc S \doteq \Brac{ k \mid \abs{\mb a_k^* \mb w} \geq \abs{\mb a_k^* \mb x} }.	
  \end{align*}
  By Lemma \ref{lem:lem-1-1}, \edited{for every set $\mc S$ with $\abs{\mc S}> \rho m$ (with some $\rho\in (0,1)$ to be chosen later),  with probability at least $1- \exp\paren{-\frac{ \rho^4m }{2\norm{\mb C_{\mb x}}{}^2 } }$,} we have
  \begin{align*}
     	\norm{\paren{\mb A\mb x} \odot\mb 1_{\mc S} }{} > \frac{ \rho^{3/2} }{32} \norm{\mb A \mb x}{}.
  \end{align*}
Choose $\rho$ such that $\gamma = \frac{ \rho^{3/2}}{64}$, we have
   \begin{align*}
      \norm{\mb A\mb w}{} \geq \norm{ \paren{ \mb A\mb w } \odot \mb 1_{\mc S} }{} \geq \norm{ \paren{\mb A\mb x} \odot \mb 1_{\mc S} }{} > 2\gamma \norm{\mb A \mb x}{}.	
   \end{align*}
   This contradicts with the fact that $\norm{\mb A\mb w }{} \leq 2\gamma \norm{\mb A\mb x}{}$. Therefore, whenever $\norm{\mb w}{} \leq \gamma \norm{\mb x}{}$, with high probability we have $\abs{\mc S}\leq \rho m$ holds. Given any $\delta>0$, choose $\gamma = c\delta^3\log^{-6} n $ for some constant $c>0$. Because $\gamma = \frac{ \rho^{3/2}}{64}$, we know that $\rho = c'\delta^2/\log^4 n $. By Lemma \ref{lem:lem-1-2}, whenever $m \geq C\delta^{-2} n\log^4 n$, with probability at least $1 - 2 m^{-c\log^2 n}$ for all $\mb w\in \bb {CS}^{n-1}$, we have 
   \begin{align*}
      \norm{ \abs{\mb A \mb x} \odot \mb 1_{ \abs{\mb A\mb w} \geq \abs{\mb A\mb x} } }{} \leq \norm{ \abs{\mb A \mb w} \odot \mb 1_{\mc S} }{} \leq \delta \sqrt{m} \norm{\mb w}{}.
   \end{align*}
Combining the results above, we complete the proof.
\end{proof}

\begin{lemma}\label{lem:lem-1-1}
	Let $\rho \in (0,1)$ be a positive scalar, with probability at least $1- \exp\paren{- \frac{ \rho^4 m }{2\norm{\mb C_{\mb x}}{}^2 } }$, for every set $\mc S \in [m]$ with $\abs{\mc S} \geq \rho m$, we have
	\begin{align*}
		\norm{\abs{\mb A\mb x} \odot\mb 1_{\mc S} }{} \;>\; \frac{1}{32} \rho^{3/2} \norm{\mb A \mb x}{}.
	\end{align*}
\end{lemma}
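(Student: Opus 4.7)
The plan is to reduce the statement to a single Lipschitz function of $\mb a$ and apply Gaussian concentration, bypassing any combinatorial union bound over subsets. Define
\begin{align*}
h(\mb a) \;\doteq\; \min_{\mc S\subseteq\{1,\dots,m\},\; |\mc S|=\lceil\rho m\rceil} \norm{\mb R_{\mc S} \mb C_{\mb x} \mb a}{}^2.
\end{align*}
Because $\mb A\mb x=\mb C_{\mb x}\mb a$ and $\norm{\mb R_{\mc S'}\mb A\mb x}{}\ge\norm{\mb R_{\mc S}\mb A\mb x}{}$ whenever $\mc S'\supseteq\mc S$, it suffices to show $\sqrt{h(\mb a)}>\tfrac{\rho^{3/2}}{32}\norm{\mb A\mb x}{}$. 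Each map $\mb a\mapsto\norm{\mb R_{\mc S}\mb C_{\mb x}\mb a}{}$ is Lipschitz with constant $\norm{\mb R_{\mc S}\mb C_{\mb x}}{}\le\norm{\mb C_{\mb x}}{}$, so the pointwise minimum $\sqrt{h}$ inherits Lipschitz constant $\norm{\mb C_{\mb x}}{}$.

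The key step is a lower bound on $\bb E h$. For any threshold $\tau>0$, a pigeonhole argument yields the deterministic inequality $h(\mb a)\ge\tau(\lceil\rho m\rceil-N(\tau,\mb a))_+$, where $N(\tau,\mb a)\doteq\#\{k : |\mb a_k^*\mb x|^2<\tau\}$: any $\mc S$ of size $\lceil\rho m\rceil$ must include at least $\lceil\rho m\rceil-N(\tau,\mb a)$ indices with $|\mb a_k^*\mb x|^2\ge\tau$, each contributing at least $\tau$ to $\norm{\mb R_{\mc S}\mb A\mb x}{}^2$. Since $\mb a_k^*\mb x\sim\mc{CN}(0,\norm{\mb x}{}^2)$ marginally, $\bb E N(\tau)=m(1-e^{-\tau/\norm{\mb x}{}^2})\le m\tau/\norm{\mb x}{}^2$; choosing $\tau=\rho\norm{\mb x}{}^2/2$ gives $\bb E h\ge\rho^2 m\norm{\mb x}{}^2/4$. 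Crucially, this estimate uses only the one-dimensional marginal of $\mb a_k^*\mb x$, so the joint row-dependencies inherited from the circulant structure never enter.

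Combining this with the Gaussian Poincar\'e inequality $\mathrm{Var}(\sqrt h)\lesssim\norm{\mb C_{\mb x}}{}^2$ via $(\bb E\sqrt h)^2 \ge \bb E h - \mathrm{Var}(\sqrt h)$ yields $\bb E\sqrt{h}\ge c_0\rho\sqrt{m}\norm{\mb x}{}$ for some absolute $c_0>0$, under the mild condition $m\gtrsim\norm{\mb C_{\mb x}}{}^2/(\rho^2\norm{\mb x}{}^2)$, which is satisfied in the regime where Lemma~\ref{lem:bound-1} invokes the present result. Gaussian concentration for the Lipschitz function $\sqrt{h}$ then implies $\sqrt{h(\mb a)}\ge(c_0/2)\rho\sqrt{m}\norm{\mb x}{}$ except on an event of probability at most $\exp(-c_1\rho^2 m\norm{\mb x}{}^2/\norm{\mb C_{\mb x}}{}^2)$. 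Pairing with the routine bound $\norm{\mb A\mb x}{}\le 2\sqrt{m}\norm{\mb x}{}$ (Hanson--Wright applied to $\mb a^*\mb C_{\mb x}^*\mb C_{\mb x}\mb a$), together with the elementary inequality $(c_0/2)\rho\sqrt{m}\norm{\mb x}{}\ge\tfrac{\rho^{3/2}}{32}\cdot 2\sqrt{m}\norm{\mb x}{}$ valid for all $\rho\le 1$ once $c_0$ is fixed, delivers $\norm{\mb R_{\mc S}\mb A\mb x}{}>\tfrac{\rho^{3/2}}{32}\norm{\mb A\mb x}{}$ uniformly over every admissible $\mc S$. Since $\rho^4\le\rho^2$ on $(0,1)$, the failure probability is bounded by $\exp(-\rho^4 m/(2\norm{\mb C_{\mb x}}{}^2))$ as stated, after absorbing absolute constants.

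The main obstacle is producing the lower bound on $\bb E h$ without appealing to joint independence of the variables $\{\mb a_k^*\mb x\}_{k=1}^m$, which fails for the random circulant model; the deterministic pigeonhole identity above is the crucial structural observation that reduces this estimate to an essentially distribution-free computation depending only on marginal tails, so that the circulant dependencies affect only the (benign) Lipschitz constant of $\sqrt{h}$ in the concentration step.
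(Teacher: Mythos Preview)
Your approach is sound and takes a genuinely different route from the paper's. Both arguments concentrate a single scalar Lipschitz function of $\mb a$, bypassing any union bound over subsets, but the paper concentrates the smoothed small-entry count $\norm{g_\rho(\mb C_{\mb x}\mb a)}{1}$ (with $g_\rho$ a $1/\rho$-Lipschitz mollifier of $\indicator{|\cdot|\le\rho}$), whose Lipschitz constant is $\sqrt{m}\,\norm{\mb C_{\mb x}}{}/\rho$; this yields the exponent $\rho^4 m/(2\norm{\mb C_{\mb x}}{}^2)$ directly, and the mean bound needs only marginals, with no side condition on $m$. Your function $\sqrt{h}$ has the sharper Lipschitz constant $\norm{\mb C_{\mb x}}{}$, giving the stronger exponent $\sim\rho^2 m/\norm{\mb C_{\mb x}}{}^2$, but at the price of invoking Gaussian Poincar\'e to lower-bound $\bb E\sqrt{h}$, which is what forces the extra hypothesis $m\gtrsim\norm{\mb C_{\mb x}}{}^2/(\rho^2\norm{\mb x}{}^2)$; so you have proved a sharper inequality under an added assumption, not the lemma exactly as stated.

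Two small caveats. First, the final weakening from $\exp(-c_1\rho^2 m/\norm{\mb C_{\mb x}}{}^2)$ to $\exp(-\rho^4 m/(2\norm{\mb C_{\mb x}}{}^2))$ needs $c_1\rho^2\ge\rho^4/2$; this is not automatic for $\rho$ near $1$ if $c_1<1/2$, so ``absorbing absolute constants'' is a slight overreach (though it is immediate in the small-$\rho$ regime used downstream). Second, the Hanson--Wright event $\norm{\mb A\mb x}{}\le 2\sqrt{m}\norm{\mb x}{}$ adds a separate failure probability; it fits the budget (exponent $\sim m/\norm{\mb C_{\mb x}}{}^2$), but you should note this explicitly---the paper's own proof is equally silent on this conversion from $\sqrt{m}$ to $\norm{\mb A\mb x}{}$.
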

\edited{To prove this, let us define
\begin{align}\label{eqn:g-v-u}
g_v(u) = \begin{cases}
 1 & \text{if } \abs{u}  \leq v, \\
 \frac{1}{ v}\paren{ 2v - \abs{u}  } & v < \abs{u} \leq 2v \\
 0 & \text{otherwise},	
 \end{cases}	
\end{align}
for a variable $u\in \bb C$ and a fixed positive scalar $v\in \bb R$.}

\begin{proof}
Let $\rho \in (0,1)$ be a positive scalar, from Lemma \ref{lem:lem-1-aux}, we know that 
\begin{align*}
	\norm{g_{\rho } (\mb A \mb x)}{1} \geq \norm{\mb 1_{ \abs{\mb A\mb x} \leq \rho  } }{1}
\end{align*}
holds uniformly. Thus, for an independent copy $\mb a'$ of $\mb a$, we have
\begin{align*}
\abs{\norm{g_{\rho  } (\mb C_{\mb x} \mb a)}{1} - \norm{ g_{\rho } (\mb C_{\mb x} \mb a')  }{1}}	\leq \norm{ g_{\rho  } (\mb C_{\mb x} \mb a) -  g_{\rho } (\mb C_{\mb x} \mb a')  }{1} &\leq \frac{\sqrt{m} }{ \rho } \norm{\mb C_{\mb x} \mb a - \mb C_{\mb x} \mb a'}{} \\
&\leq  \frac{\sqrt{m} }{ \rho } \norm{\mb C_{\mb x}}{} \norm{\mb a - \mb a'}{}.
\end{align*}
Therefore, we can see that $\norm{g_{ \rho } (\mb C_{\mb x} \mb a)}{1}$ is $L$-Lipschitz with respect to $\mb a$, with $L=\frac{\sqrt{m} }{ \rho } \norm{\mb C_{\mb x}}{}$. By Gaussian concentration inequality in Lemma \ref{lem:gauss-concentration}, we have
\begin{align}
\bb P\paren{ \abs{ \norm{g_{\rho  } (\mb C_{\mb x} \mb a)}{1} - \bb E \brac{\norm{g_{\rho } (\mb C_{\mb x} \mb a)}{1} } } \geq t } \leq 2\exp\paren{ -\frac{t^2}{2L^2} }	.
\end{align}
By using the fact that $\sqrt{2} \abs{\mb a_k^*\mb x}$ follows the $\chi$ distribution, we have
\begin{align*}
	\bb E \brac{\norm{g_{\rho  } (\mb C_{\mb x} \mb a)}{1} } \leq  \sum_{k=1}^m \bb E\brac{ \indicator{ \abs{\mb a_k^*\mb x }\leq 2\rho } } = \sum_{k=1}^m \bb P\paren{ \abs{\mb a_k^*\mb x} \leq 2\rho }  \leq  \rho m.
\end{align*}
Thus, with probability at least $1 - 2\exp\paren{ - \frac{\rho^4 m}{ 2\norm{\mb C_{\mb x}}{}^2 } } $, we have
\begin{align*}
	 \norm{\mb 1_{ \abs{\mb A\mb x} \leq \rho } }{1} \leq \norm{g_{\rho } (\mb A \mb x)}{1} \leq  2 \rho m
\end{align*}
holds.
Thus, for any set $\mc S$ such that $\abs{\mc S} \geq 4\rho m $, we have
\begin{align*}
\norm{ (\mb A\mb x) \odot \mb 1_{\mc S} }{}^2 \geq  \norm{ \paren{\mb A\mb x} \odot \mb 1_{ \abs{\mb A\mb x} \leq \rho } }{}^2  \geq  2 \rho^3 m .
\end{align*}
Thus, by replacing $4\rho$ with $\rho$, we complete the proof.
\end{proof}

\begin{lemma}\label{lem:lem-1-2}
	Given any scalar $\delta>0$, let $\rho \in (0,c_\delta log^{-4} n )$ with $c_\delta$ be some constant depending on $\delta$, whenever $m\geq C \delta^{-2} n \log^4 n$, with probability at least $1 - 2 m^{-c \log^2 n}$, for any set $\mc S\in [m]$ with $\abs{\mc S}<\rho m$ and for all $\mb w \in \bb C^n$, we have
	\begin{align*}
		\norm{(\mb A \mb w) \odot \mb 1_{\mc S} }{} \;\leq\; \delta \sqrt{m} \norm{\mb w}{}.
	\end{align*}
\end{lemma}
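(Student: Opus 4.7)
The plan is to recast the statement as a uniform bound on a Gaussian chaos process indexed jointly by sparse coordinate selectors $\mc S$ and unit vectors $\mb w$, and then control it via the suprema-of-chaos-processes framework for partial random circulant matrices \cite{krahmer2014suprema} (see also Appendix \ref{app:moments-circulant-matrix}). First I would rewrite
\[
\norm{(\mb A\mb w)\odot\mb 1_{\mc S}}{}^2 \;=\; \sum_{k\in\mc S}\abs{\mb a_k^*\mb w}^2 \;=\; \norm{\mb B_{\mc S,\mb w}\,\mb a}{}^2, \qquad \mb B_{\mc S,\mb w}\;\doteq\;\mb R_{\mc S}\mb C_{\mb w'},
\]
with $\mb w'\in\bb C^m$ the zero-padding of $\mb w$ to length $m$. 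This turns the claim into the uniform bound $\sup_{\mb B \in \mc A}\norm{\mb B\mb a}{}^2 \leq \delta^2 m$ over $\mc A = \set{\mb B_{\mc S,\mb w} : \abs{\mc S}\leq \rho m,\;\mb w \in\bb {CS}^{n-1}}$. Because $\bb E\norm{\mb B_{\mc S,\mb w}\,\mb a}{}^2 = \norm{\mb B_{\mc S,\mb w}}{F}^2 = \abs{\mc S} \leq c_\delta m\log^{-4}n$, the expectation sits below $\delta^2 m/2$ once $c_\delta$ is small relative to $\delta$, so I only need to control the centered deviation $\sup_{\mb B\in\mc A}\abs{\norm{\mb B\mb a}{}^2 - \bb E\norm{\mb B\mb a}{}^2}$.

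Next I would apply the Krahmer--Mendelson--Rauhut inequality
\[
\sup_{\mb B\in\mc A}\abs{\norm{\mb B\mb a}{}^2 - \bb E\norm{\mb B\mb a}{}^2} \;\lesssim\; \gamma_2(\mc A)\paren{\gamma_2(\mc A)+d_F(\mc A)} + \sqrt{u}\,d_{op}(\mc A)\paren{\gamma_2(\mc A)+d_F(\mc A)} + u\,d_{op}(\mc A)^2,
\]
with failure probability at most $2\exp(-u)$, where $\gamma_2$ and both diameters are measured in the spectral norm. The diameters are immediate: each row of $\mb R_{\mc S}\mb C_{\mb w'}$ is a cyclic shift of $\mb w'$ with unit $\ell^2$ norm, giving $d_F(\mc A)\leq\sqrt{\rho m}$; and $d_{op}(\mc A)\leq\norm{\mb C_{\mb w'}}{}=\sqrt{m}\norm{\hat{\mb w}'}{\infty}\leq\sqrt{n}$, because Cauchy--Schwarz yields $\abs{\hat{\mb w}'_k}\leq\sqrt{n/m}\,\norm{\mb w}{}$ for $\mb w'$ supported on its first $n$ coordinates. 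Choosing $u = c\log^2 n$, under $m\geq C\delta^{-2}n\log^4 n$ and $\rho\leq c_\delta\log^{-4}n$ each of the three terms on the right will be $\leq \delta^2 m/4$ once $C$ is large and $c_\delta$ small (depending only on $\delta$), and the failure probability becomes $2\exp(-c\log^2 n)\leq 2m^{-c'\log^2 n}$, matching the claim.

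The main obstacle is the $\gamma_2$ bound, since $\mc A$ mixes the continuous sphere $\bb {CS}^{n-1}$ (through $\mb w$) with the combinatorial family of $\binom{m}{\rho m}$ sparse selectors (through $\mc S$), and a naive union bound over $\mc S$ costs $\rho m\log(e/\rho)$, which threatens to dominate $\delta^2 m$. To close the argument I would exploit the factorization $\mb B_{\mc S,\mb w}=\mb R_{\mc S}\mb C_{\mb w'}$ and chain the two factors separately: for fixed $\mc S$ the continuous direction is handled by Dudley's entropy integral together with the circulant covering estimates of \cite{krahmer2014suprema} and Appendix \ref{app:moments-circulant-matrix}, giving a contribution of order $\sqrt{n}\,\log^{3/2}n$; the combinatorial direction enters through $\log\binom{m}{\rho m}\lesssim \rho m \log(e/\rho)$, contributing $\lesssim \sqrt{\rho m \log(e/\rho)}\,\log^{1/2}n$. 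Combining yields $\gamma_2(\mc A)\lesssim\paren{\sqrt{n}+\sqrt{\rho m\log(e/\rho)}}\log^{3/2}n$, and the restriction $\rho\leq c_\delta\log^{-4}n$ is exactly what is needed to absorb the $\log(e/\rho)$ factor from the combinatorial cover so that $\gamma_2(\mc A)^2\ll \delta^2 m$ under $m\geq C\delta^{-2}n\log^4 n$.
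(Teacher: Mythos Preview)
Your approach applies the Krahmer--Mendelson--Rauhut chaos bound directly to the joint index set $\mc A=\{\mb R_{\mc S}\mb C_{\mb w'}:|\mc S|\le\rho m,\ \mb w\in\bb{CS}^{n-1}\}$. The paper does something much shorter: it observes
\[
\norm{(\mb A\mb w)\odot\mb 1_{\mc S}}{}\;=\;\sup_{\mb v\in\bb{CS}^{m-1},\ \supp(\mb v)\subseteq\mc S}\innerprod{\mb v}{\mb A\mb w}\;\le\;\sup_{\mb v\in\bb{CS}^{m-1},\ \supp(\mb v)\subseteq\mc S}\norm{\mb A^*\mb v}{},
\]
so that the supremum over all $\mb w$ and all $\mc S$ with $|\mc S|\le\rho m$ is bounded by the supremum of $\norm{\mb A^*\mb v}{}$ over $(\rho m)$-sparse unit vectors $\mb v$. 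That is exactly the content of Lemma~\ref{lem:rip-large-perturb} (a direct consequence of the circulant RIP, Theorem~\ref{thm:circ-rip}). One citation and the proof is finished.

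Your route hinges on the estimate $\gamma_2(\mc A)\lesssim(\sqrt{n}+\sqrt{\rho m\log(e/\rho)})\log^{3/2}n$, and this is where the gap lies. The generic-chaining bound for a union of $N$ sets reads $\gamma_2(\bigcup_i T_i)\lesssim\max_i\gamma_2(T_i)+\Delta\sqrt{\log N}$, with $\Delta$ the diameter in the relevant metric; here $\Delta=d_{op}(\mc A)\le\sqrt{n}$. Hence the combinatorial direction contributes $\sqrt{n}\,\sqrt{\rho m\log(e/\rho)}$, not $\sqrt{\rho m\log(e/\rho)}\,\log^{1/2}n$. With this correction $\gamma_2^2$ contains a term of order $n\,\rho m\log(e/\rho)\asymp c_\delta\, n m\log^{-4}n\cdot\log\log n$, which is \emph{not} bounded by $\delta^2 m$ under $m\ge C\delta^{-2}n\log^4 n$ once $n$ is large; the argument does not close.

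Getting a sufficiently small $\gamma_2$ for your set would require a Maurey-type empirical covering that exploits $\ell^1$ structure in the index set. The duality step is precisely what makes this available: it transfers the sparsity to the vector $\mb v$, where $\norm{\mb v}{1}\le\sqrt{\rho m}$ and the Maurey argument underlying the circulant RIP applies. In your parametrization the selector $\mb 1_{\mc S}$ carries $\ell^1$-mass $\rho m$, not $\sqrt{\rho m}$, and that trick is unavailable. In short, the missing idea is the two-line duality reduction to the already-established RIP for $\mb A^*$.
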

\begin{proof}
    Without loss of generality, let us assume that $\norm{\mb w}{}=1$. First, notice that
	\begin{align*}
		\norm{\mb A \mb w \odot \mb 1_{\mc S}}{} \;=\; \sup_{ \mb v \in \bb {CS}^{m-1} ,\; \supp(\mb v)\subseteq \mc S}\innerprod{ \mb v}{\mb A\mb w} \; \leq\; \sup_{\mb v \in \bb {CS}^{m-1},\; \supp(\mb v)\subseteq \mc S} \norm{\mb A^*\mb v}{}.
	\end{align*}
	By Lemma \ref{lem:rip-large-perturb}, for any positive scalar $\delta>0$ and any $\rho \in (0,c\delta^2 \log^{-4} n )$, whenever $m \geq C\delta^{-2} n\log^4 n$, with probability at least $ 1- m^{-  c'\log^2 n }$, we have
	\begin{align*}
		\sup_{\mb v \in \bb {CS}^{n-1} ,\; \supp(\mb v)\subseteq \mc S} \norm{\mb A^*\mb v}{} \leq \delta \sqrt{m}.
	\end{align*}
	Combining the result above, we complete the proof.
\end{proof}

\begin{lemma}\label{lem:lem-1-aux}
For a variable $u\in \bb C$ and a fixed positive scalar $v\in \bb R$, the function $g_v(u)$ introduced in \eqref{eqn:g-v-u} is $1/ v$-Lipschitz. Moreover, the following bound 
\begin{align*}
g_v(u) \geq \indicator{ \abs{u} \leq v }	
\end{align*}
holds uniformly for $u$ over the whole space.
\end{lemma}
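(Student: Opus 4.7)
The plan is to reduce the claim about the complex-valued function $g_v$ to a claim about a real-valued piecewise linear function composed with $|\cdot|$. Write $g_v(u) = f(|u|)$ where $f : \R_+ \to [0,1]$ is defined by $f(t) = 1$ for $t \in [0, v]$, $f(t) = 2 - t/v$ for $t \in (v, 2v]$, and $f(t) = 0$ for $t > 2v$. Then I would verify the Lipschitz bound for $f$ first, and transfer it to $g_v$ via the fact that $|\cdot|$ is $1$-Lipschitz on $\Cp$.

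First I would check that $f$ is $1/v$-Lipschitz on $\R_+$. The function is continuous across its two joints (both $f(v) = 1$ and $f(2v) = 0$ are attained from each side), and on the three open pieces the derivatives are $0$, $-1/v$, and $0$, respectively. By the fundamental theorem of calculus (or a short case analysis on which pieces $t_1$ and $t_2$ lie in), this yields $|f(t_1) - f(t_2)| \leq \tfrac{1}{v} |t_1 - t_2|$ for all $t_1, t_2 \geq 0$. Combining this with the reverse triangle inequality $\bigl| |u_1| - |u_2| \bigr| \leq |u_1 - u_2|$ gives, for all $u_1, u_2 \in \Cp$,
\begin{align*}
|g_v(u_1) - g_v(u_2)| \;=\; \bigl| f(|u_1|) - f(|u_2|) \bigr| \;\leq\; \tfrac{1}{v} \bigl| |u_1| - |u_2| \bigr| \;\leq\; \tfrac{1}{v} |u_1 - u_2|,
\end{align*}
which is the desired Lipschitz bound.

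For the pointwise lower bound $g_v(u) \geq \indicator{|u| \leq v}$, I would split on whether $|u| \leq v$. If $|u| \leq v$, then by definition $g_v(u) = 1 = \indicator{|u| \leq v}$. If $|u| > v$, the indicator is zero, while from the definition $g_v(u) \in \{0\} \cup [0,1]$ is always nonnegative (on $(v, 2v]$ we have $2v - |u| \geq 0$, and on $|u| > 2v$ we have $g_v(u) = 0$). Either way $g_v(u) \geq \indicator{|u| \leq v}$.

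There is no real obstacle in this lemma; it is a routine computation used to furnish the smooth surrogate $g_v$ for the indicator $\indicator{|u| \leq v}$ that Lemma~\ref{lem:lem-1-1} applies Gaussian concentration to. The only point worth care is ensuring the Lipschitz constant is preserved across the piecewise joints, which is handled automatically by continuity of $f$ combined with the uniform piecewise slope bound of $1/v$.
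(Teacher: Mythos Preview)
Your proof is correct and takes the same approach the paper has in mind; the paper's own proof simply says the Lipschitz continuity is ``straightforward'' and that the inequality ``directly follows from the definition of $g_v(u)$,'' so you have merely filled in the routine details the authors omitted.
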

\begin{proof}
The proof of Lipschitz continuity of $g_v(u)$ is straight forward, and the inequality directly follows from the definition of $g_v(u)$.
\end{proof}

\subsection{Proof of Lemma \ref{lem:bound-2} }\label{app:lem-bound-2}
 \edited{In this section, we prove Lemma \ref{lem:bound-2} in \Cref{app:contraction}, which can be restated as follows.}
\begin{lemma}
For any scalar $\delta\in (0,1)$, whenever $m \geq C \norm{\mb C_{\mb x}}{}^2 \delta^{-2} n \log^4 n$, with probability at least $1 - cm^{-c'\log^3n}$ for all $\mb w \in \bb C^n$ with $\mb w \perp \mb x$, we have
\begin{align*}
  \norm{  \sqrt{ \frac{2\sigma^2+1 }{m } } \diag\paren{ \zeta_{\sigma^2}^{1/2}(\mb y) } \Im\paren{ \mb A \mb w \odot \exp\paren{ - \im ·\phi(\mb A\mb x) } } }{}^2 \leq \frac{ 1+(2+\eps) \delta +(1+\delta) \Delta_\infty(\eps) }{2}\norm{\mb w}{}^2
\end{align*}
holds. In particular, when $\sigma^2 = 0.51$ and $\eps = 0.2$, we have $\Delta(\eps)\leq 0.404$. With the same probability for all $\mb w \in \bb C^n$ with $\mb w \perp \mb x$, we have
\begin{align*}
  \norm{  \sqrt{ \frac{2\sigma^2+1 }{m } } \diag\paren{ \zeta_{\sigma^2}^{1/2}(\mb y) } \Im\paren{ \mb A \mb w \odot \exp\paren{ - \im ·\phi(\mb A\mb x) } } }{}^2 \leq \frac{ 1+2.2 \delta + 0.404 (1+\delta)  }{2}\norm{\mb w}{}^2.
\end{align*}
\end{lemma}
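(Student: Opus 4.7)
The plan is to reduce the squared weighted imaginary part to a real quadratic form, control its diagonal piece via the concentration of $\mb M$, and handle the nonlinear phase piece using the decoupling-plus-Gaussian-smoothing device of Section \ref{subsec:APM-connect}. Using $\Im(u) = (u-\overline{u})/(2\im)$ together with the unit modulus of $\exp(-\im \phi(\cdot))$, a direct algebraic expansion gives
\begin{align*}
  \tfrac{2\sigma^2+1}{m}\norm{ \diag(\zeta_{\sigma^2}^{1/2}(\mb y))\,\Im(\mb A\mb w \odot \exp(-\im\phi(\mb A\mb x))) }{}^2
  \;=\; \tfrac{1}{2}\mb w^*\mb M\mb w \;-\; \tfrac{2\sigma^2+1}{2m}\Re\paren{\mc L_s(\mb a,\mb w)},
\end{align*}
with $\mb M$ as in \eqref{eqn:M-def-main} and $\mc L_s(\mb a,\mb w) \doteq \mb w^\top \mb A^\top \diag\paren{\zeta_{\sigma^2}(\mb y) \odot \psi(\mb A\mb x)}\mb A\mb w$, where $\psi(t) = \exp(-2\im\phi(t))$. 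The lemma then reduces to uniformly upper bounding both pieces over $\mc S \doteq \Brac{\mb w \in \bb{CS}^{n-1} \mid \mb w \perp \mb x}$. For $\mb w \in \mc S$, Lemma \ref{lem:M-H-expectation} yields $\bb E[\mb H] = \mb P_{\mb x^\perp}$, so $\mb w^*\mb M\mb w = \mb w^*\mb H\mb w$, and the concentration bound \eqref{eqn:H-bound} of Theorem \ref{thm:M-H-concentration} gives $\norm{\mb H - \bb E[\mb H]}{} \leq \delta$ with high probability under the stated sample complexity, contributing $(1+\delta)/2$ to the target bound.

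The heart of the proof is the uniform bound of the phase piece via the triangle inequality
\begin{align*}
   \tfrac{2\sigma^2+1}{m}\sup_{\mb w \in \mc S} \abs{\mc L_s(\mb a,\mb w)} \;\leq\; \tfrac{(1+\eps)(2\sigma^2+1)}{m}\sup_{\mb w \in \mc S}\abs{\wh{\mc L}(\mb a,\mb w)} \;+\; (1+\delta)\Delta_\infty(\eps),
\end{align*}
where $\wh{\mc L}(\mb a,\mb w) = \mb w^\top \mb A^\top \diag(h(\mb A\mb x))\mb A\mb w$ and $h(t) = \bb E_{s\sim \mc{CN}(0,1)}[\psi(t+s)]$, following the pattern of \eqref{eqn:bound-L}. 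The $(1+\delta)\Delta_\infty(\eps)$ contribution absorbs the pointwise gap between $\zeta_{\sigma^2}(t)\psi(t)$ and $(1+\eps) h(t)$ via the operator-norm estimate $\norm{\mb A}{}^2 \leq (1+\delta)m$ from Corollary \ref{cor:circ-spectral-norm} and the $(1+2\sigma^2)$ factor built into \eqref{eqn:Delta-infty}. To control the smoothed chaos term $\wh{\mc L}$, I would invoke the decoupling-as-recoupling device of Section \ref{subsec:APM-connect}: with an independent copy $\mb\delta \sim \mc{CN}(\mb 0,\mb I)$ and $\mb g^1 = \mb a + \mb\delta$, $\mb g^2 = \mb a - \mb\delta$, Gaussian orthogonality yields $\bb E_{\mb\delta}[\mc Q_{dec}^{\mc L}(\mb g^1,\mb g^2,\mb w)] = \wh{\mc L}(\mb a,\mb w)$ as in \eqref{eqn:recouple-1}-\eqref{eqn:recouple-2}, and Jensen's inequality \eqref{eqn:jensen-moments} transfers all moments of $\sup_{\mb w}|\wh{\mc L}|$ onto those of $\sup_{\mb w}|\mc Q_{dec}^{\mc L}|$. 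Conditional on $\mb g^2$, $\mc Q_{dec}^{\mc L}$ is a quadratic chaos process in the independent circulant Gaussian $\mb g^1$, which I would control uniformly in $\mc S$ by the suprema-of-chaos estimates for random circulant matrices from Appendix \ref{app:moments-circulant-matrix} combined with Dudley-type chaining and a tail conversion; this yields $\tfrac{2\sigma^2+1}{m}\sup_{\mb w \in \mc S}|\wh{\mc L}| \leq \delta$ once $m \geq C\norm{\mb C_{\mb x}}{}^2\delta^{-2} n \log^4 n$. The explicit numerical estimate $\Delta_\infty(0.2) \leq 0.404$ for $\sigma^2 = 0.51$ is a one-variable calculus calculation that I would defer to Lemma \ref{lem:Delta-infty-bound}.

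The principal obstacle is the decoupling-plus-chaos step. Existing tetrahedral decoupling theorems apply to polynomial dependencies on Gaussians, whereas the phase $\psi$ is nonsmooth with a jump discontinuity at the origin; consequently $\wh{\mc L}$ is only an approximation of $\mc L_s$ and we must absorb the mismatch as an $L^\infty$ error rather than recouple exactly. The whole analysis is engineered so that the weight $\zeta_{\sigma^2}$ tames the singularity of $\psi$ at zero, keeping $\Delta_\infty(\eps)$ a universal constant strictly less than one for $\sigma^2 > 1/2$; this is exactly why the weighting $\mb b$ is introduced in the objective \eqref{eqn:pr-weighted} in the first place. Finally, sharpening the uniform chaos bound over $\mc S$ requires careful accounting of the Gaussian widths and Frobenius/Schatten-type complexity parameters of the associated family of matrices in the Krahmer-Mendelson-Rauhut chaos inequality, and it is here that most of the technical effort will lie.
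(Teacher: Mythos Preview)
Your proposal is correct and matches the paper's proof essentially line for line: the algebraic reduction via $\Im(u)=(u-\bar u)/(2\im)$ into a $\mb w^*\mb H\mb w$ piece plus the phase term $\mc L_s$, the bound on the first via Lemma \ref{lem:M-H-expectation} and Theorem \ref{thm:M-H-concentration}, and the treatment of the second by splitting off the $\Delta_\infty(\eps)$ error and decoupling the smoothed chaos $\wh{\mc L}$ exactly as you describe. The paper simply packages your decoupling-plus-chaos step into a separate lemma (Lemma \ref{lem:cross-term-concentration}, which in turn invokes Theorem \ref{thm:moments-T-2} for the circulant chaos moments), but the substance is the same.
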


\begin{proof}
Without loss of generality, let us assume $\mb w \in \bb {CS}^{n-1}$. For any $\mb w\in \bb {CS}^{n-1} $ with $\mb w \perp \mb x$, we observe
\begin{align*}
   &\norm{ \sqrt{\frac{2\sigma^2+1}{m} } \diag\paren{ \zeta_{\sigma^2}^{1/2}(\mb y) } \Im\paren{ \mb A\mb w \odot \exp\paren{ - \im \phi(\mb A\mb x) } }  }{}^2	 \\
   =\;& \norm{ \frac{1}{2}  \sqrt{\frac{2\sigma^2+1}{m} }  \diag\paren{ \zeta_{\sigma^2}^{1/2}(\mb y) }   \brac{  (\mb A \mb w) \odot \exp\paren{ - \im \phi(\mb A\mb x) }  - (\ol{\mb A\mb w}) \odot \exp\paren{ \im \phi (\mb A\mb x) }  }  }{}^2 \\
   \leq \;& \frac{1}{2} \abs{ \mb w^* \mb P_{\mb x^\perp} \mb M \mb P_{\mb x^\perp}\mb w } + \frac{1}{2} \abs{ \frac{2\sigma^2+1}{m} \mb w^\top \mb P_{\mb x^\perp}^\top \mb A^\top \diag\paren{ \zeta_{\sigma^2}(\mb A\mb x) \psi(\mb A\mb x) } \mb A \mb P_{\mb x^\perp} \mb w } \\
   \leq \;& \frac{1}{2} \norm{ \bb E\brac{\mb H} }{} + \frac{1}{2} \norm{ \mb H - \bb E\brac{\mb H}  }{}  
    + \frac{1}{2} \abs{ \frac{2\sigma^2+1}{m} \mb w^\top \mb P_{\mb x^\perp}^\top \mb A^\top \diag\paren{ \zeta_{\sigma^2}(\mb A\mb x) \psi(\mb A\mb x) } \mb A \mb P_{\mb x^\perp} \mb w }
\end{align*}
holding for all $\mb w\in\bb {CS}^{n-1}$ with $\mb w \perp \mb x$,
where $\mb M$ and $\mb H$ are defined in \eqref{eqn:M-def-main} and \eqref{eqn:H-def-main}, and $\psi(t) = \paren{\ol{t}/\abs{t} }^2 $. By Lemma \ref{lem:M-H-expectation}, we know that 
\begin{align}
   \norm{\bb E\brac{\mb H}  }{} \;=\;  \norm{ \mb P_{\mb x^\perp} }{} \;\leq\; 1.
\end{align}
By \Cref{thm:M-H-concentration}, we know that for any $\delta>0$, whenever $m \geq C_1\delta^{-2} \norm{\mb C_{\mb x}}{}^2 n \log^4n$, we have
\begin{align*}
   \norm{ \mb H - \bb E\brac{\mb H} }{} \;\leq\; \delta,	
\end{align*}
with probability at least $1 - c_1m^{-c_2\log^3n}$. In addition, Lemma \ref{lem:cross-term-concentration} implies that for any $\delta>0$, when $m \geq C_2\delta^{-2} n \log^4 n$ for some constant $C_2>0$, we have
\begin{align*}
   	\abs{ \frac{2\sigma^2+1}{m} \mb w^\top \mb P_{\mb x^\perp}^\top \mb A^\top \diag\paren{ \zeta_{\sigma^2}(\mb A\mb x) \psi(\mb A\mb x) } \mb A \mb P_{\mb x^\perp} \mb w } \;\leq\; (1+\delta)\Delta_\infty(\eps)  + (1+\eps)\delta,
\end{align*}
holds with probability at least $1- 2m^{-c_3\log^3 n}$ for some constant $c_3>0$. Combining the results above, we obtain
\begin{align*}
	\norm{ \sqrt{\frac{2\sigma^2+1}{m} } \diag\paren{ \zeta_{\sigma^2}^{1/2}(\mb y) } \Im\paren{ \mb A\mb w \odot \exp\paren{ - \im \phi(\mb A\mb x) } }  }{}^2 \;	\leq\;\frac{1 +(2+\eps)\delta + (1+\delta)\Delta_\infty(\eps) }{2}.
\end{align*}
Finally, by using Lemma \ref{lem:Delta-infty-bound}, when $\sigma^2 = 0.51$ and $\eps = 0.2$, we have
\begin{align*}
   	\norm{ \sqrt{\frac{2\sigma^2+1}{m} } \diag\paren{ \zeta_{\sigma^2}^{1/2}(\mb y) } \Im\paren{ \mb A\mb w \odot \exp\paren{ - \im \phi(\mb A\mb x) } }  }{}^2 \;	\leq\;\frac{1 + 2.2 \delta + 0.404(1+\delta) }{2},
\end{align*}
as desired.
\end{proof}

\begin{lemma}\label{lem:cross-term-concentration}
For a fixed scalar $\eps>0$, let $\Delta_\infty(\eps) $ be defined as \eqref{eqn:Delta-infty}. For any $\delta>0$, whenever $m \geq C\delta^{-2} n \log^4 n$, with probability at least $1 - m^{- c\log^3 n}$ for all $\mb w \in \bb {CS}^{n-1}$ with $\mb w \perp \mb x$, we have
\begin{align*}
   \abs{ \frac{2\sigma^2+1}{m} \mb w^\top \mb A^\top \diag\paren{ \zeta_{\sigma^2}(\mb A\mb x) \psi(\mb A\mb x) } \mb A \mb w }	\;\leq\; \paren{1+\delta} \Delta_{\infty}(\eps) + (1+\eps) \delta.
\end{align*}
\end{lemma}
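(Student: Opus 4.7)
The plan is to apply the Gaussian decoupling and smoothing framework from the proof sketch of \Cref{sec:analysis} (culminating in \eqref{eqn:bound-L}). Write
\begin{align*}
  \mc L_s(\mb a,\mb w) \;=\; \frac{2\sigma^2+1}{m}\, \mb w^\top \mb A^\top \diag\paren{\zeta_{\sigma^2}(\mb A\mb x)\,\psi(\mb A\mb x)}\, \mb A\mb w,
\end{align*}
and introduce the ``recoupled'' approximation
\begin{align*}
  \wh{\mc L}(\mb a,\mb w) \;\doteq\; \frac{2\sigma^2+1}{m}\, \mb w^\top \mb A^\top \diag\paren{h(\mb A\mb x)}\, \mb A\mb w,\qquad h(t)\;=\;\bb E_{s\sim\mc{CN}(0,1)}\brac{\psi(t+s)}.
\end{align*}
The object $\wh{\mc L}$ arises from a Gaussian splitting: replacing one copy of $\mb a$ inside $\psi(\mb A\mb x)$ by an independent $\mc{CN}(\mb 0,\mb I)$ vector and then taking conditional expectation recouples the resulting decoupled chaos process back into this exact form, because the orthogonality $\mb w\perp\mb x$ ensures that for every row index $k$ the variables $\mb a_k^*\mb w$ and $\mb a_k^*\mb x$ are independent and the auxiliary integration factorizes cleanly (with $\bb E[(\mb a_k^*\mb w)^2]=0$ absorbing the outer perturbation).

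Next I use the triangle inequality
\begin{align*}
  \abs{\mc L_s(\mb a,\mb w)} \;\leq\; \abs{\mc L_s(\mb a,\mb w) - (1+\eps)\,\wh{\mc L}(\mb a,\mb w)} \;+\; (1+\eps)\,\abs{\wh{\mc L}(\mb a,\mb w)}.
\end{align*}
The first, deterministic, approximation-error term I bound pointwise via
\begin{align*}
  \abs{\mc L_s - (1+\eps)\wh{\mc L}} \;\leq\; \frac{\norm{\mb A\mb w}{}^2}{m}\,(2\sigma^2+1)\, \norm{\zeta_{\sigma^2}\psi - (1+\eps)h}{L^\infty} \;=\; \frac{\norm{\mb A\mb w}{}^2}{m}\,\Delta_\infty(\eps),
\end{align*}
and invoke \Cref{cor:circ-spectral-norm} to conclude $\norm{\mb A\mb w}{}^2/m \leq 1+\delta$ uniformly over $\mb w\in\bb {CS}^{n-1}$ whenever $m \geq C\delta^{-2} n \log^4 n$, yielding the contribution $(1+\delta)\Delta_\infty(\eps)$.

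For the second term, the key observation is $\bb E[\wh{\mc L}(\mb a,\mb w)]=0$ whenever $\mb w\perp\mb x$: for each $k$, $\mb a_k^*\mb x$ and $\mb a_k^*\mb w$ are independent complex Gaussians and $\bb E[(\mb a_k^*\mb w)^2]=0$. It then suffices to show the uniform deviation
\begin{align*}
  \sup_{\mb w\in\bb {CS}^{n-1},\,\mb w\perp\mb x} \abs{\wh{\mc L}(\mb a,\mb w)} \;\leq\; \delta
\end{align*}
with high probability when $m \geq C\delta^{-2} n \log^4 n$, which combined with the first term gives the stated bound $(1+\delta)\Delta_\infty(\eps) + (1+\eps)\delta$. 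I plan to establish this concentration by applying the suprema-of-chaos-process machinery for partial random circulant matrices (\Cref{app:moments-circulant-matrix}) together with a secondary Gaussian decoupling (\Cref{app:decoupling}) that replaces $\mb A\mb x$ in the diagonal by an independent copy, so that the outer quadratic form becomes a chaos process against weights that are conditionally independent of it. The principal obstacle will be precisely this concentration step: even after smoothing, $h(\mb A\mb x)$ and $\mb w^\top\mb A^\top\mb A\mb w$ share the same underlying circulant random vector, and one must exploit the fact that $h$ (unlike the discontinuous $\psi$) is smooth and bounded in order to control the metric entropy and Lipschitz terms that enter the chaos-process moment bounds.
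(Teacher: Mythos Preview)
Your proposal is correct and follows essentially the same route as the paper's own proof: the same triangle-inequality split into an $L^\infty$ approximation-error term (bounded by $\Delta_\infty(\eps)\,\norm{\mb A}{}^2/m$ and then \Cref{cor:circ-spectral-norm}) plus $(1+\eps)$ times the recoupled term $\wh{\mc L}$ with $h(t)=\bb E_{s\sim\mc{CN}(0,1)}[\psi(t+s)]$, followed by the Jensen/decoupling step that passes from $\wh{\mc L}$ to the decoupled quadratic $\mc Q_{dec}$ with an independent copy inside the diagonal, and finally the chaos-process moment bound of \Cref{thm:moments-T-2} together with \Cref{lem:moments-tail-bound}. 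The paper carries out exactly this plan; your identification of the key role of $\mb w\perp\mb x$ in making the recoupling exact (so that the cross terms with $\mb\delta$ in the outer factors vanish and only the Gaussian smoothing of $\psi$ survives) is the right mechanism.
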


\begin{proof}
First, let $\mb g = \mb a$ and let $\mb \delta \sim \mc {CN}(\mb 0,\mb I)$ independent of $\mb g$, given a small scalar $\eps>0$, we have
\begin{align*}
   	&\abs{ \frac{2\sigma^2+1}{m} \mb w^\top  \mb R_{[1:n]} \mb C_{\mb g}^\top \diag\paren{ \zeta_{\sigma^2}(\mb g \conv \mb x) \psi(\mb g \conv \mb x) } \mb C_{\mb g} \mb R_{[1:n]}^\top \mb w } \\
   \leq \;& \abs{ \frac{2\sigma^2+1}{m} \mb w^\top  \mb R_{[1:n]} \mb C_{\mb g}^\top \diag\paren{ \zeta_{\sigma^2}(\mb g \conv \mb x) \psi(\mb g \conv \mb x) - (1+ \eps) \bb E_{\mb \delta}\brac{ \psi( \paren{\mb g - \mb \delta}\conv \mb x )  } } \mb C_{\mb g} \mb R_{[1:n]}^\top \mb w } \\
   & + (1+\eps) \abs{ \frac{2\sigma^2+1}{m} \mb w^\top  \mb R_{[1:n]} \mb C_{\mb g}^\top \diag\paren{ \bb E_{\mb \delta}\brac{ \psi( \paren{\mb g - \mb \delta}\conv \mb x )  } } \mb C_{\mb g} \mb R_{[1:n]}^\top \mb w } \\
   \leq \;& \frac{\Delta_\infty(\eps) }{m} \norm{ \mb R_{[1:n]}\mb C_{\mb g}^* }{}^2 + (1 +\eps )\abs{ \underbrace{ \frac{2\sigma^2+1}{m} \mb w^\top  \mb R_{[1:n]} \mb C_{\mb g}^\top \diag\paren{ \bb E_{\mb \delta}\brac{ \psi( \paren{\mb g - \mb \delta}\conv \mb x )  } } \mb C_{\mb g} \mb R_{[1:n]}^\top  \mb w  }_{ \mc D(\mb g,\mb w) } }.
\end{align*}
By Corollary \ref{cor:circ-spectral-norm}, for any $\delta>0$, whenever $ m \geq C_0 \delta^{-2} n \log^4n$ for some constant $C_0>0$, we have
\begin{align*}
  \norm{ \mb R_{[1:n]}\mb C_{\mb g}^* }{}^2 \;\leq\; (1+\delta)m	
\end{align*}
with probability at least $1- m^{-c_0 \log^3m }$ for some constant $c_0>0$. Next, let us define a decoupled version of $\mc D(\mb g,\mb w)$,
\begin{align}
   \mc Q_{dec}^{\mc D}(\mb g^1,\mb g^2,\mb w) \;=\; \frac{2\sigma^2+1}{m} \mb w^\top  \mb R_{[1:n]} \mb C_{\mb g^1}^\top  \diag\paren{  \psi(\mb g^2 \conv \mb x) } \mb C_{\mb g^1} \mb R_{[1:n]}^\top  \mb w.
\end{align}
where $\mb g^1 =\mb g + \mb \delta $ and $\mb g^2 = \mb g - \mb \delta$. Then by using the fact that $\mb w \perp \mb x$, we have
\begin{align*}
  \abs{\bb E_{\mb \delta}\brac{ \mc Q_{dec}^{\mb D}(\mb g^1,\mb g^2,\mb w) } } \;& =\; \abs{ \frac{2\sigma^2+1}{m} \mb w^\top \mb R_{[1:n]} \mb C_{\mb g}^\top \diag\paren{ \bb E_{\mb \delta}\brac{ \psi\paren{(\mb g - \mb \delta)\conv \mb x}  } } \mb C_{\mb g} \mb R_{[1:n]}^\top \mb w  }.
\end{align*}
Then for any positive integer $p \geq 1$, by Jensen's inequality and \Cref{thm:moments-T-2}, we have
\begin{align*}
   \norm{ \sup_{\norm{\mb w}{}=1,\; \mb w \perp \mb x } \abs{ \mc D(\mb g,\mb w) }  }{L^p} 
   \;&\leq\; \norm{ \sup_{\norm{\mb w}{}=1} \abs{ \mc Q_{dec}^{\mc D}(\mb g^1,\mb g^2,\mb w)  }  }{L^p}\\
   \;&\leq\;  C_{\sigma^2} \paren{  \sqrt{\frac{n}{m}}  \log^{3/2} n\log^{1/2}m +  \sqrt{\frac{n}{m}} \sqrt{p} +  \frac{n}{m} p  },
\end{align*}
where $C_{\sigma^2}>0$ is some positive constant depending on $\sigma^2$, and we used the fact that $\norm{ \psi(\mb g^2 \conv \mb x)}{\infty}\leq 1$ holds uniformly for all $\mb g^2$. Thus, by Lemma \ref{lem:moments-tail-bound}, then for any $\delta>0$, whenever $m \geq C_1 \delta^{-2} n \log^3 n \log m $, we have
\begin{align*}
   \sup_{\norm{\mb w}{}=1,\; \mb w \perp \mb x } \abs{ \mc D(\mb g,\mb w) } \;\leq\; \delta
\end{align*}
\edited{holding with probability at least $1 - m^{-c_1\log^3m }$. Combining the results above completes the proof.}
\end{proof}

\subsection{Bounding $\Delta_\infty(\eps)$}

\edited{Given $h(t)$ and $\Delta_\infty (\eps)$ introduced in \eqref{eqn:h-def-main} and \eqref{eqn:Delta-infty}, we prove the following results.}

\begin{figure*}[!htbp]
\centering
\captionsetup[sub]{font=small,labelfont={bf,sf}}
\centering
\begin{minipage}[c]{0.48\textwidth}
\subcaption{Plot of functions with $\sigma^2 = 0.51$, $\eps = 0.2$}
\centering
	\includegraphics[width = \linewidth]{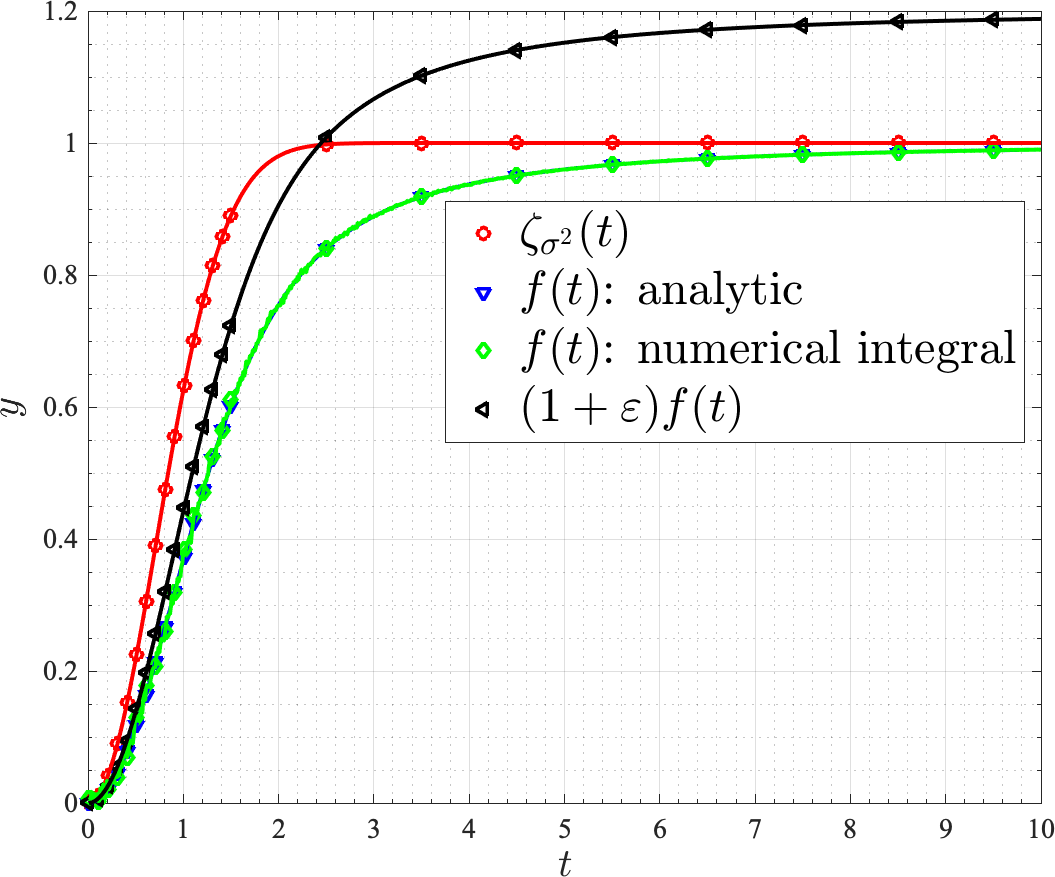}
\end{minipage}
\hfill
\begin{minipage}[c]{0.48\textwidth}
\subcaption{Plot of function difference $\zeta_{\sigma^2}(t)- (1+\eps)h(t)$}
\label{fig:g-func-diff}
\centering
	\includegraphics[width = \linewidth]{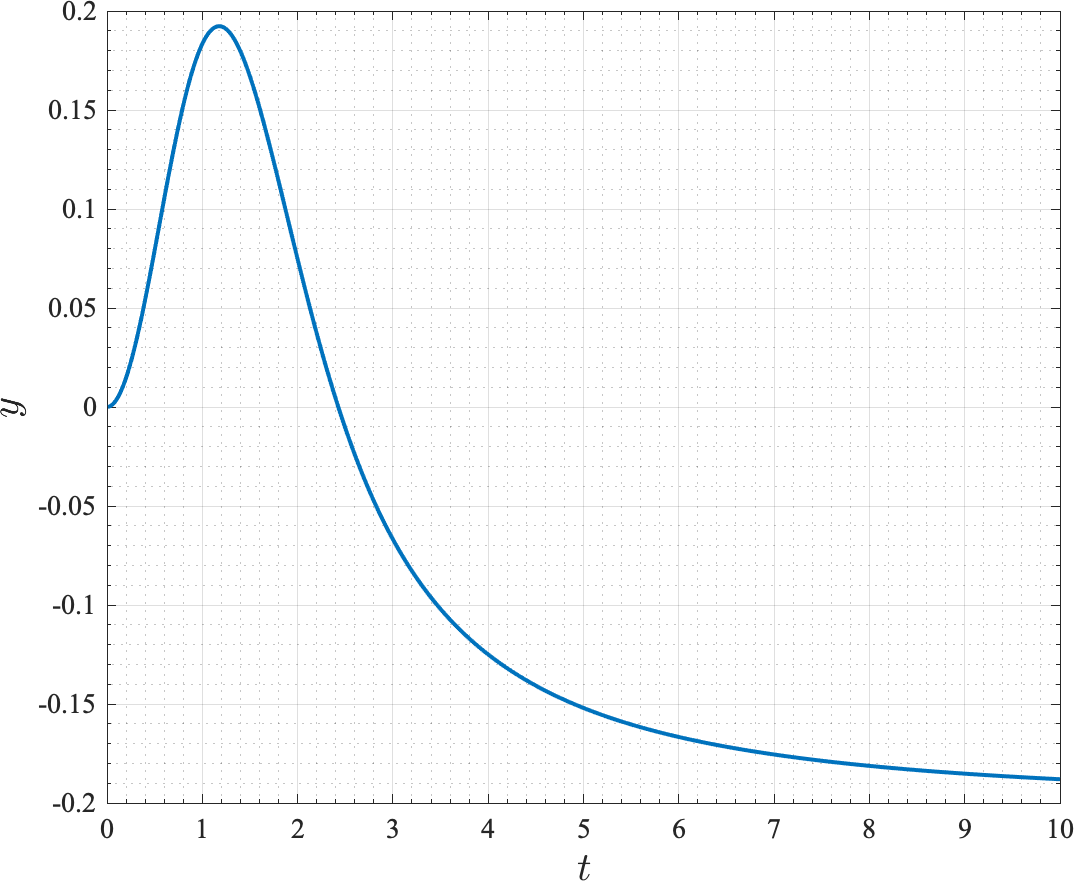}
\end{minipage}
\caption{\textbf{Computer simulation of the functions $\zeta_{\sigma^2}(t)$ and $h(t)$.} (a) displays the functions $\zeta_{\sigma^2}(t)$ and $h(t)$ with $\sigma^2 = 0.51$; (b) shows differences two function $\zeta_{\sigma^2}(t) - (1+\eps) h(t)$ with $\eps = 0.2$. }
\label{fig:func}
\end{figure*}

\begin{lemma}\label{lem:Delta-infty-bound}
Given $\sigma^2 = 0.51$ and $\eps = 0.2$, we have
\begin{align*}
   \Delta_\infty(\eps) \leq 0.404.
\end{align*}
\end{lemma}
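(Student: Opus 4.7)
The plan is to reduce the $L^\infty$ estimate to a one-variable problem by rotational symmetry, derive a clean integral representation of $h$ on the real line, and then verify the claimed bound by a combination of analytic tail estimates and a rigorous numerical check on a compact interval.

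First, since $\psi(e^{\im \theta}t) = e^{-2\im\theta}\psi(t)$ for every $\theta \in [0,2\pi)$ and $s \sim \mc{CN}(0,1)$ is rotationally invariant, both $h(t)$ and $\zeta_{\sigma^2}(t)\psi(t)$ transform as $u \mapsto e^{-2\im\theta} u$ under $t \mapsto e^{\im\theta} t$. Hence
\begin{align*}
   \abs{ (1+\eps) h(t) - \zeta_{\sigma^2}(t)\psi(t) } \;=\; \abs{ (1+\eps) h(\abs{t}) - \zeta_{\sigma^2}(\abs{t}) \psi(\abs{t}) },
\end{align*}
and it suffices to bound the right-hand side for real $t \geq 0$, in which case $\psi(t)=1$ and $\zeta_{\sigma^2}(t)\psi(t) = 1 - \exp(-t^2/(2\sigma^2))$.

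Second, writing $s = u + \im v$ with $u,v \sim_{\mathrm{iid}} \mc N(0,1/2)$ and using the symmetry $v \mapsto -v$, for $t \geq 0$ I obtain the real-valued representation
\begin{align*}
   h(t) \;=\; \bb E \brac{ \frac{(t+u)^2 - v^2}{(t+u)^2 + v^2} } \;=\; 1 - 2\, \bb E \brac{ \frac{v^2}{(t+u)^2 + v^2} },
\end{align*}
so the target quantity simplifies to the one-variable function
\begin{align*}
   F(t) \;\doteq\; (1+\eps) h(t) - 1 + \exp\paren{ - \tfrac{t^2}{2\sigma^2} },\qquad t \geq 0,
\end{align*}
and the claim is equivalent to $\sup_{t\geq 0}\abs{F(t)} \leq 0.404/(1+2\sigma^2) = 0.2$ for $\sigma^2 = 0.51$, $\eps = 0.2$.

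Third, I would control $F$ in two regimes. In the tail regime $t \geq T$ for some explicit $T$, the bound
\begin{align*}
  1 - h(t) \;=\; 2\, \bb E \brac{ \frac{v^2}{(t+u)^2+v^2} } \;\lesssim\; \frac{1}{t^2}
\end{align*}
combined with $\exp(-t^2/(2\sigma^2)) \leq \exp(-T^2/(2\sigma^2))$ gives $\abs{F(t)} \leq \eps\abs{1 - h(t)} + \exp(-T^2/(2\sigma^2))$, which is well under $0.2$ for a modest choice of $T$ (e.g.\ $T=3$). On the compact interval $[0,T]$, both $h$ and the Gaussian are smooth, and one may write down an explicit upper bound $M$ on $\abs{F'(t)}$ using $\abs{h'(t)} \leq \bb E\brac{ 2\abs{v}/((t+u)^2+v^2)^{1/2}} \leq C$ and the derivative of the Gaussian; a uniform grid of spacing $0.2/(2M)$ then reduces the estimate to a finite number of evaluations, each of which can be performed by direct numerical quadrature (or in interval arithmetic) of the integral for $h(t)$. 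The computer simulation shown in Figure~\ref{fig:g-func-diff} already exhibits that the maximum of $\abs{F}$ is attained near $t = 0$ at a value close to $0.2$, confirming the tightness of the bound.

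The main obstacle is turning the numerical picture into a rigorous estimate: $h(t)$ has no elementary closed form, so one must bound $F$ on the transition region $t \in [0,T]$ using a combination of quadrature for $h$ and an explicit modulus of continuity for $F$. Since $F$ is smooth and one-dimensional, this is a finite and routine computation, but it is the only substantive content of the proof.
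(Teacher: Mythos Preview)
Your reduction to real $t \geq 0$ via rotational symmetry and your integral representation $h(t) = 1 - 2\,\bb E[v^2/((t+u)^2+v^2)]$ are exactly the paper's Lemma~\ref{lem:f-expecation}. The divergence comes after that.

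You assert that ``$h(t)$ has no elementary closed form,'' but this is false and is the main thing the paper exploits. Lemma~\ref{lem:f-integral} evaluates the Gaussian integral explicitly (via polar coordinates, Taylor expansion of $\cosh$, and Gamma identities) to get
\[
   h(t) \;=\; 1 - t^{-2}\bigl(1 - e^{-t^2}\bigr), \qquad t>0, \qquad h(0)=0.
\]
With this in hand, your function $F(t)$ becomes an explicit elementary expression, and the bound $\sup_{t\geq 0}\abs{F(t)}\leq 0.2$ is a one-variable calculus exercise (Lemma~\ref{lem:func-approx}) rather than a quadrature-plus-Lipschitz argument.

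Your tail estimate is also wrong as written. You claim $\abs{F(t)} \leq \eps\abs{1-h(t)} + e^{-T^2/(2\sigma^2)}$ for $t\geq T$, which would drive $\abs{F(t)}$ to $0$; but in fact $F(t) = \eps - (1+\eps)(1-h(t)) + e^{-t^2/(2\sigma^2)} \to \eps = 0.2$ as $t\to\infty$, so the bound you need is \emph{saturated at infinity}, not loose. The paper handles this correctly: using the closed form, it checks that $g(t) \doteq -F(t)$ has $g'(t)\leq 0$ for $t\geq 10$, so $g$ decreases monotonically to its limit $-0.2$ and hence stays in $[-0.2,g(10)]\subset [-0.2,0.2]$ on $[10,\infty)$. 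On the compact interval $[0,10]$ the paper, like you, ultimately defers to numerics (Figure~\ref{fig:g-func-diff}), but now of an elementary function rather than an integral. Your approach could be repaired, but the closed form for $h$ is the idea you are missing.
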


\begin{proof}
First, by Lemma \ref{lem:f-expecation}, notice that the function $h(t)$ can be decomposed as
\begin{align*}
   h(t) = g(t)\psi(t)	
\end{align*}
where $g(t): \bb C\mapsto [0,1)$ is rotational invariant with respect to $t$. Since $\zeta_{\sigma^2}(t)$ is also rotational invariant with respect to $t$, it is enough to consider the case when $t\in [0,+\infty)$ , and bounding the following quantity
\begin{align*}
   \sup_{t\in [0,+\infty)} \abs{ (1+\eps)h(t) - \zeta_{\sigma^2}(t) }.	
\end{align*}
Lemma \ref{lem:f-integral} implies that
\begin{align*}
	h(t) \;= \; \bb E_{s\sim \mc {CN}(0,1)} \brac{ \psi(t+s) } \; =\; \begin{cases}
 	1-t^{-2} + t^{-2} e^{-t^2} & t >0, \\
 	0 & t=0.
 \end{cases}
\end{align*}
\edited{When $t =0$, we obtain} that $\abs{(1+\eps)h(t) - \zeta_{\sigma^2}(t)} =0$. For $t>0$, when $\eps = 0.2$ and $\sigma^2 = 0.51$, we have
\begin{align*}
   \zeta_{\sigma^2}(t) - (1+\eps) h(t)	 = -0.2 - e^{-\frac{t^2}{1.02} } +1.2 t^{-2} - 1.2 t^{-2} e^{-t^2}.
\end{align*}
\edited{From Lemma \ref{lem:func-approx}, we can prove that $\norm{\zeta_{\sigma^2}(t) - (1+\eps) h(t)}{L^\infty} \leq 0.2$ by a tight approximation of the function $\zeta_{\sigma^2}(t) - (1+\eps) h(t)$.} Therefore, we have
\begin{align*}
   \Delta_\infty(\eps) \;=\; (1+2\sigma^2) \norm{\zeta_{\sigma^2}(t) - (1+\eps) h(t) }{ L^\infty } \;\leq\; 0.2\times (1+2\times 0.51) \;=\; 0.404,
\end{align*}
when $\sigma^2 = 0.51$ and $\eps = 0.2$.
\end{proof}

\edited{
\begin{lemma}\label{lem:func-approx}
For $t\geq 0$, when $\eps = 0.2$ and $\sigma^2 = 0.51$, we have
\begin{align*}
   \abs{ \zeta_{\sigma^2}(t) - (1+\eps) h(t) } \;\leq\; 0.2 .
\end{align*}
\end{lemma}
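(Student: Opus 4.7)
The plan is as follows. Substituting the formula of Lemma~\ref{lem:f-integral} for $h$, together with $\sigma^2 = 0.51$ and $\eps = 0.2$, and setting $u := t^2 \in (0, \infty)$, the claim reduces to verifying that
$$
G(u) \;:=\; \zeta_{\sigma^2}(\sqrt u) - (1+\eps) h(\sqrt u) \;=\; -0.2 - e^{-u/1.02} + 1.2\, u^{-1}\bigl(1 - e^{-u}\bigr)
$$
lies in $[-0.2, 0.2]$ for every $u > 0$; the boundary behavior is $G(0^+) = 0$ and $G(u) \to -0.2$ as $u \to \infty$.

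For the lower bound $G(u) \geq -0.2$: rearranging, this is $1.2\,u^{-1}(1 - e^{-u}) \geq e^{-u/1.02}$. I would write $u^{-1}(1 - e^{-u}) = \int_0^1 e^{-su}\,ds$ and apply the midpoint inequality for the integral of the convex function $s \mapsto e^{-us}$, obtaining $u^{-1}(1 - e^{-u}) \geq e^{-u/2}$. Since $0 < 1/2 < 1/1.02$, we have $e^{-u/2} \geq e^{-u/1.02}$ for all $u \geq 0$, and multiplying by $1.2 > 1$ completes this direction.

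For the upper bound $G(u) \leq 0.2$: multiplying by $u > 0$, this is equivalent to
$$
p(u) \;:=\; 0.4\,u \,+\, u\,e^{-u/1.02} \,-\, 1.2\,(1 - e^{-u}) \;\geq\; 0 \qquad\text{for every } u > 0,
$$
which I would split into two regimes. For $u \geq 3$: since $1 - e^{-u} \leq 1$, $p(u) \geq 0.4\,u - 1.2 \geq 0$. For $u \in (0,3]$: here $p$ attains its minimum (numerically near $u \approx 1.3$, with value $\approx 0.01$), so the bound is tight. I would conclude by a rigorous numerical verification: tabulate $p$ at a fine grid of points in $[0,3]$ with spacing $\delta$, and interpolate using the explicit Lipschitz bound
$$
|p'(u)| \;=\; \bigl|0.4 + (1 - u/1.02)\,e^{-u/1.02} - 1.2\,e^{-u}\bigr| \;\leq\; 2.6 \qquad\text{on } [0,3],
$$
so that $p(u) \geq \min_i p(u_i) - 2.6\,\delta > 0$ whenever $\delta$ is taken small enough (e.g.\ $\delta = 10^{-3}$).

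The main obstacle is the tightness of the upper bound: the margin $0.2 - \sup_{u>0} G(u)$ is only about $0.016$, so any purely analytic attack via Taylor expansion near the maximizer would need inconveniently many high-order terms. The sampling-plus-Lipschitz scheme above is what I expect to give a clean, computer-checkable finite calculation for this narrow region, consistent with the numerical plot in Figure~\ref{fig:func}.
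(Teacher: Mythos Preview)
Your proposal is correct and, in fact, more complete than the paper's own argument. The paper's proof of this lemma handles the tail $t>10$ by showing $g(t)=\zeta_{\sigma^2}(t)-(1+\eps)h(t)$ is monotone decreasing there with limit $-0.2$, and then for the finite range $0<t\le 10$ simply refers to Figure~\ref{fig:g-func-diff} and writes ``we omit the tedious proof here.'' So the paper never actually carries out the verification you propose.

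Your route differs in two useful ways. First, you separate the upper and lower bounds, and dispatch the lower bound $G(u)\ge -0.2$ \emph{fully analytically} via the Jensen inequality $u^{-1}(1-e^{-u})=\int_0^1 e^{-su}\,ds\ge e^{-u/2}\ge e^{-u/1.02}$; the paper does not isolate this direction and would otherwise have to check it numerically as well. Second, for the upper bound you reduce the numerical region from $t\in(0,10]$ (i.e.\ $u\in(0,100]$) down to $u\in(0,3]$ by the trivial estimate $p(u)\ge 0.4u-1.2\ge 0$ for $u\ge 3$, and you give an explicit Lipschitz-plus-grid scheme with a concrete constant. Your Lipschitz bound $|p'(u)|\le 2.6$ on $[0,3]$ is generous (the true maximum is below $1$), but it is valid and makes the final step a finite, certifiable computation. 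Overall your decomposition is tighter and would close the gap the paper leaves open.
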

\begin{proof}
Given $\eps = 0.2$ and $\sigma^2 = 0.51$, we have
\begin{align*}
g(t) \;=\; \zeta_{\sigma^2}(t) - (1+\eps) h(t)\;=\; -0.2 - e^{-\frac{t^2}{1.02} } +1.2 t^{-2} - 1.2 t^{-2} e^{-t^2}	.
\end{align*}
When $t =0$, we have $\abs{g(t)} =0$. When $t>10$, we have
\begin{align*}
   g'(t) \;=\; \frac{t}{0.51} e^{-\frac{t^2}{1.02} } - 2.4t^{-3}\paren{1 - e^{-t^2}} +2.4t^{-1}e^{-t^2}\;\leq\; 0. 
\end{align*}
So the function $g(t)$ is monotonically decreasing for $t \geq 10$. As $\lim_{t\rightarrow +\infty  }g(t) = -0.2$ and $\abs{g(10)}<0.2$, we have 
\begin{align*}
   \abs{g(t)} \leq 0.2, \quad \forall \; t >10.	
\end{align*}
For $0<t \leq 10$, since $h(t)$ is continuously differentiable, \Cref{fig:g-func-diff} implies that $\abs{g(t)} \leq 0.2$ for all $t \in(0,10)$ (we omit the tedious proof here). 
\end{proof}
}

\begin{lemma}\label{lem:f-expecation}
Let $\psi(t) = \paren{\ol{t}/\abs{t} }^2 $	, then we have
\begin{align*}
h(t) \;=\; \bb E_{s\sim \mc {CN}(0,1) }\brac{ \psi(t+s) } \;=\; g(t) \psi(t).
\end{align*}
where $g(t) :\bb C \mapsto [0,1)$, such that
\begin{align*}
g(t) \;=\; \bb E_{v_1,v_2 \sim \mc N(0,1/2) }\brac{ \frac{ (\abs{t} + v_1)^2  - v_2^2 }{ (\abs{t} + v_1)^2  + v_2^2 } },
\end{align*}
where $ v_1 \sim \mc N(0,1/2)$, and $v_2 \sim \mc N(0,1/2)$.
\end{lemma}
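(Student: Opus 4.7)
The plan is to exploit the rotational invariance of the standard complex Gaussian together with the multiplicative phase structure of $\psi$, reducing the expectation over $s \in \bb C$ to one over two independent real Gaussians after peeling off the global factor $\psi(t)$.

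First I would factor out the phase of $t$. Writing $t = \abs{t} e^{\im \phi(t)}$, substitute $s = e^{\im \phi(t)} s'$ inside the expectation. Since $s \sim \mc{CN}(\mb 0,\mb I)$ is invariant under multiplication by a unit-modulus complex number, the change of variable gives $s' \sim \mc{CN}(\mb 0,\mb I)$ as well. Then $t + s = e^{\im \phi(t)}(\abs{t} + s')$, and the identity $\psi(e^{\im \theta} u) = e^{-2\im \theta}\psi(u) = \psi(e^{\im \theta})\,\psi(u)$, which is immediate from $\psi(u) = \ol{u}^2/\abs{u}^2$, yields $\psi(t+s) = \psi(t)\,\psi(\abs{t}+s')$. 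Taking expectations, $h(t) = \psi(t)\,\bb E_{s'}[\psi(\abs{t}+s')]$, so it remains to identify the latter expectation with $g(t)$.

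Next I would decompose $s' = v_1 + \im v_2$ with $v_1,v_2 \sim_{\mathrm{i.i.d.}} \mc N(0, 1/2)$, so that $\abs{t} + s' = (\abs{t}+v_1) + \im v_2$ and
\[
\psi(\abs{t}+s') \;=\; \frac{\paren{(\abs{t}+v_1) - \im v_2}^2}{(\abs{t}+v_1)^2 + v_2^2} \;=\; \frac{(\abs{t}+v_1)^2 - v_2^2}{(\abs{t}+v_1)^2+v_2^2} \;-\; \im\cdot\frac{2(\abs{t}+v_1)\, v_2}{(\abs{t}+v_1)^2+v_2^2}.
\]
Because $v_2$ is symmetric about zero while the imaginary part above is odd in $v_2$ and the denominator even, the imaginary part has zero mean; what remains under expectation is precisely the formula stated for $g(t)$. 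This establishes $h(t) = g(t)\,\psi(t)$.

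For the range $g(t) \in [0,1)$: the upper bound is immediate since the integrand defining $g(t)$ is bounded above by $1$ with strict inequality whenever $v_2 \ne 0$. For nonnegativity, at $t = 0$ the integrand is antisymmetric under the swap $v_1 \leftrightarrow v_2$, giving $g(0) = 0$, while for $\abs{t} > 0$ the appeal to the closed-form evaluation of the integral in Lemma \ref{lem:f-integral}, $g(\abs{t}) = 1 - \abs{t}^{-2}(1 - e^{-\abs{t}^2})$, immediately gives $g \geq 0$ from $1 - e^{-u} \leq u$. I do not expect any serious obstacle here; the only substantive step is the phase-extraction argument, and the vanishing of the imaginary contribution is a direct symmetry computation.
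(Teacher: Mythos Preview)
Your proposal is correct and follows essentially the same approach as the paper: exploit rotational invariance of $\mc{CN}(0,1)$ to peel off the phase factor $\psi(t)$, then split the remaining expectation into real and imaginary parts and kill the imaginary part by the oddness in $v_2$. The paper carries out the rotation by writing $s$ in the frame $(t/\abs{t},\,\im t/\abs{t})$ rather than by the substitution $s = e^{\im\phi(t)}s'$, but these are the same manipulation; your additional paragraph verifying $g(t)\in[0,1)$ via Lemma~\ref{lem:f-integral} is a small bonus that the paper leaves implicit.
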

\begin{proof}
By definition, we know that
\begin{align*}
	\bb E_{s\sim \mc {CN}(0,1) }\brac{ \psi(t+s) } \;=\; \bb E_s \brac{ \paren{ \frac{\ol{t+s}}{\abs{t+s}} }^2 } 
	\;=\; \underbrace{ \bb E_s \brac{ \paren{ \frac{t}{\abs{t}}  \frac{\ol{t+s}}{\abs{t+s}} }^2 } }_{g(t)} \psi(t) .
\end{align*}
Next, we estimate $g(t)$ and show that it is indeed real. We decompose the random variable $s$ as
\begin{align*}
 s \;=\; \Re\paren{ \frac{\ol{t}s}{\abs{t}}  }\frac{t}{\abs{t}} + \im \Im \paren{ \frac{\ol{t}s}{\abs{t}} }	 \frac{t}{\abs{t}} \;=\; v_1 \frac{t}{\abs{t}} + \im v_2 \frac{t}{\abs{t}},
\end{align*}
	where $v_1 = \Re\paren{ \frac{\ol{t}s}{\abs{t}}  }$ and $v_2 = \Im \paren{ \frac{\ol{t}s}{\abs{t}} } $ are the real and imaginary parts of a complex Gaussian variable $\ol{t}s/\abs{t} \sim \mc {CN}(0,1)$. By rotation invariant property, we have $v_1 \sim \mc N(0,1/2)$ and $v_2 \sim \mc N(0,1/2)$, and $v_1$ and $v_2$ are independent. Thus, we have
    \begin{align*}
    	h(t) \;=\; \bb E_s \brac{ \paren{\frac{ \abs{t} + v_1 - \im v_2  }{ \abs{t+s} } }^2  } 
    	 \;&=\; \bb E_s \brac{ \frac{ \paren{ \abs{t} + v_1   }^2- v_2^2  }{ \abs{t+s}^2 }  } - 2\im \bb E_s\brac{ \frac{  \paren{ \abs{t} +v_1  }  v_2 }{ \abs{t+s}^2 } } \\
    	\;& =\; \bb E_{v_1,v_2} \brac{ \frac{ \paren{ \abs{t} + v_1 }^2- v_2^2  }{ \paren{\abs{t}+v_1}^2 + v_2^2 }  } - 2\im \bb E_{v_1, v_2}\brac{ \frac{  \paren{ \abs{t} + v_1  } v_2 }{ \paren{ \abs{t}+v_1 }^2 + v_2^2 } }.
    	\end{align*}
    	We can see that $\frac{  \paren{ \abs{t} + v_1  } v_2 }{ \paren{ \abs{t}+v_1 }^2 + v_2^2 } $
        is an odd function of $v_2$. Therefore, the expectation of $\frac{  \paren{ \abs{t} + v_1  } v_2 }{ \paren{ \abs{t}+v_1 }^2 + v_2^2 } $ with respect to $v_2$ is zero. Thus, we have
    	\begin{align*}
    		g(t) \;=\; \bb E_s \brac{ \paren{ \frac{t}{\abs{t}}  \frac{\ol{t+s}}{\abs{t+s}} }^2 } = \bb E_{v_1 ,v_2 \sim \mc N(0,1/2) }\brac{ \frac{ (\abs{t} + v_1)^2  - v_2^2 }{ (\abs{t} + v_1)^2  + v_2^2 } },
    	\end{align*}   
    	which is real.
\end{proof}

\begin{lemma}\label{lem:f-integral}
For $t \in[0,+\infty)$, we have
\begin{align}
	f(t) \;= \; \bb E_{s\sim \mc {CN}(0,1)} \brac{ \psi(t+s) } \; =\; \begin{cases}
 	1-t^{-2} + t^{-2} e^{-t^2} & t >0, \\
 	0 & t=0.
 \end{cases}
\end{align}
\end{lemma}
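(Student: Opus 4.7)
The plan is to leverage Lemma~\ref{lem:f-expecation}, which already factors $h(t) = g(t)\psi(t)$ with $g(t) \in \mathbb{R}$. Since we are restricting to real $t \in [0,+\infty)$, we have $\psi(t) = 1$ for $t > 0$ and the task reduces to evaluating the real integral
\begin{align*}
g(t) \;=\; \mathbb{E}_{v_1, v_2 \sim \mathcal{N}(0,1/2)}\left[\frac{(t+v_1)^2 - v_2^2}{(t+v_1)^2 + v_2^2}\right].
\end{align*}
The $t=0$ case is handled separately by rotation invariance: writing $s = re^{i\theta}$ with $\theta$ uniform on $[0,2\pi)$, we obtain $\psi(s) = e^{-2i\theta}$, whose expectation vanishes.

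For $t > 0$, I would use the algebraic identity $\frac{u^2-v^2}{u^2+v^2} = 1 - \frac{2v^2}{u^2+v^2}$ so that
\begin{align*}
g(t) \;=\; 1 - 2\,\mathbb{E}\!\left[\frac{v_2^2}{(t+v_1)^2 + v_2^2}\right].
\end{align*}
To evaluate the remaining expectation, I would apply the Laplace-type representation $\frac{1}{u^2+v^2} = \int_0^\infty e^{-(u^2+v^2)s}\,ds$ and interchange the order of integration (justified by positivity and Fubini). This decouples the integrand into a product of two independent Gaussian moment generating functions, which are standard: for $v_2 \sim \mathcal{N}(0,1/2)$ one gets $\mathbb{E}[v_2^2 e^{-sv_2^2}] = \tfrac{1}{2}(1+s)^{-3/2}$, and completing the square yields $\mathbb{E}[e^{-s(t+v_1)^2}] = (1+s)^{-1/2} \exp\!\left(-\frac{st^2}{1+s}\right)$.

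Putting these together reduces the problem to a single one-dimensional integral $\int_0^\infty \tfrac{1}{2}(1+s)^{-2} \exp(-st^2/(1+s))\,ds$. Here the key move is the substitution $u = s/(1+s)$, which collapses the factor $(1+s)^{-2}\,ds$ to $du$ and turns the exponent into $-ut^2$ with $u$ running over $[0,1]$. The resulting elementary integral evaluates to $\tfrac{1}{2t^2}(1-e^{-t^2})$, so $g(t) = 1 - t^{-2} + t^{-2} e^{-t^2}$, completing the proof.

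I do not expect a serious obstacle: the only step requiring a little care is verifying the Fubini exchange (which follows since the integrand is nonnegative) and checking that the $u = s/(1+s)$ substitution yields the limits $[0,1]$ and the differentials claimed. Together with the $t=0$ boundary evaluation by rotational symmetry, these pieces give the piecewise formula stated in the lemma.
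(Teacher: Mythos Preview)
Your proof is correct and takes a genuinely different route from the paper. The paper works directly with the complex integral in polar coordinates $s=re^{i\theta}$, reduces the angular part using symmetry to get integrands involving $\cos(2\theta)\cosh(2rt\cos\theta)$, Taylor-expands $\cosh$, evaluates the resulting $r$- and $\theta$-integrals via Gamma-function identities term-by-term, and then resums the power series $e^{-t^2}\sum_{k\ge 0}\frac{kt^{2k}}{(k+1)!}$ to reach the closed form; the $t=0$ case is then recovered by L'H\^opital. By contrast, you first invoke Lemma~\ref{lem:f-expecation} to reduce to the real two-dimensional Gaussian integral for $g(t)$, then linearize the rational integrand via the Schwinger/Laplace representation $\tfrac{1}{a}=\int_0^\infty e^{-as}\,ds$, which decouples the two Gaussian directions into standard MGFs, and finish with the substitution $u=s/(1+s)$ that collapses everything to an elementary integral over $[0,1]$. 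Your argument is shorter and avoids special-function identities and series summation, at the cost of relying on the preceding lemma; the paper's computation is self-contained and perhaps makes the series structure more visible. Both handle $t=0$ cleanly (you by rotational symmetry, the paper by a limit), and your Fubini step is indeed justified by nonnegativity as you note.
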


\begin{proof}
Let $s_r = \Re\paren{s}$ and $s_i = \Im\paren{s}$, and let $s = r \exp\paren{ \im \theta }$  with $r = \abs{s}$ and $\exp\paren{ \im \theta } = s/\abs{s}$. We observe
\begin{align*}
\bb E_{s\sim \mc {CN}(0,1)} \brac{ \psi(t+s) } \;&=\; \frac{1}{\pi}  \int_{s_r =-\infty}^{+\infty} \int_{s_i = -\infty}^{+\infty} \frac{ (\ol{s_r+\im s_i})^2 }{ \abs{s_r+\im s_i }^2 } e^{- \abs{ s_r+ \im s_i -t}^2 }ds_r d s_i \\
\;&=\; \frac{1}{\pi} \int_{r = 0}^{+\infty} \int_{\theta = 0}^{2\pi} e^{ - \im 2\theta } e^{ -r^2 - t^2 } e^{ 2r t \cos \theta } r d\theta d r \\
\;&=\; \frac{1}{\pi}e^{-t^2} \int_{r=0}^{+\infty} \int_{\theta=0}^{2\pi} \cos (2\theta) r e^{-r^2} e^{ 2rt \cos \theta } d\theta dr \\
\;&=\; \frac{2}{\pi}e^{-t^2} \int_{r=0}^{+\infty} \int_{\theta=0}^{\pi} \cos (2\theta) r e^{-r^2} \cosh \paren{ 2rt \cos \theta } d\theta dr\\
\;&=\; \frac{2}{\pi}e^{-t^2} \int_{r=0}^{+\infty} \int_{\theta=0}^{\pi/2} \cos (2\theta) r e^{-r^2} \brac{\cosh \paren{ 2rt \cos \theta } - \cosh(2rt \sin \theta) }d\theta dr
\end{align*}
where the third equality uses the fact that the integral of \edited{an odd function} is zero. By using Taylor expansion of $\cosh(x)$, and by using the \emph{dominated convergence theorem} to exchange the summation and integration, we observe
\begin{align*}
	&\bb E_{s\sim \mc {CN}(0,1)} \brac{ \psi(t+s) } \\
	=\;& \frac{2}{\pi}e^{-t^2} \int_{r=0}^{+\infty} \int_{\theta=0}^{\pi} \cos (2\theta) r e^{-r^2} \sum_{k=0}^{+\infty} \brac{ \frac{ \paren{2rt \cos \theta }^{2k} }{(2k)!} - \frac{ \paren{2rt \sin \theta }^{2k} }{(2k)!} } d\theta dr \\
	=\;& \frac{2}{\pi}e^{-t^2} \int_{r=0}^{+\infty} \int_{\theta=0}^{\pi} \cos (2\theta)   \sum_{k=0}^{+\infty} \brac{ \frac{ \paren{2t \cos \theta }^{2k} r^{2k+1} e^{-r^2}}{(2k)!} - \frac{ \paren{2t \sin \theta }^{2k} r^{2k+1} e^{-r^2} }{(2k)!} } d\theta dr \\
	=\;& \frac{2}{\pi}  e^{-t^2} \sum_{k=0}^{+\infty} \frac{ (2t)^{2k} }{(2k)! } \int_{r=0}^{+\infty} r^{2k+1} e^{-r^2} dr \brac{ \int_{\theta = 0}^{\pi} \cos (2\theta) \cos^{2k}\theta  d\theta - \int_{\theta = 0}^{\pi} \cos (2\theta) \sin^{2k}\theta  d\theta }.
\end{align*}
We have the integrals
\begin{align*}
\int_{r=0}^{+\infty} r^{2k+1} e^{-r^2} dr \;&=\; \frac{\Gamma(k+1)}{2}, \\
 \int_{\theta=0}^\pi \cos(2\theta) \cos^{2k}\theta d\theta \;&=\; \frac{\sqrt{\pi} }{2}  \frac{k \Gamma(k+1/2)}{\Gamma(k+2)}, \\
 \int_{\theta=0}^\pi \cos(2\theta) \sin^{2k}\theta d\theta \;&=\; -\frac{\sqrt{\pi} }{2}  \frac{k \Gamma(k+1/2)}{\Gamma(k+2)}
\end{align*}
holds for any integer $k \geq 0$, where $\Gamma(k)$ is the Gamma function such that
\begin{align*}
\Gamma(k+1) = k!,\quad \Gamma(k+1/2) = \frac{(2k)!}{ 4^k k!} \sqrt{\pi}.	
\end{align*}
Thus, for $t>0$, we have
\begin{align*}
	\bb E_{s\sim \mc {CN}(0,1)} \brac{ \psi(t+s) } \;&=\; \frac{2}{\pi} e^{-t^2} \sum_{k=0}^{+\infty} \frac{ (2t)^{2k} }{ (2k)! } \times \frac{ \Gamma(k+1) }{2} \times \sqrt{\pi}  \frac{k \Gamma(k+1/2)}{\Gamma(k+2)} \\
	\;&=\; e^{-t^2} \sum_{k=0}^{+\infty}  \frac{k t^{2k} }{ (k+1)! } = e^{-t^2} \paren{ \sum_{k=0}^{+\infty}  \frac{  t^{2k} }{ k!} -  \sum_{k=0}^{+\infty} \frac{t^{2k}}{(k+1)!} } \\
	\;&=\; e^{-t^2} \brac{ e^{t^2} - t^{-2}\paren{ \sum_{k=0}^{+\infty} \frac{t^{2k} }{ k! } -1  } } = 1 -t^{-2} +t^{-2} e^{-t^2}.
\end{align*}
When $t=0$, by using L'Hopital's rule, we have
\begin{align*}
\eta(0) = \lim_{t \rightarrow 0}	\bb E_{s\sim \mc {CN}(0,1)} \brac{ \psi(t+s) } \;=\; \lim_{t \rightarrow 0} \brac{ 1  + \frac{1 - e^{t^2}}{  t^2 e^{t^2} } } \;=\; 1 + \lim_{t\rightarrow 0} \frac{-1}{1+t} \;=\; 0.
\end{align*}
We complete the proof.
\end{proof}

\section{Acknowledgement}
This work was partially supported by the grants NSF CCF 1527809 and NSF IIS 1546411, the grants from the European Union's Horizon 2020 research and innovation program under grant agreement No.\ 646804-ERCCOGBNYQ, and the grant from the Israel Science Foundation under Grant No.\ 335/14. QQ thanks the generous support of the Microsoft graduate research fellowship, and Moore-Sloan fellowship. We would like to thank Shan Zhong for the helpful discussion for real applications and providing the antenna data for experiments, and we thank Ju Sun and Han-Wen Kuo for helpful discussion and input regarding the analysis of this work.

\newpage 

{\small
\medskip
\bibliographystyle{ieeetr}
\bibliography{pr,ncvx}
}

\newpage 

\appendices

In the appendix, we provide details of proofs for some supporting results. Appendix \ref{app:tools} summarizes basic tools used throughout the analysis. In Appendix \ref{app:moments-circulant-matrix}, we provide results of bounding the suprema of chaos processes for random circulant matrices. In Appendix \ref{app:decoupling}, we present concentration results for suprema of some dependent random processes via decoupling.

\section{Elementary Tools and Results}\label{app:tools}
\begin{lemma}\label{lem:phase-diff}
	\edited{Given a fixed number $\rho \in (0,1)$}, for any $z,\; z' \in \bb C$, we have
	\begin{align}
		\abs{ \exp\paren{ \im \phi(z'+z)  } - \exp\paren{ \im \phi(z')  } } \leq 2 \indicator{ \abs{z} \geq \rho \abs{z'}  } + \frac{1}{1-\rho} \abs{ \Im\paren{z/z' } }.
	\end{align}
\end{lemma}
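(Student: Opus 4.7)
The plan is to split into two cases based on the indicator, treating the inequality as a tight triangle-like estimate that degenerates into a trivial bound when $|z|$ is large relative to $|z'|$.

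First, if $\abs{z} \geq \rho \abs{z'}$, the indicator equals one, and the left-hand side is bounded trivially by
$$\abs{\exp(\im \phi(z'+z))} + \abs{\exp(\im \phi(z'))} \leq 2,$$
so the inequality holds. This also covers the degenerate cases $z' = 0$ and $z+z'=0$, since then $\abs{z} \geq \rho \abs{z'}$ automatically.

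The substantive case is $\abs{z} < \rho \abs{z'}$, for which I would set $w \doteq z/z' = u + \im v$ with $\abs{w} < \rho < 1$, so in particular $\Re(1+w) = 1+u > 0$. Using $\phi(z'+z) = \phi(z') + \phi(1+w)$ (which is valid modulo $2\pi$ since $1+w$ has positive real part), factor out $e^{\im \phi(z')}$ to obtain
$$\abs{ \exp(\im \phi(z'+z)) - \exp(\im \phi(z')) } \;=\; \abs{ \exp(\im \phi(1+w)) - 1 }.$$
Writing $\theta \doteq \phi(1+w) \in (-\pi/2, \pi/2)$, I would then use the identity
$$\abs{ e^{\im \theta} - 1 }^2 \;=\; 2(1-\cos\theta) \;=\; \frac{2\sin^2\theta}{1+\cos\theta},$$
together with $\sin\theta = v/\abs{1+w}$ and $\cos\theta = (1+u)/\abs{1+w}$, to rewrite this as
$$\abs{ e^{\im \theta} - 1 }^2 \;=\; \frac{2\,v^{2}}{\abs{1+w}\bigl(\,\abs{1+w} + (1+u)\bigr)}.$$

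The final step is the elementary bound $\abs{1+w} \geq 1-\abs{w} \geq 1-\rho$ and $1+u \geq 1-\abs{w} \geq 1-\rho$, which gives $\abs{1+w}(\abs{1+w}+1+u) \geq 2(1-\rho)^2$ and hence
$$\abs{ e^{\im \theta} - 1 } \;\leq\; \frac{\abs{v}}{1-\rho} \;=\; \frac{\abs{\Im(w)}}{1-\rho} \;=\; \frac{1}{1-\rho}\abs{\Im(z/z')},$$
which yields the desired bound. I do not expect any real obstacle here: the proof is essentially a careful bookkeeping exercise once one recognizes that $\Im(z/z')$ is the natural small quantity controlling the phase perturbation and that the denominator $1-\rho$ arises from the worst-case lower bound on $\abs{1+w}$.
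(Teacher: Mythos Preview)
Your proof is correct and complete. The paper does not provide its own argument for this lemma, instead citing Lemma~3.2 of \cite{waldspurger2016phase}; your case split and the trigonometric identity $|e^{i\theta}-1|^2 = 2\sin^2\theta/(1+\cos\theta)$ yield exactly the desired bound with no gaps.
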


\begin{proof}
	See the proof of Lemma 3.2 of \cite{waldspurger2016phase}.
\end{proof}

\begin{lemma}\label{lem:phase-approx}
Let $\rho \in (0,1)$, for any $\mb z \in \bb C$ with $\abs{z} \leq \rho $, we have
\begin{align}
   \abs{ 1 - \exp\paren{ \im \phi(1+z) } + \im \Im(z) } \;\leq\; \frac{2- \rho}{ (1-\rho)^2 } \abs{z}^2.		
\end{align}
\end{lemma}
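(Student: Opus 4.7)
My strategy is a direct algebraic manipulation that exposes the second-order cancellation explicitly. Write $a = \Re(z)$, $b = \Im(z)$, and $r = \abs{1+z}$; since $\abs{z} \leq \rho < 1$, the reverse triangle inequality gives $r \geq 1 - \rho > 0$ and $\abs{r - 1} \leq \abs{z}$. Because $\exp\paren{\im \phi(1+z)} = (1+z)/r$ by definition of the phase, I will work with $r \cdot \brac{1 - (1+z)/r + \im b} = r - (1+z) + \im b\, r$ and then divide by $r$ at the end.

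The heart of the proof will be to establish the identity
\[
    r \cdot \brac{1 - \tfrac{1+z}{r} + \im b} \;=\; \tfrac{1}{2}\paren{\abs{z}^2 - (r-1)^2} + \im b\, (r-1).
\]
To derive it, I will first collect terms to obtain $(r-1)(1 + \im b) - a$, then substitute $a = (r^2 - 1 - \abs{z}^2)/2$ (which follows immediately from $r^2 = 1 + 2a + \abs{z}^2$), and simplify. The crucial feature is that both $\abs{z}^2 - (r-1)^2$ and $b(r-1)$ on the right-hand side are of order $\abs{z}^2$, making the desired quadratic bound accessible.

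Once the identity is in hand, the estimate is elementary. Using $(r-1)^2 \leq \abs{z}^2$ (so $0 \leq \abs{z}^2 - (r-1)^2 \leq \abs{z}^2$) together with $\abs{b} \leq \abs{z}$, I will bound the modulus of the right-hand side above by
\[
    \sqrt{\tfrac{1}{4}\paren{\abs{z}^2 - (r-1)^2}^2 + b^2(r-1)^2} \;\leq\; \sqrt{\tfrac{1}{4}\abs{z}^4 + \abs{z}^4} \;=\; \tfrac{\sqrt{5}}{2}\abs{z}^2.
\]
Dividing by $r \geq 1-\rho$ gives $\abs{1 - \exp(\im \phi(1+z)) + \im \Im(z)} \leq \sqrt{5}\,\abs{z}^2 / (2(1-\rho))$, and the proof will be completed by the pointwise inequality $\sqrt{5}/(2(1-\rho)) \leq (2-\rho)/(1-\rho)^2$ for $\rho \in [0,1)$, which after clearing denominators reduces to $\sqrt{5} - 4 \leq (\sqrt{5}-2)\rho$ --- immediate since the left-hand side is negative.

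The main obstacle is simply spotting the right algebraic grouping that turns an expression involving the first-order quantities $r-1$ and $\im b$ into one whose ingredients are manifestly $\mc O(\abs{z}^2)$; the remaining estimates are a few lines. An alternative route via Taylor's integral remainder applied to $g(t) = \exp(\im \phi(1+tz))$, using $g(0) = 1$, $g'(0) = \im \Im(z) $, and bounding $\int_0^1 (1-s)\abs{g''(s)}\,ds$, also works but requires a somewhat tedious computation of $g''(s)$ and yields a comparable constant with less transparency.
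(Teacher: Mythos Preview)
Your proof is correct. Both arguments begin with the same reduction: multiply through by $r = |1+z|$ and observe that $r\brac{1 - (1+z)/r + \im b}$ has real part $r - 1 - a$ and imaginary part $b(r-1)$. The paper bounds the real part by a mean-value argument on $g(t) = \sqrt{(1+a)^2 + t^2}$, obtaining $|r - 1 - a| \leq \Im^2(z)/|1+\Re(z)| \leq \Im^2(z)/(1-\rho)$, and then applies the triangle inequality $|x + \im y| \leq |x| + |y|$. You instead recognize the exact algebraic identity $r - 1 - a = \tfrac{1}{2}\paren{|z|^2 - (r-1)^2}$ from $r^2 = 1 + 2a + |z|^2$, and then use Pythagoras. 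Your route is purely algebraic (no calculus), and the intermediate constant you obtain, $\sqrt{5}/\paren{2(1-\rho)}$, is strictly sharper than the paper's $(2-\rho)/(1-\rho)^2$ for every $\rho \in [0,1)$; you then correctly weaken it to match the stated bound. The paper's approach has the minor advantage that the MVT step isolates the dependence on $\Im(z)$ specifically, but for the purpose of this lemma your identity is cleaner and yields the result more directly.
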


\begin{proof}
For any $t \in \bb R^{+}$, let $g(t) = \sqrt{ \paren{1+\Re(z) }^2 + t^2 } $, then 
\begin{align*}
	g'(t) \;=\;  \frac{ t}{\sqrt{ \paren{1+\Re(z) }^2 +t^2 }} \;\leq\; \frac{t}{\abs{1+\Re(z)}}. 
\end{align*}
Hence, for any $z \in \bb C$ with $\abs{z}\leq \rho$, we have
\begin{align*}
  \abs{  \abs{ 1 + z} - \paren{ 1 + \Re(z) } } \;&=\; \abs{  \sqrt{ \paren{ 1 +  \Re(z) }^2 + \Im^2(z)  }  - \paren{  1+ \Re(z) } }  \\
  \;&=\;  \abs{ g\paren{ \Im(z) } - g(0) }  \leq \frac{ \Im^2(z) }{  \abs{1+ \Re(z)} } \leq \frac{1}{1-\rho} \Im^2(z) .
\end{align*}
Let $f(z) = 1 - \exp\paren{ \im \phi(1+z) }$. By using the estimates above, we observe
\begin{align*}
   \abs{ f(z) + \im \Im(z) } \;&=\; \abs{ \frac{ \abs{1+z} -(1+z) }{ \abs{1+z} } + \im \Im\paren{z}   } \\
   \;&=\; \frac{1}{\abs{1+z}} \abs{  \abs{1+z} - (1+z) + \im \Im(z) \abs{1+z} } \\
   \;&\leq\;  \frac{1}{\abs{1+z}} \paren{ \abs{\Im(z)} \abs{  1- \abs{1+z} } + \abs{  \abs{1+z} - \paren{1 + \Re(z) } }  } \\
   \;&\leq\; \frac{1}{\abs{1+z}} \paren{ \abs{z} \abs{\Im(z)} + \frac{1}{1-\rho} \Im^2(z)  } \leq \frac{ 2-\rho }{ \paren{ 1-\rho }^2 } \abs{z}^2.
\end{align*}
\end{proof}

\begin{lemma}[Gaussian Concentration Inequality]\label{lem:gauss-concentration}
Let $\mb w \in \bb R^n $ be a standard Gaussian random variable $\mb w \sim \mc N(\mb 0,\mb I)$, and let $g: \bb R^n \mapsto \bb R $ denote an $L$-Lipschitz function. Then for all $t>0$, 
	\begin{align*}
		\bb P\paren{ \abs{ g(\mb w) - \bb E \brac{g(\mb w)} }\geq t } \;\leq\; 2\exp \paren{ -t^2/(2L^2)}.
	\end{align*}
Moreover, if $\mb w \in \bb C^n $ with $\mb w \sim \mc {CN}(\mb 0,\mb I)$, and $g: \bb C^n \mapsto \bb R$ is $L$-Lipschitz, then the inequality above still holds.
\end{lemma}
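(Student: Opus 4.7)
The plan is to establish the real-case concentration bound via a standard Gaussian interpolation argument for the moment generating function, then reduce the complex case to the real case by identifying $\bb C^n$ with $\bb R^{2n}$. First, by a standard mollification/approximation argument one may reduce to the case of a smooth $L$-Lipschitz $g$, so that $\norm{\nabla g}{} \leq L$ pointwise; the general Lipschitz case follows by taking limits in the resulting tail bound.

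For the real case, I would introduce an independent copy $\mb w'$ of $\mb w$ and define the Gaussian interpolation $\mb w_\theta = \mb w \sin\theta + \mb w'\cos\theta$ for $\theta \in [0,\pi/2]$, with derivative $\dot{\mb w}_\theta = \mb w \cos\theta - \mb w' \sin\theta$. The key observation is that for each fixed $\theta$, the pair $(\mb w_\theta, \dot{\mb w}_\theta)$ is a pair of independent standard Gaussians. Writing
\begin{align*}
g(\mb w) - \bb E[g(\mb w)] \;=\; \bb E_{\mb w'}\brac{ g(\mb w) - g(\mb w') } \;=\; \bb E_{\mb w'}\brac{ \int_0^{\pi/2} \innerprod{\nabla g(\mb w_\theta)}{ \dot{\mb w}_\theta } d\theta  },
\end{align*}
and applying Jensen's inequality together with the convexity of $e^{\lambda(\cdot)}$, one obtains
\begin{align*}
\bb E\brac{ e^{\lambda (g(\mb w) - \bb E g(\mb w))} } \;\leq\; \bb E\brac{ \frac{2}{\pi} \int_0^{\pi/2} e^{\frac{\pi \lambda}{2} \innerprod{\nabla g(\mb w_\theta)}{ \dot{\mb w}_\theta } } d\theta }.
\end{align*}
For each $\theta$, conditioned on $\mb w_\theta$, the variable $\innerprod{\nabla g(\mb w_\theta)}{\dot{\mb w}_\theta}$ is a real Gaussian with mean zero and variance $\norm{\nabla g(\mb w_\theta)}{}^2 \leq L^2$, so the inner expectation is bounded by $\exp(\pi^2 \lambda^2 L^2 / 8)$. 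Optimizing the resulting Chernoff bound in $\lambda$ gives the desired subgaussian tail, up to the numerical constant in the exponent; the sharp constant $1/(2L^2)$ can be obtained by the same interpolation done more carefully (or equivalently via the Gaussian logarithmic Sobolev inequality and the Herbst argument).

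For the complex case, write $\mb w = \mb u + \im \mb v$ with $\mb u,\mb v \sim \mc N(\mb 0, \tfrac{1}{2}\mb I)$ independent, and identify $\mb w$ with the real vector $(\sqrt{2}\mb u, \sqrt{2}\mb v) \in \bb R^{2n}$, which is distributed as $\mc N(\mb 0, \mb I)$. Define $\tilde g(\mb u, \mb v) \doteq g(\mb u + \im \mb v)$; since the identification $\bb C^n \leftrightarrow \bb R^{2n}$ is an isometry and the rescaling by $1/\sqrt{2}$ contributes a factor of $1/\sqrt{2}$ to the Lipschitz constant, the function $\tilde g$ composed with the rescaling $(\mb u', \mb v') \mapsto (\mb u'/\sqrt{2}, \mb v'/\sqrt{2})$ is $L/\sqrt{2}$-Lipschitz on $\bb R^{2n}$. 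Applying the real-case bound then yields a tail of the form $2\exp(-t^2/L^2)$, which is still of the claimed form up to the numerical constant; the constant $2L^2$ in the denominator can be recovered by carrying out the interpolation argument directly in the complex setting (replacing $\sin\theta,\cos\theta$ by the obvious complex analogue and using $\bb E[\innerprod{\nabla g}{\dot{\mb w}_\theta}^2] = \norm{\nabla g}{}^2$ with the complex Wirtinger gradient).

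The main obstacle, which is essentially notational rather than conceptual, is tracking the constants and the Wirtinger-versus-Euclidean conventions across the real/complex identification so that the stated constant $1/(2L^2)$ emerges cleanly in both cases. A slicker alternative that avoids this bookkeeping is to simply cite the Herbst argument (Gaussian log-Sobolev + Markov) applied uniformly to $\bb R^N$ with $N = n$ or $N = 2n$ as appropriate, since that proof produces the sharp constant directly and is insensitive to the real/complex distinction once the isometry is in place.
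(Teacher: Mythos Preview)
Your approach matches the paper's: treat the real case as standard (the paper simply cites \cite{boucheron2013concentration}; you sketch the Pisier interpolation, which is one of the canonical proofs), and reduce the complex case to the real one by identifying $\bb C^n$ with $\bb R^{2n}$ via $\mb w = \tfrac{1}{\sqrt{2}}(\mb v_r + \im\mb v_i)$ with $(\mb v_r,\mb v_i)\sim\mc N(\mb 0,\mb I_{2n})$.

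One small point of confusion in your complex reduction: once you note that the composed map on $\bb R^{2n}$ is $(L/\sqrt{2})$-Lipschitz, the real bound gives $2\exp(-t^2/L^2)$, which is \emph{stronger} than the claimed $2\exp(-t^2/(2L^2))$. There is nothing to ``recover''; the stated inequality follows immediately. The paper makes the same reduction but simply records that the composition is $L$-Lipschitz (a fortiori) and concludes.
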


\begin{proof}
	The result for real-valued Gaussian random variables is standard, see \cite[Chapter 5]{boucheron2013concentration} for a detailed proof. For the complex case, let 
	\begin{align*}
	   \mb v  \;=\; \underbrace{ \frac{1}{\sqrt{2}} \begin{bmatrix} \mb I & \im \mb I \end{bmatrix} }_{ h } \begin{bmatrix} \mb v_r \\
	    \mb v_i \end{bmatrix}, \qquad \mb v_r,\;\mb v_i \;\sim_{i.i.d.}\; \mc N\paren{\mb 0,\mb I}.
	\end{align*}
    By composition theorem, we know that $g' \circ h:\bb R^{2n} \mapsto \bb R$ is $L$-Lipschitz. Therefore, by applying the Gaussian concentration inequality for $g'\circ h $ and $\begin{bmatrix} \mb v_r \\
	    \mb v_i \end{bmatrix}$, we get the desired result.
\end{proof}

\begin{theorem}[Gaussian tail comparison for vector-valued functions, Theorem 3, \cite{ledoux2007measure}] \label{thm:gaussian-vector-concentration}
Let $\mb w\in \bb R^n$ be standard Gaussian variable $ \mb w \sim \mc N(\mb 0,\mb I)$, and let $f: \bb R^n \mapsto \bb R^\ell$ be an $L$-Lipschitz function. Then for any $t>0$, we have
\begin{align*}
  \bb P\paren{ \norm{ f(\mb w) - \bb E\brac{ f(\mb w) } }{} \geq t }	 \;\leq\; e \bb P \paren{ \norm{\mb v }{} \geq \frac{t}{L} },
\end{align*}
where $\mb v \in \bb R^\ell$ such that $\mb v \sim \mc {N}(\mb 0,\mb I)$. Moreover, if $\mb w\in \bb C^n$ with $\mb w \sim \mc {CN}(\mb 0,\mb I)$ and $f:\; \bb C^n \mapsto \bb R^\ell$ is $L$-Lipschitz, then the inequality above still holds.
\end{theorem}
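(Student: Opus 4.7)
The plan is to prove this result by the classical Maurey--Pisier rotational interpolation, which yields a moment comparison with constant $\pi/2$, and then sharpen it via an exponential-Chebyshev optimization against the chi-distribution tail to produce the stated tail bound with constant $e$.

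First I would reduce to the real smooth case. Under the canonical identification $\bb C^n\cong\bb R^{2n}$, a standard complex Gaussian $\mc{CN}(\mb 0,\mb I)$ becomes a standard real Gaussian in $\bb R^{2n}$ (with the usual $1/\sqrt{2}$ scalings on the real and imaginary parts), and this identification preserves Euclidean norms, hence the Lipschitz constant of $f$. Standard Gaussian mollification further reduces to the case where $f:\bb R^n\to\bb R^\ell$ is smooth with Jacobian $J_f$ satisfying $\|J_f(\mb w)\|_{\mathrm{op}}\le L$ almost surely.

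The core of the argument is the rotational interpolation. Let $\mb w'\sim\mc N(\mb 0,\mb I)$ be an independent copy of $\mb w$ and set
\[
\mb w_\theta \;\doteq\; \sin\theta\,\mb w + \cos\theta\,\mb w',\qquad \mb w'_\theta \;\doteq\; \cos\theta\,\mb w - \sin\theta\,\mb w',\qquad \theta\in[0,\tfrac{\pi}{2}].
\]
For each $\theta$, $(\mb w_\theta,\mb w'_\theta)\stackrel{d}{=}(\mb w,\mb w')$ are independent standard Gaussians, while $\mb w_0=\mb w'$, $\mb w_{\pi/2}=\mb w$, and $\tfrac{d}{d\theta}\mb w_\theta=\mb w'_\theta$; consequently
\[
f(\mb w)-f(\mb w') \;=\; \int_0^{\pi/2} J_f(\mb w_\theta)\,\mb w'_\theta\, d\theta.
\]
By Jensen's inequality in $\mb w'$ (rewriting $\bb E f(\mb w)=\bb E_{\mb w'} f(\mb w')$ and moving the norm inside), then Jensen's inequality for the integral average (introducing an auxiliary uniform $U$ on $[0,\tfrac{\pi}{2}]$), and a final Jensen step with a convex nondecreasing $\Phi$, I obtain the intermediate moment bound
\[
\bb E\bigl[\Phi\bigl(\|f(\mb w)-\bb E f(\mb w)\|\bigr)\bigr] \;\le\; \bb E\bigl[\Phi\bigl(\tfrac{\pi}{2}\,\|J_f(\mb w)\,\mb w'\|\bigr)\bigr].
\]
Conditional on $\mb w$, the vector $J_f(\mb w)\,\mb w'$ is a centered Gaussian in $\bb R^\ell$ with covariance $J_f(\mb w)J_f(\mb w)^\top$, whose operator norm is bounded by $L^2$. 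It follows that $\|J_f(\mb w)\,\mb w'\|$ is stochastically dominated in the convex order by $L\|\mb v\|$ for an independent $\mb v\sim\mc N(\mb 0,\mb I_\ell)$. This yields the Pisier-type comparison
\[
\bb E\bigl[\Phi\bigl(\|f(\mb w)-\bb E f(\mb w)\|\bigr)\bigr] \;\le\; \bb E\bigl[\Phi\bigl(\tfrac{\pi L}{2}\,\|\mb v\|\bigr)\bigr]\quad\text{for every convex nondecreasing }\Phi.
\]

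To convert this moment comparison into the stated tail bound with the sharper constant $e$, I would choose $\Phi(x)=e^{\lambda x}$, apply Markov's inequality on the left-hand side, use the explicit chi-distribution density of $\|\mb v\|$ on the right, and optimize over $\lambda>0$. The degree of freedom in the chi-tail normalization allows the prefactor $\pi/2$ appearing inside the moment comparison to be absorbed into a clean multiplicative factor $e$ outside the probability, producing
\[
\bb P\bigl(\|f(\mb w)-\bb E f(\mb w)\|\ge t\bigr) \;\le\; e\cdot\bb P\bigl(\|\mb v\|\ge t/L\bigr),
\]
as required; the complex case then follows from the real case by the reduction of the first step. The main technical obstacle is precisely this final sharpening: the Maurey--Pisier interpolation naturally produces the constant $\pi/2$ in the moment inequality, and trading this off against the exact chi-tail shape to get the multiplicative constant $e$ requires Ledoux's careful Chebyshev-type optimization. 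The earlier reduction, interpolation, and Gaussian-domination steps are by now classical and follow the Maurey--Pisier / Cirel'son--Ibragimov--Sudakov template.
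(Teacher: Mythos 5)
Your complex-to-real reduction is fine and is essentially all the paper itself does for this statement: the paper cites the real case as Theorem 3 of \cite{ledoux2007measure} and only sketches the identification of $\bb C^n$ with $\bb R^{2n}$ (via the coisometry $\tfrac{1}{\sqrt 2}[\mb I \;\; \im\mb I]$, which preserves the Lipschitz constant), exactly as in its proof of Lemma \ref{lem:gauss-concentration}. The problem is with your attempt to prove the real case itself: the final step does not work.

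The Maurey--Pisier interpolation correctly yields
$\bb E\brac{\Phi\paren{\norm{f(\mb w)-\bb E f(\mb w)}{}}}\le \bb E\brac{\Phi\paren{\tfrac{\pi L}{2}\norm{\mb v}{}}}$
for all convex nondecreasing $\Phi$, but this comparison is \emph{information-theoretically insufficient} to conclude the stated tail bound, and no choice of $\Phi$ or Chebyshev optimization can fix that. Concretely, the random variable $Y=\tfrac{\pi L}{2}\norm{\mb v}{}$ itself satisfies your moment comparison with equality, yet
$\bb P\paren{Y\ge t}=\bb P\paren{\norm{\mb v}{}\ge \tfrac{2t}{\pi L}}$
exceeds $e\,\bb P\paren{\norm{\mb v}{}\ge t/L}$ by a factor growing like $\exp\paren{\tfrac{t^2}{2L^2}(1-\tfrac{4}{\pi^2})}$. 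Equivalently, taking $\Phi(x)=e^{\lambda x}$ and optimizing $\lambda$ can only produce a tail of order $\exp\paren{-\tfrac{2t^2}{\pi^2L^2}}$ (up to polynomial factors), whereas the claimed bound requires the exponent $\tfrac{t^2}{2L^2}$; since $\tfrac{2}{\pi^2}<\tfrac12$, the deficit in the exponent is a growing function of $t$ and cannot be "absorbed into a clean multiplicative factor $e$ outside the probability" --- a bounded prefactor never compensates for a constant inside the argument of a Gaussian-type tail. The actual proof of the real case requires a comparison with constant $1$ rather than $\pi/2$: for instance, the Clark--Ocone/It\^o representation realizes $f(\mb w)-\bb E f(\mb w)$ as a vector martingale $\int_0^1 \mb A_t\,d\mb B_t$ with $\norm{\mb A_t}{}\le L$, giving directional exponential moments dominated by those of $L\mb v$ with no $\pi/2$ loss, after which the factor $e$ arises from the tightness (up to $e$) of the exponential Chebyshev bound for the $\chi$ distribution; alternatively one argues directly from the Gaussian isoperimetric inequality. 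If you intend merely to match the paper, it suffices to cite the real case and carry out the reduction you already have; if you intend to prove the real case, you must replace the interpolation step by one of these constant-$1$ arguments.
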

The proof is similar to that of Lemma \ref{lem:gauss-concentration}.

\begin{lemma}[Tail of sub-Gaussian Random Variables]\label{lem:sub-Gaussian}
	Let $X$ be a centered $\sigma^2$ sub-Gaussian random variable, such that
	\begin{align*}
	   \bb P\paren{ \abs{X} \geq t } \leq 2\exp\paren{ -\frac{t^2}{2\sigma^2} },
	\end{align*}
    then for any integer $p\geq 1$, we have
    \begin{align*}
       \bb E\brac{\abs{X}^p } \leq 	\paren{2\sigma^2}^{p/2} p \Gamma(p/2).
    \end{align*}
    In particular, we have
    \begin{align*}
       \norm{X}{L^p} = \paren{ \bb E\brac{\abs{X}^p } }^{1/p} \leq \sigma e^{1/e}	\sqrt{p}, \quad p \geq 2,
    \end{align*}
    and $\bb E\brac{\abs{X} }  \leq \sigma \sqrt{2\pi} $.
\end{lemma}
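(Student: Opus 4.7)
The plan is to convert the tail bound to a moment bound using the layer–cake (tail integration) identity and a single change of variables. Specifically, I will start from
\begin{align*}
\bb E\brac{\abs{X}^p} \;=\; \int_0^\infty p\, t^{p-1}\, \bb P\paren{\abs{X}\geq t}\, dt \;\leq\; 2p \int_0^\infty t^{p-1} \exp\paren{-\frac{t^2}{2\sigma^2}}\, dt,
\end{align*}
and then substitute $u = t^2/(2\sigma^2)$, so that $t = \sqrt{2\sigma^2 u}$ and $dt = \sqrt{\sigma^2/(2u)}\, du$. After collecting powers, the integral becomes
\begin{align*}
2p \int_0^\infty \paren{2\sigma^2 u}^{(p-1)/2} e^{-u} \sqrt{\frac{\sigma^2}{2u}}\, du \;=\; p\,(2\sigma^2)^{p/2}\int_0^\infty u^{p/2 - 1} e^{-u}\, du \;=\; (2\sigma^2)^{p/2}\, p\, \Gamma(p/2),
\end{align*}
which is the first claim. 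Setting $p=1$ gives $\bb E\brac{\abs{X}} \leq \sqrt{2\sigma^2}\, \Gamma(1/2) = \sigma\sqrt{2\pi}$ since $\Gamma(1/2) = \sqrt{\pi}$, yielding the last statement.

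For the $L^p$-norm bound with $p\geq 2$, the plan is to take the $1/p$-th power of the moment bound and control the two factors separately. Write
\begin{align*}
\norm{X}{L^p} \;\leq\; \sqrt{2\sigma^2}\, p^{1/p}\, \Gamma(p/2)^{1/p}.
\end{align*}
For the first factor I will use the elementary calculus fact $p^{1/p}\leq e^{1/e}$, which attains its maximum over $p\geq 1$ at $p=e$. For the Gamma factor, the cleanest route is to invoke the Stirling-type inequality $\Gamma(x+1)\leq x^x$ for $x\geq 1$, so that for $p\geq 2$ we have $\Gamma(p/2) \leq (p/2)^{p/2}$ (taking $x = p/2-1$ when $p \geq 4$ and checking $p=2,3$ by hand using $\Gamma(1)=1$, $\Gamma(3/2)=\sqrt\pi/2$). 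Raising to the $1/p$-th power gives $\Gamma(p/2)^{1/p}\leq \sqrt{p/2}$, which combines with the previous factor to yield $\norm{X}{L^p}\leq \sigma\, e^{1/e}\sqrt{p}$, as claimed.

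I do not anticipate any real obstacle here: the entire argument is a one-line tail integration plus standard estimates on $\Gamma$ and $p^{1/p}$. The only mildly delicate point is keeping the small cases $p=2, 3$ of the Gamma estimate consistent with the bound $\Gamma(p/2)^{1/p} \leq \sqrt{p/2}$; these can be verified by direct computation rather than appeal to a uniform Stirling bound, which keeps the proof self-contained and clean.
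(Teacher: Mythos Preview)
Your argument is correct and is the standard tail-integration derivation of sub-Gaussian moment bounds. The paper actually states this lemma without proof, treating it as an elementary fact, so there is no proof in the paper to compare against; your write-up would serve perfectly well as the omitted justification.
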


\begin{lemma}[Sub-exponential tail bound via moment control] \label{lem:moments-tail-bound}
Suppose	$X$ is a centered random variable satisfying 
\begin{align*}
   \paren{\bb E\brac{ \abs{X}^p } }^{1/p} \leq \alpha_0 + \alpha_1 \sqrt{p} + \alpha_2 p,\quad \text{for all } p \geq p_0 
\end{align*}
for some $\alpha_0,\alpha_1,\alpha_2 ,p_0>0$. Then, for any $u \geq p_0$, we have
\begin{align*}
   \bb P \paren{  \abs{X} \geq e(\alpha_0 + \alpha_1 \sqrt{u} + \alpha_2 u)  }	\leq 2\exp\paren{ -u }.
\end{align*}
This further implies that for any $t> \alpha_1 \sqrt{p_0} + \alpha_2 p_0$, we have
\begin{align*}
 \bb P \paren{  \abs{X} \geq c_1 \alpha_0 + t } \leq 2\exp\paren{ -c_2 \min\Brac{ \frac{t^2}{\alpha_1^2}, \frac{t}{\alpha_2 }  } },		
\end{align*}
for some positive constants $c_1,\;c_2>0$.
\end{lemma}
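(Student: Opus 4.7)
The plan is to prove the first inequality by a direct Markov argument with an optimized choice of moment order, and then derive the second inequality by choosing the free parameter $u$ as a function of $t$ so that $\alpha_1\sqrt{u} + \alpha_2 u$ is controlled by $t$, with the minimum in the exponent arising naturally from the two competing regimes.

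For the first part, I would apply Markov's inequality in the form $\bb P(|X| \geq s) \leq s^{-p}\bb E[|X|^p]$, with the choice $p = u \geq p_0$ and $s = e(\alpha_0 + \alpha_1\sqrt{u} + \alpha_2 u)$. The hypothesis gives $\bb E[|X|^p]^{1/p} \leq \alpha_0 + \alpha_1\sqrt{p} + \alpha_2 p$, so after raising to the $p$-th power and dividing by $s^p$, the factor $(\alpha_0 + \alpha_1\sqrt{u} + \alpha_2 u)^u$ cancels and leaves $e^{-u}$. The factor of $2$ in $2e^{-u}$ is just slack that absorbs any minor bookkeeping.

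For the second part, the idea is to invert the map $u \mapsto \alpha_1\sqrt{u} + \alpha_2 u$. Given $t$, I would choose
\begin{equation*}
u \;=\; \min\left\{\frac{t^2}{4e^2\alpha_1^2},\;\frac{t}{2e\alpha_2}\right\},
\end{equation*}
which ensures $\alpha_1\sqrt{u} \leq t/(2e)$ and $\alpha_2 u \leq t/(2e)$, hence $e(\alpha_0 + \alpha_1\sqrt{u}+\alpha_2 u) \leq e\alpha_0 + t$. Substituting this $u$ into the first part then gives a tail bound of $2\exp(-u) = 2\exp(-c_2\min\{t^2/\alpha_1^2,\,t/\alpha_2\})$, with $c_1 = e$ and $c_2 = \min\{1/(4e^2),\,1/(2e)\}$.

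The only genuinely delicate point is checking the side condition $u \geq p_0$ required to invoke the moment hypothesis. The choice above yields $u \geq p_0$ precisely when both $t \geq 2e\alpha_1\sqrt{p_0}$ and $t \geq 2e\alpha_2 p_0$, which is slightly stronger than the stated hypothesis $t > \alpha_1\sqrt{p_0} + \alpha_2 p_0$. This discrepancy is handled by absorbing the mismatched constants into $c_1$ and $c_2$ — for instance, by enlarging $c_1$ so that the event $\{|X|\geq c_1\alpha_0 + t\}$ is automatically trivial (probability bounded by $2e^{-c p_0}$) in the small-$t$ regime $\alpha_1\sqrt{p_0}+\alpha_2 p_0 < t < \max\{2e\alpha_1\sqrt{p_0},2e\alpha_2 p_0\}$, or by simply tightening the moment choice. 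No step is technically difficult; the only care needed is to carry the two regimes of the minimum separately and verify the constant-adjustment at the boundary of $p_0$.
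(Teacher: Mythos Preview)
Your proposal is correct and follows essentially the same approach as the paper: the first inequality is Markov's inequality with $p=u$, and the second comes from inverting $u\mapsto \alpha_1\sqrt{u}+\alpha_2 u$ by splitting into the two regimes where one term dominates the other. The paper's proof is in fact terser than yours (it simply cites Proposition~2.6 of Krahmer--Mendelson--Rauhut for the first part and sets $t=\alpha_1\sqrt{u}+\alpha_2 u$, then bounds $u$ from below in each regime), and does not explicitly discuss the $p_0$ boundary issue you flagged.
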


\begin{proof}
The first inequality directly comes from Proposition 2.6 of \cite{krahmer2014suprema} via Markov inequality, also see Proposition 7.11 and Proposition 7.15 of \cite{foucart2013mathematical}. For the second, let $t = \alpha_1 \sqrt{u} +\alpha_2 u$, if $ \alpha_1 \sqrt{u} \leq \alpha_2 u$, then 
\begin{align*}
   t = \alpha_1 \sqrt{u} + \alpha_2 u  \leq 2 	\alpha_2 u  \; \Rightarrow \; u \geq \frac{t}{2\alpha_2 }.
\end{align*}
Otherwise, similarly, we have $u \geq t^2/ (4\alpha_1^2)$. Combining the two cases above, we get the desired result.
\end{proof}

\edited{In the following, we describe a tail bound for a class of heavy-tailed random variables, whose moments are growing much faster than sub-Gaussian and sub-exponential.}

\begin{lemma}[Tail bound for heavy-tailed distribution via moment control]\label{lem:moments-tail-bound-1}
Suppose $X$ is a centered random variable satisfying
\begin{align*}
   \paren{ \bb E\brac{ \abs{X}^p } }^{1/p} \leq p \paren{ \alpha_0 + \alpha_1 \sqrt{p} + \alpha_2 p }, \quad \text{for all } p \geq p_0,
\end{align*}
for some $\alpha_0,\alpha_1,\alpha_2,p_0\geq 0$. Then, for any $u \geq p_0$, we have
\begin{align*}
   \bb P\paren{ \abs{X} \geq e u \paren{ \alpha_0 + \alpha_1 \sqrt{u} + \alpha_2 u }   } \leq 2 \exp\paren{- u}.	
\end{align*}
This further implies that for any $t> p_0 \paren{ \alpha_0 + \alpha_1 \sqrt{p_0} + \alpha_2 p_0 } $, we have
\begin{align*}
   \bb P\paren{ \abs{X} \geq c_1t } \leq 2 \exp\paren{ - c_2\min\Brac{ \sqrt{ \frac{t}{2(\alpha_1+\alpha_2) }}, \frac{t}{2\alpha_0}  } },	
\end{align*}
for some positive constant $c_1,c_2>0$.
\end{lemma}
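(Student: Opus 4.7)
The plan is to run the standard moment-to-tail conversion via Markov's inequality, but with the parameters tuned to absorb the extra factor of $p$ in the moment bound. Because the moments grow like $p^p$ times a sub-exponential factor, the resulting tail will be heavier than the sub-exponential bound of Lemma \ref{lem:moments-tail-bound}, with the ``Gaussian branch'' downgraded from $\exp(-c t^2/\alpha_1^2)$ to $\exp(-c\sqrt{t/\alpha_1})$ and the ``exponential branch'' downgraded from $\exp(-c t/\alpha_2)$ to $\exp(-c\sqrt{t/\alpha_2})$.

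First I would prove the first inequality. By Markov's inequality, for any $p \geq p_0$,
\begin{align*}
\bb P\paren{ \abs{X} \geq t } \;\leq\; \frac{\bb E[\abs{X}^p]}{t^p} \;\leq\; \frac{ p^p \paren{ \alpha_0 + \alpha_1 \sqrt{p} + \alpha_2 p }^p }{ t^p }.
\end{align*}
Set $p = u \geq p_0$ and $t = e u \paren{ \alpha_0 + \alpha_1 \sqrt{u} + \alpha_2 u }$; the right-hand side collapses to $e^{-u}$, which is stronger than the stated $2\exp(-u)$. This is the direct analogue of Proposition 7.15 of \cite{foucart2013mathematical}, the only twist being that the optimal choice of $p$ equals $u$ itself rather than depending on $t$ through a $\log$.

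Next I would derive the second inequality by inverting the map $u \mapsto t(u) = u(\alpha_0 + \alpha_1\sqrt{u} + \alpha_2 u)$. The key observation is that, for $u \geq 1$, we have $\alpha_1 u^{3/2} \leq \alpha_1 u^2$, hence
\begin{align*}
t(u) \;\leq\; \alpha_0 u + (\alpha_1+\alpha_2) u^2.
\end{align*}
Given $t > p_0(\alpha_0 + \alpha_1\sqrt{p_0} + \alpha_2 p_0)$, the corresponding $u$ with $t(u) = t/e$ (say) must satisfy either $\alpha_0 u \geq (\alpha_1+\alpha_2)u^2$ (in which case $u \geq t/(2e\alpha_0)$) or $\alpha_0 u < (\alpha_1+\alpha_2)u^2$ (in which case $u \geq \sqrt{t/(2e(\alpha_1+\alpha_2))}$). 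Combining, $u \geq c_2 \min\bigl\{ t/(2\alpha_0),\; \sqrt{t/(2(\alpha_1+\alpha_2))} \bigr\}$ for a suitable absolute constant $c_2$. Plugging this lower bound for $u$ into the first inequality with $c_1 = e$ yields the second tail bound.

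The computation is essentially routine, so there is no conceptual obstacle; the only mild care needed is to keep track of absolute constants during the inversion (the factor $e$, the factor $2$, and the requirement $u \geq 1$). One small point to handle is the regime $p_0 < 1$: if $p_0 < 1$ the inversion argument still goes through by replacing $p_0$ with $\max\{p_0,1\}$ and absorbing the change into $c_1,c_2$, using that the hypothesis $t > p_0(\alpha_0+\alpha_1\sqrt{p_0}+\alpha_2 p_0)$ is the relevant threshold on $t$. Everything else is a single-line Markov bound and a two-case comparison.
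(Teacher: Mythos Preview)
Your proposal is correct and follows essentially the same route as the paper: Markov's inequality with $p=u$ to get the first bound, then the same reduction $\alpha_1\sqrt{u}\le\alpha_1 u$ (for $u\ge 1$) to replace $t(u)$ by $\alpha_0 u+(\alpha_1+\alpha_2)u^2$, followed by the same two-case inversion. The paper's proof is terser but the argument is identical; your remark about handling $p_0<1$ by passing to $\max\{p_0,1\}$ is a point the paper leaves implicit.
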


\begin{proof}
The proof of the first tail bound is similar to that of 	Lemma \ref{lem:moments-tail-bound} by using Markov inequality. Notice that
\begin{align*}
   \bb P\paren{ \abs{X} \geq e u \paren{ \alpha_0 + (\alpha_1+ \alpha_2) u } } \;\leq\; 	\bb P\paren{ \abs{X} \geq e u \paren{ \alpha_0 + \alpha_1 \sqrt{u} + \alpha_2 u }   } \;\leq\; 2 \exp\paren{- u}.
\end{align*}
Let $ t = \alpha_0 u + \paren{ \alpha_1+\alpha_2 } u^2$, if $\alpha_0 u \leq \paren{ \alpha_1+\alpha_2 } u^2$, then
\begin{align*}
   t = 	\alpha_0 u+ \paren{\alpha_1+ \alpha_2 }u^2 \leq 2\paren{ \alpha_1+\alpha_2 }u^2 \;\Rightarrow \; u \geq \sqrt{ \frac{t}{2(\alpha_1+\alpha_2) }}.
\end{align*}
Otherwise, we have $u \geq t/(2\alpha_0)$. Combining the two cases above, we get the desired result.
\end{proof}

\begin{definition}[$d_2( \cdot), d_F(\cdot)$ and $\gamma_\beta$ functional] \label{def:gamma-beta-functional}
For a given set of matrices $\mc B$, we define
\begin{align*}
	d_F(\mc B) \;\doteq\; \sup_{\mb B \in \mc B} \norm{\mb B}{F}, \quad d_2(\mc B) \doteq \sup_{\mb B \in \mc B} \norm{\mb B}{},
\end{align*}
For a metric space $(T,d)$, an admissible sequence of $T$ is a collection of subsets of $T$, $\Brac{T_r: r>0 }$, such that for every $s>1$, $\abs{T_r} \leq 2^{2^r}$ and $\abs{T_0} = 1$. For $\beta \geq 1$, define the $\gamma_\beta$ functional by
	\begin{align*}
		\gamma_\beta (T,d) \;\doteq\; \inf \;\sup_{t\in T}\; \sum_{r= 0}^\infty 2^{r/\beta} d(t, T_r),
	\end{align*}
	where the infimum is taken with respect to all admissible sequences of $T$. In particular, for $\gamma_2$ functional of the set $\mc B$ equipped with distance $\norm{\cdot }{}$, \cite{talagrand2014} shows that
	\begin{align}
		\gamma_2(\mc B, \norm{ \cdot }{}) \;\leq\; c \int_0^{d_2(\mc B)} \log^{1/2} \mc N(\mc B, \norm{ \cdot }{}, \epsilon) d\epsilon,
	\end{align}
	where $\mc N(\mc B, \norm{\cdot }{}, \epsilon)$ is the covering number of the set $\mc B$ with diameter $\epsilon \in (0,1)$.
\end{definition}

\begin{theorem}[Theorem 3.5, \cite{krahmer2014suprema}]\label{thm:moments-bound}
	Let $\sigma_\xi^2\geq 1$ and $\mb \xi= \paren{ \xi_j }_{j=1}^n$, where $\Brac{\xi_j}_{j=1}^n$ are independent zero-mean, variance one, $\sigma_\xi^2$-subgaussian random variables, and let $\mc B$ be a class of matrices. Let us define a quantity
\begin{align}\label{eqn:quantity-C-B}
	\mc C_{\mc B}\paren{\mb \xi} \doteq \sup_{\mb B \in \mc B} \abs{ \norm{\mb B\mb \xi}{}^2 - \bb E\brac{ \norm{\mb B \mb \xi}{}^2 } }. 
\end{align}
For every $p \geq 1$, we have
	\begin{align*}
	   \norm{ \sup_{\mb B \in \mc B} \norm{\mb B\mb \xi }{} }{L^p} \;& \leq\; C_{\sigma_\xi^2} \brac{\gamma_2\paren{ \mc B, \norm{\cdot }{} } + d_F(\mc B) + \sqrt{p} d_2(\mc B) } \\
	   \norm{ \sup_{\mb B \in \mc B} \abs{ \norm{\mb B\mb \xi}{}^2 - \bb E\brac{ \norm{\mb B \mb \xi}{}^2 } }  }{L^p} \;&\leq\; C_{\sigma_\xi^2}\left\lbrace \gamma_2(\mc B, \norm{\cdot }{})\brac{ \gamma_2\paren{\mc B, \norm{ \cdot }{} } + d_F(\mc B)  } \right. \\
	   &\qquad \qquad \qquad \left. + \sqrt{p} d_2(\mc B)\brac{  \gamma_2\paren{ \mc B,\norm{\cdot }{} } + d_F(\mc B) } + p d_2^2(\mc B)\right\rbrace,
	\end{align*}
	where $C_{\sigma_\xi^2}$ is some positive numerical constant only depending on $\sigma_\xi^2$, and $d_2(\cdot), d_F(\cdot)$ and $\gamma_2(\mc B, \norm{\cdot}{} )$ are given in Definition \ref{def:gamma-beta-functional}.
\end{theorem}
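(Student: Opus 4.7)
The plan is to combine the Hanson-Wright inequality for quadratic forms in sub-Gaussian vectors with Talagrand's generic chaining theorem for processes with mixed Bernstein-type increments. I will focus on the second (chaos) inequality; the first (linear) inequality follows from a simpler variant of the same argument.

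\textbf{Step 1: Increment estimate via Hanson-Wright.} For $\mb B, \mb B' \in \mc B$, the centered increment of the process $X_{\mb B} \doteq \norm{\mb B\mb \xi}{}^2 - \bb E \norm{\mb B\mb \xi}{}^2$ is a centered quadratic form $\mb \xi^* \mb M \mb \xi - \bb E\brac{\mb \xi^* \mb M \mb \xi}$ in the sub-Gaussian vector $\mb \xi$, with $\mb M \doteq \mb B^* \mb B - (\mb B')^* \mb B'$. The Hanson-Wright inequality gives a mixed Bernstein tail
\[
\bb P\paren{ \abs{X_{\mb B} - X_{\mb B'}} > t } \;\leq\; 2\exp\paren{-c \min\Brac{ t^2/\paren{\sigma_\xi^4 \norm{\mb M}{F}^2},\; t/\paren{\sigma_\xi^2 \norm{\mb M}{}} }}.
\]
The elementary factorizations $\norm{\mb M}{F} \leq \norm{\mb B - \mb B'}{F}\paren{\norm{\mb B}{} + \norm{\mb B'}{}}$ and $\norm{\mb M}{} \leq \norm{\mb B - \mb B'}{}\paren{\norm{\mb B}{} + \norm{\mb B'}{}}$ express both scales in terms of the Frobenius- and operator-norm metrics on $\mc B$, weighted by the operator-norm radius of $\mc B$.

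\textbf{Step 2: Generic chaining with mixed metrics.} Processes with Bernstein increments of this type fall into the scope of Talagrand's generic chaining theorem (see \cite{talagrand2014}) in its two-metric formulation, giving
\[
\norm{ \sup_{\mb B \in \mc B} \abs{X_{\mb B} - X_{\mb B_0}} }{L^p} \;\lesssim\; \gamma_2(\mc B, \rho_2) + \gamma_1(\mc B, \rho_\infty) + \sqrt{p}\,\mathrm{diam}_{\rho_2}(\mc B) + p\,\mathrm{diam}_{\rho_\infty}(\mc B),
\]
where $\rho_2(\mb B, \mb B') = \sigma_\xi^2 \norm{\mb M}{F}$ and $\rho_\infty(\mb B, \mb B') = \sigma_\xi^2 \norm{\mb M}{}$ are the two scales from Step 1. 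Anchoring the chaining at $\mb B_0 = \mb 0$ (augmenting $\mc B$ by the zero matrix) converts this to a bound on $\sup_{\mb B} \abs{X_{\mb B}}$ at the price of a single Hanson-Wright term at the anchor.

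\textbf{Step 3: Reduction to $\gamma_2(\mc B, \norm{\cdot}{})$.} The Step 1 factorizations allow the chaining functionals with the quadratic metrics $\rho_2, \rho_\infty$ to be bounded by chaining functionals with the Frobenius and operator metrics on $\mc B$ times the operator-norm radius of $\mc B$, the latter being controlled by $d_2(\mc B)$. The Frobenius-metric $\gamma_2$ then satisfies $\gamma_2(\mc B, \norm{\cdot}{F}) \lesssim d_F(\mc B) + \gamma_2(\mc B, \norm{\cdot}{})$ by a centering trick that trades the missing ``size'' information for the radius $d_F$. Combining these with the monotonicity $\gamma_1(\mc B, \norm{\cdot}{}) \leq \gamma_2(\mc B, \norm{\cdot}{})$ collapses the bound into $\gamma_2(\mc B, \norm{\cdot}{})\brac{\gamma_2(\mc B, \norm{\cdot}{}) + d_F(\mc B)} + \sqrt{p}\, d_2(\mc B)\brac{\gamma_2(\mc B, \norm{\cdot}{}) + d_F(\mc B)} + p\, d_2^2(\mc B)$, which is exactly the stated chaos inequality.

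\textbf{Step 4: Linear bound and main obstacle.} For the first inequality, $\mb \xi \mapsto \norm{\mb B\mb \xi}{}$ is $\sigma_\xi \norm{\mb B}{}$-Lipschitz, so the increments $\norm{\mb B\mb \xi}{} - \norm{\mb B'\mb \xi}{}$ are $\sigma_\xi \norm{\mb B - \mb B'}{}$-sub-Gaussian; combined with the centering $\bb E \norm{\mb B\mb \xi}{} \leq \norm{\mb B}{F}$, a standard one-metric generic chaining argument yields the first bound. The main obstacle throughout is the Bernstein-increment form of Talagrand's majorizing measure theorem underlying Step 2, which carries the full analytic weight; once taken as a black box, the collapse in Step 3 reduces to careful bookkeeping modulo tracking of the $p$-dependence, which in turn converts into the tail-style statements used in the body of the paper via the moment-to-tail translation in Lemma \ref{lem:moments-tail-bound}.
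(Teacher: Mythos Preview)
The paper does not prove this theorem; it is quoted from \cite{krahmer2014suprema} and used as a black box. Your sketch is in the right spirit (Hanson--Wright plus generic chaining), but Step~3 contains a genuine gap. The claimed monotonicity $\gamma_1(\mc B,\norm{\cdot}{}) \le \gamma_2(\mc B,\norm{\cdot}{})$ is false: by definition $\gamma_\beta$ uses weights $2^{r/\beta}$, so $\gamma_1 \ge \gamma_2$, and in general $\gamma_1$ can be strictly larger by an arbitrary factor. Likewise the ``centering trick'' bound $\gamma_2(\mc B,\norm{\cdot}{F}) \lesssim d_F(\mc B) + \gamma_2(\mc B,\norm{\cdot}{})$ is not valid, since $\norm{\cdot}{F} \ge \norm{\cdot}{}$. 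So as written, after Step~2 you are stuck with a $\gamma_1$ term that you cannot collapse into $\gamma_2^2$.

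The proof in \cite{krahmer2014suprema} avoids this obstruction by \emph{not} applying Hanson--Wright to the full quadratic form. Instead one splits $\norm{\mb B\mb\xi}{}^2 - \bb E\norm{\mb B\mb\xi}{}^2$ into diagonal and off-diagonal parts; the off-diagonal part is decoupled to $\innerprod{\mb B\mb\xi}{\mb B\mb\xi'}$ with an independent copy $\mb\xi'$. Conditioned on $\mb\xi'$, this is a \emph{linear} sub-Gaussian process in $\mb\xi$ with increment metric $\norm{(\mb B^*\mb B-(\mb B')^*\mb B')\mb\xi'}{} \le \norm{\mb B-\mb B'}{}\,\sup_{\mb B''\in\mc B}\norm{\mb B''\mb\xi'}{}$, so only a single $\gamma_2(\mc B,\norm{\cdot}{})$ enters the chaining. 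The random factor $\sup_{\mb B''}\norm{\mb B''\mb\xi'}{}$ is then controlled by the first (linear) inequality, producing the product $\gamma_2(\gamma_2+d_F)$ without any $\gamma_1$. A minor related point: in Step~1 the useful factorization for the sub-Gaussian scale is $\norm{\mb M}{F} \le \norm{\mb B-\mb B'}{}(\norm{\mb B}{F}+\norm{\mb B'}{F}) \le 2d_F(\mc B)\norm{\mb B-\mb B'}{}$, which puts the operator-norm increment in the right slot; the version you wrote (with $\norm{\mb B-\mb B'}{F}$) forces you into the Frobenius-metric $\gamma_2$ that you then cannot control.
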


The following theorem establishes the \emph{restricted isometry property} (RIP) of the Gaussian random convolution matrix.

\begin{theorem}[Theorem 4.1, \cite{krahmer2014suprema} ]\label{thm:circ-rip}
	Let $\mb \xi \in \bb C^m$ be a random vector with $\xi_i \sim_{i.i.d.} \mc {CN}(0,1)$, and let $\Omega$ be a fixed subset of $[m]$ with $\abs{\Omega} = n$. Define a set $\mc E_s = \Brac{ \mb v \in \bb C^m \mid \norm{ \mb v}{0} \leq s  }$, and define a matrix
	\begin{align*}
		\mb \Phi = \mb R_\Omega \mb C_{\mb \xi}^* \in \bb C^{n \times m },
	\end{align*}
where $\mb R_{\Omega}: \bb C^m \mapsto \bb C^n$ is an operator that restrict a vector to its entries in $\Omega$. Then for any $s \leq m$, and $\eta,\delta_s \in (0,1)$ such that
	\begin{align*}
		n \geq C \delta_s^{-2} s \log^2 s \log^2 m,
	\end{align*}
	the partial random circulant matrix $\mb \Phi \in \bb R^{n \times m}$ satisfies the restricted isometry property
	\begin{align}
		\paren{1- \delta_s} \sqrt{n}\norm{\mb v}{} \leq \norm{ \mb \Phi \mb v}{} \leq (1+\delta_s)\sqrt{n} \norm{ \mb v }{}
	\end{align}
	for all $\mb v \in \mc E_s$, with probability at least $1-m^{ - \log^2s \log m} $.
\end{theorem}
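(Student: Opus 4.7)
The stated result is the restricted isometry property for the partial Gaussian circulant matrix $\mb \Phi = \mb R_{\Omega}\mb C_{\mb \xi}^{*}$. My plan is to reduce the verification of RIP to a suprema-of-chaos estimate for the random vector $\mb \xi$ and then to apply \Cref{thm:moments-bound} to a carefully chosen class of matrices. The starting observation is that circular convolution is commutative in the sense that $\mb C_{\mb \xi}^{*}\mb v$ depends linearly on $\mb \xi$ for every fixed $\mb v$; equivalently, there is a matrix $\mb A_{\mb v}\in \bb C^{n\times m}$, defined in terms of a suitable permutation and sign of the entries of $\mb v$, such that $\mb R_{\Omega}\mb C_{\mb \xi}^{*}\mb v = \mb A_{\mb v}\mb \xi$. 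Consequently, setting $\mc B_{s} \doteq \{n^{-1/2}\mb A_{\mb v} : \mb v\in \mc E_{s},\;\norm{\mb v}{}=1\}$, the desired RIP is equivalent to
\begin{align*}
\sup_{\mb B\in \mc B_{s}}\abs{\norm{\mb B\mb \xi}{}^{2} - \bb E\brac{\norm{\mb B\mb \xi}{}^{2}}} \;\leq\; 2\delta_{s}.
\end{align*}
Here I use that $\bb E\brac{\norm{\mb A_{\mb v}\mb \xi}{}^{2}} = n\norm{\mb v}{}^{2}$, which follows from $\bb E[\mb \xi\mb \xi^{*}]=\mb I$ and the fact that each row of $\mb A_{\mb v}$ has Euclidean norm $\norm{\mb v}{}$ (rows are shifts of $\mb v$ restricted to one coordinate).

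Next, I would estimate the three geometric quantities $d_{F}(\mc B_{s})$, $d_{2}(\mc B_{s})$ and $\gamma_{2}(\mc B_{s},\norm{\cdot}{})$ entering \Cref{thm:moments-bound}. The Frobenius norm satisfies $\norm{\mb A_{\mb v}}{F}^{2}=n\norm{\mb v}{}^{2}$, so $d_{F}(\mc B_{s}) \leq 1$. For the operator norm, since $\norm{\mb C_{\mb v}}{}=\norm{\widehat{\mb v}}{\infty}$ and every $s$-sparse unit vector $\mb v$ satisfies $\norm{\widehat{\mb v}}{\infty}\leq \norm{\mb v}{1}\leq \sqrt{s}$, one gets $d_{2}(\mc B_{s})\leq \sqrt{s/n}$. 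This leaves the $\gamma_{2}$ functional as the nontrivial quantity to control. I would bound it via Dudley's integral,
\begin{align*}
\gamma_{2}(\mc B_{s},\norm{\cdot}{}) \;\leq\; c\int_{0}^{d_{2}(\mc B_{s})}\sqrt{\log \mc N(\mc B_{s},\norm{\cdot}{},\eps)}\;d\eps,
\end{align*}
and estimate the covering number $\mc N(\mc B_{s},\norm{\cdot}{},\eps)$ using the identification $\norm{\mb A_{\mb u} - \mb A_{\mb v}}{} \lesssim \norm{\widehat{\mb u - \mb v}}{\infty}/\sqrt{n}$. Using a two-scale entropy estimate -- volumetric covering of the $s$-sparse $\ell_{2}$ unit ball in the small-$\eps$ regime, combined with a Maurey-type empirical-method covering in the $\ell_{1}\!\to\!\ell_{\infty}$/DFT regime for larger $\eps$ -- one arrives at $\gamma_{2}(\mc B_{s},\norm{\cdot}{})\lesssim \sqrt{(s/n)\log^{2}\!s\,\log^{2}\!m}$.

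Plugging these estimates into the moment inequality of \Cref{thm:moments-bound} and then passing to a tail bound via \Cref{lem:moments-tail-bound} gives, for any $t>0$,
\begin{align*}
\bb P\paren{\sup_{\mb B\in \mc B_{s}}\abs{\norm{\mb B\mb \xi}{}^{2} - \bb E\brac{\norm{\mb B\mb \xi}{}^{2}}} \geq c_{1}E + t} \;\leq\; 2\exp\paren{-c_{2}\min\Brac{\tfrac{t^{2}}{V^{2}},\tfrac{t}{U}}},
\end{align*}
where $E,V,U$ are the expressions in \Cref{thm:moments-bound} with $\mc B=\mc B_{s}$. Choosing $t=\delta_{s}$ and imposing $n\geq C\delta_{s}^{-2}s\log^{2}s\log^{2}m$ forces $c_{1}E\leq \delta_{s}$, $V^{2}\lesssim \delta_{s}/\log^{2}m$, and $U\lesssim \delta_{s}/\log^{2}m$, which together yield the announced probability $1-m^{-\log^{2}s\log m}$.

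The main obstacle is the $\gamma_{2}$ bound: this is where the sparsity structure interacts nontrivially with the spectral geometry of circulant matrices. Naive $\eps$-net arguments for $s$-sparse vectors in $\ell_{2}$ lose too many logarithms, and brute entropy control in the operator norm ignores the Fourier structure of $\mb C_{\mb v}$. The trick is to split Dudley's integral at a threshold tied to the DFT $\ell_{\infty}$-scale and to apply Maurey's empirical method to represent $\widehat{\mb v}$ as an average of a few random atoms, thereby replacing the $s$-sparse geometry in $\ell_{2}$ with a smoother geometry under $\norm{\widehat{\cdot}}{\infty}$. Once this is in hand, every other piece is a direct verification, so I would dedicate the bulk of the write-up to this entropy estimate and otherwise closely follow the chaos-process template of \cite{krahmer2014suprema}.
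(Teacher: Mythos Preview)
The paper does not give its own proof of this theorem: it is stated as a quoted result (Theorem~4.1 of \cite{krahmer2014suprema}) in Appendix~\ref{app:tools}, with no proof environment following it. So there is nothing in the paper to compare your argument against directly.

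That said, your sketch is the correct outline of the proof in \cite{krahmer2014suprema}: rewrite $\norm{\mb\Phi\mb v}{}^2$ as $\norm{\mb B\mb\xi}{}^2$ for a suitable $\mb B = \mb B(\mb v)$, reduce RIP to a supremum of a second-order chaos over the class $\mc B_s$, bound $d_F$, $d_2$, and $\gamma_2$ for $\mc B_s$, and then invoke \Cref{thm:moments-bound} and \Cref{lem:moments-tail-bound}. The paper in fact reproduces pieces of this machinery elsewhere for closely related matrix classes: see the proof of \Cref{lem:rip-large-perturb} (which explicitly says ``the proof follows from the results in \cite{krahmer2014suprema}'' and carries out the same $d_F/d_2/\gamma_2$ bookkeeping), and \Cref{lem:gamma-2} together with \Cref{lem:gamma-2-aux}, which implement exactly the two-scale Dudley/Maurey covering argument you describe for the $\norm{\widehat{\cdot}}{\infty}$ metric. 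Your identification of the $\gamma_2$ estimate as the crux is right, and the empirical-method covering in \Cref{lem:gamma-2-aux} is the paper's version of the Maurey step you propose.
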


\begin{lemma}\label{lem:rip-large-perturb}
	Let the random vector $\mb \xi \in \bb C^m$ and the random matrix $\mb \Phi \in \bb C^{n\times m}$ be defined the same as \Cref{thm:circ-rip}, and let $\mc E_s= \Brac{ \mb v \in \bb C^m \mid \norm{ \mb v}{0} \leq s  }$ for some positive integer $s\leq n$. For any positive scalar $\delta>0$ and any positive integer $s \leq n$, whenever $m \geq C \delta^{-2} n \log^4 n$, we have
	\begin{align*}
	   \norm{\mb \Phi \mb v}{} \leq  \delta \sqrt{m} \norm{\mb v}{},
	\end{align*}
    for all $\mb v \in \mc E_s$, with probability at least $1 - m^{- c\log^2 s} $. 
\end{lemma}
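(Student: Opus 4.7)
The plan is to bootstrap the restricted isometry property of \Cref{thm:circ-rip} via a block decomposition. Under the hypothesis $m \geq C\delta^{-2}n\log^4 n$, the target upper bound $\delta\sqrt{m}$ is of order $\sqrt{n}\log^2 n$, whereas \Cref{thm:circ-rip} with a constant distortion gives the bound $\sqrt{n}$ directly only for sparsities up to $s_0 \asymp n/(\log^2 n\log^2 m)$; the proof amounts to stitching together at most $\lceil s/s_0\rceil$ such pieces and absorbing the loss into the logarithmic factors of the hypothesis.

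Concretely, I would set $s_0 \doteq \lfloor c n / (\log^2 n\log^2 m)\rfloor$ for a small constant $c>0$, and invoke \Cref{thm:circ-rip} with sparsity $s_0$ and RIP constant $\delta_{s_0}=1/2$. This produces, with probability at least $1 - m^{-c_0\log^2 n \log m}$, the bound $\norm{\mb \Phi\mb u}{} \leq 2\sqrt{n}\norm{\mb u}{}$ uniformly over all $s_0$-sparse $\mb u \in \bb C^m$. Conditioning on this event, for any $\mb v \in \mc E_s$ I partition the support of $\mb v$ into $N = \lceil s/s_0\rceil$ disjoint blocks of cardinality at most $s_0$ and write $\mb v = \sum_{i=1}^N \mb v_i$. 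Each $\mb v_i$ is then $s_0$-sparse, and the triangle inequality together with Cauchy--Schwarz yields
\begin{align*}
 \norm{\mb \Phi\mb v}{} \;\leq\; \sum_{i=1}^N \norm{\mb \Phi\mb v_i}{} \;\leq\; 2\sqrt{n}\sum_{i=1}^N \norm{\mb v_i}{} \;\leq\; 2\sqrt{n N}\paren{\sum_{i=1}^N \norm{\mb v_i}{}^2}^{1/2} \;=\; 2\sqrt{n N}\,\norm{\mb v}{},
\end{align*}
where the final equality uses the disjoint-support identity $\sum_i \norm{\mb v_i}{}^2 = \norm{\mb v}{}^2$. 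Using $N \leq n/s_0 + 1 \leq C_0 \log^2 n\log^2 m$, this becomes $\norm{\mb \Phi\mb v}{} \leq C_1\sqrt{n}\log n \log m \cdot \norm{\mb v}{}$. In the regime of interest where $m$ is polynomial in $n$ (so that $\log m \lesssim \log n$), this reduces to $C_1'\sqrt{n}\log^2 n\,\norm{\mb v}{}$, and the hypothesis $m \geq C\delta^{-2}n\log^4 n$ with $C$ sufficiently large forces the right-hand side below $\delta\sqrt{m}\norm{\mb v}{}$, as required. The probability $1 - m^{-c_0\log^2 n \log m}$ dominates $1 - m^{-c\log^2 s}$ for every $s \leq n$.

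The only nontrivial ingredient is \Cref{thm:circ-rip} itself, whose proof in \cite{krahmer2014suprema} rests on the chaining estimate of \Cref{thm:moments-bound} together with a Maurey-type empirical approximation controlling the $\gamma_2$ functional of the associated class of circulant matrices. Given that input, the block decomposition above is essentially lossless: the only source of logarithmic overhead is the ratio $N = n/s_0$, which is precisely what is absorbed into the $\log^4 n$ factor in the hypothesis.
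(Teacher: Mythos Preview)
Your argument is correct, but the paper takes a more direct route. Instead of invoking \Cref{thm:circ-rip} as a black box at a fixed sparsity level $s_0$ and then block-decomposing, the paper goes back to the underlying concentration inequality for the chaos process (Theorem 3.1 of \cite{krahmer2014suprema}) and applies it at sparsity $s$ with a \emph{large} deviation parameter. Concretely, writing $\mc C_{\mc V}(\mb \xi)=\sup_{\mb v\in\mc D_{s,m}}\bigl|\tfrac{1}{n}\norm{\mb\Phi\mb v}{}^2-1\bigr|$, the known bounds $d_F(\mc V)=1$, $d_2(\mc V)\le\sqrt{s/n}$, $\gamma_2(\mc V,\norm{\cdot}{})\lesssim\sqrt{s/n}\log s\log m$ give
\[
\bb P\Bigl(\mc C_{\mc V}(\mb\xi)\ge c_1\sqrt{\tfrac{s}{n}}\log^2 s\log^2 m+t\Bigr)\le 2\exp\Bigl(-c_2\min\Bigl\{\tfrac{nt^2}{s\log^2 s\log^2 m},\tfrac{nt}{s}\Bigr\}\Bigr),
\]
and the paper simply takes $t=\delta^2 m/n$. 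Under $m\ge C\delta^{-2}n\log^2 s\log^2 m$ this yields $\tfrac{1}{n}\norm{\mb\Phi\mb v}{}^2\le 1+2\delta^2 m/n$, hence $\norm{\mb\Phi\mb v}{}\le C'\delta\sqrt m$, with failure probability $m^{-c\log^2 s}$.

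The trade-off: the paper's route avoids the $\sqrt{N}$ loss from your triangle/Cauchy--Schwarz step and delivers the $s$-dependent exponent in the probability bound directly, at the cost of reopening the chaining estimates rather than quoting the RIP theorem. Your route is more modular---it needs nothing beyond the stated \Cref{thm:circ-rip}---and the extra logarithmic slack you incur is harmlessly absorbed by the hypothesis. Both arguments share the same tacit convention $\log m\asymp\log n$ when converting $\log^2 m$ to $\log^2 n$ in the final sample-complexity condition.
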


\begin{proof}
	The proof follows from the results in \cite{krahmer2014suprema}. Without loss of generality, we assume $\norm{\mb v}{} = 1$. Let us define sets
	\begin{align*}
	   \mc D_{s,m}\;&\doteq\; \Brac{ \mb v \in \bb C^m: \; \norm{\mb v}{}= 1,\; \norm{\mb v}{0} \leq s }, \\
	   \mc V \;&\doteq\; \Brac{ \frac{1}{\sqrt{n}} \mb R_{[1:n]} \mb F_m^{-1} \diag\paren{\mb F_m \mb v } \mb F_m \mid \mb v \in \mc D_{s,m} },
	\end{align*}
	\edited{where $\mb R_{[1:n]}: \bb R^m \mapsto \bb R^n $ denotes an operator that restricts a vector to its first $n$ coordinates.} Section 4 of \cite{krahmer2014suprema} shows that 
    \begin{align*}
       \sup_{\mb v \in \mc D_{s,m}} \abs{ \frac{1}{n} \norm{\mb \Phi\mb v }{}^2 - 1  } = \sup_{\mb V_{\mb v}\in \mc V} \abs{  \norm{ \mb V_{\mb v} \mb \xi }{}^2 -  \bb E_{\mb \xi }\brac{ \norm{ \mb V_{\mb v} \mb \xi }{}^2 } }  = \mc C_{\mc V}(\mb \xi ),
    \end{align*}
    where $\mc C_{\mc V}\paren{ \mb \xi }$ is defined in \eqref{eqn:quantity-C-B}. Theorem 4.1 and Lemma 4.2 of \cite{krahmer2014suprema} implies that
    \begin{align*}
       d_F(\mc V) = 1,\quad d_2(\mc V) \leq \sqrt{ \frac{s}{n} },\quad \gamma_2(\mc V, \norm{ \cdot }{}) \leq c \sqrt{ \frac{s}{n} }\log s \log m,	
    \end{align*}
    for some constant $c>0$. By using the estimates above, Theorem 3.1 of \cite{krahmer2014suprema} further implies for any $t>0$
    \begin{align*}
       \bb P\paren{  \mc C_{\mc V}(\mb \xi ) \geq c_1 \sqrt{\frac{s}{n}} \log^2 s \log^2 m +t } \leq 2 \exp\paren{  - c_2 \min\Brac{  \frac{n t^2}{s\log^2 s \log^2 m }, \frac{n t}{s} }  }.
    \end{align*}
    For any positive constant $\delta>0$, choosing $t = \delta^2 m/n$, whenever $m\geq C \delta^{-2} n \log^2s\log^2 m$ for some constant $C >0$ large enough, we have
    \begin{align*}
       	\sup_{\mb v \in \mc D_{s,m}} \abs{ \frac{1}{n} \norm{\mb \Phi\mb v }{}^2 - 1  } \leq c_1 \sqrt{\frac{s}{n}} \log^2 s \log^2 m + \delta \frac{m}{n} \leq 2 \delta^2 \frac{m}{n},
    \end{align*}
    with probability at least $ 1- m^{ -c_3\log^2 s }$. Therefore, we have
    \begin{align*}
       	 \norm{\mb \Phi\mb v }{} \leq \sqrt{ n + 2\delta^2 m } \leq C'\delta \sqrt{ m},
    \end{align*}
    holds for any $\mb v \in \mc D_{s,m}$ with high probability.
\end{proof}

\section{Moments and Spectral Norm of Partial Random Circulant Matrix}\label{app:moments-circulant-matrix}

Let $\mb g \in \bb C^m$ be a random complex Gaussian vector with $\mb g \sim \mc {CN}(\mb 0, \sigma_g^2 \mb I)$. Given a partial random circulant matrix $\mb C_{\mb g}\mb R_{[1:n]}^\top \in \bb C^{m \times n} \;(m\geq n)$, we control the moments and the tail bound of the terms in the following form
\begin{align*}
\mb T_1(\mb g) \;&=\; \frac{1}{m}  \mb R_{[1:n]} \mb C_{\mb g}^* \diag\paren{\mb b } \mb C_{\mb g} \mb R_{[1:n]}^\top,  \\
\mb T_2(\mb g) \;&=\; \frac{1}{m}  \mb R_{[1:n]} \mb C_{\mb g}^\top \diag\paren{\wt{\mb b} } \mb C_{\mb g} \mb R_{[1:n]}^\top, 
\end{align*}
where $\mb b  \in \bb R^m $, and $\wt{\mb b}  \in \bb C^m $. The concentration of these quantities plays an important role in our arguments, and the proof mimics the arguments in \cite{rauhut2010compressive,krahmer2014suprema}. Prior to that, let us define the sets 
\begin{align}
	\mc D \;&\doteq\; \Brac{ \mb v \in \bb {CS}^{m-1}:\; \supp (\mb v) \in [n]  }, \label{eqn:set-D}  \\
	\mc V(\mb d) \;&\doteq\; \Brac{\mb V_{\mb v}:\; \mb V_{\mb v} = \frac{1}{\sqrt{m}} \diag\paren{\mb d}^{1/2} \mb F_m^{-1} \diag\paren{\mb F_m\mb v} \mb F_m,\; \mb v \in \mc D  }, \label{eqn:set-V}
\end{align}
\edited{for some $\mb d \in \bb C^m$.}

\subsection{Controlling the Moments and Tail of \boldmath{$T_1(g)$} }
\begin{theorem}\label{thm:moments-norm-circ}
	Let $\mb g \in \bb C^m$ be a random complex Gaussian vector with $\mb g \sim \mc {CN}(\mb 0, \sigma_g^2 \mb I)$ and any fixed vector $\mb b = \brac{b_1,\cdots,b_m }^\top \in \bb R^m $. Given a partial random circulant matrix $\mb C_{\mb g}\mb R_{[1:n]}^\top \in \bb C^{m \times n} \;(m\geq n)$, let us define
	\begin{align*}
		\mc L(\mb g) &\doteq \norm{ \frac{1}{m}  \mb R_{[1:n]} \mb C_{\mb g}^* \diag\paren{\mb b } \mb C_{\mb g} \mb R_{[1:n]}^\top - \frac{1}{m}\paren{ \sum_{k=1}^m b_k} \mb I}{}.
	\end{align*}
Then for any integer $p\geq 1$, we have
	\begin{align*}
	   \norm{ \mc L(\mb g) }{L^p} \leq C_{\sigma_g^2} \norm{\mb b }{ \infty }  \paren{  \sqrt{\frac{n}{m}} \log^{3/2} n\log^{1/2}m + \sqrt{p} \sqrt{\frac{n}{m}}  + p \frac{n}{m}   }.
	\end{align*}
    In addition, \edited{for} any $\delta>0$, whenever $m \geq C_{\sigma_g^2}' \delta^{-2} \norm{\mb b}{ \infty }^2 n\log^4n$, we have
	\begin{align}\label{eqn:quadratic-spectral-bound}
		\mc L(\mb g) &\leq \delta
	\end{align}
	holds with probability at least $1 - 2 m^{- c_{\sigma_g^2} \log^3n}$. Here, $c_{\sigma_g^2},\; C_{\sigma_g^2}$, and $C_{\sigma_g^2}'$ are some numerical constants only depending on $\sigma_g^2$.
\end{theorem}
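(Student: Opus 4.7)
The plan is to reduce the bound on $\mc L(\mb g)$ to a suprema-of-chaos bound of the form covered by Theorem~\ref{thm:moments-bound}, applied to a carefully chosen family of matrices indexed by unit vectors in $\mc D$. Concretely, for any $\mb v \in \mc D$, the convolution theorem gives
\begin{align*}
\frac{1}{\sqrt{m}} \diag(\mb b)^{1/2} \mb C_{\mb g} \mb v \;=\; \frac{1}{\sqrt{m}} \diag(\mb b)^{1/2} \mb F_m^{-1} \diag(\mb F_m \mb v) \mb F_m \mb g \;=\; \mb V_{\mb v}\, \mb g,
\end{align*}
so that
\begin{align*}
\mb v^* \Bigl(\tfrac{1}{m} \mb R_{[1:n]} \mb C_{\mb g}^* \diag(\mb b) \mb C_{\mb g} \mb R_{[1:n]}^\top\Bigr) \mb v \;=\; \norm{\mb V_{\mb v} \mb g}{}^2.
\end{align*}
Since $\mb b$ need not be nonnegative, I would first split $\mb b = \mb b_+ - \mb b_-$ into its positive and negative parts (both bounded entrywise by $\norm{\mb b}{\infty}$), apply the argument to each piece, and combine by the triangle inequality. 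This lets me work with $\mb V_{\mb v}^{(\pm)}$ in a set $\mc V(\mb b_\pm)$ of the form \eqref{eqn:set-V} with legitimate square roots.

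The next step is to estimate the three geometric parameters of $\mc V(\mb b_\pm)$ needed to invoke Theorem~\ref{thm:moments-bound}. Because $\diag(\mb b_\pm)^{1/2}$ has operator norm at most $\norm{\mb b}{\infty}^{1/2}$, one can bootstrap the unweighted estimates from Lemma~4.2 of \cite{krahmer2014suprema} (used already in the proof of Lemma~\ref{lem:rip-large-perturb}) to obtain
\begin{align*}
d_F(\mc V(\mb b_\pm)) \;\lesssim\; \norm{\mb b}{\infty}^{1/2}, \quad d_2(\mc V(\mb b_\pm)) \;\lesssim\; \sqrt{\tfrac{n}{m}}\norm{\mb b}{\infty}^{1/2}, \quad \gamma_2(\mc V(\mb b_\pm), \norm{\cdot}{}) \;\lesssim\; \sqrt{\tfrac{n}{m}}\norm{\mb b}{\infty}^{1/2} \log^{3/2} n \log^{1/2} m.
\end{align*}
Plugging these into the second bound of Theorem~\ref{thm:moments-bound} and collecting dominant terms produces
\begin{align*}
\norm{\mc L(\mb g)}{L^p} \;\leq\; C_{\sigma_g^2}\, \norm{\mb b}{\infty}\Bigl( \sqrt{\tfrac{n}{m}} \log^{3/2} n \log^{1/2} m + \sqrt{p}\sqrt{\tfrac{n}{m}} + p\,\tfrac{n}{m} \Bigr),
\end{align*}
which is exactly the claimed moment bound. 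The $\sigma_g^2$ dependence is absorbed into the constant since scaling $\mb g$ only rescales all three geometric parameters.

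Finally, the tail bound \eqref{eqn:quadratic-spectral-bound} follows by feeding this moment estimate into Lemma~\ref{lem:moments-tail-bound} with $\alpha_0 = C_{\sigma_g^2}\norm{\mb b}{\infty}\sqrt{n/m}\log^{3/2} n \log^{1/2} m$, $\alpha_1 = C_{\sigma_g^2} \norm{\mb b}{\infty}\sqrt{n/m}$, $\alpha_2 = C_{\sigma_g^2}\norm{\mb b}{\infty}\,n/m$, and choosing $u = c\log^3 n$ for the deviation parameter; the sample-complexity requirement $m \geq C'_{\sigma_g^2} \delta^{-2} \norm{\mb b}{\infty}^2 n\log^4 n$ ensures each of the three terms is at most $\delta$. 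The main obstacle I anticipate is the $\gamma_2$ estimate: while the unweighted computation of Krahmer--Mendelson--Rauhut is by now standard, one must verify that multiplication by $\diag(\mb b_\pm)^{1/2}$ does not disturb the chaining argument beyond a factor of $\norm{\mb b}{\infty}^{1/2}$. This amounts to checking that the covering numbers of $\mc V(\mb b_\pm)$ under the spectral norm are controlled by those of $\mc V$ up to this multiplicative factor, which follows from submultiplicativity of the spectral norm applied to the fixed diagonal weight.
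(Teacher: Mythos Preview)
Your proposal is correct and follows essentially the same route as the paper: reduce to a suprema-of-chaos over $\mc V(\mb b_\pm)$ via the convolution theorem, estimate $d_F$, $d_2$, and $\gamma_2$, invoke Theorem~\ref{thm:moments-bound}, and convert moments to a tail bound via Lemma~\ref{lem:moments-tail-bound}. The paper's Lemmas~\ref{lem:d_2-d_F} and~\ref{lem:gamma-2} compute the three geometric parameters directly for the weighted family $\mc V(\mb d)$, confirming exactly the obstacle you anticipated: the diagonal weight enters only as a multiplicative $\norm{\mb d}{\infty}^{1/2}$ in all three quantities, by submultiplicativity.
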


\begin{proof}
	Without loss of generality, let us assume that $\sigma_g^2 =1 $. Let us first consider the case $\mb b \geq \mb 0$, and let $\mb \Lambda = \diag \paren{\mb b}$, then
	\begin{align*}
	   \mc L(\mb g) &= \sup_{ \mb w \in \bb {CS}^{n-1} }	\abs{  \frac{1}{m} \mb w^* \mb R_{[1:n]} \mb C_{\mb g}^* \mb \Lambda \mb C_{\mb g} \mb R_{[1:n]}^\top \mb w - \frac{1}{m} \sum_{k=1}^m b_k  } \\
	   & = \sup_{ \mb v \in \bb {CS}^{m-1}, \supp(\mb v) \in [n]  } \abs{  \frac{1}{m} \mb v^* \mb C_{\mb g}^* \mb \Lambda \mb C_{\mb g}\mb v - \frac{1}{m} \sum_{k=1}^m b_k }.
	\end{align*}
	By the convolution theorem, we know that
	\begin{align*}
	 \frac{1}{\sqrt{m} } \mb \Lambda^{1/2} \mb C_{\mb g}\mb v = \frac{1}{\sqrt{m}} \mb \Lambda^{1/2} \paren{\mb g \conv \mb v } = \frac{1}{\sqrt{m}} \mb \Lambda^{1/2} \mb F_m^{-1} \diag\paren{\mb F_m\mb v} \mb F_m \mb g = \mb V_{\mb v} \mb g.
	\end{align*}
	Since $\bb E\brac{ \mb R_{[1:n]} \mb C_{\mb g}^* \mb \Lambda \mb C_{\mb g} \mb R_{[1:n]}^\top } =  \paren{\sum_{k=1}^m b_k} \mb I $, we observe
	\begin{align*}
		\mc L(\mb g) & = \sup_{\mb V_{\mb v} \in \mc V(\mb b) } \abs{ \norm{\mb V_{\mb v}\mb g }{}^2 - \bb E\brac{ \norm{\mb V_{\mb v}\mb g }{}^2 }  },
	\end{align*}
	where the set $\mc V(\mb b)$ is defined in \eqref{eqn:set-V}. Next, we invoke \Cref{thm:moments-bound} to control all the moments of $\mc L(\mb a)$, where we need to control the quantities $d_2(\cdot )$, $d_F(\cdot )$ and $\gamma_2(\cdot, \norm{ \cdot}{})$ defined in Definition \ref{def:gamma-beta-functional} for the set $\mc V(\mb b)$. By Lemma \ref{lem:d_2-d_F} and Lemma \ref{lem:gamma-2}, we know that
    \begin{align}
        d_F(\mc V(\mb b)) &\leq \norm{\mb b }{\infty}^{1/2}, \quad d_2(\mc V(\mb b)) \leq \sqrt{\frac{n}{m}} \norm{ \mb b }{ \infty }^{1/2}, \label{eqn:d_F-d_2} \\
    	\gamma_2(\mc V(\mb b),\norm{ \cdot }{}) &\leq C_0 \sqrt{ \frac{n}{m}  } \norm{\mb b }{\infty }^{1/2}  \log^{3/2} n \log^{1/2} m, \label{eqn:gamma-2}
    \end{align}
    for some constant $C_0>0$. Thus, combining the results in \eqref{eqn:d_F-d_2} and \eqref{eqn:gamma-2}, whenever $m \geq C_1n \log^3 n \log m $ for some constant $C_1>0$, \Cref{thm:moments-bound} implies that
    \begin{align*}
       &\norm{ \mc L(\mb g) }{L^p} \\
       \;\leq\;& C_2\Brac{\gamma_2(\mc V(\mb b), \norm{\cdot}{})\brac{ \gamma_2\paren{\mc V(\mb b), \norm{ \cdot }{} } + d_F(\mc V(\mb b))  }  
       + \sqrt{p} d_2(\mc V(\mb b))\brac{  \gamma_2\paren{ \mc V,\norm{\cdot}{} } + d_F(\mc V(\mb b)) } + p d_2^2(\mc V(\mb b))} \\
       \;\leq\;& C_3 \norm{\mb b}{\infty} \paren{  \sqrt{\frac{n}{m}} \log^{3/2} n\log^{1/2}m +  \sqrt{\frac{n}{m}} \sqrt{p}  +  \frac{n}{m}  p  }
    \end{align*}
    holds for some constants $C_2,\;C_3>0$. Based on the moments estimate of $\mc L(\mb g)$, Lemma \ref{lem:moments-tail-bound} further implies that
    \begin{align*}
    	\bb P\paren{ \mc L(\mb g) \geq C_4\sqrt{\frac{n}{m}} \norm{\mb b }{ \infty} \log^{3/2} n\log^{1/2}m  +t } \;\leq\; 2\exp\paren{ - C_5\frac{m}{ n  }\norm{\mb b}{ \infty}^{-1} \min\Brac{ \frac{t^2}{ \norm{\mb b}{\infty} },t} },
    \end{align*}
    for some constants $C_4,\;C_5>0$. Thus, for any $\delta>0$, whenever $ m \geq C_6 \delta^{-2} \norm{\mb b}{\infty}^2 n \log^3 n \log m  $ for some constant $C_6>0$, we have
    \begin{align*}
       \mc L(\mb g)\; \leq\; \delta
    \end{align*}
    holds with probability at least $ 1 - 2 m^{ -C_7 \log^3 n } $. 

Now when $\mb b$ is not nonnegative, let $\mb b = \mb b_+ - \mb b_-$, where $\mb b_+ = \begin{bmatrix} b_1^+,\cdots,b_m^+	\end{bmatrix}^\top , \; \mb b_- = \begin{bmatrix}b_1^-,\cdots,b_m^-\end{bmatrix}^\top\in \bb R_+^m$ are the nonnegative and nonpositive part of $\mb b$, respectively. Let $\mb \Lambda = \diag\paren{\mb b}$, $\mb \Lambda_+ = \diag\paren{\mb b_+}$ and $\mb \Lambda_- = \diag\paren{\mb b_-}$, we have
	\begin{align*}
		\mc L(\mb g) & = \sup_{ \mb w \in \bb {CS}^{n-1}  } \abs{\frac{1}{m} \mb w^* \mb R_{[1:n]} \mb C_{\mb g}^*\mb \Lambda \mb C_{\mb g} \mb R_{[1:n]}^\top \mb w - \frac{1}{m}\sum_{k=1}^m b_k  } \\
		&  \leq \underbrace{ \sup_{ \mb w \in \bb {CS}^{n-1}  } \abs{\frac{1}{m} \mb w^* \mb R_{[1:n]} \mb C_{\mb g}^*\mb \Lambda_+ \mb C_{\mb g} \mb R_{[1:n]}^\top \mb w - \frac{1}{m}\sum_{k=1}^m b_{k}^+  } }_{\mc L_+(\mb g)} 
		+ \underbrace{ \sup_{ \mb w \in \bb {CS}^{n-1}  } \abs{\frac{1}{m} \mb w^* \mb R_{[1:n]} \mb C_{\mb g}^*\mb \Lambda_- \mb C_{\mb g} \mb R_{[1:n]}^\top \mb w - \frac{1}{m}\sum_{k=1}^m b_{k}^-  } }_{\mc L_-(\mb g)}.
	\end{align*}
	Now since $\mb b_+,\mb b_- \in \bb R_+^m$, we can apply the results above for $\mc L_+(\mb g)$ and $\mc L_-(\mb g)$, respectively. Then by Minkowski's inequality, we have
	\begin{align*}
	  \norm{\mc L(\mb g)}{L^p} \leq  \norm{ \mc L_+(\mb g) }{L^p} +  \norm{ \mc L_-(\mb g_) }{L^p} \leq C_6 \norm{\mb b}{\infty} \paren{  \sqrt{\frac{n}{m}} \log^{3/2} n\log^{1/2}m +  \sqrt{\frac{n}{m}} \sqrt{p}  +  \frac{n}{m}  p  }
	\end{align*}
    for some constant $C_6>0$. The tail bound can be similarly derived from the moments bound. This completes the proof.
\end{proof}

The result above also implies the following result.
\begin{corollary}\label{cor:circ-spectral-norm}
   Let $\mb g \in \bb C^m$ be a random complex Gaussian vector with $\mb g \sim \mc {CN}(\mb 0, \sigma_g^2 \mb I)$, and let $\mb G = \mb R_{[1:n]} \mb C_{\mb g}^* \in \bb C^{n \times m}\;(n\leq m)$. Then for any integer $p\geq 1$, we have
   \begin{align*}
     \paren{ \bb E\brac{ \norm{\mb G}{}^p  }	 }^{1/p} \;\leq\; C_{\sigma_g^2} \sqrt{m} \paren{ 1+ \sqrt{ \frac{n}{m}  } \log^{3/2} n \log^{1/2} m  + \sqrt{ \frac{n}{m} } \sqrt{p}   }
   \end{align*}
   Moreover, for any $\epsilon\in (0,1)$, whenever $m \geq C\delta^{-2} n \log^4 n$ for some constant $C>0$, we have
   \begin{align*}
      (1-\delta) m \norm{\mb w}{}^2 \;\leq\; \norm{ \mb G^*\mb w}{}^2	\;\leq\; (1+\delta) m \norm{\mb w}{}^2
   \end{align*}
   holds for $\mb w \in \bb C^n$ with probability at least $1 - 2 m^{- c_{\sigma_g^2} \log^3 n }$. Here $c_{\sigma_g^2}, C_{\sigma_g^2}>0$ are some constants depending only on $\sigma_g^2$.

\end{corollary}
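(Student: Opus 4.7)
The plan is to deduce both claims directly from \Cref{thm:moments-norm-circ} by specializing to the all-ones weight vector. First, set $\mb b = \mb 1 \in \bb R^m$, so that $\sum_{k=1}^m b_k = m$ and $\norm{\mb b}{\infty}=1$. With this choice, the random operator of interest becomes
\begin{align*}
\mc L(\mb g) \;=\; \norm{\tfrac{1}{m}\mb R_{[1:n]} \mb C_{\mb g}^* \mb C_{\mb g} \mb R_{[1:n]}^\top - \mb I}{} \;=\; \norm{\tfrac{1}{m}\mb G \mb G^* - \mb I}{},
\end{align*}
since $\mb G = \mb R_{[1:n]}\mb C_{\mb g}^*$ implies $\mb G\mb G^* = \mb R_{[1:n]} \mb C_{\mb g}^* \mb C_{\mb g} \mb R_{[1:n]}^\top$. \Cref{thm:moments-norm-circ} therefore controls both the moments and the tail of $\mc L(\mb g)$ with $\norm{\mb b}{\infty}=1$.

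Next, I will establish the two-sided spectral bound. For any $\mb w \in \bb C^n$, observe $\norm{\mb G^*\mb w}{}^2 = \mb w^* \mb G\mb G^* \mb w$. The spectral bound $\mc L(\mb g)\le \delta$ immediately gives $\abs{ m^{-1}\norm{\mb G^*\mb w}{}^2 - \norm{\mb w}{}^2 } \le \delta \norm{\mb w}{}^2$, which rearranges to the claimed $(1-\delta)m\norm{\mb w}{}^2 \le \norm{\mb G^*\mb w}{}^2 \le (1+\delta)m\norm{\mb w}{}^2$. Applying the tail bound \eqref{eqn:quadratic-spectral-bound} from \Cref{thm:moments-norm-circ} with $\delta$ as chosen provides the probability statement $1 - 2m^{-c_{\sigma_g^2}\log^3 n}$ as soon as $m \ge C\delta^{-2} n \log^4 n$.

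For the moment bound on $\norm{\mb G}{}$, I will use the identity $\norm{\mb G}{}^2 = \norm{\mb G\mb G^*}{} \le m + m\,\mc L(\mb g)$ together with the elementary inequality $\sqrt{1+x}\le 1+\sqrt{x}$ for $x\ge 0$, which yields $\norm{\mb G}{} \le \sqrt{m} + \sqrt{m\,\mc L(\mb g)}$. Minkowski's inequality then gives
\begin{align*}
\norm{ \norm{\mb G}{} }{L^p} \;\le\; \sqrt{m} + \sqrt{m}\,\norm{\mc L(\mb g)^{1/2}}{L^p} \;=\; \sqrt{m} + \sqrt{m}\,\norm{\mc L(\mb g)}{L^{p/2}}^{1/2}.
\end{align*}
Plugging in the $L^{p/2}$ estimate of $\mc L(\mb g)$ from \Cref{thm:moments-norm-circ} with $\norm{\mb b}{\infty}=1$, and using $\sqrt{a+b+c}\le \sqrt{a}+\sqrt{b}+\sqrt{c}$, produces terms of the form $n^{1/4}m^{1/4}(\log n)^{3/4}(\log m)^{1/4}$, $p^{1/4}n^{1/4}m^{1/4}$, and $\sqrt{pn}$. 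Since $n \le m$, the first two are dominated by $\sqrt{m}\cdot\bigl(\sqrt{n/m}\log^{3/2}n\log^{1/2}m + \sqrt{pn/m}\bigr)$ up to an absolute constant, which matches the claimed bound after absorbing constants into $C_{\sigma_g^2}$.

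The proof is essentially routine once \Cref{thm:moments-norm-circ} is in hand; there is no genuine obstacle, and the only step requiring slight care is the bookkeeping in moment-to-moment conversion $\norm{\mc L(\mb g)^{1/2}}{L^p} = \norm{\mc L(\mb g)}{L^{p/2}}^{1/2}$ and the simplification of the resulting expression under $n\le m$ so that the bound takes the clean form stated in the corollary.
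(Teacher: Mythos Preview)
Your argument is correct, and the concentration half coincides with the paper's proof exactly: specialize \Cref{thm:moments-norm-circ} to $\mb b=\mb 1$ and read off the two-sided bound on $\norm{\mb G^*\mb w}{}^2$.

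For the moment bound on $\norm{\mb G}{}$, your route differs from the paper's. The paper observes that $\tfrac{1}{\sqrt m}\norm{\mb G}{}\le \sup_{\mb V_{\mb v}\in\mc V(\mb 1)}\norm{\mb V_{\mb v}\mb g}{}$ and applies the \emph{first} inequality of \Cref{thm:moments-bound} directly, which already controls $\norm{\sup_{\mb B}\norm{\mb B\mb\xi}{}}{L^p}$ by $\gamma_2+d_F+\sqrt{p}\,d_2$; this yields the claimed form without any square-root manipulations. You instead bound $\norm{\mb G}{}^2\le m(1+\mc L(\mb g))$, take a square root, and feed in the $L^{p/2}$ estimate of $\mc L(\mb g)$ from \Cref{thm:moments-norm-circ}. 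This works, but two points deserve a cleaner statement: (i) \Cref{thm:moments-norm-circ} is phrased for integer $p\ge 1$, so for odd $p$ (or $p=1$) you should invoke monotonicity of $L^q$ norms (or Jensen) rather than literally write $\norm{\mc L}{L^{p/2}}$; (ii) your claim that the intermediate terms $(n/m)^{1/4}\log^{3/4}n\log^{1/4}m$ and $p^{1/4}(n/m)^{1/4}$ are ``dominated by $\sqrt{n/m}\log^{3/2}n\log^{1/2}m+\sqrt{pn/m}$'' is not quite right as stated---each equals $\sqrt{x}$ for the corresponding $x$ in the target bound, so you need $\sqrt{x}\le 1+x$ and hence the constant term $1$ in the parenthesis absorbs part of them. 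This is harmless since the $1$ is already present, but the sentence should say ``dominated by $1+\sqrt{n/m}\log^{3/2}n\log^{1/2}m+\sqrt{pn/m}$.'' The paper's direct application of the first-order chaos bound avoids this bookkeeping entirely.
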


\begin{proof}
Firstly, notice that
\begin{align*}
  \norm{\mb G}{} = \sup_{ \mb w \in \bb {CS}^{n-1}, \mb r \in \bb {CS}^{m-1} } \abs{ \innerprod{ \mb w }{ \mb G \mb r } } \leq \sup_{ \mb w \in \bb {CS}^{n-1} } \norm{\mb G^* \mb w}{}& = \sup_{ \mb w \in \bb {CS}^{n-1} } \norm{ \mb C_{\mb g} \mb R_{[1:n]}^\top  \mb w}{} \\
  &= \sup_{ \mb v \in \bb {CS}^{m-1}, \supp(\mb v) \in [n]  } \norm{ \mb C_{\mb g} \mb v }{}.
\end{align*}
Thus, similar to the argument of \Cref{thm:moments-norm-circ}, let the set $\mc D$ and $\mc V(\mb 1)$ define as \eqref{eqn:set-D} and \eqref{eqn:set-V}, we have
\begin{align*}
      \frac{1}{\sqrt{m}} \norm{\mb G}{} \leq \sup_{\mb V_{\mb v} \in \mc V(\mb 1)} \norm{ \mb V_{\mb v} \mb g }{}.
\end{align*}
By Lemma \ref{lem:d_2-d_F} and Lemma \ref{lem:gamma-2}, we know that
    \begin{align*}
        d_F(\mc V(\mb 1)) \;\leq\; 1, \quad d_2(\mc V(\mb 1)) \;\leq\; \sqrt{\frac{n}{m}},\quad 
    	\gamma_2(\mc V(\mb 1),\norm{ \cdot }{}) \;\leq\; C_0 \sqrt{ \frac{n}{m}  } \log^{3/2} n \log^{1/2} m.
    \end{align*}
Thus, using \Cref{thm:moments-bound}, we obtain
    \begin{align*}
       \bb E\brac{ \abs{ \sup_{\mb V_{\mb v} \in \mc V(\mb 1) } \norm{ \mb V_{\mb v} \mb g }{} }^p	 }^{1/p} \;\leq\; C_{\sigma_g^2} \paren{  \sqrt{ \frac{n}{m}  } \log^{3/2} n \log^{1/2} m  +1 + \sqrt{ \frac{n}{m} } \sqrt{p}   },
    \end{align*}
where $C_{\sigma_g^2}>0$ is constant depending only on $\sigma_g^2$. The concentration inequality can be directly derived from Theorem \ref{thm:moments-norm-circ}, noticing that for any $\delta>0$, whenever $m \geq C_1\delta^{-2} n \log^4 n$ for some positive constant $C_1>0$, we have
\begin{align*}
   \sup_{ \mb w \in \bb {CS}^{n-1} }\; \abs{ \frac{1}{m} \paren{\mb w^* \mb G \mb G^*\mb w -1} } \;\leq\; \delta \quad \Longrightarrow \quad (1 - \delta) m \;\leq\; \sup_{ \mb w \in \bb {CS}^{n-1} } \norm{\mb G^*\mb w}{}^2 \;\leq\; (1+\delta) m 
\end{align*}
holds with probability at least $1 - 2 m^{- c_{\sigma_g^2} \log^3 n }$, where $c_{\sigma_g^2}>0$ is some constant depending only on $\sigma_g^2$.
\end{proof}

\subsection{Controlling the Moments of \boldmath{$T_2(g)$} }
\begin{theorem}\label{thm:moments-T-2}
Let $\mb g\in \bb C^m$ are a complex random Gaussian variable with $\mb g \sim \mc {CN}(\mb 0, \sigma_g^2 \mb I)$, and let
\begin{align*}
   \mc N(\mb g) \doteq \sup_{ \mb w \in \bb {CS}^{n-1} } \abs{ \frac{1}{m} \mb w^\top   \mb R_{[1:n]} \mb C_{\mb g}^\top  \diag\paren{\wt{ \mb b} } \mb C_{\mb g} \mb R_{[1:n]}^\top \mb w},  
\end{align*}
where $\wt{ \mb b} \in \bb C^m$.
Then whenever $m \geq Cn \log^4 n$ for some positive constant $C>0$, for any positive integer $p\geq 1$, we have
\begin{align*}
  	\norm{\mc N(\mb g) }{L^p}  \leq  C_{\sigma_g^2} \norm{ \wt{ \mb b} }{ \infty }\paren{  \sqrt{\frac{n}{m}}  \log^{3/2} n\log^{1/2}m +  \sqrt{\frac{n}{m}} \sqrt{p} +  \frac{n}{m} p  },
\end{align*}
where $C_{\sigma_g^2}$ is positive constant only depending on $\sigma_g^2$.
\end{theorem}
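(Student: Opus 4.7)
The plan is to mimic the proof of \Cref{thm:moments-norm-circ}, but to cope with the fact that the form $\mb w^\top \mb R_{[1:n]} \mb C_{\mb g}^\top \diag(\wt{\mb b}) \mb C_{\mb g} \mb R_{[1:n]}^\top \mb w$ is a complex-bilinear (rather than Hermitian) quadratic form in $\mb g$, which prevents a direct reduction to $\norm{\mb V\mb g}{}^2$ as in that proof. Instead, my strategy is to exploit the vanishing of the diagonal correction and to invoke a decoupling inequality before applying \Cref{thm:moments-bound}.

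First, by the convolution theorem used in \Cref{thm:moments-norm-circ}, I rewrite $\mb C_{\mb g}\mb R_{[1:n]}^\top \mb w = \sqrt{m}\,\mb V_{\mb v}\mb g$ with $\mb V_{\mb v}\in\mc V(\wt{\mb b})$ defined as in \eqref{eqn:set-V} and $\mb v = \mb R_{[1:n]}^\top \mb w \in \mc D$, obtaining
\[
\mc N(\mb g) \;=\; \sup_{\mb V\in\mc V(\wt{\mb b})} \abs{(\mb V\mb g)^\top(\mb V\mb g)} \;=\; \sup_{\mb V\in\mc V(\wt{\mb b})} \abs{\mb g^\top \mb V^\top \mb V\,\mb g}.
\]
The crucial observation is that since $\mb g\sim\mc{CN}(\mb 0,\sigma_g^2 \mb I)$ has $\bb E[g_i g_j]=0$ for every pair $(i,j)$, the chaos $\mb g^\top \mb V^\top \mb V\mb g$ is already centered for every $\mb V$; the ``diagonal correction'' that would appear in the real-Gaussian case is absent.

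Next, I would invoke a Kwap{i}eń-type decoupling inequality for Gaussian chaos of order two, extended uniformly to the family $\mc V(\wt{\mb b})$ (see, e.g., \cite{de1999decoupling}), to replace the chaos by its decoupled version driven by an independent copy $\mb g'\sim\mc{CN}(\mb 0,\sigma_g^2\mb I)$:
\[
\norm{\mc N(\mb g)}{L^p} \;\leq\; C\,\norm{\sup_{\mb V\in\mc V(\wt{\mb b})} \abs{\mb g^\top \mb V^\top \mb V\,\mb g'}}{L^p}.
\]
Conditionally on $\mb g$, the map $\mb V \mapsto \mb g^\top \mb V^\top \mb V\mb g'$ is a linear Gaussian process in $\mb g'$ indexed by $\mc V(\wt{\mb b})$. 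Applying the first inequality of \Cref{thm:moments-bound} (viewed with $1\times m$ ``matrices'' $\mb g^\top \mb V^\top \mb V$) gives a conditional $L^p$ bound in terms of $d_F$, $d_2$, and $\gamma_2$ of the auxiliary set $\{\mb V^\top\mb V\mb g : \mb V\in\mc V(\wt{\mb b})\}$; an outer $L^p$ in $\mb g$, controlled by a second application of \Cref{thm:moments-bound} to the original set $\mc V(\wt{\mb b})$, then converts these into the bounds on $d_F(\mc V(\wt{\mb b})),d_2(\mc V(\wt{\mb b})),\gamma_2(\mc V(\wt{\mb b}),\norm{\cdot}{})$.

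Finally, these geometric parameters are controlled by the same ingredients (Lemma \ref{lem:d_2-d_F} and Lemma \ref{lem:gamma-2}) used in \Cref{thm:moments-norm-circ}, with $\norm{\wt{\mb b}}{\infty}$ replacing $\norm{\mb b}{\infty}$; the absolute values of $\wt{\mb b}$ are all that enter since $\norm{\diag(\wt{\mb b})^{1/2}}{}=\norm{\wt{\mb b}}{\infty}^{1/2}$ regardless of phases. Plugging these bounds into the decoupled moment inequality yields precisely the three terms $\sqrt{n/m}\log^{3/2}n\log^{1/2}m$, $\sqrt{n/m}\sqrt{p}$, and $(n/m)p$, each scaled by $\norm{\wt{\mb b}}{\infty}$. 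The main obstacle I anticipate is the uniform decoupling step: the classical Kwap{i}eń decoupling is stated for a single quadratic form, and promoting it to a supremum over the uncountable family $\mc V(\wt{\mb b})$ requires either a chaining/covering argument or a direct appeal to the uniform decoupling results in \cite{de1999decoupling}. Once decoupling is in place, the two-step application of \Cref{thm:moments-bound} is routine but careful bookkeeping of constants.
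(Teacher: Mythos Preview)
Your proposal is correct and follows essentially the same route as the paper. Two remarks. First, the uniform decoupling you flag as the main obstacle is handled in the paper by a direct Jensen argument (Lemma~\ref{lem:decouple-V}): write $\mb g^1=\mb g+\mb\delta$, $\mb g^2=\mb g-\mb\delta$ with $\mb\delta$ an independent copy of $\mb g$, observe that $\bb E_{\mb\delta}[\langle\mb V\mb g^1,\overline{\mb V\mb g^2}\rangle]=\langle\mb V\mb g,\overline{\mb V\mb g}\rangle$ exactly, and then apply Jensen inside the supremum; no appeal to abstract Kwapie\'n-type results or covering is needed. Second, your ``two-step application of \Cref{thm:moments-bound}'' after decoupling is made precise in the paper via Lemmas~\ref{lem:V-1} and~\ref{lem:V-2}, which are the complex-bilinear analogues of Lemma~3.2 and Theorem~3.5/Lemma~3.6 in \cite{krahmer2014suprema}; your description of conditioning on one copy, applying a chaining bound, then taking an outer $L^p$ is exactly what those lemmas accomplish.
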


\begin{proof}
Let $\wt{\mb \Lambda} =\diag\paren{\wt{\mb b}} $, similar to the arguments of \Cref{thm:moments-norm-circ}, we have
\begin{align}\label{eqn:N-g}
   \mc N(\mb g) = \sup_{ \mb V_{\mb v} \in \mc V\paren{\wt{\mb b}} } 	\abs{ \innerprod{ \mb V_{\mb v}\mb g }{ \ol{\mb V_{\mb v}\mb g} } },
\end{align}
where $\mc V\paren{\wt{\mb b}}$ is defined as \eqref{eqn:set-V}. Let $\mb g'$ be an independent copy of $\mb g$, by Lemma \ref{lem:decouple-V}, for any integer $p\geq 1$ we have
\begin{align*}
   	\norm{ \mc N(\mb g)}{L^p} \leq 4 \norm{ \sup_{\mb V_{\mb v} \in \mc V\paren{\wt{\mb b}}} \abs{ \innerprod{ \mb V_{\mb v} \mb g }{ \ol{\mb V_{\mb v} \mb g'} } }  }{L^p}. 
\end{align*}
\edited{For convenience, let $\mc V =\mc V\paren{\wt{\mb b}} $.} By Lemma \ref{lem:V-1} and Lemma \ref{lem:V-2}, we know that
\begin{align*}
&\norm{ \sup_{  \mb V_{\mb v} \in \mc V } \abs{\innerprod{\mb V_{\mb v}\mb g }{ \ol{\mb V_{\mb v}\mb g'} } } }{L^p} \\
\leq\;& C_{\sigma_g^2} \brac{ \gamma_2\paren{ \mc V, \norm{\cdot }{} } \norm{ \sup_{ \mb V_{\mb v} \in \mc V } \norm{\mb V_{\mb v} \mb g' }{} }{L^p} + \sup_{\mb V_{\mb v} \in \mc V} \norm{ \innerprod{\mb V_{\mb v} \mb g }{ \ol{\mb V_{\mb v} \mb g'  } } }{L^p}} \\
\leq\;& C_{\sigma_g^2}'\brac{ \gamma_2\paren{\mc V, \norm{ \cdot }{} }\paren{ \gamma_2(\mc V, \norm{\cdot}{} ) + d_F(\mc V) } + \sqrt{p} d_2(\mc V) \paren{d_F(\mc V) + \gamma_2(\mc V, \norm{\cdot}{} ) } + p d_2^2(\mc V) }.
\end{align*}
By Lemma \ref{lem:d_2-d_F} and Lemma \ref{lem:gamma-2}, we know that
\begin{align*}
   d_F(\mc V) \leq \norm{ \wt{\mb b} }{ \infty }^{1/2},\; d_2(\mc V) \leq \sqrt{ \frac{n}{m} } \norm{ \wt{ \mb b} }{ \infty }^{1/2},\; \gamma_2(\mc V, \norm{ \cdot}{}) \leq C \sqrt{\frac{n}{m}} \norm{ \wt{ \mb b} }{\infty}^{1/2} \log^{3/2} n \log^{1/2} m,
\end{align*}
where $C>0$ is constant. Thus, combining the results above, we have
\begin{align*}
  	\norm{ \mc N(\mb g)}{L^p} \leq   C_{\sigma_g^2}'' \paren{  \sqrt{\frac{n}{m}} \norm{\wt{\mb b} }{ \infty} \log^{3/2} n\log^{1/2}m + \sqrt{p} \sqrt{\frac{n}{m}} \norm{ \wt{\mb b} }{\infty } + p \frac{n}{m} \norm{ \wt{ \mb b} }{ \infty }   },
\end{align*}
where $C_{\sigma_g^2}''>0$ is some constant depending on $\sigma_g^2$.
\end{proof}

\begin{lemma}\label{lem:decouple-V}
Let $\mc N(\mb g)$ be defined as \eqref{eqn:N-g}, and let $\mb g'$ be an independent copy of $\mb g$, then we have
\begin{align*}
   \norm{ \mc N(\mb g) }{L^p} \leq 4 \norm{ \sup_{\mb V_{\mb v} \in \mc V(\mb d)} \abs{ \innerprod{ \mb V_{\mb v} \mb g }{ \ol{\mb V_{\mb v} \mb g'} } }  }{L^p	},
\end{align*}
\edited{for some vector $\mb d \in \bb C^m $.}
\end{lemma}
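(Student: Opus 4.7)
The claim is a standard Gaussian decoupling inequality, and the plan is to reduce it to symmetrization after exploiting the vanishing pseudo-covariance of $\mc{CN}$ variables. First, I would put $\mc N(\mb g)$ into an honest quadratic-form shape. A direct expansion shows that for every $\mb V_{\mb v}\in \mc V(\mb d)$,
\begin{align*}
  \langle \mb V_{\mb v}\mb g,\overline{\mb V_{\mb v}\mb g}\rangle \;=\; \overline{\mb g^\top \mb M_{\mb v}\mb g},\qquad \mb M_{\mb v}\;\doteq\;\mb V_{\mb v}^\top \mb V_{\mb v},
\end{align*}
and $\mb M_{\mb v}$ is complex \emph{symmetric} (not Hermitian) because $(\mb V_{\mb v}^\top \mb V_{\mb v})^\top = \mb V_{\mb v}^\top \mb V_{\mb v}$. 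Hence $\mc N(\mb g)=\sup_{\mb V_{\mb v}\in \mc V(\mb d)}|\mb g^\top \mb M_{\mb v}\mb g|$ is a supremum of symmetric quadratic forms in $\mb g$.

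The key point is that for $\mb g \sim \mc{CN}(\mb 0,\sigma_g^2\mb I)$ the pseudo-covariance vanishes: writing $g_i = u_i + \im v_i$ with $u_i,v_i$ iid real Gaussian of variance $\sigma_g^2/2$, one has $\bb E[g_ig_j] = 0$ for \emph{all} $i,j$ (including $i=j$, since $\bb E[g_i^2] = \bb E[u_i^2] - \bb E[v_i^2] = 0$). Therefore $\bb E[\mb g^\top \mb M_{\mb v}\mb g] = 0$ uniformly in $\mb v$, and no trace correction needs to be tracked.

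From here the proof is the textbook symmetrization. Let $\mb g'$ be an independent copy of $\mb g$. Using $\bb E_{\mb g'}[\mb g'^\top \mb M_{\mb v}\mb g'] = 0$, I would write $\mb g^\top \mb M_{\mb v}\mb g = \bb E_{\mb g'}[\mb g^\top \mb M_{\mb v}\mb g - \mb g'^\top \mb M_{\mb v}\mb g']$, then apply Jensen's inequality inside the supremum and again in $L^p$ to obtain
\begin{align*}
  \bigl\|\sup_{\mb v}|\mb g^\top \mb M_{\mb v}\mb g|\bigr\|_{L^p(\mb g)} \;\leq\; \bigl\|\sup_{\mb v}|\mb g^\top \mb M_{\mb v}\mb g - \mb g'^\top \mb M_{\mb v}\mb g'|\bigr\|_{L^p(\mb g,\mb g')}.
\end{align*}
The symmetry of $\mb M_{\mb v}$ then gives the algebraic identity $\mb g^\top \mb M_{\mb v}\mb g - \mb g'^\top \mb M_{\mb v}\mb g' = (\mb g-\mb g')^\top \mb M_{\mb v}(\mb g+\mb g')$. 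Performing the orthogonal change of variables $\mb h_1 = (\mb g-\mb g')/\sqrt{2}$ and $\mb h_2 = (\mb g+\mb g')/\sqrt{2}$, rotation invariance of $\mc{CN}$ makes $\mb h_1,\mb h_2$ iid $\mc{CN}(\mb 0,\sigma_g^2\mb I)$, and the quantity becomes $2\,\mb h_1^\top \mb M_{\mb v}\mb h_2 = 2\overline{\langle \mb V_{\mb v}\mb h_1,\overline{\mb V_{\mb v}\mb h_2}\rangle}$. After relabelling $(\mb h_1,\mb h_2)\leftarrow (\mb g,\mb g')$, this yields the claim with constant $2$, which is stronger than the stated $4$; the slack in the constant comfortably absorbs the complex-conjugation and relabelling.

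There is no real obstacle here: the only two facts that must hold are the symmetry of $\mb M_{\mb v}$ (automatic from $\mb V^\top \mb V$) and $\bb E[g_ig_j]=0$ (automatic from the circular complex Gaussian law). Were $\mb g$ real Gaussian the diagonal terms $\bb E[g_i^2]=1$ would introduce an additional $\mathrm{tr}(\mb M_{\mb v})$ correction and a separate Khintchine-type bound would be needed; the complex assumption sidesteps this entirely.
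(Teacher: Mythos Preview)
Your proof is correct and is essentially the same argument as the paper's, just phrased in symmetrization language rather than recoupling language: the paper introduces an independent $\mb\delta\sim\mc{CN}(\mb 0,\sigma_g^2\mb I)$, observes that $\bb E_{\mb\delta}\bigl[\langle \mb V_{\mb v}(\mb g+\mb\delta),\overline{\mb V_{\mb v}(\mb g-\mb\delta)}\rangle\bigr]=\langle \mb V_{\mb v}\mb g,\overline{\mb V_{\mb v}\mb g}\rangle$ (using exactly the $\bb E[\delta_i\delta_j]=0$ fact you highlight), applies Jensen, and rescales $\mb g\pm\mb\delta$ by $1/\sqrt 2$. Your $\mb g'$ is their $\mb\delta$ and your polarization identity is their expansion of the decoupled form; both yield constant $2\le 4$.
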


\begin{proof}
Let $\mb \delta \sim \mc {CN}(\mb 0, \sigma_g^2 \mb I)$ which is independent of $\mb g$, and let 
\begin{align*}
   \mb g^1 = \mb g + \mb \delta,\qquad \mb g^2 = \mb g - \mb \delta,
\end{align*}
so that $\mb g^1$ and $\mb g^2$ are also independent with $\mb g^1, \mb g^2 \sim \mc {CN}(\mb 0, 2 \sigma_g^2 \mb I)$. Let $\mc Q_{dec}^{\mc N}(\mb g^1,\mb g^2) = \innerprod{\mb V_{\mb v} \mb g^1}{ \ol{\mb V_{\mb v} \mb g^2}  } $, then we have
\begin{align*}
   \bb E_{\mb \delta} \brac{ \mc Q_{dec}^{\mc N}(\mb g^1,\mb g^2)} = \innerprod{ \mb V_{\mb v} \mb g }{ \ol{  \mb V_{\mb v} \mb g }  }.
\end{align*}
Therefore, by Jensen's inequality, we have
\begin{align*}
   \norm{ \mc N(\mb g) }{L^p} = \paren{ \bb E_{\mb g}\brac{ \paren{\sup_{\mb V_{\mb v}\in \mc V(\mb d) } \abs{ \bb E_{\mb \delta} \brac{ \mc Q_{dec}^{\mc N}(\mb g^1,\mb g^2)} } }^p } }^{1/p} &\leq \paren{ \bb E_{\mb g^1,\mb g^2}\brac{ \paren{\sup_{\mb V_{\mb v}\in \mc V(\mb d)} \abs{  \mc Q_{dec}^{\mc N}(\mb g^1,\mb g^2)}  }^p } }^{1/p} \\
   &= 4 \norm{ \sup_{\mb V_{\mb v} \in \mc V(\mb d)} \abs{ \innerprod{ \mb V_{\mb v} \mb g }{ \ol{\mb V_{\mb v} \mb g'} } }  }{L^p},
\end{align*}
as desired.
\end{proof}

\begin{lemma}\label{lem:V-1}
Let $\mb g'$ be an independent copy of $\mb g$, for every integer $p\geq 1$, we have
\begin{align*}
 	\norm{ \sup_{ } \innerprod{\mb V_{\mb v}\mb g }{ \ol{ \mb V_{\mb v}\mb g' } } }{L^p} \leq C_{\sigma_g^2} \brac{ \gamma_2\paren{ \mc V(\mb d), \norm{\cdot}{} } \norm{ \sup_{ \mb V_{\mb v} \in \mc V(\mb d) } \norm{\mb V_{\mb v} \mb g' }{} }{L^p} + \sup_{\mb V_{\mb v} \in \mc V(\mb d)} \norm{ \innerprod{\mb V_{\mb v} \mb g }{ \ol{\mb V_{\mb v} \mb g'  } } }{L^p}},
\end{align*}
for some vector $\mb d \in \bb C^m $, where $C_{\sigma_g^2}>0$ is a constant depending only on $\sigma_g^2$.
\end{lemma}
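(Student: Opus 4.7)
The plan is to exploit the conditional Gaussianity: freeze $\mb g'$ first, apply a generic chaining bound to the (conditionally) sub-Gaussian process $X_{\mb V}(\mb g) \doteq \innerprod{\mb V \mb g}{\ol{\mb V \mb g'}}$ indexed by $\mb V \in \mc V(\mb d)$, and then take the $L^p$-norm over $\mb g'$ to recover the unconditional bound. This is the same template underlying \Cref{thm:moments-bound} (Theorem 3.5 of \cite{krahmer2014suprema}), so I would directly adapt that chaining machinery rather than rebuild it from scratch.

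First, fix $\mb g'$. Since each $X_{\mb V}(\mb g)$ is a complex-linear functional of the Gaussian vector $\mb g \sim \mc{CN}(\mb 0,\sigma_g^2 \mb I)$, any increment $X_{\mb V_1} - X_{\mb V_2}$ is conditionally sub-Gaussian with
\[
\norm{X_{\mb V_1} - X_{\mb V_2}}{\psi_2 \mid \mb g'} \;\lesssim_{\sigma_g^2}\; \norm{\mb V_1^* \ol{\mb V_1 \mb g'} - \mb V_2^* \ol{\mb V_2 \mb g'}}{}.
\]
Splitting the right-hand side via $\mb V_1^*\ol{\mb V_1 \mb g'} - \mb V_2^*\ol{\mb V_2 \mb g'} = (\mb V_1 - \mb V_2)^*\ol{\mb V_1 \mb g'} + \mb V_2^*\ol{(\mb V_1 - \mb V_2)\mb g'}$ and applying operator-norm bounds controls each piece by $\norm{\mb V_1 - \mb V_2}{}\cdot M(\mb g')$, where $M(\mb g') \doteq \sup_{\mb V \in \mc V(\mb d)} \norm{\mb V \mb g'}{}$. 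Thus the increments of the conditional process are governed by the $\norm{\cdot}{}$-metric on $\mc V(\mb d)$, modulated by the data-dependent scale $M(\mb g')$.

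Second, I would invoke the $L^p$-version of Talagrand's generic chaining bound for sub-Gaussian processes (Proposition 2.6 of \cite{krahmer2014suprema}), choosing an anchor $\mb V_0 \in \mc V(\mb d)$. Conditionally on $\mb g'$, this yields
\[
\paren{\bb E_{\mb g}\brac{\sup_{\mb V \in \mc V(\mb d)} \abs{X_{\mb V} - X_{\mb V_0}}^p}}^{1/p} \;\lesssim_{\sigma_g^2}\; \Bigl[\gamma_2\paren{\mc V(\mb d),\norm{\cdot}{}} \;+\; \sqrt{p}\, d_2(\mc V(\mb d))\Bigr]\cdot M(\mb g'),
\]
where the $\sqrt{p}\, d_2$ term is absorbed into $\gamma_2$ by the usual re-indexing of the admissible sequence. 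I would then raise both sides to the $p$-th power, take expectation over $\mb g'$, and use Minkowski's inequality to pull the deterministic factor $\gamma_2(\mc V(\mb d),\norm{\cdot}{})$ outside the $L^p$-norm, giving
\[
\norm{\sup_{\mb V \in \mc V(\mb d)} \abs{X_{\mb V} - X_{\mb V_0}}}{L^p} \;\lesssim_{\sigma_g^2}\; \gamma_2\paren{\mc V(\mb d),\norm{\cdot}{}} \cdot \norm{M(\mb g')}{L^p}.
\]
Finally, by the triangle inequality $\sup_{\mb V}\abs{X_{\mb V}} \le \sup_{\mb V}\abs{X_{\mb V} - X_{\mb V_0}} + \abs{X_{\mb V_0}}$ and noting $\norm{X_{\mb V_0}}{L^p} \le \sup_{\mb V \in \mc V(\mb d)} \norm{\innerprod{\mb V\mb g}{\ol{\mb V\mb g'}}}{L^p}$, the claim follows.

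The main obstacle is the $L^p$-chaining step: keeping the random factor $M(\mb g')$ \emph{inside} the supremum (so that the final right-hand side contains the sharp quantity $\norm{\sup_{\mb V}\norm{\mb V\mb g'}{}}{L^p}$ rather than the crude bound $\norm{\cdot}{L^p}$ of a worst-case deterministic quantity) requires a careful separation between the randomness in $\mb g$ and the randomness in $\mb g'$ throughout the chain. This is precisely the delicate bookkeeping carried out in Sections 2--3 of \cite{krahmer2014suprema}; I would invoke their statement rather than re-derive the chaining bound, since our conditionally sub-Gaussian process fits its hypotheses verbatim once the increment estimate above is established.
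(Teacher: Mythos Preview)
Your proposal is correct and follows the same route as the paper: the paper's own proof simply defers to Lemma~3.2 of \cite{krahmer2014suprema}, and what you have sketched---freeze $\mb g'$, chain the conditionally sub-Gaussian linear process $\mb V \mapsto \innerprod{\mb V\mb g}{\ol{\mb V\mb g'}}$ in the $\norm{\cdot}{}$-metric with random scale $M(\mb g') = \sup_{\mb V}\norm{\mb V\mb g'}{}$, anchor at a fixed $\mb V_0$, then take $L^p$ over $\mb g'$---is precisely the argument of that lemma. One small caution: in your increment estimate the second piece $\mb V_2^*\ol{(\mb V_1-\mb V_2)\mb g'}$ is not directly bounded by $\norm{\mb V_1-\mb V_2}{}\cdot M(\mb g')$, but the linear parametrization $\mb V_{\mb v_1}-\mb V_{\mb v_2}=\mb V_{\mb v_1-\mb v_2}$ lets you control $\norm{(\mb V_1-\mb V_2)\mb g'}{}$ by $M(\mb g')$ up to the right metric factor, which is exactly how the cited reference handles it.
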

\begin{proof}
The proof is similar to the proof of Lemma 3.2 of \cite{krahmer2014suprema}, and \edited{it is} omitted here.
\end{proof}

\begin{lemma}\label{lem:V-2}
Let $\mb g'$ be an independent copy of $\mb g$, for every integer $p\geq 1$, we have
\begin{align*}
   \norm{ \sup_{ \mb V_{\mb v} \in \mc V(\mb d) } \norm{\mb V_{\mb v} \mb g' }{} }{L^p} &\leq  C_{\sigma_g^2}\brac{ \gamma_2(\mc V(\mb d), \norm{\cdot}{}) + d_F(\mc V) + \sqrt{p} d_2(\mc V(\mb d)) } \\
\sup_{\mb V_{\mb v} \in \mc V(\mb d)} \norm{ \innerprod{ \mb V_{\mb v}\mb g}{ \ol{ \mb V_{\mb v} \mb g' } } }{L^p} &\leq C_{\sigma_g^2}\brac{  \sqrt{p} d_F(\mc V(\mb d)) \cdot d_2(\mc V(\mb d)) + p d_2^2(\mc V(\mb d)) },
\end{align*}
for some vector $\mb d \in \bb C^m $, where $C_{\sigma_g^2}>0$ is a constant depending only on $\sigma_g^2$.
\end{lemma}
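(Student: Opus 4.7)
\medskip
\noindent\textbf{Proof proposal for Lemma \ref{lem:V-2}.}
The statement contains two moment inequalities, and they are of rather different natures. The first is a \emph{uniform} moment bound over the set $\mc V(\mb d)$ (supremum inside $L^p$), while the second is a pointwise moment bound in $\mb V_{\mb v}$ (supremum outside $L^p$). My plan is to handle them separately: the first is a direct citation of the Krahmer--Mendelson--Rauhut machinery already quoted as Theorem \ref{thm:moments-bound}, and the second reduces to a standard Gaussian chaos estimate for a \emph{decoupled} bilinear form in two independent Gaussian vectors.

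For the first inequality, note that $\mb g'\sim\mc{CN}(\mb 0,\sigma_g^2\mb I)$ is (up to rescaling) a complex $\sigma_g^2$-subgaussian vector whose real and imaginary parts are independent real standard Gaussians. Applying the moment bound in Theorem \ref{thm:moments-bound} to the class $\mc B=\mc V(\mb d)$ with $\mb\xi=\mb g'$ immediately yields
\[
\norm{\sup_{\mb V_{\mb v}\in\mc V(\mb d)}\norm{\mb V_{\mb v}\mb g'}{}}{L^p}\;\leq\;C_{\sigma_g^2}\bigl[\gamma_2(\mc V(\mb d),\norm{\cdot}{})+d_F(\mc V(\mb d))+\sqrt{p}\,d_2(\mc V(\mb d))\bigr],
\]
which is the claimed bound. (The same passage from the real to the complex setting is used implicitly throughout Appendix \ref{app:moments-circulant-matrix}, e.g.\ in the proof of \Cref{thm:moments-norm-circ}, so no new work is needed here.)

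For the second inequality, fix $\mb V_{\mb v}\in\mc V(\mb d)$ and write
\[
\innerprod{\mb V_{\mb v}\mb g}{\ol{\mb V_{\mb v}\mb g'}}\;=\;\mb g^{*}\,\underbrace{\mb V_{\mb v}^{*}\ol{\mb V_{\mb v}}}_{\doteq\,\mb M}\,\ol{\mb g'}.
\]
Because $\mb g$ and $\mb g'$ are \emph{independent} complex Gaussian vectors, this is a fully decoupled Gaussian chaos of order two. The standard chaos (Hanson--Wright for decoupled quadratic forms) inequality then gives
\[
\norm{\mb g^{*}\mb M\,\ol{\mb g'}}{L^p}\;\leq\;C_{\sigma_g^2}\bigl(\sqrt{p}\,\norm{\mb M}{F}+p\,\norm{\mb M}{}\bigr).
\]
To finish, I bound the Frobenius and operator norms of $\mb M=\mb V_{\mb v}^{*}\ol{\mb V_{\mb v}}$ by submultiplicativity:
\[
\norm{\mb M}{F}\;\leq\;\norm{\mb V_{\mb v}^{*}}{}\,\norm{\ol{\mb V_{\mb v}}}{F}\;=\;\norm{\mb V_{\mb v}}{}\,\norm{\mb V_{\mb v}}{F}\;\leq\;d_2(\mc V(\mb d))\,d_F(\mc V(\mb d)),
\]
and similarly $\norm{\mb M}{}\leq\norm{\mb V_{\mb v}}{}^{2}\leq d_2(\mc V(\mb d))^{2}$. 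Since these bounds are uniform in $\mb V_{\mb v}\in\mc V(\mb d)$, taking the supremum over $\mb V_{\mb v}$ yields the claim.

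The routine but slightly delicate step is the second one: the cleanest form of the decoupled Gaussian chaos moment bound $\sqrt{p}\|A\|_F+p\|A\|$ is usually stated in the real setting, and I would invoke it by writing $\mb g,\mb g'$ in terms of their real/imaginary components so that the bilinear form becomes a real quadratic form in a $4m$-dimensional standard Gaussian; the $\|A\|_F,\|A\|$ quantities change only by universal constants under this unpacking. The $\gamma_2$ machinery in the first inequality is already used in the surrounding proofs, so no new ideas are needed there; the main conceptual point is simply that the decoupling step carried out in Lemma \ref{lem:decouple-V} has reduced the analysis to one quantity that is a standard suprema-of-chaos problem (first inequality) and another that is a fixed-matrix chaos problem (second inequality).
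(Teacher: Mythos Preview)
Your proposal is correct and matches the paper's approach: the paper itself omits the proof and simply points to Theorem 3.5 and Lemma 3.6 of \cite{krahmer2014suprema}, which are precisely the uniform moment bound (your first inequality) and the pointwise decoupled-chaos/Hanson--Wright bound (your second inequality) that you invoke. You have in fact written out more detail than the paper provides, and the submultiplicativity bounds $\norm{\mb V_{\mb v}^{*}\ol{\mb V_{\mb v}}}{F}\le d_2\cdot d_F$ and $\norm{\mb V_{\mb v}^{*}\ol{\mb V_{\mb v}}}{}\le d_2^{2}$ are the right way to finish.
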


\begin{proof}
The proof is similar to the proofs of Theorem 3.5 and Lemma 3.6 of \cite{krahmer2014suprema}, and \edited{it is} omitted here.
\end{proof}

\subsection{Auxiliary Results}
The following are the auxiliary results required in the main proof.
\begin{lemma}\label{lem:d_2-d_F}
Let the sets $\mc D,\; \mc V(\mb d)$ be defined as \eqref{eqn:set-D} and \eqref{eqn:set-V} for some $\mb d \in \bb C^m$, we have
	\begin{align*}
	   d_F(\mc V(\mb d)) \;\leq\; \norm{\mb d }{\infty}^{1/2},\qquad d_2(\mc V(\mb d)) \;\leq\; \sqrt{\frac{n}{m}} \norm{ \mb d }{ \infty }^{1/2}.
	\end{align*}
\end{lemma}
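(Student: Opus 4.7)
The plan is a direct computation leveraging the convolution-theorem representation of each $\mb V_{\mb v}$. Recall that for $\mb v \in \mc D$ (and any $\mb g\in \bb C^m$),
\[
\mb F_m^{-1}\diag(\mb F_m\mb v)\mb F_m\,\mb g \;=\; \mb v \conv \mb g \;=\; \mb C_{\mb v}\mb g,
\]
so $\mb V_{\mb v} = \frac{1}{\sqrt{m}}\diag(\mb d)^{1/2}\mb C_{\mb v}$. Both bounds then follow from properties of the (full) circulant matrix $\mb C_{\mb v}$: every row and every column of $\mb C_{\mb v}$ is a cyclic shift of $\mb v$ (or $\ol{\mb v}$), and hence has Euclidean norm equal to $\norm{\mb v}{}=1$.

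For the Frobenius bound, I would write
\[
\norm{\mb V_{\mb v}}{F}^2 \;=\; \tfrac{1}{m}\,\trace\!\paren{\mb C_{\mb v}^*\diag(\mb d)\mb C_{\mb v}}\;=\;\tfrac{1}{m}\sum_{i=1}^m d_i \sum_{j=1}^m \abs{[\mb C_{\mb v}]_{i,j}}^2 \;=\;\tfrac{1}{m}\sum_{i=1}^m d_i\,\norm{\mb v}{}^2 \;=\;\tfrac{1}{m}\norm{\mb d}{1},
\]
since the $i$-th row of $\mb C_{\mb v}$ is a cyclic shift of $\mb v$. Because $\tfrac{1}{m}\norm{\mb d}{1}\leq \norm{\mb d}{\infty}$, we get $\norm{\mb V_{\mb v}}{F}\leq \norm{\mb d}{\infty}^{1/2}$ uniformly in $\mb v\in\mc D$, yielding the first claim.

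For the spectral bound, submultiplicativity gives
\[
\norm{\mb V_{\mb v}}{} \;\leq\; \tfrac{1}{\sqrt{m}}\,\norm{\diag(\mb d)^{1/2}}{}\,\norm{\mb C_{\mb v}}{} \;=\; \tfrac{1}{\sqrt{m}}\,\norm{\mb d}{\infty}^{1/2}\,\norm{\mb C_{\mb v}}{}.
\]
The circulant $\mb C_{\mb v}$ is diagonalized by the unitary DFT, so its spectral norm equals $\max_k\abs{[\mb F_m\mb v]_k}=\norm{\mb F_m\mb v}{\infty}$. Using $\supp(\mb v)\subseteq[n]$ and $\norm{\mb v}{}=1$,
\[
\norm{\mb F_m\mb v}{\infty} \;\leq\; \norm{\mb v}{1} \;\leq\; \sqrt{n}\,\norm{\mb v}{2} \;=\; \sqrt{n},
\]
so $\norm{\mb V_{\mb v}}{} \leq \sqrt{n/m}\,\norm{\mb d}{\infty}^{1/2}$, completing the second claim. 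There is no real obstacle in this lemma — the only care needed is in unwinding the unnormalized-DFT conventions (so that the $1/\sqrt{m}$ factor in the definition of $\mb V_{\mb v}$ matches the $\mb F_m^{-1}\diag(\cdot)\mb F_m = \mb C_{\mb v}$ identity) and in observing that supports restricted to $[n]$ are what supply the $\sqrt{n}$ in the $d_2$ estimate.
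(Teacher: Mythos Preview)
Your proof is correct and follows essentially the same route as the paper: both identify $\mb V_{\mb v}$ as $\tfrac{1}{\sqrt m}\diag(\mb d)^{1/2}$ times a circulant, compute the Frobenius norm row-by-row as $\tfrac{1}{m}\norm{\mb d}{1}\le\norm{\mb d}{\infty}$, and bound the spectral norm via $\norm{\mb F_m\mb v}{\infty}\le\norm{\mb v}{1}\le\sqrt{n}$. One tiny slip: in the Frobenius computation the trace should carry $\diag(|\mb d|)$ rather than $\diag(\mb d)$ (since $\norm{\cdot}{F}^2=\trace(\mb V_{\mb v}^*\mb V_{\mb v})$ and $(\diag(\mb d)^{1/2})^*\diag(\mb d)^{1/2}=\diag(|\mb d|)$ for complex $\mb d$), which is what makes the identification with $\tfrac{1}{m}\norm{\mb d}{1}$ valid.
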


\begin{proof}
Since each row of $\mb V_{\mb v} \in \mc V(\mb d)$ consists of weighted shifted copies of $\mb v$, the $\ell_2$-norm of each nonzero row of $\mb V_{\mb v}$ is $m^{-1/2} \abs{d_k}^{1/2} \norm{\mb v}{}$. Thus, we have 
	\begin{align*}
		d_F(\mc V(\mb d)) \;=\; \sup_{\mb V_{\mb v} \in \mc V(\mb d) } \norm{\mb V_{\mb v}}{F} \leq  \norm{ \mb d }{\infty }^{1/2} \sup_{\mb v \in \mc D} \norm{\mb v}{} = \norm{\mb d}{ \infty }^{1/2}.
	\end{align*}
    Also, for every $\mb v \in \mc D$, we observe
	\begin{align*}
		\norm{\mb V_{\mb v} }{}  \leq  \frac{1}{ \sqrt{m} } \norm{ \mb d }{\infty }^{1/2} \norm{\diag\paren{ \mb F_m \mb v}  }{} = \frac{1}{\sqrt{m} }  \norm{ \mb d }{\infty }^{1/2} \norm{\mb F_m\mb v }{\infty } .
	\end{align*}
	It is obvious for any $\mb v \in \mc D $ that $\norm{\mb F_m\mb v}{\infty} \leq \norm{\mb v}{1} \leq \sqrt{n} \norm{\mb v}{2} = \sqrt{n} $, so that 
	\begin{align*}
			d_2(\mc V(\mb d)) \;=\;  \sup_{\mb v \in \mc D} \norm{\mb V_{\mb v} }{} \leq  \sqrt{ \frac{n}{m}} \norm{\mb d}{ \infty }^{1/2}.
	\end{align*}	
\end{proof}

\begin{lemma}\label{lem:gamma-2}
	Let the sets $\mc D,\; \mc V$ be defined as \eqref{eqn:set-D} and \eqref{eqn:set-V} for some $\mb d\in \bb C^m$, we have
	\begin{align*}
		\gamma_2(\mc V(\mb d), \norm{ \cdot }{}) \;\leq\; C \sqrt{ \frac{n}{m} } \norm{\mb d}{ \infty }^{1/2}  \log^{3/2}n \log^{1/2}m,
	\end{align*}
	\edited{where $\gamma_2(\cdot)$ is defined in Definition \ref{def:gamma-beta-functional}}.
\end{lemma}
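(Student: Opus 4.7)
The plan is to follow the strategy of Lemma~4.2 in \cite{krahmer2014suprema}, adapted to the weighted setting with the extra factor $\diag(\mb d)^{1/2}$. First I will upper bound $\gamma_2(\mc V(\mb d), \norm{\cdot}{})$ by Dudley's entropy integral
\begin{equation*}
  \gamma_2(\mc V(\mb d),\norm{\cdot}{}) \;\leq\; c \int_0^{d_2(\mc V(\mb d))} \log^{1/2}\mc N(\mc V(\mb d),\norm{\cdot}{},\epsilon)\,d\epsilon,
\end{equation*}
so everything reduces to controlling the covering numbers of $\mc V(\mb d)$ in operator norm.

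Next I would transfer the covering problem from $\mc V(\mb d)$ to the index set $\mc D$ via the Lipschitz bound
\begin{equation*}
 \norm{\mb V_{\mb v}-\mb V_{\mb v'}}{} \;\leq\; \frac{\norm{\mb d}{\infty}^{1/2}}{\sqrt{m}}\,\norm{\mb F_m(\mb v-\mb v')}{\infty},
\end{equation*}
which follows from $\norm{\mb F_m^{-1}\diag(\mb z)\mb F_m}{}=\norm{\mb z}{\infty}$ together with $\norm{\diag(\mb d)^{1/2}}{}=\norm{\mb d}{\infty}^{1/2}$. Hence it suffices to bound $\mc N(\mc D,\rho,\epsilon)$ where $\rho(\mb v,\mb v') = m^{-1/2}\norm{\mb d}{\infty}^{1/2}\norm{\mb F_m(\mb v-\mb v')}{\infty}$.

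I will then produce two complementary covering estimates for $\mc D$ and interpolate between them. At fine scales I use a volumetric/Euclidean bound: since $\norm{\mb F_m(\mb v-\mb v')}{\infty}\leq \norm{\mb v-\mb v'}{1}\leq\sqrt{n}\norm{\mb v-\mb v'}{}$, the standard $2n$-real-dimensional ball count gives
\begin{equation*}
\log\mc N(\mc D,\rho,\epsilon) \;\lesssim\; n\log\!\paren{1+\frac{c\sqrt{n/m}\,\norm{\mb d}{\infty}^{1/2}}{\epsilon}}.
\end{equation*}
At coarser scales I use a Maurey-type empirical approximation (as in \cite[Lemma~4.2]{krahmer2014suprema}): for any $\mb v\in\mc D$ one can approximate $\mb v$ by a sparse random combination of $\pm\mb e_k$'s, and then bound the supremum of the $\mb F_m$-image via a Bernstein/maximal inequality over $m$ rows (each row of $\mb F_m$ has unit-modulus entries, so $\norm{\mb F_m\mb e_k}{\infty}=1$), yielding
\begin{equation*}
\log\mc N(\mc D,\rho,\epsilon)\;\lesssim\; \frac{\norm{\mb d}{\infty}\log n\,\log m}{m\,\epsilon^2}.
\end{equation*}

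Finally I insert both estimates into Dudley's integral, splitting at the crossover scale where the two bounds coincide (which is of order $\sqrt{\norm{\mb d}{\infty}\log n\log m/(mn)}$ up to constants), and use $d_2(\mc V(\mb d))\leq\sqrt{n/m}\,\norm{\mb d}{\infty}^{1/2}$ from Lemma~\ref{lem:d_2-d_F} as the upper limit. A direct computation then gives
\begin{equation*}
\gamma_2(\mc V(\mb d),\norm{\cdot}{})\;\lesssim\; \sqrt{\frac{n}{m}}\,\norm{\mb d}{\infty}^{1/2}\log^{3/2}n\,\log^{1/2}m,
\end{equation*}
which is exactly the claim. The main technical obstacle will be rigorously establishing the Maurey-type entropy estimate at the coarse scales in a way that correctly tracks the $\norm{\mb d}{\infty}^{1/2}$ dependence; the rest is essentially a careful Dudley integration already performed for the unweighted case in \cite{krahmer2014suprema}, and the weight $\diag(\mb d)^{1/2}$ only contributes multiplicatively through the Lipschitz constant above.
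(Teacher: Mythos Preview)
Your approach is essentially the same as the paper's: Dudley's integral, the Lipschitz transfer from $\mc V(\mb d)$ to $\mc D$ via the seminorm $\rho$, a volumetric bound at fine scales, a Maurey empirical-approximation bound at coarse scales, and a split integration. The paper carries out exactly this program, with Lemma~\ref{lem:gamma-2-aux} providing the Maurey step.

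There is, however, a slip in your coarse-scale estimate. The Maurey argument you describe yields the bound
\[
\log\mc N\bigl(\mc B_{\norm{\cdot}{1}^*}^{[n]},\norm{\mb F_m\cdot}{\infty},u\bigr)\;\lesssim\;u^{-2}\log n\,\log m
\]
for the unit $\ell_1$ ball (this is the content of Lemma~\ref{lem:gamma-2-aux}). To pass from $\mc B_{\norm{\cdot}{1}^*}^{[n]}$ to the unit $\ell_2$ sphere $\mc D$ you need $\mc D\subset\sqrt{2n}\,\mc B_{\norm{\cdot}{1}^*}^{[n]}$, which rescales $u\to u/\sqrt{2n}$ and therefore inserts an extra factor of $n$:
\[
\log\mc N(\mc D,\rho,\epsilon)\;\lesssim\;\frac{n\,\norm{\mb d}{\infty}\,\log n\,\log m}{m\,\epsilon^2}.
\]
Your stated bound omits this $n$. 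If one integrates Dudley with your version, the output is $m^{-1/2}\norm{\mb d}{\infty}^{1/2}\log^{3/2}n\,\log^{1/2}m$, a factor $\sqrt{n}$ stronger than the lemma claims; so the final answer you wrote does not actually follow from the ingredients you listed. Once the missing $n$ is restored, the integration goes through and matches the statement. (The paper splits at $\kappa=\norm{\mb d}{\infty}^{1/2}/\sqrt m$ rather than at the crossover of the two entropy bounds, but that choice is immaterial.)
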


\begin{proof}
	By Definition \ref{def:gamma-beta-functional}, we know that
	\begin{align*}
		\gamma_2\paren{\mc V(\mb d), \norm{\cdot }{} } \leq C \int_0^{d_2(\mc V) } \log^{1/2}\mc N\paren{ \mc V(\mb d), \norm{\cdot }{}, \epsilon } d \epsilon,
	\end{align*}
	for some constant $C>0$, where the right hand side is known as the ``Dudley integral''. To estimate the covering number $\mc N\paren{ \mc V(\mb d), \norm{\cdot }{}, \epsilon }$, we know that for any $\mb v,\;\mb v' \in \mc D $,
	\begin{align}
		\norm{\mb V_{\mb v} - \mb V_{\mb v'} }{} = \norm{ \mb V_{\mb v - \mb v'} }{} \leq \frac{1}{ \sqrt{m} } \norm{ \diag\paren{\mb d}^{1/2}}{} \norm{ \mb F_m(\mb v - \mb v') }{\infty } \leq \frac{1}{ \sqrt{m} } \norm{\mb d }{\infty }^{1/2} \norm{ \mb F_m(\mb v - \mb v') }{\infty }.
	\end{align}
    Let $ \norm{\mb v}{\wh{\infty} } \doteq \norm{\mb F_m \mb v}{\infty}$ that $ \norm{\mb v}{\wh{\infty} } \leq \norm{ \mb v }{1} $, we have $\mc N\paren{ \mc V(\mb d), \norm{\cdot }{}, \epsilon} \leq \mc N\paren{ \mc D , m^{-1/2} \norm{\mb d }{\infty }^{1/2} \norm{\cdot }{\wh{\infty} },\epsilon  } $. Next, we bound the covering number $\mc N\paren{ \mc D , m^{-1/2} \norm{\mb d }{\infty }^{1/2} \norm{\cdot }{\wh{\infty} },\epsilon  }$ when $\epsilon$ is small and large, respectively.

    When $\epsilon$ is small (i.e., $\epsilon \leq \mc O(1/\sqrt{m})$), let $\mc B_1^{[n]} = \Brac{ \mb v \in \bb C^m:\; \norm{\mb v}{1} \leq 1,\supp \mb v \in \brac{n} } $, then it is obvious that $\mc D \subseteq  \sqrt{n} \mc B_1^{[n]}  $. By Proposition 10.1 of \cite{rauhut2010compressive}, we have
    \begin{align*}
    	\mc N\paren{ \mc D , m^{-1/2}\norm{\mb d }{\infty }^{1/2} \norm{\cdot }{\wh{\infty} },\epsilon  } &\leq \mc N\paren{ \sqrt{n} \mc B_1^{[n]}, m^{-1/2} \norm{\mb d }{\infty }^{1/2}\norm{\cdot }{1}, \epsilon } \\
    	 &\leq \mc N(  \mc B_1^{[n]}, \norm{\cdot }{1},  \norm{\mb d }{\infty }^{-1/2}\sqrt{ \frac{m}{n} } \epsilon ) \leq \paren{ 1+ \frac{2\sqrt{n} \norm{\mb d }{\infty }^{1/2} }{ \sqrt{m} \epsilon} }^n.
    \end{align*}
    Thus, we have
    \begin{align*}
    	\log \mc N\paren{ \mc D , m^{-1/2} \norm{\mb d}{\infty }^{1/2} \norm{\cdot }{\wh{\infty} },\epsilon  } \leq n \log \paren{  1+ \frac{2\sqrt{n}  \norm{\mb d }{\infty }^{1/2} }{ \sqrt{m} \epsilon} }.
    \end{align*}
   If the scalar $\epsilon$ is large, let us introduce a norm
    \begin{align}\label{eqn:1-norm-star}
    	\norm{\mb v}{1}^* = \sum_{k=1}^m \abs{ \Re(v_k)} + \abs{ \Im(v_k)},\quad \forall\; \mb v \in \bb C^m,
    \end{align}
    which is the usual $\ell_1$-norm after identification of $\bb C^m$ with $\bb R^{2m}$. Let $\mc B_{ \norm{\cdot }{1}^* }^{[n]} = \Brac{ \mb v \in \bb C^m:\; \norm{\mb v}{1}^* \leq 1,\supp (\mb v) \in \brac{n} } $, then we have $\mc D \subseteq  \sqrt{2n} \mc B_{ \norm{\cdot }{1}^* }^{[n]}$. By Lemma \ref{lem:gamma-2-aux}, we obtain
    \begin{align*}
    	\log \mc N\paren{ \mc D , m^{-1/2} \norm{\mb d}{\infty }^{1/2} \norm{\cdot }{\wh{\infty} },\epsilon } \leq \log \mc N\paren{ \mc B_{ \norm{\cdot }{1}^* }^{[n]},  \norm{\cdot }{\wh{\infty} }, \frac{\sqrt{m}}{ \sqrt{2n} } \norm{\mb d }{\infty }^{-1/2} \epsilon } \leq \frac{Cn}{m \epsilon^2 } \norm{\mb d }{\infty } \log m \log n.
    \end{align*}
    Finally, we combine the results above to estimate the ``Dudley integral'',
    \begin{align*}
    	\mc I & \doteq  \int_0^{d_2(\mc V(\mb d)) } \log^{1/2}\mc N\paren{ \mc V, \norm{\cdot }{}, \epsilon } d \epsilon \\
    	&\leq \sqrt{n} \int_0^{ \kappa }  \log^{1/2}\paren{ 1+2\sqrt{ \frac{n}{m} } \frac{ \norm{\mb d }{\infty }^{1/2} }{\epsilon} } d\epsilon + C\sqrt{\frac{n}{m} \norm{\mb d }{\infty } \log m \log n}  \int_\kappa^{ \sqrt{\frac{n}{m}}  \norm{\mb v }{\infty }^{1/2}  }  \epsilon^{-1} d \epsilon \\
    	& \leq  \frac{2n}{\sqrt{m} } \norm{\mb d}{\infty}^{1/2}  \int_0^{\frac{ \kappa}{2  } \norm{\mb d}{\infty}^{-1/2}\sqrt{ \frac{m}{n} }  } \log^{1/2}\paren{ 1+t^{-1} } dt  + C\sqrt{\frac{n}{m} \norm{\mb d}{\infty} \log m \log n}  \log \paren{ \sqrt{ \frac{n}{m}} \norm{\mb d}{\infty}^{1/2}  / \kappa } \\
    	& \leq  \kappa \sqrt{n} \sqrt{ \log\paren{e\paren{1+ \frac{2}{\kappa} \norm{\mb d}{\infty }^{1/2} \sqrt{ \frac{n}{m} } } } } + C\sqrt{\frac{n}{m} \norm{\mb d }{\infty } \log m \log n}  \log \paren{ \sqrt{ \frac{n}{m}} \norm{\mb d}{\infty}^{1/2}  / \kappa },
    \end{align*}
    where the last inequality we used Lemma 10.3 of \cite{rauhut2010compressive}. Choose $\kappa = \frac{ \norm{\mb d}{ \infty }^{1/2} }{ \sqrt{m} }$, we obtain the desired result.
\end{proof}

\begin{lemma}\label{lem:gamma-2-aux}
	Let $\mc B_{ \norm{\cdot }{1}^* }^{[n]} = \Brac{ \mb v \in \bb C^m:\; \norm{\mb v}{1}^* \leq 1,\;\supp(\mb v)\in \brac{n} } $, and $\norm{\cdot }{1}^*$ is defined in \eqref{eqn:1-norm-star}, we have
	\begin{align}
		\log \mc N \paren{\mc B_{ \norm{\cdot }{1}^* }^{[n]}, \norm{\cdot }{\wh{\infty} }, \epsilon } \leq \frac{C}{\epsilon^2} \log m \log n
	\end{align}
	for some constant $C>0$, where the norm $\norm{\mb v}{\wh{\infty} } = \norm{\mb F_m \mb v}{\infty} $.
\end{lemma}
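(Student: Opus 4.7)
The plan is to prove this covering number bound using the \emph{Maurey empirical method}, which is the standard tool for controlling covering numbers of $\ell_1$-type balls in norms with good concentration properties. The norm $\|\mb v\|_{\wh{\infty}} = \|\mb F_m \mb v\|_{\infty}$ is a supremum over $m$ linear functionals $\mb v \mapsto \innerprod{\mb f_k}{\mb v}$ where $\mb f_k$ are rows of $\mb F_m$ with $|(\mb f_k)_j| = 1$ for all $j$, and this will let us combine a sub-Gaussian tail estimate with a union bound over these $m$ functionals.

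First, I would identify the extreme points of $\mc B_{\|\cdot\|_1^*}^{[n]}$. Since $\|\mb v\|_1^* = \sum_{k=1}^m (|\Re(v_k)| + |\Im(v_k)|)$ is the $\ell_1$-norm on $\mb R^{2m}$ after identifying $\bb C^m \cong \bb R^{2m}$, and the support is restricted to $[n]$, the extreme points are the $4n$ vectors $\mc E = \{\pm \mb e_j, \pm \im \mb e_j : j \in [n]\}$. Hence any $\mb v \in \mc B_{\|\cdot\|_1^*}^{[n]}$ can be written as a convex combination $\mb v = \sum_{\mb u \in \mc E} \lambda_{\mb u} \mb u$ with $\lambda_{\mb u} \geq 0$ and $\sum_{\mb u} \lambda_{\mb u} = 1$ (up to absorbing the norm into the probability weights, we may assume $\|\mb v\|_1^* = 1$).

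Second, interpret $\mb v$ as an expectation: let $Z$ be a random vector taking value $\mb u \in \mc E$ with probability $\lambda_{\mb u}$, so $\bb E[Z] = \mb v$ and $\|Z\|_{\wh{\infty}} = \|\mb F_m Z\|_{\infty} \leq 1$ almost surely (since the Fourier matrix has unit-modulus entries and $Z$ is a signed standard basis vector). Draw $N$ i.i.d.\ copies $Z_1, \dots, Z_N$ and set $\wh{\mb v} = \frac{1}{N} \sum_{i=1}^N Z_i$. For each fixed $k \in [m]$, the scalar $[\mb F_m \wh{\mb v}]_k$ is a mean of $N$ i.i.d.\ bounded complex random variables with mean $[\mb F_m \mb v]_k$, so by Hoeffding's inequality (applied separately to real and imaginary parts),
\[
  \bb P\paren{\abs{[\mb F_m(\wh{\mb v} - \mb v)]_k} \geq \eps} \;\leq\; 4\exp\paren{-c N \eps^2}.
\]
Taking a union bound over $k \in [m]$, as long as $N \geq C \eps^{-2} \log m$ there exists at least one realization with $\|\wh{\mb v} - \mb v\|_{\wh{\infty}} \leq \eps$. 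The set of possible values of $\wh{\mb v}$ is $\{\frac{1}{N}\sum_{i=1}^N \mb u_i : \mb u_i \in \mc E\}$, whose cardinality is at most $(4n)^N$. Therefore
\[
  \log \mc N\paren{\mc B_{\|\cdot\|_1^*}^{[n]}, \|\cdot\|_{\wh{\infty}}, \eps} \;\leq\; N \log(4n) \;\leq\; \frac{C}{\eps^2} \log m \log n,
\]
which is the claim.

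The only real subtlety is checking that Hoeffding's inequality applies in the complex setting with the correct constant, and verifying that $|(\mb F_m)_{k,j}| = 1$ so the summands are uniformly bounded by $1$ in modulus; both are routine. No step is a significant obstacle because the extreme-point structure and unit-modulus property of the Fourier matrix align perfectly with Maurey's method; this is in fact the same argument used in Proposition 10.1 / Lemma 10.3 of \cite{rauhut2010compressive} which is already invoked above.
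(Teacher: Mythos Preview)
Your proposal is correct and follows essentially the same approach as the paper: both use Maurey's empirical method, writing $\mb v$ as the expectation of a random vector supported on the $4n$ extreme points $\{\pm \mb e_j, \pm \im \mb e_j : j\in[n]\}$, then approximating by an $N$-sample average and counting the $(4n)^N$ (or $(4n+1)^N$) possible outcomes. The only cosmetic difference is that the paper bounds $\bb E\|\mb z_S - \mb v\|_{\wh{\infty}}$ via symmetrization and a Rademacher Hoeffding bound on each coordinate, whereas you go directly through a Hoeffding tail bound plus a union bound over the $m$ rows of $\mb F_m$; both routes yield $N \asymp \eps^{-2}\log m$ and hence the same conclusion.
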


\begin{proof}
    Let $\mc U = \Brac{ \pm  \mb e_1, \cdots, \pm  \mb e_n,\pm \im \mb e_1, \cdots, \pm \im \mb e_n }$, it is obvious that $\mc B_{ \norm{\cdot }{1}^* }^{[n]} \subseteq \mathrm{conv}( \mc U)$, where $\mathrm{conv}(\mc U)$ denotes the convex hull of $\mc U$.
	\edited{Fix any $\mb v\in \mc U$, the idea is to approximate $\mb v$ by a finite set of very sparse vectors.} We define a random vector
	\begin{align*}
	    \mb z = \begin{cases}
	    	\sign\paren{ \Re (v_j) } \mb e_j, & \text{with prob.}\;\abs{\Re\paren{w_j} },\; 1\leq j\leq n\\
	    	\sign\paren{ \Im (v_j) } \mb e_j, & \text{with prob.}\;\abs{\Im\paren{w_j} },\; 1\leq j\leq n\\
	    	\mb 0, & \text{with prob.}\;1 - \norm{\mb w}{1}^*.
	    \end{cases}
	\end{align*}
    Since $\norm{\mb v}{1}^* \leq 1$, this is a valid probability distribution with $\bb E\brac{\mb z} = \mb v $. Let $\mb z_1, \cdots,\mb z_L$ be independent copies of $\mb z$, where $L$ is a number to be determined later. We attempt to approximate $\mb v$ with a $L$-sparse vector
    \begin{align*}
    	\mb z_S = \frac{1}{L} \sum_{k=1}^L \mb z_k.
    \end{align*}
    \edited{By using a classical symmetrization argument (e.g., see Lemma 6.7 of \cite{rauhut2010compressive})}, we obtain
    \begin{align*}
    	\bb E \brac{\norm{\mb z_S - \mb v }{ \wh{\infty} } }= \bb E \brac{ \norm{ \frac{1}{L}\sum_{k=1}^L \paren{\mb z_k - \bb E \brac{ \mb z_k}  } }{ \wh{\infty} }  } &\leq \frac{2}{L} \bb E\brac{ \norm{\sum_{k=1}^L \varepsilon_k \mb z_k }{ \wh{\infty} } } \\
    	= \frac{2}{L} \bb E \brac{\max_{\ell \in [m]} \abs{ \sum_{k=1}^L \varepsilon_k \innerprod{\mb f_\ell }{\mb z_k} } }
    \end{align*}
    where $\mb \varepsilon =\brac{\varepsilon_1, \cdots, \varepsilon_L }^* $ is a Rademacher vector, independent of $\Brac{\mb z_k }_{k=1}^L$. Fix a realization of $\Brac{\mb z_k }_{k=1}^L$, by applying the Hoeffding's inequality to $\mb \epsilon$, we obtain
    \begin{align*}
    	\bb P_{\mb \varepsilon} \paren{  \abs{ \sum_{k=1}^L \varepsilon_k \innerprod{\mb f_\ell }{\mb z_k} } \geq \sqrt{L} t } \leq \bb P_{\mb \varepsilon} \paren{  \abs{ \sum_{k=1}^L \varepsilon_k \innerprod{\mb f_\ell }{\mb z_k} } \geq \norm{ \sum_{k=1}^L \innerprod{\mb f_\ell}{\mb z_k}  }{}  t } \leq 2\exp\paren{- t^2/2 }
    \end{align*}
    for all $t>0$ and $\ell \in [m]$. Thus, by combining the result above with Lemma 6.6 of \cite{rauhut2010compressive}, it implies that
    \begin{align*}
        	\bb E \brac{ \max_{\ell \in [m]} \ \abs{ \sum_{k=1}^L \varepsilon_k \innerprod{\mb f_\ell }{\mb z_k} } } \leq C \sqrt{L \log(8m) },
    \end{align*}
    with $C = \sqrt{2} + \paren{4\sqrt{2} \log 8}^{-1}<1.5 $. By Fubini's theorem, we obtain
    \begin{align}\label{eqn:z-Lip}
    	\bb E \brac{\norm{\mb z_S - \mb v }{ \wh{\infty} } } \leq \frac{2}{L} \bb E_{\mb z} \bb E_{\mb \varepsilon} \brac{ \max_{\ell \in [m]} \abs{ \sum_{k=1}^L \varepsilon_k \innerprod{\mb f_\ell }{\mb z_k} } } \leq \frac{3}{ \sqrt{L} } \sqrt{ \log(8m)  }.
    \end{align}
    This implies that there exists a vector $\mb z_S = \frac{1}{L} \sum_{k=1}^L \mb z_k$ where each $\mb z_k \in \mc U$ such that $\norm{\mb z_S - \mb v }{ \wh{\infty} } \leq \frac{3}{ \sqrt{L} } \sqrt{ \log(8m)  } $. Since each $\mb z_k$ can take $4n+1$ values, so that $\mb z_S$ can take at most $(4n+1)^L$ values. And for each $\mb v \in \mathrm{conv}(\mc U)$, according to \eqref{eqn:z-Lip}, we can therefore find a vector $\mb z_S$ such that $\norm{\mb v - \mb z_S }{\wh{\infty} } \leq \epsilon $ with the choice $  L \leq \lfloor \frac{9 }{\epsilon^2} \log (10m) \rfloor $. Thus, we have
    \begin{align*}
    			\log \mc N \paren{\mc B_{ \norm{\cdot }{1}^* }^{[n]}, \norm{\cdot }{\wh{\infty} }, \epsilon } \leq \log \mc N \paren{\mathrm{conv}(\mc U), \norm{\cdot }{\wh{\infty} }, \epsilon } \leq L \log (4n+1) \leq \frac{9}{\epsilon^2} \log (10m) \log(4n+1)
    \end{align*}
    as desired.
\end{proof}

\section{Concentration via Decoupling}\label{app:decoupling}

In this section, we assume that $\norm{\mb x}{}=1$, and we develop concentration inequalities for the following quantities
\begin{align}
	\mb Y(\mb g) \;&=\; \frac{1}{m} \mb R_{[1:n]} \mb C_{\mb g}^* \diag\paren{\abs{ \mb g \conv \mb x }^2 }  \mb C_{\mb g} \mb R_{[1:n]}^\top, \label{eqn:Y-def} \\
	\mb M(\mb g) \;&=\; \frac{2\sigma^2 +1 }{m} \mb R_{[1:n]} \mb C_{\mb g}^* \diag\paren{ \zeta_{\sigma^2} (  \mb g \conv \mb x  )  } \mb C_{\mb g} \mb R_{[1:n]}^\top, \label{eqn:M-def}
\end{align}
via the decoupling technique and moments control, where $\zeta_{\sigma^2}(\cdot)$ is defined in \eqref{eqn:xi-zeta} and $\sigma^2>1/2$. Suppose $\mb g\in \bb C^m $ is complex Gaussian random variable $\mb g \sim \mc {CN}(\mb 0,\mb I)$. Once all the moments are bounded, it is easy to turn the moment bounds into a tail bound via Lemma \ref{lem:moments-tail-bound} and Lemma \ref{lem:moments-tail-bound-1}. To bound the moments, we use the decoupling technique developed in \cite{arcones1993decoupling,de1999decoupling,krahmer2014suprema}. The basic idea is to decouple the terms above into terms like
\begin{align}
\mc Q_{dec}^{\mb Y}(\mb g^1, \mb g^2) \;&=\; \frac{1}{m} \mb R_{[1:n]}\mb C_{\mb g^1}^* \diag \paren{ \abs{\mb g^2 \conv \mb x}^2  } \mb C_{\mb g^1} \mb R_{[1:n]}^\top  , \label{eqn:decoupled-Y} 	\\
\mc Q_{dec}^{\mb M}(\mb g^1, \mb g^2) \;&=\; \frac{1+ 2\sigma^2 }{m} \mb R_{[1:n]} \mb C_{\mb g^1}^* \diag \paren{  \eta_{\sigma^2 }\paren{  \mb g^2 \conv \mb x  } }  \mb C_{\mb g^1} \mb R_{[1:n]}^\top,\label{eqn:decoupled-M}
\end{align}
where $\eta_{\sigma^2} (t) = 1-  2\pi \sigma^2 \xi_{\sigma^2 -\frac{1}{2} }(t)$, and $\mb g^1$ and $\mb g^2$ are two independent random variables with
\begin{align}\label{eqn:g1-g2}
\mb g^1 \;=\; \mb g+ \mb \delta, \qquad \mb g^2 \;=\; \mb g - \mb \delta,	
\end{align}
where $\mb \delta \sim \mc {CN}(\mb 0,\mb I)$ is an independent copy of $\mb g$. As we discussed in \Cref{sec:analysis}, it turns out that controlling the moments of the decoupled terms $\mc Q_{dec}^{\mb Y}(\mb g^1, \mb g^2)$ and $\mc Q_{dec}^{\mb M}(\mb g^1, \mb g^2) $ for convolutional random matrices is easier and sufficient for providing the tail bound of $\mb Y$ and $\mb M$. The detailed results and proofs are described in the following subsections.

\subsection{Concentration of \boldmath{$ Y(g)$} }
In this subsection, we show that
\begin{theorem}\label{thm:concentration-Y}
   	Let $\mb g \sim \mc {CN}(\mb 0,\mb I)$, and let $\mb Y(\mb g)$ be defined as \eqref{eqn:Y-def}. For any $\delta>0$, when $m \geq C\delta^{-2} n \log^7 n$, we have
   	\begin{align*}
   	   \norm{\mb Y(\mb g) - \mb x \mb x^* - \mb I }{} \leq \delta,	
   	\end{align*}
    holds with probability at least $1 -2m^{-c}$. 
\end{theorem}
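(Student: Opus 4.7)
The target $\mb Y(\mb g)$ is a \emph{quartic} polynomial in the Gaussian vector $\mb g$ (quadratic in the outer circulant factors, quadratic in the diagonal weights $|\mb g \conv \mb x|^2$), so a direct appeal to the suprema of chaos results in \Cref{app:moments-circulant-matrix} is not available: those results handle quadratic forms $\mb g^* \mb B \mb B^* \mb g$, not weighted versions whose weights also depend on $\mb g$. The plan is therefore to decouple: introduce an independent copy $\mb \delta \sim \mc{CN}(\mb 0,\mb I)$, set $\mb g^1 = \mb g + \mb \delta,\; \mb g^2 = \mb g - \mb \delta$, and analyze $\mc Q_{dec}^{\mb Y}(\mb g^1,\mb g^2)$ from \eqref{eqn:decoupled-Y}, where $\mb g^1$ and $\mb g^2$ are now independent $\mc{CN}(\mb 0,2\mb I)$ vectors.

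First I would identify the mean. A direct computation using independence of the columns of $\mb C_{\mb g}$ in the quadratic factor, and separating the diagonal from off-diagonal terms in the quartic expression, yields $\bb E[\mb Y(\mb g)] = \mb x \mb x^* + \|\mb x\|^2 \mb I$; for $\|\mb x\|=1$ this is $\mb x\mb x^* + \mb I$. Next, via the Gaussian decoupling/recoupling identity $\bb E_{\mb \delta}[\mc Q_{dec}^{\mb Y}(\mb g+\mb \delta,\mb g-\mb \delta)]$ and Jensen's inequality in the same spirit as \eqref{eqn:jensen-moments}, I bound the $L^p$ norm of $\sup_{\mb w\in \bb{CS}^{n-1}} |\mb w^*(\mb Y(\mb g) - \bb E[\mb Y(\mb g)])\mb w|$ by (up to a constant and a small error that is the conditional recoupling bias, handled by separate concentration of $\tfrac{1}{m}\|\mb g \conv \mb x\|^2$ around $\|\mb x\|^2$) the $L^p$ norm of $\sup_{\mb w} |\mb w^*(\mc Q_{dec}^{\mb Y}(\mb g^1,\mb g^2) - \bb E\,\mc Q_{dec}^{\mb Y}(\mb g^1,\mb g^2))\mb w|$.

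To bound the latter, I condition on $\mb g^2$: then $\mc Q_{dec}^{\mb Y}(\cdot,\mb g^2)$ has exactly the form $\tfrac{1}{m}\mb R_{[1:n]}\mb C_{\mb g^1}^*\diag(\mb b)\mb C_{\mb g^1}\mb R_{[1:n]}^\top$ with weights $\mb b = |\mb g^2 \conv \mb x|^2 \in \bb R_+^m$. \Cref{thm:moments-norm-circ} then gives, for every integer $p\geq 1$,
\begin{align*}
\Bigl\|\,\sup_{\mb w}\bigl|\mb w^*\mc Q_{dec}^{\mb Y}(\mb g^1,\mb g^2)\mb w - \tfrac{1}{m}\textstyle\sum_k b_k\bigr|\,\Bigr\|_{L^p(\mb g^1 \mid \mb g^2)}
\;\leq\; C\,\|\mb b\|_\infty\!\left(\sqrt{\tfrac{n}{m}}\log^{3/2}\! n \log^{1/2}\! m + \sqrt{p}\sqrt{\tfrac{n}{m}} + p\,\tfrac{n}{m}\right).
\end{align*}
The key obstacle is controlling $\|\mb b\|_\infty = \max_k |(\mb g^2 \conv \mb x)_k|^2$: each coordinate is a chi-squared type variable, and a union bound over $m$ coordinates together with a Gaussian tail bound (using $\|\mb C_{\mb x}\|$ as Lipschitz constant) gives $\|\mb b\|_\infty \lesssim \log^2 m$ with probability $\geq 1 - m^{-c}$.

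Finally, I would integrate over $\mb g^2$ by the tower property, absorbing the $\log^2 m$ factor from $\|\mb b\|_\infty$ and the diagonal-correction concentration $|\tfrac{1}{m}\sum_k b_k - \|\mb x\|^2|\leq \delta$ (which follows from \Cref{thm:moments-norm-circ} with $\mb b = \mb 1$). This produces a moment bound of the form $\|\sup_{\mb w}|\mb w^*(\mb Y-\bb E\,\mb Y)\mb w|\|_{L^p}\leq C(\sqrt{n/m}\,\mathrm{polylog}(m) + \sqrt{p\,n/m}\,\mathrm{polylog}(m) + p\,n/m\,\mathrm{polylog}(m))$. Converting via \Cref{lem:moments-tail-bound-1} (the moments grow like $p\cdot(\cdot)$ because of the two layers of Gaussianity) yields: for any $\delta>0$, whenever $m\geq C\delta^{-2}n\log^7 n$,
\begin{align*}
\|\mb Y(\mb g) - \mb x\mb x^* - \mb I\|\;\leq\;\delta
\end{align*}
with probability at least $1-2m^{-c}$, matching the claim. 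The main technical nuisance I anticipate is bookkeeping the $\log$ factors through the decoupling bias, the $\|\mb b\|_\infty$ tail, and the covering/Dudley integral inside \Cref{thm:moments-norm-circ}, which together account for the $\log^7 n$ sample complexity.
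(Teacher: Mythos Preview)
Your proposal is correct and follows essentially the same decoupling strategy as the paper: recouple $\mc Q_{dec}^{\mb Y}$ via $\bb E_{\mb\delta}$ (the bias terms are precisely the lower-order quadratics you anticipate, all handled by \Cref{thm:moments-norm-circ} with $\mb b=\mb 1$), condition on $\mb g^2$, apply \Cref{thm:moments-norm-circ} to the resulting weighted chaos, and convert via \Cref{lem:moments-tail-bound-1}. The only execution difference is that the paper controls $\|\mb b\|_\infty=\|\mb g^2\conv\mb x\|_\infty^2$ directly in $L^p$ via \Cref{lem:infty-moments}, obtaining $(\bb E\|\mb g^2\conv\mb x\|_\infty^{2p})^{1/p}\lesssim p\log m$, which plugs straight into the tower property and yields the $p\cdot(\alpha_0+\alpha_1\sqrt p+\alpha_2 p)$ moment growth required by \Cref{lem:moments-tail-bound-1}; your high-probability bound on $\|\mb b\|_\infty$ would need to be upgraded to this moment bound (or split into good/bad events with a crude control on the bad event) before the $L^p$ argument goes through cleanly.
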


\begin{proof}
Suppose $\mb g^1,\mb g^2$ are defined as \eqref{eqn:g1-g2}, and $\mc Q_{dec}^{\mb Y}(\mb g^1, \mb g^2)$ is defined as \eqref{eqn:decoupled-Y}. Let $ [\mb g^2 \conv \mb x]_k  = \paren{\mb g^2_k}^* \mb x$ and $\mb C_{\mb g^1} \mb R_{[1:n]}^\top = \begin{bmatrix}
	\paren{\mb g_1^1}^* \\ \cdots \\  \paren{\mb g_m^1}^*
\end{bmatrix}
$, then by Lemma \ref{lem:Y-expected}, we have
\begin{align*}
   \bb E_{\mb \delta}\brac{ \mc Q_{dec}^{\mb Y}(\mb g^1,\mb g^2) } \;=\;& \frac{1}{m} \sum_{k=1}^m \bb E_{\mb \delta}\brac{\abs{\paren{\mb g_k - \mb \delta_k }^* \mb x }^2 \paren{ \mb g_k + \mb \delta_k } \paren{ \mb g_k + \mb \delta_k }^* } \\
    \;=\;& \frac{1}{m} \sum_{k=1}^m \paren{ \abs{\mb g_k^*\mb x}^2 \mb g_k \mb g_k^* + \mb g_k \mb g_k^* + \abs{\mb g_k^*\mb x }^2 \mb I + \mb x\mb x^* +\mb I - \mb x \mb x^* \mb g_k \mb g_k^* - \mb g_k \mb g_k^* \mb x \mb x^* } \\
   \;=\;& 4\mb I + \mb Y(\mb g) - \bb E_{\mb g}\brac{\mb Y(\mb g)} - \frac{1}{m}\sum_{k=1}^m \paren{ \mb x\mb x^*\mb g_k\mb g_k^* + \mb g_k\mb g_k^*\mb x\mb x^* - 2\mb x \mb x^* } \\
   &+ \frac{1}{m} \sum_{k=1}^m\paren{  \mb g_k\mb g_k^* - \mb I } + \frac{1}{m} \sum_{k=1}^m\paren{ \abs{\mb g_k^*\mb x}^2-1 }\mb I.
\end{align*}
Thus, by Minkowski inequality and Jensen's inequality, for any positive integer $p \geq 1$, we have
\begin{align*}
  &\paren{ \bb E_{\mb g}\brac{ \norm{ \mb Y(\mb g) - \bb E\brac{\mb Y(\mb g)} }{}^p  }	 }^{1/p}\\
  \leq\;& 4\paren{\bb E_{\mb g} \brac{ \norm{ \frac{1}{m} \mb R_{[1:n]}\mb C_{\mb g}^* \mb C_{\mb g}\mb R_{[1:n]}^\top - \mb I }{}^p  } }^{1/p}  + \paren{ \bb E_{\mb g}\brac{ \norm{ \bb E_{\mb \delta}\brac{ \mc Q_{dec}^{\mb Y}(\mb g^1,\mb g^2) } - 4 \mb I }{}^p  }	 }^{1/p} \\
  \leq\;& 4 \underbrace{ \paren{\bb E_{\mb g} \brac{ \norm{ \frac{1}{m} \mb R_{[1:n]}\mb C_{\mb g}^* \mb C_{\mb g}\mb R_{[1:n]}^\top - \mb I }{}^p  } }^{1/p} }_{\mc T_1}  + \underbrace{ \paren{ \bb E_{\mb g^1,\mb g^2}\brac{ \norm{ \mc Q_{dec}^{\mb Y}(\mb g^1,\mb g^2)  - 4 \mb I }{}^p  }	 }^{1/p} }_{\mc T_2}.
\end{align*}
\edited{By using \Cref{thm:moments-norm-circ} with  $\mb b = \mb 1$}, we have
\begin{align*}
	\mc T_1 \;\leq \; C_1 \paren{ \sqrt{\frac{n}{m}} \log^{3/2} \log^{1/2}m +  \sqrt{\frac{n}{m} } \sqrt{p} + \frac{n}{m} p  } ,
\end{align*}
where $C_1>0$ is some numerical constant. For $\mc T_2$, we have
\begin{align*}
  \mc T_2\;=\; &\paren{\bb E_{\mb g^1,\mb g^2}\brac{ \norm{ \mc Q_{dec}^{\mb Y}(\mb g^1,\mb g^2)  - 4 \mb I }{}^p  } }^{1/p} \\
   \;\leq\;&  \paren{\bb E_{\mb g^1,\mb g^2}\brac{ \norm{ \mc Q_{dec}^{\mb Y}(\mb g^1,\mb g^2)  - 2  \frac{1}{m}\norm{\mb g^2 \conv \mb x}{}^2 \mb I  }{}^p  } }^{1/p} \\
   & + 2 \paren{\bb E_{\mb g^1} \brac{ \norm{ \frac{1}{m} \mb R_{[1:n]}\mb C_{\mb g^1}^* \mb C_{\mb g^1}\mb R_{[1:n]}^\top - 2\mb I }{}^p  } }^{1/p}    \\
   \;\leq\;& C_2  \paren{\bb E_{\mb g^2}\brac{ \norm{  \mb g^2 \conv \mb x  }{\infty }^{2p} }^{1/p} +2}\paren{ \sqrt{\frac{n}{m}} \log^{3/2}n \log^{1/2}m +  \sqrt{\frac{n}{m} } \sqrt{p} + \frac{n}{m} p  } \\
    \;\leq\;& C_3 \paren{  \sqrt{\frac{n}{m}} \paren{\log^{3/2} n \log^{3/2}m } p +  \sqrt{\frac{n}{m}  } \paren{\log m } p^{3/2} + \frac{n}{m} \paren{\log m} p^2  },
\end{align*}
\edited{where the first inequality follows from the triangle inequality, the second inequality follows from \Cref{thm:moments-norm-circ}, and the last inequality follows from Lemma \ref{lem:infty-moments}.} Thus, combining the estimates for $\mc T_1$ and $\mc T_2$ above, we have
\begin{align*}
 \paren{ \bb E_{\mb g}\brac{ \norm{ \mb Y(\mb g) - \bb E\brac{\mb Y(\mb g)} }{}^p  }	 }^{1/p} \leq C_4 \paren{  \sqrt{\frac{n}{m}} \paren{ \log^{3/2} n \log^{3/2}m } p +  \sqrt{\frac{n}{m}  } \paren{ \log m  } p^{3/2} + \frac{n}{m} \paren{ \log m } p^2  }.
\end{align*}
Therefore, by using Lemma \ref{lem:moments-tail-bound-1}, for any $\delta>0$, whenever $ m \geq C_5\delta^{-2}n \log^4 m \log^3 n  $
\begin{align*}
	\norm{\mb Y - \bb E\brac{\mb Y} }{} \;\leq\; \delta,
\end{align*}
with probability at least $1-2m^{-c}$, where $c>0$ is some numerical constant . Finally, using Lemma \ref{lem:Y-expected}, we get the desired result.
\end{proof}

\begin{lemma}\label{lem:Y-expected}
\edited{Suppose $\mb x\in \bb C^n$ with $\norm{\mb x}{}=1$.} Let $\mb g\sim \mc {CN}(\mb 0,\mb I)$, and let $\mb Y(\mb g)$ be defined as \eqref{eqn:Y-def}, then we have
\begin{align*}
   \bb E\brac{\mb Y(\mb g)} \;=\; \mb x\mb x^* + \mb I.	
\end{align*}
\end{lemma}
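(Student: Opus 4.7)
The plan is to reduce the claim to a single Gaussian moment computation that holds row-by-row, exploiting the fact that although the rows of the truncated circulant matrix $\mb C_{\mb g}\mb R_{[1:n]}^\top$ are \emph{jointly} dependent, each row is \emph{marginally} a standard complex Gaussian in $\bb C^n$. Writing
\[
\mb Y(\mb g) \;=\; \frac{1}{m}\sum_{k=1}^m |\mb a_k^*\mb x|^2\,\mb a_k\mb a_k^*,\qquad \mb a_k \doteq s_{k-1}[\mb g]_{[1:n]},
\]
the linearity of expectation and the marginal distribution $\mb a_k\sim\mc{CN}(\mb 0,\mb I_n)$ reduce everything to computing, for a single $\mb a\sim\mc{CN}(\mb 0,\mb I_n)$, the $n\times n$ matrix
\[
\mb E \;\doteq\; \bb E\bigl[|\mb a^*\mb x|^2\,\mb a\mb a^*\bigr].
\]

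First I would expand the entries of $\mb E$ coordinate-wise. For indices $p,q$,
\[
E_{pq} \;=\; \sum_{i,j} x_i\overline{x_j}\,\bb E\bigl[a_j\overline{a_i}\,a_p\overline{a_q}\bigr].
\]
Next I would apply the complex Wick/Isserlis identity: for $\mb a\sim\mc{CN}(\mb 0,\mb I_n)$, the only nonvanishing second moments are $\bb E[a_r\overline{a_s}]=\delta_{rs}$, so
\[
\bb E\bigl[a_j\overline{a_i}\,a_p\overline{a_q}\bigr] \;=\; \delta_{ij}\delta_{pq} + \delta_{jq}\delta_{ip}.
\]
Substituting and using $\norm{\mb x}{}=1$ gives $E_{pq}=\delta_{pq}+x_p\overline{x_q}$, i.e.\ $\mb E=\mb I+\mb x\mb x^*$.

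Finally I would conclude by averaging over $k$: since each $\mb a_k$ is marginally $\mc{CN}(\mb 0,\mb I_n)$, each summand contributes the same $\mb I+\mb x\mb x^*$, and the $1/m$ prefactor yields $\bb E[\mb Y(\mb g)]=\mb x\mb x^*+\mb I$. There is no real obstacle here; the only mild subtlety is ensuring one uses the \emph{complex} Gaussian Wick rule (so that the ``pseudo-covariance'' terms $\bb E[a_ja_p]$ vanish), and recognizing that the joint dependence of the $\mb a_k$'s is irrelevant for this first moment because only one row appears in each summand.
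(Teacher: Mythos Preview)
Your proof is correct and follows essentially the same route as the paper's: both rewrite $\mb Y(\mb g)$ as $\tfrac{1}{m}\sum_k |\mb g_k^*\mb x|^2\mb g_k\mb g_k^*$, note that each $\mb g_k$ is marginally $\mc{CN}(\mb 0,\mb I_n)$, and reduce to computing $\bb E[|\mb a^*\mb x|^2\mb a\mb a^*]$ for a single $\mb a\sim\mc{CN}(\mb 0,\mb I_n)$. The paper simply cites an external result (Lemma~20 of \cite{sun2016geometric}) for that moment, whereas you carry out the complex Wick calculation explicitly.
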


\begin{proof}
\edited{Let $\mb C_{\mb g} \mb R_{[1:n]}^\top =\begin{bmatrix}
   \mb g_1^*  \\ \vdots \\ \mb g_m^* 	
\end{bmatrix}$. Since we have
\begin{align*}
   \bb E\brac{\mb Y(\mb g)} \;=\; \frac{1}{m} \bb E\brac{\sum_{k=1}^m \paren{\mb g_k^*\mb x}^2 \mb g_k \mb g_k^*} \;=\;  \bb E\brac{ \paren{\mb g^*\mb x}^2 \mb g \mb g^*},
\end{align*}
we obtain the desired result by applying Lemma 20 of \cite{sun2016geometric}.	}
\end{proof}

\begin{lemma}\label{lem:infty-moments}
   Suppose $\wt{\mb g} \sim \mc {CN}(\mb 0,2\mb I)$, for any positive integer $p \geq 1$, we have
   \begin{align*}
     \paren{ \bb E_{\wt{\mb g}}\brac{ \norm{ \wt{\mb g} \conv \mb x }{\infty }^p } }^{1/p} \;\leq\; 6\sqrt{\log m} \sqrt{p}.
   \end{align*}
\end{lemma}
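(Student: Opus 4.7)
The plan is to combine a pointwise tail bound for each coordinate of $\wt{\mb g}\conv \mb x$ with a union bound over the $m$ coordinates, and then integrate the resulting tail to obtain a moment estimate. First, I will observe that each coordinate $[\wt{\mb g}\conv\mb x]_k = \wt{\mb g}_k^*\mb x$ is a linear functional of the Gaussian vector $\wt{\mb g}\sim \mc{CN}(\mb 0,2\mb I)$; since circulant shifts preserve the distribution of $\wt{\mb g}$ and $\norm{\mb x}{}=1$, each coordinate is distributed as $\mc{CN}(0,2)$. Consequently $|[\wt{\mb g}\conv\mb x]_k|^2$ is exponentially distributed with rate $1/2$, yielding the pointwise tail
$$\bb P\paren{|[\wt{\mb g}\conv\mb x]_k|\geq t} = e^{-t^2/2}, \qquad t\geq 0.$$

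Second, setting $M\doteq \norm{\wt{\mb g}\conv\mb x}{\infty}$ and taking a union bound over the $m$ coordinates produces
$$\bb P(M\geq t)\leq m\, e^{-t^2/2}, \qquad t\geq 0.$$
Equivalently, re-parametrizing $t=\sqrt{2\log m+2u}$ for $u\geq 0$ shows that $M^2 - 2\log m$ has a sub-exponential upper tail: $\bb P(M^2\geq 2\log m + 2u)\leq e^{-u}$.

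Third, I will convert the tail bound into a moment bound by direct integration. Splitting the integral at $t_\ast=\sqrt{2\log m}$ and using the trivial bound $\bb P(M\geq t)\leq 1$ below $t_\ast$,
$$\bb E[M^p]\leq (2\log m)^{p/2} + \int_{t_\ast}^{\infty} p\,t^{p-1}\,m\,e^{-t^2/2}\,dt.$$
The substitution $u=t^2/2-\log m$ reduces the second integral to $p\,2^{(p-2)/2}\int_0^\infty (u+\log m)^{(p-2)/2}e^{-u}\,du$, which can be estimated via $(u+\log m)^{(p-2)/2}\leq 2^{((p-2)/2-1)_+}\bigl((\log m)^{(p-2)/2}+u^{(p-2)/2}\bigr)$ together with $\int_0^\infty u^{(p-2)/2}e^{-u}\,du=\Gamma(p/2)$ and Stirling's bound $\Gamma(p/2)^{1/p}\lesssim \sqrt{p}$. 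Taking the $p$-th root of the resulting estimate and using $(a+b)^{1/p}\leq a^{1/p}+b^{1/p}$ for $p\geq 1$ yields a bound of the form
$$\norm{M}{L^p}\leq c_1\sqrt{\log m}+c_2\sqrt{p}$$
for small absolute constants $c_1,c_2$. For $p\geq 1$ and $m\geq 2$ one has $\sqrt{\log m}\leq \sqrt{\log m}\sqrt{p}$ and $\sqrt{p}\leq (\log 2)^{-1/2}\sqrt{\log m}\sqrt{p}$, so the two pieces combine into $\norm{M}{L^p}\leq 6\sqrt{\log m}\sqrt{p}$, as claimed.

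The argument is essentially routine, and the main (very mild) care is in tracking the numerical constants so that $c_1\sqrt{\log m}+c_2\sqrt{p}$ fits under $6\sqrt{\log m\cdot p}$; the content of the lemma is simply that the $\ell^\infty$-norm of an $m$-vector of (dependent) $\mc{CN}(0,2)$ coordinates concentrates at the usual scale $\sqrt{\log m}$, with sub-exponential fluctuations of order $\sqrt{p}$ in its $L^p$ norm.
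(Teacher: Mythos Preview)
Your proposal is correct and gives the same bound, but it takes a genuinely different route from the paper's argument. The paper splits
\[
\norm{\,\|\wt{\mb g}\conv\mb x\|_\infty\,}{L^p}\;\leq\; \bb E\bigl[\|\wt{\mb g}\conv\mb x\|_\infty\bigr] \;+\; \norm{\,\|\wt{\mb g}\conv\mb x\|_\infty-\bb E[\|\wt{\mb g}\conv\mb x\|_\infty]\,}{L^p},
\]
bounds the fluctuation term via the Gaussian concentration inequality (the map $\wt{\mb g}\mapsto\|\wt{\mb g}\conv\mb x\|_\infty$ is $1$-Lipschitz, hence sub-Gaussian moments $\leq 4\sqrt{p}$), and bounds the expectation by the standard MGF/Jensen trick $\exp(\lambda\,\bb E[\max_k X_k])\leq m\exp(\lambda^2)$, giving $\bb E[\|\wt{\mb g}\conv\mb x\|_\infty]\leq 2\sqrt{\log m}$. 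You instead bypass Lipschitz concentration entirely: you use the exact marginal law $[\wt{\mb g}\conv\mb x]_k\sim\mc{CN}(0,2)$ to get the pointwise tail $\bb P(|\cdot|\geq t)=e^{-t^2/2}$, take a union bound, and integrate the resulting tail of the maximum directly. Both approaches land on $c_1\sqrt{\log m}+c_2\sqrt{p}$ and then absorb into $6\sqrt{\log m}\,\sqrt p$. Your route is more elementary (no appeal to Lipschitz concentration of Gaussians) and makes the dependence on the coordinate distribution explicit; the paper's route is more modular and would generalize immediately to any $1$-Lipschitz functional of $\wt{\mb g}$ without knowing the coordinate law. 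One small remark: your splitting $(u+\log m)^{(p-2)/2}$ needs the case $p=1$ handled separately (the exponent is $-1/2$), but that case is trivially covered by $\bb E[M]\leq 2\sqrt{\log m}$ from the union bound.
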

\begin{proof}
By Minkowski inequality, we have
\begin{align*}
   \bb E_{\wt{\mb g}} \brac{ \norm{ \wt{\mb g} \conv \mb x }{ \infty }^p  }^{1/p}\;\leq\; \bb E\brac{ \norm{\wt{\mb g} \conv \mb x }{\infty} } + \paren{ \bb E_{\wt{\mb g}} \brac{ \paren{ \norm{\wt{\mb g} \conv \mb x}{\infty} - \bb E\brac{ \norm{\wt{\mb g} \conv \mb x }{\infty} }  }^p  } }^{1/p}.
\end{align*}
We know that $\norm{\wt{\mb g}\conv \mb x}{\infty}$ is $1$-Lipschitz w.r.t. $ \wt{\mb g}$. Thus, by Gaussian concentration inequality in Lemma \ref{lem:gauss-concentration}, we have
\begin{align*}
   \bb P\paren{ \abs{ \norm{\wt{\mb g} \conv \mb x}{\infty} - \bb E_{\wt{\mb g}}\brac{ \norm{ \wt{\mb g} \conv \mb x }{\infty} }  } \geq t }	 \;\leq\; 2\exp\paren{ - t^2/2 }.
\end{align*}
By Lemma \ref{lem:sub-Gaussian}, we know that $\norm{\wt{\mb g} \conv \mb x}{\infty}$ is sub-Gaussian, and satisfies
\begin{align*}
   \paren{\bb E_{\wt{\mb g}}\brac{ \abs{ \norm{\wt{\mb g} \conv \mb x}{\infty} - \bb E_{\wt{\mb g}}\brac{ \norm{ \wt{\mb g} \conv \mb x }{\infty} }  }^p } }^{1/p} \;\leq\; 4\sqrt{p}	.
\end{align*}
Besides, let $\wt{\mb g} \conv \mb x = \begin{bmatrix}
	\wt{\mb g}_1^*\mb x \\ \vdots \\ \wt{\mb g}_m^*\mb x
\end{bmatrix}$, then by Jensen's inequality, for all $\lambda>0$, we have
\begin{align*}
   	\exp\paren{ \lambda \bb E\brac{ \norm{\wt{\mb g} \conv \mb x}{\infty}  } } \;\leq\; \bb E\brac{ \exp\paren{ \lambda \norm{\wt{\mb g} \conv \mb x}{\infty} } }\;&=\; \bb E\brac{ \max_{1\leq k\leq m} \exp\paren{\lambda \wt{\mb g}_k^*\mb x } } \\
   	\;&\leq\; \sum_{k=1}^m \bb E\brac{ \exp\paren{ \lambda \wt{\mb g}_k^*\mb x } } \;\leq\; m \exp\paren{ \lambda^2 },
\end{align*}
where we used the fact that the moment generating function of $\wt{\mb g}_k^*\mb x$ satisfies $\bb E\brac{ \exp\paren{ \lambda \wt{\mb g}_k^*\mb x } } \leq \exp\paren{\lambda^2}$. Taking the logarithms on both sides, we have
\begin{align*}
   \bb E\brac{ \norm{ \wt{\mb g} \conv \mb x }{ \infty } } \;\leq\; \log m /\lambda + \lambda	.
\end{align*}
Taking $\lambda = \sqrt{\log m}$, \edited{so that the right hand side of the inequality above} achieves the minimum, which is
\begin{align*}
   	\bb E\brac{ \norm{ \wt{\mb g} \conv \mb x }{ \infty } } \;\leq\; 2 \sqrt{\log m}.
\end{align*}
Combining the results above, we obtain the desired result.
\end{proof}

\subsection{Concentration of \boldmath{$M(g)$}}

Given $\mb M(\mb g)$ as in \eqref{eqn:M-def}, let us define 
\begin{align}
	\mb H(\mb g) \;&=\; \mb P_{\mb x^\perp} \mb M \mb P_{\mb x^\perp}  \label{eqn:H}
\end{align}
and correspondingly its decoupled term
\begin{align}
\mc Q_{dec}^{\mb H}(\mb g^1, \mb g^2) \;&=\; \mb P_{\mb x^\perp}  \mc Q_{dec}^{\mb M} (\mb g^1, \mb g^2)  \mb P_{\mb x^\perp}, \label{eqn:decoupled-H}	
\end{align}
and let
\begin{align}
   \eta_{\sigma^2} (t) \;=\; 1-  2\pi \sigma^2 \xi_{\sigma^2 -\frac{1}{2} }(t), \qquad \nu_{\sigma^2}(t) \;=\; 1 - \frac{4 \pi \sigma^4 }{ 2 \sigma^2 - 1} \xi_{\sigma^2 - \frac{1}{2}}(t), \label{eqn:eta-nu}	
\end{align}
where $\sigma^2>1/2$. In this subsection, we show the following result.

\begin{theorem}\label{thm:M-H-concentration}
For any $\delta >0$, when $m \geq C\delta^{-2} \norm{\mb C_{\mb x}}{}^2  n \log^4 n$, we have 
\begin{align} 
   \norm{ \mb H(\mb g) - \mb P_{\mb x^\perp} }{} \; &\leq \; \delta \label{eqn:H-bound} \\
   \norm{\mb M(\mb g) -  \mb I - \frac{2\sigma^2 }{ 1+2\sigma^2 }\mb x \mb x^* }{}	  \;&\leq \; 3\delta \\
   \norm{ \mb P_{\mb x^\perp}\mb M(\mb g) - \mb P_{\mb x^\perp} }{} \;&\leq \; 2 \delta 
\end{align}
holds with probability at least $1 - c m^{-c'\log^3 n}$, where $c,c'$ and $C$ are some positive numerical constants depending only on $\sigma^2$.
\end{theorem}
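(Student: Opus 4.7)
The plan is to reduce all three claimed bounds to a single operator-norm concentration inequality for $\mb H(\mb g)$, and then prove that inequality by a decoupling-via-recoupling argument tailored to the specific smoothing weight $\zeta_{\sigma^2}$. First, invoking the forthcoming Lemma~\ref{lem:M-H-expectation} I would record $\bb E[\mb M] = \mb I + \frac{2\sigma^2}{1+2\sigma^2}\mb x\mb x^*$ and $\bb E[\mb H] = \mb P_{\mb x^\perp}$, so that the claims for $\mb M$ and for $\mb P_{\mb x^\perp}\mb M$ differ from the bound for $\mb H$ only by the scalar $\mb x^*(\mb M - \bb E[\mb M])\mb x$ and the cross block $\mb P_{\mb x^\perp}(\mb M - \bb E[\mb M])\mb x$. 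Both of these can be controlled by dedicated concentration arguments analogous to those used in the proof of \Cref{thm:concentration-Y}, contributing only the extra factors of $2$ and $3$ in the stated bounds. The bulk of the work is therefore to prove $\norm{\mb H(\mb g) - \mb P_{\mb x^\perp}}{} \leq \delta$ with high probability under the stated sample complexity.

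Next, I would set up the decoupling. Introduce an independent copy $\mb \delta \sim \mc {CN}(\mb 0, \mb I)$ of $\mb g$ and put $\mb g^1 = \mb g + \mb \delta$, $\mb g^2 = \mb g - \mb \delta$, which are i.i.d.\ $\mc {CN}(\mb 0, 2\mb I)$. The weight $\eta_{\sigma^2}$ is engineered precisely so that Gaussian smoothing at unit variance returns $\zeta_{\sigma^2}$: since $\xi_{\sigma^2 - 1/2}$ convolved with the pdf of $\mc {CN}(0,1)$ equals $\xi_{\sigma^2}$ (the variances $2\sigma^2 - 1$ and $1$ add to $2\sigma^2$), one obtains $\bb E_{s \sim \mc {CN}(0,1)}[\eta_{\sigma^2}(t - s)] = \zeta_{\sigma^2}(t)$ for every $t \in \bb C$. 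Combined with the orthogonal decomposition of each shifted copy $\mb \delta_k = (\mb \delta_k^*\mb x)\mb x + \mb P_{\mb x^\perp}\mb \delta_k$ (with $\mb \delta_k^*\mb x \sim \mc {CN}(0,1)$ independent of $\mb P_{\mb x^\perp}\mb \delta_k$), this identity lets me expand $\bb E_{\mb \delta}[\mc Q_{dec}^{\mb H}(\mb g^1, \mb g^2)]$ explicitly as $\mb H(\mb g)$ plus a small number of auxiliary quadratic forms in $\mb g$ whose weights are $\zeta_{\sigma^2}$, $\nu_{\sigma^2}$, or constant---all already controlled by \Cref{thm:moments-norm-circ}.

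With this recoupling identity in hand, I would apply Jensen's inequality to the convex map $\mb K \mapsto \norm{\mb K - \bb E[\mb K]}{}^p$ to bound moments of $\mb H - \bb E[\mb H]$ by moments of $\mc Q_{dec}^{\mb H}(\mb g^1,\mb g^2) - \bb E[\mc Q_{dec}^{\mb H}]$, plus the aforementioned remainders. For the decoupled term, conditioning on $\mb g^2$ reveals a quadratic form in the standard complex Gaussian $\mb g^1$ with a fixed weight vector whose $\ell^\infty$ norm is bounded by $\norm{\eta_{\sigma^2}}{L^\infty} \leq \max\{1,(2\sigma^2-1)^{-1}\}$, which is finite thanks to $\sigma^2 > 1/2$. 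Hence \Cref{thm:moments-norm-circ} applies conditionally on $\mb g^2$ and yields a moment bound of order $\sqrt{n/m}\,\log^{3/2}n\log^{1/2}m + \sqrt{pn/m} + pn/m$. Integrating out $\mb g^2$ and converting moments to tails via Lemma~\ref{lem:moments-tail-bound} then gives $\norm{\mb H - \mb P_{\mb x^\perp}}{} \leq \delta$ with probability at least $1 - cm^{-c'\log^3 n}$.

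The hard part will be tracking the $\norm{\mb C_{\mb x}}{}^2$ factor in the final sample-complexity requirement. It does not arise in the conditional chaos bound for $\mc Q_{dec}^{\mb H}$, which is uniform in $\mb g^2$. Rather it enters through the remainder terms produced by the expansion $\mb C_{\mb g + \mb \delta} = \mb C_{\mb g} + \mb C_{\mb \delta}$: the self-interaction $\mb C_{\mb \delta}^* \diag(\mb b(\mb \delta))\mb C_{\mb \delta}$ has weight vector $\mb b(\mb \delta) = \eta_{\sigma^2}((\mb g - \mb \delta) \conv \mb x)$ that itself depends on $\mb \delta$ through the inner products $\mb \delta_k^*\mb x$. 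Controlling this residual via Gaussian concentration in $\mb \delta$ requires the Lipschitz constant of $\mb \delta \mapsto \mb b(\mb \delta)$, which is proportional to $\norm{\eta_{\sigma^2}'}{L^\infty}\cdot \norm{\mb C_{\mb x}}{}$; its square appears in the sample complexity. This is the main place where the analysis departs from the simpler setting of the $\mb Y$ concentration in \Cref{thm:concentration-Y}, and where the smoothness parameter $\sigma^2$ trades off against the structural norm $\norm{\mb C_{\mb x}}{}$.
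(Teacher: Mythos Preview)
Your overall strategy---recouple $\mc Q_{dec}^{\mb H}$ via $\bb E_{\mb\delta}[\cdot]$, exploit $\bb E_{s\sim\mc{CN}(0,1)}[\eta_{\sigma^2}(t+s)]=\zeta_{\sigma^2}(t)$, apply Jensen, and then bound the decoupled chaos conditionally on $\mb g^2$ using \Cref{thm:moments-norm-circ}---is exactly what the paper does. But two points in your write-up are off, and they concern precisely the parts you flagged as ``hard''.

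First, the recoupling for $\mb H$ is cleaner than you indicate. Because $\mb H$ is sandwiched by $\mb P_{\mb x^\perp}$, and because for each shifted row $\mb P_{\mb x^\perp}\mb\delta_k$ is independent of $\mb\delta_k^*\mb x$, the conditional expectation factors exactly: $\bb E_{\mb\delta}[\mc Q_{dec}^{\mb H}(\mb g+\mb\delta,\mb g-\mb\delta)]=\mb H(\mb g)+\frac{1+2\sigma^2}{m}\langle\mb 1,\zeta_{\sigma^2}(\mb g\conv\mb x)\rangle\,\mb P_{\mb x^\perp}$. There are no residual ``self-interaction'' matrices $\mb C_{\mb\delta}^*\diag(\mb b(\mb\delta))\mb C_{\mb\delta}$ to control, and $\nu_{\sigma^2}$ does not appear here at all. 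The only remainder is the single scalar $\frac{1+2\sigma^2}{m}\langle\mb 1,\zeta_{\sigma^2}(\mb g\conv\mb x)\rangle-1$, handled by a one-line Gaussian-Lipschitz bound (Lemma~\ref{lem:moments-bound-1}), giving $\|\mb C_{\mb x}\|/\sqrt m$; since $\|\mb C_{\mb x}\|\le\sqrt n$ this is absorbed into the $\sqrt{n/m}$ chaos term. So the $\mb H$ bound alone does not need the factor $\|\mb C_{\mb x}\|^2$.

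Second---and this is the main correction---the $\|\mb C_{\mb x}\|^2$ in the sample complexity comes from exactly the two blocks you dismissed in your first paragraph as routine. The scalar block $|\mb x^*(\mb M-\bb E\mb M)\mb x|$ is bounded by Gaussian concentration on $\mb g\mapsto\|\diag(\zeta_{\sigma^2}^{1/2}(\mb C_{\mb x}\mb g))\mb C_{\mb x}\mb g\|$, whose Lipschitz constant is $O(\|\mb C_{\mb x}\|/\sqrt m)$ (Lemma~\ref{lem:M-x-term}). The cross block $\|\mb P_{\mb x^\perp}\mb M\mb x\|$ requires a \emph{separate} decoupling with a different weight $\nu_{\sigma^2}$, designed so that $\bb E_{s}[(t+s)\nu_{\sigma^2}(t+s)]=t\zeta_{\sigma^2}(t)$; its decoupled form is then bounded via a vector-valued Gaussian concentration in $\mb g^2$ whose Lipschitz constant again carries $\|\mb C_{\mb x}\|$ (Lemma~\ref{lem:M-x-cross}). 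These two lemmas, not any residual in the $\mb H$ recoupling, are what force $m\gtrsim\delta^{-2}\|\mb C_{\mb x}\|^2 n\log^4 n$.
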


\begin{proof}
Let $\mc Q_{dec}^{\mb H}(\mb g^1, \mb g^2)$ be defined as \eqref{eqn:decoupled-H}. We calculate its expectation with respect to $\mb \delta$,
\begin{align*}
&\bb E_{ \mb \delta}\brac{ \mc Q_{dec}^{\mb H}(\mb g + \mb \delta, \mb g - \mb \delta)  } \\
=\;& \frac{1+ 2\sigma^2 }{m} \mb P_{\mb x^\perp} \mb R_{[1:n]} \mb C_{\mb g}^* \diag \paren{ \bb E_{\mb \delta} \brac{  \eta_{\sigma^2  }\paren{ \abs{ \paren{\mb g- \mb \delta} \conv \mb x } } }} \mb C_{\mb g} \mb R_{[1:n]}^\top \mb P_{\mb x^\perp} \\
& \quad + \frac{1+2\sigma^2}{m} \innerprod{ \mb 1 }{ \bb E_{\mb \delta}\brac{ \eta_{ \sigma^2 }\paren{ (\mb g - \mb \delta) \conv \mb x }  } } \mb P_{\mb x^\perp}  \\
=\;& \frac{1+2\sigma^2}{m}  \brac{ \mb P_{\mb x^\perp} \mb R_{[1:n]} \mb C_{\mb g}^* \diag\paren{ \zeta_{\sigma^2}( \abs{\mb g \conv \mb x} ) } \mb C_{\mb g} \mb R_{[1:n]}^\top \mb P_{\mb x^\perp} +  \innerprod{\mb 1}{ \zeta_{\sigma^2}\paren{\abs{\mb g \conv \mb x} }  }  \mb P_{\mb x^\perp} },
\end{align*}
\edited{where the last equality follows from the second equality in Lemma \ref{lem:xi-eta-zeta-expectation}. } Using the results above and Lemma \ref{lem:M-H-expectation}, for all integer $p \geq 1$, we observe
\begin{align*}
    &\paren{\bb E\brac{ \norm{ \mb H - \bb E \brac{\mb H} }{}^p } }^{1/p}  \\
    =\;& \paren{ \bb E_{\mb g} \brac{ \norm{ \frac{1+2\sigma^2}{m} \mb P_{\mb x^\perp} \mb R_{[1:n]} \mb C_{\mb g}^* \diag\paren{ \zeta_{\sigma^2} ( \abs{\mb g \conv \mb x} ) } \mb C_{\mb g} \mb R_{[1:n]}^\top \mb P_{\mb x^\perp} - \mb P_{\mb x^\perp} }{}^p  }  }^{1/p} \\
   =\;& \paren{ \bb E_{\mb g} \brac{ \norm{ \bb E_{\mb \delta}\brac{ \mc Q_{dec}^{\mb H}(\mb g+ \mb \delta, \mb g - \mb \delta)  } - \mb P_{\mb x^\perp} -  \frac{1+2\sigma^2}{m} \innerprod{\mb 1}{ \zeta_{\sigma^2}\paren{\abs{\mb g \conv \mb x} }  } \mb P_{\mb x^\perp}  }{}^p  } }^{1/p} \\
   \leq\;& \paren{ \bb E_{\mb g} \brac{ \norm{ \bb E_{\mb \delta}\brac{ \mc Q_{dec}^{\mb H}(\mb g+ \mb \delta, \mb g - \mb \delta)  } - 2\mb P_{\mb x^\perp} }{}^p  }  }^{1/p} + \paren{  \bb E_{\mb g} \brac{ \abs{ 1- \frac{1+2\sigma^2}{ m } \innerprod{\mb 1}{ \zeta_{\sigma^2}\paren{\abs{\mb g \conv \mb x} }  } }^p } }^{1/p} \\
   \leq\;& \paren{ \bb E_{\mb g^1,\mb g^2} \brac{ \norm{  \mc Q_{dec}^{\mb M}(\mb g^1, \mb g^2)   - 2\mb I }{}^p  }  }^{1/p} + \paren{  \bb E_{\mb g} \brac{ \abs{ 1- \frac{1+2\sigma^2}{ m } \innerprod{\mb 1}{ \zeta_{\sigma^2}\paren{\abs{\mb g \conv \mb x} }  } }^p } }^{1/p},
\end{align*}
where $\mc Q_{dec}^{\mb M}(\mb g^1, \mb g^2)$ is defined as \eqref{eqn:M-def}, and we have used the Minkowski's inequality and the Jensen's inequality, respectively. By Lemma \ref{lem:moments-bound-2} and Lemma \ref{lem:moments-bound-1}, we obtain
\begin{align*}
   \paren{\bb E\brac{ \norm{ \mb H - \bb E \brac{\mb H} }{}^p } }^{1/p} \leq C_{\sigma^2} \paren{   \sqrt{\frac{n}{m}} \log^{3/2} n\log^{1/2}m + \sqrt{p} \sqrt{\frac{n}{m}}  + p \frac{n}{m}   },
\end{align*}
where $C_{\sigma^2}$ is some numerical constant depending only on $\sigma^2$. Thus, by using the tail bound in Lemma \ref{lem:moments-tail-bound}, for any $t> 0$, we obtain
\begin{align*}
   \bb P\paren{ \norm{ \mb H - \bb E\brac{\mb H} }{}  \geq  C_1 \sqrt{\frac{n}{m}} \log^{3/2} n\log^{1/2}m +  t  } \;\leq\; 2\exp\paren{  - C_2 \frac{mt^2}{n}   }
\end{align*}
for some constants $C_1,C_2>0$. This further implies that for any $\delta >0$, if $m \geq C_3 \delta^{-2} n \log^3 n \log m$ for some positive numerical constant $C_3$, we have
\begin{align*}
  \norm{ \mb H - \bb E\brac{\mb H} }{} \;\leq \; \delta,	
\end{align*}
holds with probability at least $1 - 2 m^{-C_4\log^3 n}$, where $C_4>0$ is numerical constant. Next, we use this result to bound the term $\norm{\mb M - \bb E\brac{\mb M} }{}$, by Lemma \ref{lem:M-H-expectation}, notice that
\begin{align*}
	\norm{ \mb M - \bb E\brac{\mb M} }{} \;&\leq\; \norm{ \mb P_{\mb x^\perp} \paren{ \mb M - \bb E\brac{\mb M}  } \mb P_{\mb x^\perp} }{} + 2 \norm{ \mb P_{\mb x^\perp}\paren{\mb M - \bb E\brac{\mb M} \mb P_{\mb x} } }{} + \norm{ \mb P_{\mb x}\paren{ \mb M - \bb E\brac{\mb M}  } \mb P_{\mb x}  }{} \\
	\;&\leq \;  \norm{ \mb H - \bb E\brac{ \mb H } }{} + 2\norm{ \mb P_{\mb x^\perp} \mb M \mb x }{} + \abs{ \mb x^*\paren{ \mb M - \bb E\brac{\mb M}  } \mb x }.
\end{align*}
Hence, by using the results in Lemma \ref{lem:M-x-term} and Lemma \ref{lem:M-x-cross}, whenever $m \geq C \norm{\mb C_{\mb x}}{}^2 \delta^{-2} n\log^4 n $ we obtain
\begin{align*}
   \norm{ \mb M - \bb E\brac{\mb M} }{} \;\leq\; 3 \delta,	
\end{align*}
holds with probability at least $1 - c m^{-c'\log^3 n}$. Here $c,c'>0$ are some numerical constants. Similarly, we have
\begin{align*}
   \norm{ \mb P_{\mb x^\perp}\paren{ \mb M - \bb E\brac{\mb M} } }{} \;&\leq\; \norm{ \mb P_{\mb x^\perp}\paren{ \mb M - \bb E\brac{\mb M} } \mb P_{\mb x^\perp} }{} + \norm{ \mb P_{\mb x^\perp} \paren{ \mb M - \bb E\brac{\mb M}  }\mb P_{\mb x} }{} \\
   \;&=\; \norm{\mb H - \bb E\brac{\mb H}}{} + \norm{ \mb P_{\mb x^\perp} \mb M \mb x }{}.	
\end{align*}
Again, by Lemma \ref{lem:M-x-cross}, we have
\begin{align*}
   \norm{ \mb P_{\mb x^\perp}\paren{ \mb M - \bb E\brac{\mb M} } }{} \;\leq\; 2 \delta,	
\end{align*}
holds with probability at least $1 - c m^{-c'\log^3 n}$. By using Lemma \ref{lem:M-H-expectation}, we obtain the desired results.
\end{proof}

\begin{lemma}\label{lem:moments-bound-2}
Suppose $\mb g^1, \mb g^2$ are independent with $\mb g^1, \mb g^2 \sim \mc {CN}(\mb 0, 2\mb I)$, and let $\mc Q_{dec}^{\mb M}(\mb g^1, \mb g^2)$ be defined as \eqref{eqn:decoupled-M}, then for any integer $p \geq 1$, we have
\begin{align}
\paren{\bb E_{\mb g^1,\mb g^2}\brac{ \norm{ \mc Q_{dec}^{\mb M}(\mb g^1, \mb g^2) - 2\mb I}{}^p } }^{1/p} \leq C_{\sigma^2} \paren{   \sqrt{\frac{n}{m}} \log^{3/2} n\log^{1/2}m + \sqrt{p} \sqrt{\frac{n}{m}}  + p \frac{n}{m}   },
\end{align}
where $C_{\sigma^2}>0$ is some numerical constant only depending on $\sigma^2$.
\end{lemma}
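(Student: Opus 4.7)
The plan is to condition on $\mb g^2$, apply the quadratic-form moment bound of \Cref{thm:moments-norm-circ} in $\mb g^1$, and then control a residual scalar mean-shift via Gaussian Lipschitz concentration in $\mb g^2$. Concretely, I would split
\begin{align*}
   \mc Q_{dec}^{\mb M}(\mb g^1,\mb g^2)-2\mb I \;=\; \underbrace{\mc Q_{dec}^{\mb M}(\mb g^1,\mb g^2)-\bb E_{\mb g^1}\brac{\mc Q_{dec}^{\mb M}\mid\mb g^2}}_{R_1} \;+\; \underbrace{\bb E_{\mb g^1}\brac{\mc Q_{dec}^{\mb M}\mid\mb g^2}-2\mb I}_{R_2},
\end{align*}
and verify the bookkeeping at the outset: a row-vector computation using $\mb g^1\sim\mc{CN}(\mb 0,2\mb I)$ gives $\bb E_{\mb g^1}\brac{\mc Q_{dec}^{\mb M}\mid\mb g^2}=\tfrac{2(1+2\sigma^2)}{m}\sum_{k=1}^m\eta_{\sigma^2}([\mb g^2\conv\mb x]_k)\,\mb I$, and an explicit Gaussian integration (using $|s|^2\sim\mathrm{Exp}(1/2)$ for $s\sim\mc{CN}(0,2)$) yields $\bb E\brac{\eta_{\sigma^2}(s)}=1/(2\sigma^2+1)$, so that $\bb E\brac{\mc Q_{dec}^{\mb M}}=2\mb I$ as expected.

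For $R_1$, I would rescale $\mb h=\mb g^1/\sqrt 2\sim\mc{CN}(\mb 0,\mb I)$ to convert the problem into a standard-Gaussian quadratic chaos process,
\begin{align*}
   \mc Q_{dec}^{\mb M}(\mb g^1,\mb g^2)\;=\;\tfrac{1}{m}\mb R_{[1:n]}\mb C_{\mb h}^*\diag\paren{\tilde{\mb b}(\mb g^2)}\mb C_{\mb h}\mb R_{[1:n]}^\top,\qquad \tilde{\mb b}(\mb g^2)\;=\;2(1+2\sigma^2)\eta_{\sigma^2}(\mb g^2\conv\mb x),
\end{align*}
so that \Cref{thm:moments-norm-circ} applies verbatim in $\mb h$. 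The crucial point is that for $\sigma^2>1/2$ the function $\eta_{\sigma^2}$ is bounded uniformly on $\bb C$ (its infimum is $-1/(2\sigma^2-1)$ at $t=0$, its supremum is $1$ as $|t|\to\infty$), so $\norm{\tilde{\mb b}(\mb g^2)}{\infty}\leq C_{\sigma^2}$ almost surely, with a constant depending only on $\sigma^2$. Applying \Cref{thm:moments-norm-circ} conditionally on $\mb g^2$ then yields $\norm{R_1}{L^p\mid\mb g^2}\leq C_{\sigma^2}\paren{\sqrt{n/m}\log^{3/2}n\log^{1/2}m+\sqrt{p}\sqrt{n/m}+p\,n/m}$ uniformly, and the bound survives taking $\bb E_{\mb g^2}$.

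For $R_2$, which is a scalar multiple of the identity, I would set $F(\mb g^2)=\tfrac{1+2\sigma^2}{m}\sum_{k=1}^m\eta_{\sigma^2}([\mb g^2\conv\mb x]_k)$ so that $R_2=2(F(\mb g^2)-1)\mb I$ and $\bb E F=1$ by the preceding scalar identity. The derivative of $\eta_{\sigma^2}$ is proportional to $re^{-r^2/(2\sigma^2-1)}$ (bounded uniformly in $r=|t|$), so $\eta_{\sigma^2}$ is $L_{\sigma^2}$-Lipschitz on $\bb C$, and hence $\mb g^2\mapsto F(\mb g^2)$ is Lipschitz with constant at most $L_{\sigma^2}(1+2\sigma^2)\norm{\mb C_{\mb x}}{}/\sqrt m$. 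The deterministic inequality $\norm{\mb C_{\mb x}}{}=\norm{\mb F_m\mb x_{\mathrm{pad}}}{\infty}\leq\norm{\mb x}{1}\leq\sqrt n\norm{\mb x}{}=\sqrt n$ (valid for $\mb x\in\bb{CS}^{n-1}$) then upgrades the Lipschitz constant to $\leq C_{\sigma^2}\sqrt{n/m}$, and Gaussian Lipschitz concentration (\Cref{lem:gauss-concentration} together with \Cref{lem:sub-Gaussian}) delivers $\norm{R_2}{L^p}\leq C_{\sigma^2}\sqrt{p}\sqrt{n/m}$, which is absorbed by the target bound. A final Minkowski inequality combines the two estimates.

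The hard part will be the scalar estimate for $R_2$: the Gaussian Lipschitz route introduces a $\norm{\mb C_{\mb x}}{}$ factor that is not controlled by a $\sigma^2$-only constant, which is at odds with the claimed form $C_{\sigma^2}(\cdots)$ of the bound. What rescues the argument is that the target bound already carries the summand $\sqrt{p}\sqrt{n/m}$, and the uniform inequality $\norm{\mb C_{\mb x}}{}\leq\sqrt n$ is precisely tight enough to fit $\norm{R_2}{L^p}$ inside it; any sharper concentration analysis of the circulant chi-squared-type sum is therefore avoided. Everything else is routine tracking of the $\sigma^2$-dependent constants through the rescaling of $\mb g^1$, the uniform boundedness of $\eta_{\sigma^2}$, and the Lipschitz computation.
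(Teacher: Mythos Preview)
Your proposal is correct and follows essentially the same route as the paper: the paper splits into the same two pieces (called $\mc T_1$ and $\mc T_2$ there), applies \Cref{thm:moments-norm-circ} conditionally on $\mb g^2$ for the first piece using the uniform bound $\norm{\eta_{\sigma^2}}{\infty}\le c_{\sigma^2}$, and for the second piece invokes \Cref{lem:moments-bound-1}, whose proof is exactly the Gaussian Lipschitz concentration argument you wrote out, followed by the same crude bound $\norm{\mb C_{\mb x}}{}\le\sqrt n$ to absorb the scalar term into the $\sqrt p\sqrt{n/m}$ summand. The only cosmetic difference is that the paper states \Cref{thm:moments-norm-circ} for general variance $\sigma_g^2$, so the rescaling $\mb h=\mb g^1/\sqrt 2$ is unnecessary.
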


\begin{proof}
Let $\mb b = \paren{2\sigma^2 +1 } \eta_{\sigma^2}\paren{ \mb g^2 \conv \mb x}$, set $\mb b =\begin{bmatrix} b_1 \\ \vdots \\ b_m	\end{bmatrix}$, and write $\mc Q_{dec}^{\mb M}(\mb g^1, \mb g^2) = \frac{1}{m} \mb R_{[1:n]}\mb C_{\mb g}^* \diag\paren{\mb b} \mb C_{\mb g} \mb R_{[1:n]}^\top$. By Minkowski's inequality, we observe
\begin{align*}
	 \paren{\bb E_{\mb g^1,\mb g^2}\brac{ \norm{ \mc Q_{dec}^{\mb M}(\mb g^1, \mb g^2) - 2\mb I}{}^p } }^{1/p}  \leq \underbrace{ \paren{ \bb E_{\mb g^1,\mb g^2} \brac{ \norm{  \mc Q_{dec}^{\mb M}(\mb g^1, \mb g^2) -  \frac{2 }{m} \sum_{k=1}^m b_k \mb I}{}^p  }  }^{1/p} }_{\mc T_1} \\
	+ 2 \underbrace{ \norm{ \frac{1+2\sigma^2}{ m } \innerprod{\mb 1}{ \eta_{\sigma^2}\paren{\abs{\mb g^2 \conv \mb x} }  } -1 }{L^p} }_{\mc T_2} .
\end{align*}
For the term $\mc T_1$, conditioned $\mb g^2$ so that $\mb b$ is fixed, Theorem \ref{thm:moments-norm-circ} implies that for any integer $p\geq 1$,
\begin{align*}
   &\paren{ \bb E_{\mb g^1} \brac{ \norm{  \mc Q_{dec}^{\mb M}(\mb g^1, \mb g^2) -  \frac{2 }{m} \sum_{k=1}^m b_k \mb I}{}^p \mid \mb g^2  }  }^{1/p}   	\\
    \leq\;& C_{\sigma^2} \norm{\mb b}{\infty}\paren{  \sqrt{\frac{n}{m}}  \log^{3/2} n\log^{1/2}m + \sqrt{p} \sqrt{\frac{n}{m}}  + p \frac{n}{m}   },
\end{align*}
where $C_{\sigma^2}>0$ is some numerical constant depending only on $\sigma^2$.
Given the fact that $\norm{\mb b}{\infty} \leq c_{\sigma^2} $ for some constant $c_{\sigma^2}>0$, and for any choice of $\mb g^2$, we have
\begin{align*}
\mc T_1 \; \leq \;  C_{\sigma^2} \paren{  \sqrt{\frac{n}{m}}  \log^{3/2} n\log^{1/2}m + \sqrt{p} \sqrt{\frac{n}{m}}  + p \frac{n}{m}   }.
\end{align*}
For the term $\mc T_2$, Lemma \ref{lem:moments-bound-1} implies that 
\begin{align*}
	\mc T_2 = \norm{ \frac{1+2\sigma^2}{ m } \innerprod{\mb 1}{ \eta_{\sigma^2}\paren{\abs{\mb g^2 \conv \mb x} }  } -1 }{L^p} \;\leq\; \frac{C_{\sigma^2}' }{ \sqrt{m} } \norm{\mb C_{\mb x}}{}   \sqrt{p},
\end{align*}
for some constant $C_{\sigma^2}'>0$. Combining the results above and use the fact that $\norm{\mb C_{\mb x}}{} \leq \sqrt{n}$, we obtain
    \begin{align*}
       	\paren{ \bb E_{\mb g^1,\mb g^2} \brac{ \norm{  \mc Q_{dec}^{\mb M}(\mb g^1, \mb g^2) - 2\mb I}{}^p  }  }^{1/p} \;\leq\; C_{\sigma^2}'' \paren{   \sqrt{\frac{n}{m}} \log^{3/2} n\log^{1/2}m + \sqrt{p} \sqrt{\frac{n}{m}}  + p \frac{n}{m}   },
    \end{align*}
    where $C_{\sigma^2}''>0$ is some numerical constant only depending on $\sigma^2$.
\end{proof}

\begin{lemma}\label{lem:M-x-term}
Let $\mb g\in \bb C^m$ be a complex Gaussian random variable $\mb g \sim \mc {CN}(\mb 0,\mb I)$. Let $\mb M(\mb g)$ be defined as \eqref{eqn:M-def}. For any $\delta \geq 0$, whenever $m \geq C_{\sigma^2} \delta^{-1} \norm{\mb C_{\mb x}}{}^2 n \log m $, we have
\begin{align*}
   \abs{ \mb x^* \paren{ \mb M - \bb E\brac{\mb M}  } \mb x } \;\leq\; \delta
\end{align*}
holds with $1 - m^{ - C_{\sigma^2}' \norm{\mb C_{\mb x}}{}^2  n } $. Here, $C_{\sigma^2},\;C_{\sigma^2}'$ are some numerical constants depending on $\sigma^2$.
\end{lemma}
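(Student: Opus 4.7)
The plan is to reduce the matrix quantity to a scalar function of the Gaussian vector $\mb g$. Observing that $\mb C_{\mb g}\mb R_{[1:n]}^\top \mb x = \mb g \conv \mb x = \mb C_{\mb x}\mb g$ (interpreting $\mb C_{\mb x}\in\bb C^{m\times m}$ as the circulant of the zero-padded $\mb x$), the inner product collapses to a diagonal sum
\begin{align*}
 \mb x^*\mb M\mb x \;=\; \frac{2\sigma^2+1}{m}\sum_{k=1}^m |y_k|^2\,\zeta_{\sigma^2}(y_k), \qquad \mb y \;\doteq\; \mb C_{\mb x}\mb g.
\end{align*}
Since $\zeta_{\sigma^2}(t) = 1 - e^{-|t|^2/(2\sigma^2)}$, the integrand $\phi(y) \doteq |y|^2\zeta_{\sigma^2}(y) = |y|^2 - |y|^2 e^{-|y|^2/(2\sigma^2)}$ splits naturally into an unbounded quadratic piece and a bounded, Lipschitz piece. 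Accordingly, I will write $\mb x^*\mb M\mb x = T_1 - T_2$ with $T_1 = \tfrac{2\sigma^2+1}{m}\|\mb C_{\mb x}\mb g\|^2$ and $T_2 = \tfrac{2\sigma^2+1}{m}\sum_k |y_k|^2 e^{-|y_k|^2/(2\sigma^2)}$, and control each term separately.

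For $T_1$, this is a pure Gaussian quadratic form $\mb g^*\mb B\mb g$ with $\mb B = \mb C_{\mb x}^*\mb C_{\mb x}$. The relevant parameters are $\|\mb B\|_F^2 \leq \|\mb C_{\mb x}\|^2\|\mb C_{\mb x}\|_F^2 = m\|\mb C_{\mb x}\|^2$ and $\|\mb B\| = \|\mb C_{\mb x}\|^2$, so the Hanson--Wright inequality (or its complex-Gaussian analogue, equivalently the moment bound \Cref{thm:moments-bound} applied to the singleton class $\{\mb B^{1/2}\}$) yields a sub-exponential tail of the form $\bb P(|T_1-\bb E T_1|\geq s)\leq 2\exp(-cm\min\{s^2,s\}/(\|\mb C_{\mb x}\|^2(2\sigma^2+1)^2))$. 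For $T_2$, the scalar function $g(y)\doteq|y|^2 e^{-|y|^2/(2\sigma^2)}$ is bounded and $L_{\sigma^2}$-Lipschitz (both the sup-norm and the Lipschitz constant depend only on $\sigma^2$), so that $\mb g \mapsto \sum_k g([\mb C_{\mb x}\mb g]_k)$ is Lipschitz in $\mb g$ with constant at most $L_{\sigma^2}\sqrt{m}\|\mb C_{\mb x}\|$ by Cauchy--Schwarz together with the composition rule. Invoking the Gaussian concentration inequality (\Cref{lem:gauss-concentration}) then gives $\bb P(|T_2 - \bb E T_2|\geq s)\leq 2\exp(-c_{\sigma^2}\, s^2 m/\|\mb C_{\mb x}\|^2)$.

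Combining the two bounds via a union bound, taking $s = \delta/2$ in each, and choosing the constant $C_{\sigma^2}$ sufficiently large so that $m \geq C_{\sigma^2}\delta^{-1}\|\mb C_{\mb x}\|^2 n\log m$ pushes the failure probability below $m^{-C_{\sigma^2}'\|\mb C_{\mb x}\|^2 n}$ yields the claimed bound; this step also requires verifying that $\bb E T_1 - \bb E T_2 = (1+4\sigma^2)/(1+2\sigma^2)$, which matches $\bb E[\mb x^*\mb M\mb x]$ as prescribed by \Cref{lem:M-H-expectation}. The computations of the two scalar expectations are direct (using that each marginal $y_k\sim\mc{CN}(0,1)$ makes $|y_k|^2$ exponential), and reconcile to the desired constant.

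The main technical obstacle is handling the unbounded piece $T_1$: unlike $T_2$, it is not globally Lipschitz in $\mb g$, so naive Gaussian concentration does not apply and one must appeal to a Hanson--Wright-type bound in which the dependence on $\|\mb C_{\mb x}\|$ appears both through the Frobenius norm (Gaussian regime) and through the spectral norm (exponential regime). A secondary subtlety is that $\mb y$ has dependent coordinates, which forces all Lipschitz computations for $T_2$ to be carried out with respect to $\mb g$ rather than $\mb y$, inflating the relevant Lipschitz constant by the factor $\|\mb C_{\mb x}\|$—this is ultimately what drives the $\|\mb C_{\mb x}\|^2$ factor in the sample complexity and shows why ill-conditioned (spectrally sparse) signals are harder.
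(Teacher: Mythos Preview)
Your approach is correct but takes a genuinely different route from the paper. The paper avoids Hanson--Wright entirely by passing to the \emph{square root}: it sets $h(\mb g) = (\mb x^*\mb M\mb x)^{1/2} = \sqrt{(2\sigma^2+1)/m}\,\|\diag(\zeta_{\sigma^2}^{1/2}(\mb C_{\mb x}\mb g))\mb C_{\mb x}\mb g\|$, computes the Wirtinger gradient, and observes that although $|y|$ is unbounded, the scalar map $y\mapsto y\,\zeta_{\sigma^2}^{1/2}(y)$ has bounded derivative (the potentially singular factor $g_1(t)\zeta_{\sigma^2}^{-1/2}(t)$ with $g_1(t)=\tfrac{|t|^2}{2\sigma^2}e^{-|t|^2/(2\sigma^2)}$ is checked to be uniformly bounded), so $h$ is globally $O(\sqrt{1/m}\,\|\mb C_{\mb x}\|)$-Lipschitz and a single application of Gaussian concentration suffices; the bound on $h^2$ is then recovered by squaring and using the sub-Gaussian second-moment estimate to relate $(\bb E h)^2$ to $\bb E[h^2]$. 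Your decomposition $T_1-T_2$ instead isolates the raw quadratic form and pays for it with a separate Hanson--Wright step (which, as you note, is recoverable from \Cref{thm:moments-bound} on the singleton $\{\sqrt{(2\sigma^2+1)/m}\,\mb C_{\mb x}\}$ together with \Cref{lem:moments-tail-bound}). Both routes yield the same tail $\exp(-c_{\sigma^2}m\delta^2/\|\mb C_{\mb x}\|^2)$ in the sub-Gaussian regime. The paper's square-root trick is slicker (one inequality, no case split) and reusable whenever the weight makes $y\mapsto y\,w(y)^{1/2}$ Lipschitz; your split is more modular and makes explicit that the unbounded growth of $|y|^2$ is the only obstruction to a direct Lipschitz argument.
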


\begin{proof}
Let $h(\mb g) = \abs{ \mb x^* \mb M(\mb g) \mb x }^{1/2} =  \sqrt{\frac{2\sigma^2+1 }{m}} \norm{ \diag\paren{ \zeta_{\sigma^2}^{1/2}(\mb C_{\mb x}\mb g) } \mb C_{\mb x} \mb g }{} $.
Then we have its Wirtinger gradient
\begin{align*}
	\frac{ \partial }{ \partial \mb z } h(\mb g) =  \frac{1}{2} \sqrt{ \frac{ 2\sigma^2+1 }{ m } } \norm{ \diag\paren{ \zeta_{\sigma^2}^{1/2}\paren{ \mb C_{\mb x} \mb g }   } \mb C_{\mb x} \mb g  }{}^{-1} \brac{  \mb C_{\mb x}^* \diag\paren{ \zeta_{\sigma^2}(\mb C_x\mb g) } \mb C_{\mb x}\mb g +  \mb C_{\mb x}^* \diag\paren{ f ( \mb C_{\mb x} \mb g ) } \mb C_{\mb x} \mb g  },
\end{align*}
where $g_1(t) = \frac{\abs{t}^2}{2\sigma^2} \exp\paren{ -\frac{ \abs{t}^2 }{ 2\sigma^2 }  }  $, so that
\begin{align*}
 	\norm{\nabla_{\mb g} h(\mb g)}{} \;=\;&   \sqrt{\frac{2\sigma^2+1}{m}} \norm{ \diag\paren{ \zeta_{\sigma^2}^{1/2}(\mb C_{\mb x}\mb g) } \mb C_{\mb x} \mb g }{}^{-1} \times \\
 	& \norm{ \mb C_{\mb x}^* \diag \paren{  \zeta_{\sigma^2}(\mb C_{\mb x}\mb g) } \mb C_{\mb x} \mb g + \mb C_{\mb x}^* \diag\paren{ g_1 ( \mb C_{\mb x} \mb g ) } \mb C_{\mb x} \mb g   }{}.
\end{align*}
Thus, we have
\begin{align*}
	\norm{\nabla_{\mb g} h(\mb g)}{} \;\leq\; \sqrt{\frac{2\sigma^2+1}{m}} \norm{ \mb C_{\mb x} }{} \paren{\norm{  \diag \paren{ \zeta_{\sigma^2}^{1/2}\paren{ \mb C_{\mb x} \mb g }  } }{} +  \norm{ \diag\paren{ g_2(\mb C_{\mb x} \mb g  }   }{}  },
\end{align*}
where $ g_2(t) = g_1(t)  \zeta_{\sigma^2}^{-1/2} (t)$. By using the fact that $ \norm{ \zeta_{\sigma^2}^{1/2} }{ \ell^\infty } \leq 1$ and $ \norm{ g_2 }{ \ell^\infty } \leq C_1 $ for some constant $C_1>0$, we have
\begin{align*}
    \norm{\nabla_{\mb g} h(\mb g)}{} \;\leq\; C_2\sqrt{\frac{2\sigma^2+1}{m}}  \norm{ \mb C_{\mb x} }{},
\end{align*}
for some constant $C_2>0$. Therefore, we can see that the Lipschitz constant $L$ of $h(\mb g)$ is  bounded by $C_2\sqrt{\frac{2\sigma^2+1}{m}}  \norm{ \mb C_{\mb x} }{}$. Thus, by the Gaussian concentration inequality, we observe
\begin{align}\label{eqn:sub-Gaussian-tail-1}
   \bb P \paren{ \abs{ h(\mb g) - \bb E\brac{ h(\mb g) } } \geq t } \;\leq\; 2 \exp \paren{ - \frac{C_{\sigma^2} m t^2}{ \norm{\mb C_{\mb x}}{}^2  }  }
\end{align}
holds with some constant $C_{\sigma^2}>0$ depending only on $\sigma^2$. Thus, we have
\begin{align}\label{eqn:sub-Gaussian-tail-2}
  -t \;\leq\; h(\mb g) - \bb E\brac{h (\mb g)} \;\leq\; t	
\end{align}
holds with probability at least $1- 2 \exp \paren{ - \frac{C_{\sigma^2} m t^2}{ \norm{\mb C_{\mb x}}{}^2  }  }$. By Lemma \ref{lem:M-H-expectation}, we know that
\begin{align*}
 \bb E\brac{h^2(\mb g) } \;=\; \mb x^* \bb E\brac{\mb M(\mb g) } \mb x	\;=\; \frac{4\sigma^2+1}{ 2\sigma^2+1 }.
\end{align*}
This implies that 
\begin{align}\label{eqn:square-bound-1}
   h^2(\mb g) \;\leq \; \paren{\bb E\brac{ h(\mb g) } +t}^2 \; \Longrightarrow \; h^2(\mb g) - \bb E\brac{ h^2(\mb g) }\;\leq\; 2t \sqrt{ \bb E\brac{ h^2(\mb g) }  } +t^2 \;\leq\; 2t  \sqrt{\frac{ 1+ 4 \sigma^2 }{ 1+ 2\sigma^2 } } + t^2
\end{align}
holds with probability at least $1- 2 \exp \paren{ - \frac{C_{\sigma^2} m t^2}{ \norm{\mb C_{\mb x}}{}^2  }  }$. On the other hand, \eqref{eqn:sub-Gaussian-tail-1} also implies that $h(\mb g)$ is subgaussian, Lemma \ref{lem:sub-Gaussian} implies that
\begin{align*}
   \bb E\brac{ \paren{ h(\mb g) - \bb E\brac{ h(\mb g) } }^2  } \leq \frac{ C_{\sigma^2}' \norm{\mb C_{\mb x}}{}^2 }{ m } \; \Longrightarrow \; \bb E\brac{ h^2(\mb g) } \leq \paren{ \bb E\brac{ h(\mb g) } }^2 + \frac{C_{\sigma^2}' \norm{\mb C_{\mb x} }{}^2}{m}
\end{align*}
for some constant $C_{\sigma^2}'>0$ only depending on $\sigma^2$. Suppose $m \geq C_{\sigma^2}'' \norm{ \mb C_{\mb x}}{}^2 $ for some large constant $C_{\sigma^2}''>0$  depending on $\sigma^2>0$, from \eqref{eqn:sub-Gaussian-tail-2}, we have
\begin{align*}
   h(\mb g) \geq \bb E\brac{ h(\mb g) } - t \;\geq\; \sqrt{ \bb E\brac{ h^2(\mb g) } - \frac{ C_{\sigma^2}' \norm{ \mb C_{\mb x} }{}^2  }{ m }  }	 -t.
\end{align*}
Suppose $t \leq \sqrt{ \bb E\brac{ h^2(\mb g) } - \frac{ C_{\sigma^2}' \norm{ \mb C_{\mb x} }{}^2  }{ m }  }$, by squaring both sides, we have
\begin{align*}
   h^2(\mb g) \;\geq\;  \bb E\brac{ h^2(\mb g) } - \frac{ C_{\sigma^2}' \norm{\mb C_{\mb x}}{}^2 }{m }  +t^2 - 2t \sqrt{  \bb E\brac{ h^2(\mb g) } - \frac{ C_{\sigma^2}' \norm{ \mb C_{\mb x} }{}^2  }{ m }  } .
\end{align*}
This further implies that
\begin{align}\label{eqn:square-bound-2}
   h^2(\mb g) - \bb E\brac{ h^2(\mb g) } \;\geq \;t^2 -2t \sqrt{  \frac{4\sigma^2+1}{2\sigma^2+1 }   - \frac{C_{\sigma^2}' \norm{\mb C_{\mb x}}{}^2 }{ m } }	- \frac{ C_{\sigma^2}' \norm{\mb C_{\mb x}}{}^2 }{m },
\end{align}
holds $1- 2 \exp \paren{ - \frac{C_{\sigma^2}m t^2}{ \norm{\mb C_{\mb x}}{}^2  }  }$. Therefore, combining the results in \eqref{eqn:square-bound-1} and \eqref{eqn:square-bound-2}, for any $\delta \geq 0$, whenever $m \geq C_4 \delta^{-1} \norm{\mb C_{\mb x} }{}^2 n\log m $, choosing $t = C_5 \delta$, we have
\begin{align*}
  \abs{ h^2(\mb g) - \bb E\brac{h^2(\mb g) }	 }\; \leq\; \delta,
\end{align*}
holds with probability at least $ 1 - m^{ - C_6 \norm{\mb C_{\mb x}}{}^2  n } $.   	
\end{proof}

\begin{lemma}\label{lem:M-x-cross}
Let $\mb g\in \bb C^m$ be a complex Gaussian random variable $\mb g \sim \mc {CN}(\mb 0,\mb I)$, and let $\mb M(\mb g)$ be defined as \eqref{eqn:M-def}. For any $\delta>0$, whenever $m \geq C_{\sigma^2} \delta^{-2} \norm{\mb C_{\mb x}}{}^2  n \log^4 n$, we have
\begin{align*}
  \norm{ \mb P_{\mb x^\perp} \mb M \mb x}{} \;\leq\; \delta,
\end{align*}
holds with probability at least $1 - 2m^{-c_{\sigma^2} \log^3 n}$. Here, $c_{\sigma^2} ,C_{\sigma^2} $ are some positive constants only depending on $\sigma^2$.
\end{lemma}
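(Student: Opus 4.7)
The plan is to mimic the strategy used to prove \Cref{thm:M-H-concentration}, adapted to the cross-projection situation. First, I would rewrite the quantity of interest in a variational form:
\begin{align*}
\norm{\mb P_{\mb x^\perp}\mb M\mb x}{} \;=\; \sup_{\mb w \in \mc S}\; \abs{\mb w^*\mb M\mb x},
\qquad \mc S \;\doteq\; \Brac{\mb w \in \bb {CS}^{n-1}\mid \mb w\perp \mb x}.
\end{align*}
By \Cref{lem:M-H-expectation}, $\bb E[\mb M] = \mb I + \frac{2\sigma^2}{1+2\sigma^2}\mb x\mb x^*$, so $\bb E[\mb w^*\mb M\mb x]=0$ for every $\mb w\in\mc S$; hence the scalar process $\mb w \mapsto \mb w^*\mb M\mb x$ is centered, and I only need to control its supremum over $\mc S$.

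Next, I would apply the same Gaussian decoupling identity used throughout \Cref{sec:proofs}: introduce $\mb \delta \sim \mc{CN}(\mb 0,\mb I)$ independent of $\mb g$, put $\mb g^1 = \mb g+\mb \delta$ and $\mb g^2 = \mb g-\mb \delta$, and define the decoupled scalar process
\begin{align*}
\mc Q_{dec}^{\mb M,\mb x}(\mb g^1,\mb g^2,\mb w) \;\doteq\; \frac{1+2\sigma^2}{m}\paren{\mb g^1\conv \mb w}^{*}\diag\paren{\eta_{\sigma^2}\paren{\mb g^2\conv \mb x}}\paren{\mb g^1\conv \mb x},
\end{align*}
with $\eta_{\sigma^2}$ as in \eqref{eqn:eta-nu}. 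Exactly as in the proof of \Cref{thm:M-H-concentration}, the Gaussian smoothing identity in \Cref{lem:xi-eta-zeta-expectation} will give $\bb E_{\mb \delta}\bigl[\mc Q_{dec}^{\mb M,\mb x}(\mb g^1,\mb g^2,\mb w)\bigr]$ in terms of the target $\mb w^*\mb M\mb x$ up to correction terms that are either scalar multiples of $\mb w^*\mb x=0$ or scalar multiples of $\frac{1+2\sigma^2}{m}\langle \mb 1,\zeta_{\sigma^2}(\mb g\conv\mb x)\rangle\,\mb w^*\mb x$, all of which vanish because $\mb w\perp\mb x$. This will allow Jensen's inequality to yield, for every integer $p\ge 1$,
\begin{align*}
\norm{\sup_{\mb w\in \mc S}\abs{\mb w^*\mb M\mb x}}{L^p} \;\leq\; \norm{\sup_{\mb w\in \mc S}\abs{\mc Q_{dec}^{\mb M,\mb x}(\mb g^1,\mb g^2,\mb w)}}{L^p}.
\end{align*}

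The next step is to control the decoupled moments. Conditioning on $\mb g^2$, the weight $\mb d \doteq (1+2\sigma^2)\eta_{\sigma^2}(\mb g^2 \conv \mb x)$ is a fixed vector with $\norm{\mb d}{\infty}\le c_{\sigma^2}$. The conditional process is a bilinear chaos in $\mb g^1$ of exactly the type handled by the set $\mc V(\mb d)$ in \Cref{app:moments-circulant-matrix}: indeed, writing $\mb g^1\conv \mb w = \frac{1}{\sqrt m}\mb F_m^{-1}\diag(\mb F_m\mb w)\mb F_m \mb g^1$ for $\mb w\in\mc D$, and similarly for $\mb g^1\conv \mb x$, it reduces to bounding $\innerprod{\mb V_{\mb w}\mb g^1}{\mb V_{\mb x}\mb g^1}$ type inner products uniformly over $\mc S$. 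A straightforward adaptation of the argument behind \Cref{thm:moments-T-2} (Lemmas \ref{lem:decouple-V}--\ref{lem:V-2}) then gives, after taking expectation over $\mb g^2$,
\begin{align*}
\norm{\sup_{\mb w\in \mc S}\abs{\mc Q_{dec}^{\mb M,\mb x}(\mb g^1,\mb g^2,\mb w)}}{L^p} \;\leq\; C_{\sigma^2}\paren{\sqrt{\tfrac{n}{m}}\log^{3/2} n \log^{1/2} m + \sqrt{p}\sqrt{\tfrac{n}{m}} + p\tfrac{n}{m}},
\end{align*}
where the $\norm{\mb C_{\mb x}}{}$ factor enters through the chaos bound for the $\mb g^1\conv \mb x$ side (it governs $d_F(\mc V(\mb d))$ and $d_2(\mc V(\mb d))$ for the column indexed by $\mb x$). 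Finally, plugging this moment bound into \Cref{lem:moments-tail-bound} yields the stated tail bound $1-2m^{-c_{\sigma^2}\log^3 n}$ as soon as $m\ge C_{\sigma^2}\delta^{-2}\norm{\mb C_{\mb x}}{}^2 n \log^4 n$.

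The main obstacle I anticipate is the exact bookkeeping in the conditional chaos step: unlike \Cref{thm:M-H-concentration}, the decoupled quadratic form is \emph{asymmetric} in $\mb g^1$ (one factor is $\mb g^1\conv \mb w$, the other $\mb g^1\conv \mb x$), so the moments of $\sup_{\mb w}\abs{\innerprod{\mb V_{\mb w}\mb g^1}{\diag(\mb d)\mb V_{\mb x}\mb g^1}}$ do not immediately fit the symmetric template of \Cref{thm:moments-bound,thm:moments-T-2}. I expect to handle this by a polarization-style reduction, writing the asymmetric bilinear form as $\tfrac14\sum_{\pm}\pm\norm{\diag(\mb d)^{1/2}(\mb V_{\mb w}\pm\mb V_{\mb x})\mb g^1}{}^{2}$, and then applying \Cref{thm:moments-bound} and \Cref{thm:moments-norm-circ} to each symmetric piece separately. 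The $\norm{\mb C_{\mb x}}{}$ dependence will appear when estimating $d_2$ and $d_F$ of the fixed-direction set associated with $\mb V_{\mb x}$.
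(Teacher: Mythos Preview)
Your recoupling step has a genuine error. With your choice $\mc Q_{dec}^{\mb M,\mb x}(\mb g^1,\mb g^2,\mb w)=\frac{1+2\sigma^2}{m}(\mb g^1\conv\mb w)^*\diag\!\bigl(\eta_{\sigma^2}(\mb g^2\conv\mb x)\bigr)(\mb g^1\conv\mb x)$, the conditional expectation over $\mb\delta$ does \emph{not} return $\mb w^*\mb M\mb x$. Using $\mb w\perp\mb x$, independence of $\mb P_{\mb x^\perp}\mb\delta_k$ and $\mb\delta_k^*\mb x$, and the identities in \Cref{lem:xi-eta-zeta-expectation}, one gets
\[
\bb E_{\mb\delta}\bigl[\mb w^*(\mb g_k+\mb\delta_k)\,\eta_{\sigma^2}\!\bigl((\mb g_k-\mb\delta_k)^*\mb x\bigr)\,(\mb g_k+\mb\delta_k)^*\mb x\bigr]
\;=\;\mb w^*\mb g_k\cdot\Bigl[t\,\zeta_{\sigma^2}(t)\;-\;\pi\, t\,\xi_{\sigma^2}(t)\Bigr],\qquad t=\mb g_k^*\mb x,
\]
so a nonvanishing residual term $\mb w^*\mb g_k\cdot \pi t\,\xi_{\sigma^2}(t)$ remains. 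This residual is of exactly the same structural form as the target, so your scheme loops back on itself rather than closing. The correction terms are \emph{not} multiples of $\mb w^*\mb x$.

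The paper fixes this by an \emph{asymmetric} decoupling: it places the $\mb x$-factor on the \emph{same} copy as the weight, defining
\[
\mc Q_{dec}^{\mb H\mb x^\perp}(\mb g^1,\mb g^2)\;=\;\tfrac{2\sigma^2+1}{m}\,\mb R_{[1:n]}\mb C_{\mb g^1}^*\diag\!\bigl(\nu_{\sigma^2}(\mb g^2\conv\mb x)\bigr)\,\mb C_{\mb g^2}\mb R_{[1:n]}^\top\mb x,
\]
with $\nu_{\sigma^2}$ from \eqref{eqn:eta-nu} chosen precisely so that $\bb E_{s\sim\mc{CN}(0,1)}\bigl[(t+s)\,\nu_{\sigma^2}(t+s)\bigr]=t\,\zeta_{\sigma^2}(t)$, making the recoupling exact. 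This also changes the second step: after decoupling, the quantity is \emph{linear} in $\mb g^1$, so one conditions on $\mb g^1$ and observes that $\mb g^2\mapsto g(\mb C_{\mb x}\mb g^2)$ with $g(t)=t\,\nu_{\sigma^2}(t)$ is Lipschitz. Vector-valued Gaussian concentration (\Cref{thm:gaussian-vector-concentration}) then yields moment bounds with Lipschitz constant $L_h=\frac{C_{\sigma^2}}{m}\norm{\mb R_{[1:n]}\mb C_{\mb g^1}^*}{}\,\norm{\mb C_{\mb x}}{}$; integrating over $\mb g^1$ via \Cref{cor:circ-spectral-norm} gives the stated bound. This is where $\norm{\mb C_{\mb x}}{}$ actually enters --- through the Lipschitz constant of the $\mb g^2$-dependence --- not through $d_F$ or $d_2$ of any chaos class as you suggest. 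Your polarization idea for an asymmetric chaos in $\mb g^1$ is unnecessary once the decoupling is set up this way.
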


\begin{proof}
First, let us define decoupled terms
\begin{align}
  \mc Q_{dec}^{\mb M \mb x^\perp} (\mb g^1, \mb g^2) \;&=\; \frac{2\sigma^2+1}{m} \mb P_{\mb x^\perp} \mb R_{[1:n]} \mb C_{\mb g^1 }^* \diag\paren{ \nu_{\sigma^2}\paren{ \mb g^2 \conv \mb x } } \mb C_{\mb g^2} \mb R_{[1:n]}^\top \mb x, \\
  \mc Q_{dec}^{\mb H \mb x^\perp} (\mb g^1, \mb g^2) \;&=\; \frac{2\sigma^2+1}{m}  \mb R_{[1:n]} \mb C_{\mb g^1 }^* \diag\paren{ \nu_{\sigma^2}\paren{ \mb g^2 \conv \mb x } } \mb C_{\mb g^2} \mb R_{[1:n]}^\top \mb x, \label{eqn:Q-dec-Hxperp}
\end{align}
where $\nu_{\sigma^2}(t)$ is defined in \eqref{eqn:eta-nu}. Let $\mb C_{\mb g} \mb R_{[1:n]}^\top = \begin{bmatrix}
 \mb g_1^* \\ \vdots \\ \mb g_m^*	
 \end{bmatrix}
 $ and $\mb C_{\mb \delta} \mb R_{[1:n]}^\top = \begin{bmatrix}
 \mb \delta_1^* \\ \vdots \\ \mb \delta_m^*	
 \end{bmatrix}
 $ , then by Lemma \ref{lem:xi-eta-zeta-expectation}, we observe
\begin{align*}
   \bb E_{\mb \delta}\brac{ \mc Q_{dec}^{\mb M \mb x^\perp} (\mb g+ \mb \delta, \mb g - \mb \delta) } 
   \;=\;& \frac{2\sigma^2+1}{m} \bb E_{\mb \delta}\brac{ \mb P_{\mb x^\perp} \mb R_{[1:n]}\mb C_{\mb g+ \mb \delta }^* \diag\paren{ \nu_{\sigma^2}\paren{ (\mb g - \mb \delta) \conv \mb x }  } \mb C_{\mb g - \mb \delta } \mb R_{[1:n]}^\top \mb x } \\
   =\;& \frac{2\sigma^2+1}{m} \sum_{k=1}^m  \bb E_{\mb \delta} \brac{ \nu_{\sigma^2} \paren{   (\mb g_k - \mb \delta_k )^* \mb x }  \mb P_{\mb x^\perp} (\mb g_k+ \mb \delta_k) \paren{ \mb g_k - \mb \delta_k }^* \mb x } \\
   =\;& \frac{2\sigma^2+1}{m} \sum_{k=1}^m \mb P_{\mb x^\perp} \mb g_k \bb E_{\mb \delta_k^* \mb x } \brac{ \nu_{\sigma^2} \paren{   (\mb g_k - \mb \delta_k )^* \mb x } \paren{ \mb g_k - \mb \delta_k }^* \mb x } \\
   =\;& \frac{2\sigma^2+1}{m} \sum_{k=1}^m \zeta_{\sigma^2} \paren{ \mb g_k^*\mb x } \mb P_{\mb x^\perp} \mb g_k \mb g_k^*\mb x  \\
   =\;&  \frac{2\sigma^2+1}{m} \mb P_{\mb x^\perp} \mb R_{[1:n]} \mb C_{\mb g}^* \diag \paren{ \zeta_{\sigma^2} \paren{ \mb g \conv \mb x }  } \mb C_{\mb g} \mb R_{[1:n]}^\top \mb x.
\end{align*}
Thus, for any integer $p \geq 1$, we have
\begin{align*}
   &\paren{\bb E_{\mb g}\brac{ \norm{ \frac{2\sigma^2+1}{m} \mb P_{\mb x^\perp} \mb R_{[1:n]} \mb C_{\mb g}^* \diag \paren{ \zeta_{\sigma^2} \paren{ \mb g \conv \mb x }  } \mb C_{\mb g} \mb R_{[1:n]}^\top \mb x }{}^p } }^{1/p} \\
   \;=\;& \paren{\bb E_{\mb g} \brac{ \norm{  \bb E_{\mb \delta}\brac{ \mc Q_{dec}^{\mb M \mb x^\perp} (\mb g+ \mb \delta, \mb g - \mb \delta) }    }{}^p } }^{1/p} \\
   \;\leq\;& \paren{ \bb E_{\mb g^1,\mb g^2} \brac{ \norm{  \mc Q_{dec}^{\mb M \mb x^\perp} (\mb g^1, \mb g^2)  }{}^p   }  }^{1/p} \leq \paren{ \bb E_{\mb g^1,\mb g^2} \brac{ \norm{ \mc Q_{dec}^{\mb H \mb x^\perp} (\mb g^1, \mb g^2) }{}^p   }  }^{1/p}.
\end{align*}
By Lemma \ref{lem:moments-cross-term}, we have
\begin{align*}
   \paren{ \bb E_{\mb g^1,\mb g^2} \brac{ \norm{  \mc Q_{dec}^{\mb H \mb x^\perp} (\mb g^1, \mb g^2)  }{}^p   }  }^{1/p} \;\leq\; C_{\sigma^2} \norm{\mb C_{\mb x}}{} \brac{  \sqrt{\frac{n}{m}}\paren{1+ \sqrt{\frac{n}{m}} \log^{3/2} n \log^{1/2}m } \sqrt{p}   + \frac{n}{m} p }.
\end{align*}
Therefore, by Lemma \ref{lem:moments-tail-bound}, finally for any $\delta>0$, whenever $m \geq C \delta^{-2} \norm{\mb C_{\mb x}}{}^2  n \log^4 n $ we obtain
\begin{align*}
   \bb P\paren{ \norm{ \mb P_{\mb x^\perp} \mb M \mb x}{} \;\geq\; \delta } 
   \;\leq\; 2 m^{- c \log^3 n},	
\end{align*}
where $c,C>0$ are some positive constants.
\end{proof}

\begin{lemma}\label{lem:moments-cross-term}
Let $\mb g^1$ and $\mb g^2$ be random variables defined as in \eqref{eqn:g1-g2}, and let $  \mc Q_{dec}^{\mb H \mb x^\perp} (\mb g^1, \mb g^2)$ be defined as \eqref{eqn:Q-dec-Hxperp}. Then for any integer $p \geq 1$, we have
\begin{align*}
   \paren{ \bb E_{\mb g^1,\mb g^2} \brac{ \norm{  \mc Q_{dec}^{\mb H \mb x^\perp} (\mb g^1, \mb g^2)  }{}^p   }  }^{1/p} \leq C_{\sigma^2} \norm{\mb C_{\mb x}}{} \brac{  \sqrt{\frac{n}{m}}\paren{1+ \sqrt{\frac{n}{m}} \log^{3/2} n \log^{1/2}m } \sqrt{p}   + \frac{n}{m} p },
   \end{align*}
where $C_{\sigma^2}$ is some positive constant only depending on $\sigma^2$.
\end{lemma}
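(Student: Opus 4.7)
The plan is to condition on $\mb g^2$ and exploit that $\mc Q_{dec}^{\mb H\mb x^\perp}(\mb g^1,\mb g^2)$ is then a centered, Gaussian-linear, vector-valued function of $\mb g^1$. Using commutativity of cyclic convolution (so that $\mb C_{\mb g^2}\mb R_{[1:n]}^\top\mb x = \mb C_{\mb x}\mb g^2$), I first rewrite
\begin{align*}
\mc Q_{dec}^{\mb H\mb x^\perp}(\mb g^1,\mb g^2) \;=\; \tfrac{2\sigma^2+1}{m}\,\mb R_{[1:n]}\mb C_{\mb g^1}^*\mb v(\mb g^2),
\qquad
\mb v(\mb g^2) \;:=\; \diag\paren{\nu_{\sigma^2}(\mb g^2\conv\mb x)}\mb C_{\mb x}\mb g^2.
\end{align*}
For fixed $\mb g^2$ (hence $\mb v$), the map $\mb g^1\mapsto\mb R_{[1:n]}\mb C_{\mb g^1}^*\mb v$ is a linear operator $\mb B(\mb v):\bb C^m\to\bb C^n$ whose Frobenius and operator norms satisfy $\norm{\mb B(\mb v)}{F}=\sqrt{2n}\norm{\mb v}{}$ (from $\bb E_{\mb g^1}\norm{\mb R_{[1:n]}\mb C_{\mb g^1}^*\mb v}{}^2=2n\norm{\mb v}{}^2$, each of the $n$ output coordinates being a complex Gaussian of variance $2\norm{\mb v}{}^2$) and $\norm{\mb B(\mb v)}{}\leq\sqrt{2}\norm{\mb C_{\mb v}}{}$ (from the convolution identity $\mb C_{\mb g^1}\mb v=\mb C_{\mb v}\mb g^1$). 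Applying Theorem \ref{thm:moments-bound} to the singleton class $\Brac{\mb B(\mb v)}$ yields the conditional moment bound
\begin{align*}
\norm{\mb R_{[1:n]}\mb C_{\mb g^1}^*\mb v}{L^p\mid\mb g^2}
\;\leq\; C_{\sigma^2}\paren{\sqrt{n}\,\norm{\mb v}{} + \sqrt{p}\,\norm{\mb C_{\mb v}}{}}.
\end{align*}

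The remaining work is to take the $L^p$ norm over $\mb g^2$ of these two factors. Since $\norm{\nu_{\sigma^2}}{L^\infty}\leq c_{\sigma^2}$, we have $\norm{\mb v}{}\leq c_{\sigma^2}\norm{\mb C_{\mb x}\mb g^2}{}$; Gaussian concentration (Lemma \ref{lem:gauss-concentration}) applied to the $\norm{\mb C_{\mb x}}{}$-Lipschitz map $\mb g^2\mapsto\norm{\mb C_{\mb x}\mb g^2}{}$ together with $\bb E\norm{\mb C_{\mb x}\mb g^2}{}^2=2m\norm{\mb x}{}^2=2m$ gives $\norm{\mb v}{L^p}\leq c_{\sigma^2}\paren{\sqrt{2m}+C\norm{\mb C_{\mb x}}{}\sqrt{p}}$. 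For $\norm{\mb C_{\mb v}}{L^p}$ I use $\norm{\mb C_{\mb v}}{}=\norm{\mb F_m\mb v}{\infty}$ and apply a sup-of-chaos estimate mirroring Corollary \ref{cor:circ-spectral-norm}: recognizing $|(\mb F_m\mb v)_k|$ as a functional of $\mb g^2$ and using the uniform bound $\norm{\nu_{\sigma^2}}{L^\infty}\leq c_{\sigma^2}$ to absorb the nonlinear diagonal factor into a numerical constant, the $d_F$, $d_2$, and $\gamma_2$ estimates of Lemmas \ref{lem:d_2-d_F}--\ref{lem:gamma-2} (applied to the family of frequency functionals, scaled by $\norm{\mb C_{\mb x}}{}$) plug into Theorem \ref{thm:moments-bound} to produce
\begin{align*}
\norm{\mb C_{\mb v}}{L^p} \;\leq\; C_{\sigma^2}\,\norm{\mb C_{\mb x}}{}\brac{\sqrt{mn}\,\log^{3/2}n\log^{1/2}m \;+\; \sqrt{mn}\,\sqrt{p} \;+\; n\sqrt{p}}.
\end{align*}

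Combining via Minkowski's inequality gives $\norm{\mc Q_{dec}^{\mb H\mb x^\perp}}{L^p}\leq\tfrac{C_{\sigma^2}}{m}\paren{\sqrt{n}\norm{\mb v}{L^p}+\sqrt{p}\norm{\mb C_{\mb v}}{L^p}}$, and substituting the two estimates above produces the three terms in the claimed bound. The pure constant $\sqrt{n/m}$ that arises from $\sqrt{n}\norm{\mb v}{L^p}/m$ is absorbed into $\norm{\mb C_{\mb x}}{}\sqrt{n/m}\sqrt{p}$ using $\norm{\mb C_{\mb x}}{}\sqrt{p}\geq 1$, valid since $\norm{\mb C_{\mb x}}{}\geq\norm{\mb x}{}=1$ and $p\geq 1$. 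The main obstacle is the $\norm{\mb C_{\mb v}}{L^p}$ estimate: because $\mb v=\diag\paren{\nu_{\sigma^2}(\mb g^2\conv\mb x)}\mb C_{\mb x}\mb g^2$ couples a bounded nonlinear factor of $\mb g^2$ with a Gaussian-linear factor, controlling $\sup_k|(\mb F_m\mb v)_k|$ requires a careful Dudley-integral analysis parallel to the proof of Lemma \ref{lem:gamma-2}; the boundedness of $\nu_{\sigma^2}$ ensures that the diagonal factor enters the $\gamma_2$ functional only through an $O(1)$ multiplicative constant, and all the logarithmic factors in the final moment bound originate from this $\gamma_2$ computation.
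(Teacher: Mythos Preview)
Your overall strategy (condition on one of the two independent Gaussians, get a conditional moment bound, then take the outer $L^p$ norm) is the same as the paper's, but you condition on the \emph{opposite} variable. The paper fixes $\mb g^1$, observes that $h(\mb g^2)=\mc Q_{dec}^{\mb H\mb x^\perp}(\mb g^1,\mb g^2)$ is a centered, $L_h$-Lipschitz vector-valued function of $\mb g^2$ with $L_h=\tfrac{C_{\sigma^2}}{m}\norm{\mb R_{[1:n]}\mb C_{\mb g^1}^*}{}\norm{\mb C_{\mb x}}{}$ (using that $g(t)=t\nu_{\sigma^2}(t)$ is Lipschitz), applies the vector-valued Gaussian concentration Theorem~\ref{thm:gaussian-vector-concentration} to get $(\bb E_{\mb g^2}\norm{h}{}^p)^{1/p}\le C\sqrt{n}\,L_h\sqrt{p}$, and finally takes the $L^p$ norm over $\mb g^1$ using Corollary~\ref{cor:circ-spectral-norm} for $\norm{\mb R_{[1:n]}\mb C_{\mb g^1}^*}{}$. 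The nonlinearity in $\mb g^2$ is absorbed directly into the Lipschitz constant, and the log factors come only from the well-understood partial-circulant operator norm in $\mb g^1$.

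Your route forces you to control $\norm{\mb C_{\mb v(\mb g^2)}}{L^p}$, and this is where the proposal breaks down. Theorem~\ref{thm:moments-bound} and the $\gamma_2$ machinery of Lemmas~\ref{lem:d_2-d_F}--\ref{lem:gamma-2} apply to \emph{linear} (or quadratic) forms in a Gaussian; here $\mb v=g(\mb C_{\mb x}\mb g^2)$ is nonlinear in $\mb g^2$, and the assertion that ``the boundedness of $\nu_{\sigma^2}$ ensures the diagonal factor enters the $\gamma_2$ functional only through an $O(1)$ constant'' is not justified---the diagonal factor is itself random and correlated with the Gaussian part, so it cannot simply be pulled out of the Dudley integral. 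Worse, even granting your stated bound $\norm{\mb C_{\mb v}}{L^p}\le C_{\sigma^2}\norm{\mb C_{\mb x}}{}[\sqrt{mn}\log^{3/2}n\log^{1/2}m+\sqrt{mn}\sqrt{p}+n\sqrt{p}]$, plugging into $\tfrac{\sqrt{p}}{m}\norm{\mb C_{\mb v}}{L^p}$ yields terms $\norm{\mb C_{\mb x}}{}\sqrt{n/m}\,\sqrt{p}\log^{3/2}n\log^{1/2}m$ and $\norm{\mb C_{\mb x}}{}\sqrt{n/m}\,p$, which exceed the lemma's claimed terms $\norm{\mb C_{\mb x}}{}(n/m)\sqrt{p}\log^{3/2}n\log^{1/2}m$ and $\norm{\mb C_{\mb x}}{}(n/m)p$ by a factor $\sqrt{m/n}$. (Interestingly, had you bounded $d_2\le d_F\le C\sqrt{n}\norm{\mb v}{}$ instead of invoking $\norm{\mb C_{\mb v}}{}$, the approach would go through cleanly using only $\norm{\mb v}{L^p}$, which you already control.)
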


\begin{proof}
First, we fix $\mb g^1$, and let $h(\mb g^2) = \mc Q_{dec}^{\mb H \mb x^\perp} (\mb g^1, \mb g^2) $. Let $g(t) = t \nu_{\sigma^2}(t) $, for which the Lipschitz constant $L_f \leq C_{\sigma^2}$ for some positive constant $C_{\sigma^2}$ only depending on $\sigma^2$. Then given an independent copy $\wt{\mb g^2}$ of $\mb g^2$, we observe
\begin{align*}
   \norm{ h(\mb g^2) - h(\wt{\mb g^2})  }{} \;&\leq\; \frac{2\sigma^2+1}{m} \norm{ \mb R_{[1:n]} \mb C_{\mb g^1}^* }{} \norm{g(\mb C_{\mb x}\mb g^2) - g(\mb C_{\mb x} \wt{\mb g^2}  ) }{} \\
   \;&\leq\; \underbrace{ \frac{C_{\sigma^2}'}{m} \norm{ \mb R_{[1:n]} \mb C_{\mb g^1}^* }{} \norm{\mb C_{\mb x}}{} }_{ L_h }  \norm{ \mb g^2 - \wt{ \mb g^2 } }{} 
\end{align*}
where $L_h$ is the Lipschitz constant of $h(\mb g^2)$. Given the fact that $\bb E_{\mb g^2}\brac{ h(\mb g^2) } = \mb 0$, by Lemma \ref{thm:gaussian-vector-concentration}, for any $t>\sqrt{n} L_h $ we have
\begin{align*}
   \bb P\paren{  \norm{h(\mb g^2) }{} \geq t  } \;\leq\; e \bb P\paren{ \norm{\mb v}{} \;\geq\; \frac{t}{L_h} } \leq e \exp \paren{  - \frac{1}{2} \paren{ \frac{ t}{ L_h } - \sqrt{n} }^2 },
\end{align*}
where $\mb v \in \bb R^n$ with $\mb v \sim \mc N(\mb 0, \mb I)$, and we used the Gaussian concentration inequality for the tail bound of $\norm{\mb v}{}$. By a change of variable, we obtain
\begin{align*}
   \bb P \paren{  \norm{h(\mb g^2) }{} \geq t + \sqrt{n} L_h  }	 \;\leq\; e \exp\paren{ - \frac{1}{2L_h^2} t^2 }
\end{align*}
holds for all $t>0$. By using the tail bound above, we obtain
\begin{align*}
&\bb E_{\mb g^2}\brac{ \norm{  \mc Q_{dec}^{\mb H \mb x^\perp} (\mb g^1, \mb g^2) }{}^p } \\
=\;& \int_{t=0}^\infty  \bb P\paren{ \norm{h(\mb g^2) }{}^p \geq t } dt	 \\
=\;&  \int_{t=0}^{ \paren{\sqrt{n }L_h }^p } \bb P\paren{ \norm{h(\mb g^2) }{}^p \geq t} dt + \int_{t = \paren{\sqrt{n}L_h}^p }^\infty \bb P\paren{ \norm{h(\mb g^2) }{} \geq t^{1/p} }  dt \\
\leq\;& \paren{ \sqrt{n} L_h }^p + p\int_{u = \sqrt{n}L_h}^\infty \bb P\paren{ \norm{h(\mb g^2)}{} \geq u  } u^{p-1} du \\
=\;& \paren{ \sqrt{n} L_h }^p + p\int_{u = 0 }^\infty \bb P\paren{ \norm{h(\mb g^2)}{} \geq u + \sqrt{n}L_h } \paren{u+\sqrt{n}L_h}^{p-1} du \\
\leq\;& \paren{ \sqrt{n} L_h }^p + 2^{p-2}p\paren{ \sqrt{n} L_h }^{p-1} e \int_{u=0}^\infty \exp\paren{ - \frac{u^2}{2L_h^2} }  du + 2^{p-2}p e\int_{u = 0 }^\infty  \exp\paren{ - \frac{u^2}{2L_h^2} }  u^{p-1} du  \\
=\; & \paren{ \sqrt{n} L_h }^p + \sqrt{\frac{\pi}{2}} 2^{p-2} p \sqrt{n}^{p-1} L_h^p e + 2^{3p/2-3} p L_h^p e \int_{\tau=0}^\infty e^{-\tau} \tau^{\frac{p}{2}-1} d\tau  \\
\leq \; & 3\sqrt{n}^p L_h^p \paren{1+ \sqrt{\frac{\pi}{2}}2^{p-1}p  + 2^{3p/2-3}p \Gamma(p/2) } \leq 3\paren{4\sqrt{n}L_h}^p p \max\Brac{(p/2)^{p/2}, \sqrt{2\pi} }  ,
\end{align*}
where we used the fact that $\Gamma(p/2) \leq \max\Brac{(p/2)^{p/2}, \sqrt{2\pi} }$ for any integer $p \geq 1$. By Corollary \ref{cor:circ-spectral-norm}, we know that 
\begin{align*}
   \bb E_{\mb g^1}\brac{ \norm{ \mb R_{[1:n]} \mb C_{\mb g^1}^* }{}^p  } \;\leq\; c_{\sigma^2}^p \sqrt{m}^p \paren{ 1+ \sqrt{\frac{n}{m}} \log^{3/2} n \log^{1/2}m + \sqrt{\frac{n}{m}} \sqrt{p} }^p,
\end{align*}
where $c_{\sigma^2}$ is some constant only depending only on $\sigma^2$. Therefore, using the fact that $L_h = C_{\sigma^2}'  \norm{ \mb R_{[1:n]} \mb C_{\mb g^1}^* }{} \norm{\mb C_{\mb x}}{} /m$ and $p^{1/p} \leq e^{1/e} $, we obtain
\begin{align*}
	\paren{\bb E_{\mb g^1,\mb g^2}\brac{ \norm{  \mc Q_{dec}^{\mb H \mb x^\perp} (\mb g^1, \mb g^2) }{}^p } }^{1/p} \;&\leq\; C_{\sigma^2}'' \norm{\mb C_{\mb x}}{} \sqrt{\frac{n}{m}}\paren{ 1+ \sqrt{\frac{n}{m}} \log^{3/2} n \log^{1/2}m + \sqrt{\frac{n}{m}} \sqrt{p} } \sqrt{p} \\
	\;&=\; C_{\sigma^2}'' \norm{\mb C_{\mb x}}{} \brac{  \sqrt{\frac{n}{m}}\paren{1+ \sqrt{\frac{n}{m}} \log^{3/2} n \log^{1/2}m } \sqrt{p}   + \frac{n}{m} p },
\end{align*}
where $C_{\sigma^2}''>0$ is some constant depending only on $\sigma^2$.
\end{proof}

\subsection{Auxiliary Results}
The following are some auxiliary results used in the main proof.
\begin{lemma}\label{lem:xi-eta-zeta-expectation}
Let $\xi_{\sigma^2}, \zeta_{\sigma^2}$, $\eta_{\sigma^2}$ and $\nu_{\sigma^2}$ be defined as \eqref{eqn:xi-zeta} and \eqref{eqn:eta-nu}, for $t \in \bb C$, we have
\begin{align*}
   \bb E_{s \sim \mc {CN}(0,1)} \brac{ \xi_{\sigma^2}(t+s) } \;&=\;\xi_{ \sigma^2 + \frac{1}{2} }(t) \\
   \bb E_{s\sim \mc {CN}(0,1)} \brac{ \eta_{\sigma^2}(t+s) } \;&=\;\zeta_{\sigma^2}(t) \\
   \bb E_{ s \sim \mc {CN}(0,1)} \brac{ \zeta_{\sigma^2}(s) } \;&=\; \frac{1}{2\sigma^2+1} \\
   \bb E_{ s \sim \mc {CN}(0,1) } \brac{  \abs{t}^2 \zeta_{\sigma^2}(s) } \;&=\; \frac{ 4 \sigma^2 +1 }{ \paren{2 \sigma^2 +1}^2 } \\
   \bb E_{ s \sim \mc {CN}(0,2) } \brac{ \eta_{\sigma^2}(s) } \;&=\; \frac{1}{2\sigma^2 +1 } \\
   \bb E_{ s \sim \mc {CN}(0,1) } \brac{ (t+s) \nu_{\sigma^2}(t+s) } \;&=\; t \zeta_{\sigma^2}(t).
\end{align*}
\end{lemma}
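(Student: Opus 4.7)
The whole lemma reduces to a single observation, namely the reproducing property of complex Gaussian densities: for $s\sim \mc{CN}(0,\tau^2)$, whose density equals $\tfrac{1}{\pi\tau^2}e^{-|s|^2/\tau^2}$, completing the square in the convolution integral gives
\begin{align*}
  \bb E_{s\sim \mc{CN}(0,\tau^2)}\brac{\xi_{\sigma^2}(t+s)}
  \;=\; \int \frac{1}{\pi\tau^2}e^{-|s|^2/\tau^2}\cdot \frac{1}{2\pi\sigma^2}e^{-|t+s|^2/(2\sigma^2)}\,ds
  \;=\; \frac{1}{\pi(2\sigma^2+\tau^2)}e^{-|t|^2/(2\sigma^2+\tau^2)}
  \;=\; \xi_{\sigma^2+\tau^2/2}(t).
\end{align*}
With $\tau^2=1$ this is identity (1). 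Identity (2) is immediate by plugging $\sigma^2-\tfrac12$ into (1) and using the definitions of $\eta_{\sigma^2}$ and $\zeta_{\sigma^2}$. Setting $t=0$ in (1) yields $\bb E[\xi_{\sigma^2}(s)]=\xi_{\sigma^2+1/2}(0)=\tfrac{1}{\pi(2\sigma^2+1)}$, which together with $\zeta_{\sigma^2}=1-2\pi\sigma^2\xi_{\sigma^2}$ gives identity (3). Identity (5) is an identical calculation with the width-$2$ Gaussian: $\bb E_{s\sim \mc{CN}(0,2)}[\xi_{\sigma^2-1/2}(s)]=\xi_{(\sigma^2-1/2)+1}(0)=\tfrac{1}{\pi(2\sigma^2+1)}$, and then the prefactor in $\eta_{\sigma^2}$ gives $\tfrac{1}{2\sigma^2+1}$.

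Identity (4) (where the ``$|t|^2$'' should read ``$|s|^2$'', since otherwise $|t|^2$ is constant and the identity is incoherent) reduces to computing $\bb E_{s\sim\mc{CN}(0,1)}[|s|^2\xi_{\sigma^2}(s)]$. The plan is to evaluate it directly: the product of densities $\tfrac{1}{\pi}e^{-|s|^2}\cdot\tfrac{1}{2\pi\sigma^2}e^{-|s|^2/(2\sigma^2)}$ is proportional to a $\mc{CN}\!\bigl(0,\tfrac{2\sigma^2}{2\sigma^2+1}\bigr)$ density, so I would pull out the normalizing constants and use $\bb E_{z\sim \mc{CN}(0,\kappa^2)}[|z|^2]=\kappa^2$ with $\kappa^2=\tfrac{2\sigma^2}{2\sigma^2+1}$. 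This yields $\bb E[|s|^2\xi_{\sigma^2}(s)]=\tfrac{2\sigma^2}{\pi(2\sigma^2+1)^2}$, and after multiplying by $2\pi\sigma^2$ and subtracting from $\bb E[|s|^2]=1$, the arithmetic collapses to $\tfrac{4\sigma^2+1}{(2\sigma^2+1)^2}$.

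The most delicate item is identity (6), and I expect this to be the main obstacle because it is where the peculiar constant $\tfrac{4\pi\sigma^4}{2\sigma^2-1}$ in the definition of $\nu_{\sigma^2}$ gets justified. The plan is to compute $\bb E_{s\sim\mc{CN}(0,1)}[(t+s)\,\xi_{\sigma^2-1/2}(t+s)]$ by changing variables $u=t+s$ and completing the square in $|u-t|^2+|u|^2/(2\sigma^2-1)$. The quadratic part re-centers at $u=t/\alpha$ with $\alpha=\tfrac{2\sigma^2}{2\sigma^2-1}$, the linear-in-$u$ integral $\int u\,e^{-\alpha|u-t/\alpha|^2}du$ evaluates to $\pi t/\alpha^2$ (the mean of a complex Gaussian), and after bookkeeping the leftover exponential is exactly $e^{-|t|^2/(2\sigma^2)}$. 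Plugging in and tracking constants should give $\bb E[(t+s)\xi_{\sigma^2-1/2}(t+s)]=\tfrac{(2\sigma^2-1)\,t}{4\pi\sigma^4}e^{-|t|^2/(2\sigma^2)}$, so that the prefactor $\tfrac{4\pi\sigma^4}{2\sigma^2-1}$ cancels everything except $t\,e^{-|t|^2/(2\sigma^2)}$. Since $\zeta_{\sigma^2}(t)=1-e^{-|t|^2/(2\sigma^2)}$, using $\bb E[t+s]=t$ delivers $\bb E[(t+s)\nu_{\sigma^2}(t+s)]=t\bigl(1-e^{-|t|^2/(2\sigma^2)}\bigr)=t\zeta_{\sigma^2}(t)$, as claimed. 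The only places care is needed are the algebra $\sigma^2-1/2=\tfrac{2\sigma^2-1}{2}$ used to simplify $\tfrac{(2\sigma^2-1)^2}{\sigma^2-1/2}$, and the condition $\sigma^2>1/2$ ensuring the convolution integral converges.
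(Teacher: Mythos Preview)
Your proposal is correct and follows essentially the same route as the paper: both proofs hinge on the Gaussian convolution identity (completing the square) for $\xi_{\sigma^2}$, then derive (2), (3), (5) as corollaries, handle (4) by recognizing the product of Gaussians as a rescaled Gaussian, and obtain (6) by the same quadratic completion (the paper splits $\bb E[(t+s)\xi_{\sigma^2-1/2}(t+s)]$ into the $t$- and $s$-weighted pieces rather than substituting $u=t+s$, but the computation is identical). Your diagnosis of the typo in identity (4) is also correct: the paper's own proof computes $\bb E_{s\sim\mc{CN}(0,1)}[|s|^2\zeta_{\sigma^2}(s)]$, not $|t|^2$.
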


\begin{proof}
Let $s_r = \Re(s)$, $s_i = \Im(s)$ and $t_r = \Re(t)$, $t_i =\Im(t)$, by definition, we observe
\begin{align*}
&\bb E_{ s \sim \mc {CN}(0,1) } \brac{ \xi_{\sigma^2} (t+s) } \\
=\;& 	\frac{1}{2\pi \sigma^2} \frac{1}{\pi} \int_s \exp \paren{ - \frac{ \abs{t+s}^2 }{2\sigma^2} } \exp\paren{ - \abs{s}^2 } ds \\
=\;& \frac{1}{2 \pi^2 \sigma^2} \int_{s_r = - \infty}^{+\infty} \exp \paren{ - \frac{  (s_r+ t_r)^2 }{2\sigma^2 }   - s_r^2 }  ds_r \int_{s_i = - \infty}^{+\infty} \exp \paren{ - \frac{  (s_i+ t_i)^2 }{2\sigma^2 }   - s_i^2 }  ds_i \\
=\;& \frac{1}{2\pi \paren{ \sigma^2 + 1/2 } } \exp\paren{ - \frac{\abs{t}^2}{ 2\paren{\sigma^2+1/2 } } } \;=\; \xi_{ \sigma^2+ \frac{1}{2} }(t).
\end{align*}
Thus, by definition of $\eta_{\sigma^2}$ and $\zeta_{\sigma^2}$, we have
\begin{align*}
	\bb E_{s\sim \mc {CN}(0,1)} \brac{ \eta_{\sigma^2}(t+s) } = 1 - 2\pi \sigma^2 \bb E_{s \sim \mc {CN}(0,1)} \brac{ \xi_{\sigma^2 - 1/2}(t+s) } = 1 - 2\pi \sigma^2 \xi_{\sigma^2}(t) = \zeta_{\sigma^2}(t).
\end{align*}
For $\bb E_{ t \sim \mc {CN}(0,1) } \brac{ \zeta_{\sigma^2} (t) }$, we have
\begin{align*}
   \bb E_{ t \sim \mc {CN}(0,1) } \brac{ \zeta_{\sigma^2} (t) }\;&=\; 1 - 2 \pi \sigma^2 \bb E_{t \sim \mc {CN}(0,1)}  \brac { \xi_{\sigma^2}(t) } \\
   \;&=\; 1 - \bb E_{t \sim \mc {CN}(0,1)} \brac{  \exp\paren{ - \frac{\abs{t}^2}{2\sigma^2} } } \\
   \;&=\; 1 - \frac{2\sigma^2}{2\sigma^2+1} = \frac{1}{1+ 2\sigma^2}.
\end{align*}
For $\bb E_{ t \sim \mc {CN}(0,1) } \brac{  \abs{t}^2 \zeta_{\sigma^2}(t) }$, we observe
\begin{align*}
   \bb E_{ t \sim \mc {CN}(0,1) } \brac{  \abs{t}^2 \zeta_{\sigma^2}(t) } \;&=\; \frac{1}{\pi} \int_t \abs{t}^2 \brac{ 1 - \exp\paren{ - \frac{\abs{t}^2}{2\sigma^2} } }	 \exp\paren{ -\abs{t}^2} dt \\
   \;&=\; \bb E_{t \sim \mc {CN}(0,1)} \brac{\abs{t}^2}  - \frac{1}{\pi} \int_t \abs{t}^2 \exp\paren{ - \frac{ 2\sigma^2 +1  }{2\sigma^2} \abs{t}^2  }  dt \\
   \;&=\; 1 -  \frac{2 \sigma^2 }{2\sigma^2+1 } \bb E_{t \sim \mc {CN}\paren{ 0, \frac{2\sigma^2}{2\sigma^2+1} }  } \bb E\brac{ \abs{t}^2 }  \\
   \;&=\;  1 - \paren{\frac{2\sigma^2}{ 2 \sigma^2+1 } }^2 \; = \; \frac{ 4\sigma^2 +1 }{ \paren{2\sigma^2 +1 }^2 }.
\end{align*}
In addition, by using the fact that $\bb E_{s \sim \mc {CN}(0,1)} \brac{ \xi_{\sigma^2}(t+s) } = \xi_{ \sigma^2 + \frac{1}{2} }(t)$, we have
\begin{align*}
   \bb E_{t\sim \mc {CN}(0,2) }\brac{ \eta_{\sigma^2}(t) }	
      \;&=\; \bb E_{t_1, t_2 \sim_{i.i.d.} \mc {CN}(0,1) } \brac{  \eta_{\sigma^2}(t_1+t_2) } \;&=\; \bb E_{t_1 \sim \mc {CN}(0,1)}\brac{ \zeta_{\sigma^2}(t_1) } \;&=\; \frac{1}{1+2\sigma^2}.
\end{align*}
For the last equality, first notice that
\begin{align*}
   \bb E_{s \sim \mc {CN}(0,1)} 	\brac{ s  \xi_{\sigma^2}(t+s) } \;&=\; \frac{1}{\pi} \int_s s \frac{1}{2\pi \sigma^2} \exp\paren{ - \frac{ \abs{t+s}^2 }{2\sigma^2} }  \exp\paren{ - \abs{s}^2 } ds \\
   \;&=\; \frac{1}{ 2\pi^2 \sigma^2 } \exp \paren{ - \frac{ \abs{t}^2 }{ 1+ 2\sigma^2 } } \int_s s \exp\paren{ - \frac{1+ 2\sigma^2 }{2\sigma^2} \abs{ s+ \frac{ t}{ 1+ 2\sigma^2 } }^2  }  ds \\
   \;&=\; \frac{1}{ 2\pi^2 \sigma^2 } \exp \paren{ - \frac{ \abs{t}^2 }{ 1+ 2\sigma^2 } } \times  2 \pi \frac{\sigma^2 }{ 1 + 2\sigma^2}  \times \frac{-t}{1+2\sigma^2} \\
   \;&=\; \frac{-t }{ \pi \paren{1 + 2\sigma^2}^2 } \exp \paren{  - \frac{\abs{t}^2 }{ 1+ 2\sigma^2 } } \;=\; \frac{-t}{1+2\sigma^2} \xi_{\sigma^2+\frac{1}{2}}(t).
\end{align*}
Therefore, we have
\begin{align*}
   \bb E_{s \sim \mc {CN}(0,1)} \brac{ (t+s) \xi_{\sigma^2-\frac{1}{2}}(t+s) } \;&=\; t \bb E_{s \sim \mc {CN}(0,1)} \brac{ \xi_{\sigma^2-\frac{1}{2}}(t+s) }	+ \bb E_{s \sim \mc {CN}(0,1)} \brac{ s \xi_{\sigma^2-\frac{1}{2}}(t+s) } \\
   \;&=\; t \xi_{\sigma^2}(t) - \frac{t}{2\sigma^2} \xi_{\sigma^2}(t) \; = \; \frac{ 2\sigma^2-1 }{2\sigma^2} t\xi_{\sigma^2}(t).
\end{align*}
Using the result above, we observe
\begin{align*}
   \bb E_{ s \sim \mc {CN}(0,1) }\brac{ (t+s) \nu_{\sigma^2}(t+s)  } \;&=\; t - \frac{4 \pi \sigma^4 }{2\sigma^2 -1}   \bb E_{ s \sim \mc {CN}(0,1) }\brac{ (t+s) \xi_{\sigma^2 - \frac{1}{2} } (t+s)  } \\
  \;&=\; t\paren{ 1 - 2\pi \sigma^2 \xi_{\sigma^2}(t) } \;=\;t \zeta_{\sigma^2}(t).  
\end{align*}

\end{proof}

\begin{lemma}\label{lem:M-H-expectation}
Let $\mb g \sim \mc {CN}(\mb 0,\mb I)$, and $\mb M(\mb g)$, $\mb H(\mb g)$ be defined as \eqref{eqn:M-def} and \eqref{eqn:H}, we have
\begin{align*}
 \bb E_{\mb g}\brac{\mb M (\mb g) } = \mb P_{\mb x^\perp} + \frac{1+4\sigma^2 }{ 1+2\sigma^2 } \mb x \mb x^* ,\qquad 
 \bb E_{\mb g}\brac{\mb H (\mb g) } = \mb P_{\mb x^\perp}.
\end{align*}
\end{lemma}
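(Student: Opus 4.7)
Writing $\mb C_{\mb g}\mb R_{[1:n]}^\top = [\mb g_1,\dots,\mb g_m]^*$, each $\mb g_k \in \bb C^n$ is a (shifted) sub-vector of the i.i.d.\ complex Gaussian vector $\mb g$, hence marginally $\mb g_k \sim \mc{CN}(\mb 0,\mb I_n)$ with $\mb g_k^*\mb x = [\mb g\conv\mb x]_k$ in distribution. By linearity of expectation,
\[
\bb E[\mb M(\mb g)] \;=\; \frac{2\sigma^2+1}{m}\sum_{k=1}^m \bb E\brac{\zeta_{\sigma^2}(\mb g_k^*\mb x)\,\mb g_k\mb g_k^*} \;=\; (2\sigma^2+1)\,\bb E_{\mb h\sim\mc{CN}(\mb 0,\mb I_n)}\brac{\zeta_{\sigma^2}(\mb h^*\mb x)\,\mb h\mb h^*},
\]
so the problem reduces to evaluating a single Gaussian integral in $\bb C^n$.

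The main step is a standard orthogonal decomposition of $\mb h$. Assuming $\norm{\mb x}{}=1$ (as in \Cref{app:main}), write $\mb h = \alpha\mb x + \mb w$ where $\alpha \doteq \mb x^*\mb h \sim \mc{CN}(0,1)$ and $\mb w \doteq \mb P_{\mb x^\perp}\mb h$. Because $\mb h$ is jointly Gaussian and $\bb E[\bar\alpha\,\mb w] = \mb P_{\mb x^\perp}\bb E[\mb h\mb h^*]\mb x = \mb 0$, the variables $\alpha$ and $\mb w$ are independent, with $\bb E[\mb w]=\mb 0$ and $\bb E[\mb w\mb w^*]=\mb P_{\mb x^\perp}$. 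Expanding
\[
\mb h\mb h^* \;=\; \abs{\alpha}^2 \mb x\mb x^* + \alpha\,\mb x\mb w^* + \bar\alpha\,\mb w\mb x^* + \mb w\mb w^*,
\]
and noting that $\zeta_{\sigma^2}(\mb h^*\mb x) = \zeta_{\sigma^2}(\bar\alpha)$ depends only on $\alpha$ (since $\zeta_{\sigma^2}$ is a function of $|\,\cdot\,|$), the two cross terms vanish by independence and $\bb E[\mb w]=\mb 0$. Hence
\[
\bb E\brac{\zeta_{\sigma^2}(\mb h^*\mb x)\,\mb h\mb h^*} \;=\; \bb E\brac{\zeta_{\sigma^2}(\alpha)\,\abs{\alpha}^2}\,\mb x\mb x^* + \bb E\brac{\zeta_{\sigma^2}(\alpha)}\,\mb P_{\mb x^\perp}.
\]

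At this point the two scalar expectations are precisely the third and fourth identities in Lemma~\ref{lem:xi-eta-zeta-expectation}: $\bb E[\zeta_{\sigma^2}(\alpha)] = 1/(2\sigma^2+1)$ and $\bb E[\abs{\alpha}^2\zeta_{\sigma^2}(\alpha)] = (4\sigma^2+1)/(2\sigma^2+1)^2$. Substituting and multiplying by $2\sigma^2+1$ yields
\[
\bb E[\mb M(\mb g)] \;=\; \frac{1+4\sigma^2}{1+2\sigma^2}\,\mb x\mb x^* + \mb P_{\mb x^\perp},
\]
as claimed. The formula for $\bb E[\mb H(\mb g)]$ follows immediately by sandwiching: $\mb P_{\mb x^\perp}\mb x\mb x^*\mb P_{\mb x^\perp}=\mb 0$ and $\mb P_{\mb x^\perp}^2=\mb P_{\mb x^\perp}$, so $\bb E[\mb H] = \mb P_{\mb x^\perp}\bb E[\mb M]\mb P_{\mb x^\perp} = \mb P_{\mb x^\perp}$.

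No step here is a serious obstacle; the only thing to be careful about is that although the rows $\{\mb g_k\}$ of the partial circulant matrix are statistically dependent, each one is marginally standard complex Gaussian on $\bb C^n$, which is all that is needed for the expectation computation. The rest is an application of the independence of $\alpha$ and $\mb w$ in the Gaussian decomposition, plus the scalar identities already established in Lemma~\ref{lem:xi-eta-zeta-expectation}.
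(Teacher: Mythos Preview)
Your proposal is correct and follows essentially the same approach as the paper: decompose each row $\mb g_k$ (marginally $\mc{CN}(\mb 0,\mb I_n)$) into its component along $\mb x$ and its orthogonal complement, use the independence of these Gaussian components to kill the cross terms, and then invoke the scalar identities from Lemma~\ref{lem:xi-eta-zeta-expectation} for $\bb E[\zeta_{\sigma^2}(\alpha)]$ and $\bb E[|\alpha|^2\zeta_{\sigma^2}(\alpha)]$. Your explicit remark that only the \emph{marginal} law of each $\mb g_k$ matters (despite the dependence across $k$) is exactly the point the paper uses implicitly.
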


\begin{proof}
By Lemma \ref{lem:xi-eta-zeta-expectation} and suppose $ \mb C_{\mb g} \mb R_{[1:n]}^\top  = \begin{bmatrix}
   \mb g_1^* \\ \vdots \\ \mb g_m^*	
 \end{bmatrix}
  $ , we observe
\begin{align*}
  \bb E\brac{\mb M} \;&=\; \frac{2\sigma^2+1 }{m} \sum_{k=1}^m \bb E\brac{ \zeta_{\sigma^2 }( \mb g_k^* \mb x ) \mb g_k \mb g_k^*  } \\
  \;&=\; \frac{2\sigma^2+1}{m} \sum_{k=1}^m \Brac{\bb E\brac{ \zeta_{\sigma^2 }( \mb g_k^* \mb x )   } \bb E\brac{ \mb P_{\mb x^\perp} \mb g_k \mb g_k^*  \mb P_{\mb x^\perp} } + \bb E\brac{  \zeta_{\sigma^2 }( \mb g_k^* \mb x ) \mb P_{\mb x} \mb g_k \mb g_k^* \mb P_{\mb x} } } \\
  \;&=\; \mb P_{\mb x^\perp} +  \frac{2\sigma^2+1}{m} \mb x \mb x^*  \sum_{k=1}^m \bb E \brac{ \zeta_{\sigma^2}\paren{ \mb g_k^* \mb x  } \abs{\mb g_k^* \mb x }^2 } \\
  \;&=\;\mb P_{\mb x^\perp} + \frac{4\sigma^2+1}{ 2\sigma^2 +1 } \mb x \mb x^* .
\end{align*}
Thus, we have
\begin{align*}
   \bb E\brac{\mb H } \;=\; \mb P_{\mb x^\perp} \bb E\brac{\mb M} \mb P_{\mb x^\perp} \;=\; \mb P_{\mb x^\perp} \brac{\mb P_{\mb x^\perp} + \frac{4\sigma^2+1}{ 2\sigma^2 +1 } \mb x \mb x^*} \mb P_{\mb x^\perp} \;=\; \mb P_{\mb x^\perp}
\end{align*}

\end{proof}

\begin{lemma}\label{lem:moments-bound-1}
Let $\mb g \sim \mc {CN}(\mb 0,\mb I)$ and $\wt{\mb g}\sim \mc {CN}(\mb 0, 2\mb I)$, for any positive integer $p \geq 1$, we have
\begin{align*}
    \norm{ 1- \frac{1+2\sigma^2}{ m } \innerprod{\mb 1}{ \zeta_{\sigma^2}\paren{ \mb g \conv \mb x }  } }{L^p} \;&\leq\; \frac{ 3}{ \sqrt{m}}  \frac{\paren{ 2\sigma^2+1 } }{ \sigma } \norm{\mb C_{\mb x}}{}  \sqrt{p}, \\
    \norm{ 1- \frac{1+2\sigma^2}{ m } \innerprod{\mb 1}{ \eta_{\sigma^2}\paren{\wt{\mb g} \conv \mb x }  } }{L^p} \;&\leq\;  \frac{3}{\sqrt{m}} \frac{ \sigma^2 \paren{ 2\sigma^2 +1 }  }{ \paren{\sigma^2 - \frac{1}{2} }^{3/2}  } \norm{\mb C_{\mb x}}{}   \sqrt{p}.
\end{align*}

\end{lemma}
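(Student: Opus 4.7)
}

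The plan is to recognize each of the two quantities as a Lipschitz function of a Gaussian vector whose expectation is exactly $1$, and then invoke Gaussian concentration plus the standard moment-from-tail bound for sub-Gaussian variables. Define
\[
   h_1(\mb g) \;\doteq\; \tfrac{1+2\sigma^2}{m}\innerprod{\mb 1}{\zeta_{\sigma^2}(\mb g\conv\mb x)}, \qquad
   h_2(\wt{\mb g}) \;\doteq\; \tfrac{1+2\sigma^2}{m}\innerprod{\mb 1}{\eta_{\sigma^2}(\wt{\mb g}\conv\mb x)}.
\]
Using $[\mb g\conv\mb x]_k\sim\mc{CN}(0,\norm{\mb x}{}^2)=\mc{CN}(0,1)$ and $[\wt{\mb g}\conv\mb x]_k\sim\mc{CN}(0,2)$, Lemma \ref{lem:xi-eta-zeta-expectation} gives $\bb E[\zeta_{\sigma^2}([\mb g\conv\mb x]_k)]=\bb E[\eta_{\sigma^2}([\wt{\mb g}\conv\mb x]_k)]=1/(1+2\sigma^2)$, so $\bb E[h_1(\mb g)]=\bb E[h_2(\wt{\mb g})]=1$. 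This reduces the statement to controlling $\norm{h_i-\bb E[h_i]}{L^p}$.

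The first key step is bounding the (real) Lipschitz constants of $\zeta_{\sigma^2}$ and $\eta_{\sigma^2}$, viewed as real-valued functions on $\bb C\simeq\bb R^2$. Since $\zeta_{\sigma^2}(t)=1-\exp(-|t|^2/(2\sigma^2))$, its gradient magnitude equals $|t|\sigma^{-2}\exp(-|t|^2/(2\sigma^2))$, which is maximized at $|t|=\sigma$ and yields $L_\zeta\leq(\sigma\sqrt{e})^{-1}$. An analogous computation for $\eta_{\sigma^2}(t)=1-\tfrac{\sigma^2}{\sigma^2-1/2}\exp(-|t|^2/(2(\sigma^2-1/2)))$ yields $L_\eta\leq\sigma^2(\sigma^2-1/2)^{-3/2}e^{-1/2}$. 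Next, since $|\sum_k\zeta_{\sigma^2}(y_k)-\sum_k\zeta_{\sigma^2}(y_k')|\leq L_\zeta\sqrt{m}\norm{\mb y-\mb y'}{}$ and $\norm{\mb C_{\mb x}(\mb g-\mb g')}{}\leq\norm{\mb C_{\mb x}}{}\norm{\mb g-\mb g'}{}$, one obtains
\[
   L_{h_1} \;\leq\; \tfrac{(1+2\sigma^2)}{\sigma\sqrt{e}\sqrt{m}}\norm{\mb C_{\mb x}}{}, \qquad
   L_{h_2} \;\leq\; \tfrac{(1+2\sigma^2)\sigma^2}{(\sigma^2-1/2)^{3/2}\sqrt{e}\sqrt{m}}\norm{\mb C_{\mb x}}{}.
\]

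The second step is to apply Gaussian concentration (Lemma \ref{lem:gauss-concentration}) to obtain sub-Gaussian tails for $h_1(\mb g)-1$ and $h_2(\wt{\mb g})-1$. For $h_2$, since $\wt{\mb g}\sim\mc{CN}(\mb 0,2\mb I)$ is not standard, write $\wt{\mb g}=\sqrt{2}\mb g_0$ with $\mb g_0\sim\mc{CN}(\mb 0,\mb I)$ and observe that $\mb g_0\mapsto h_2(\sqrt{2}\mb g_0)$ is $\sqrt{2}L_{h_2}$-Lipschitz, so its sub-Gaussian parameter is $\sqrt{2}L_{h_2}$. Invoking Lemma \ref{lem:sub-Gaussian} gives, for any integer $p\geq 2$,
\[
   \norm{h_1(\mb g)-1}{L^p}\;\leq\; L_{h_1}e^{1/e}\sqrt{p}, \qquad \norm{h_2(\wt{\mb g})-1}{L^p}\;\leq\; \sqrt{2}L_{h_2}e^{1/e}\sqrt{p}.
\]
Since $e^{1/e-1/2}\approx 0.876<3$ and $\sqrt{2}e^{1/e-1/2}\approx 1.24<3$, the stated constants are verified; the $p=1$ case follows from Jensen ($\norm{\cdot}{L^1}\leq\norm{\cdot}{L^2}$).

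There is no serious obstacle here: the argument is a textbook Gaussian Lipschitz concentration calculation. The only care needed is tracking (i) the factor of $\sqrt{m}$ that appears when passing from a pointwise Lipschitz function to a sum of $m$ of them via Cauchy--Schwarz, (ii) the operator norm $\norm{\mb C_{\mb x}}{}$ that appears when transporting the Lipschitz bound through cyclic convolution, and (iii) the $\sqrt{2}$ rescaling needed to handle $\mc{CN}(\mb 0,2\mb I)$ in the second inequality. All of these are routine once the two Lipschitz constants $L_\zeta$ and $L_\eta$ have been computed.
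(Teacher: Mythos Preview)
Your proposal is correct and follows essentially the same route as the paper: both arguments center the quantity using Lemma \ref{lem:xi-eta-zeta-expectation}, bound the Lipschitz constant of the exponential weights (the paper phrases this as ``$\exp(-x^2/(2\sigma^2))$ is $\sigma^{-1}e^{-1/2}$-Lipschitz''), push through $\sqrt{m}$ via Cauchy--Schwarz and $\norm{\mb C_{\mb x}}{}$ via the convolution, invoke Gaussian concentration (Lemma \ref{lem:gauss-concentration}), and convert tails to moments via Lemma \ref{lem:sub-Gaussian}. You are somewhat more explicit than the paper in computing $L_\eta$, in handling the $\sqrt{2}$ rescaling needed for $\wt{\mb g}\sim\mc{CN}(\mb 0,2\mb I)$, and in covering $p=1$; the paper simply writes ``similarly'' for the second bound.
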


\begin{proof}
	Let $h(\mb g) = \frac{1+2\sigma^2}{m} \innerprod{ \mb 1 }{  \zeta_{\sigma^2}\paren{ \mb g \conv \mb x   } } -1 $ and let $h'(\wt{\mb g}) = \frac{1}{m} \innerprod{ \mb 1 }{  \eta_{\sigma^2}\paren{ \abs{ \wt{\mb g} \conv \mb x }  } } $, by Lemma \ref{lem:xi-eta-zeta-expectation}, we know that
	\begin{align*}
	   \bb E_{\mb g}\brac{ h(\mb g)  } \;=\; 0,\qquad \bb E_{\wt{\mb g}}\brac{ h'(\wt{\mb g})  } \;=\; 0.
	\end{align*}
    And for an independent copy $\mb g'$  of $\mb g$, we have
    \begin{align*}
       \abs{ h(\mb g) - h(\mb g') } \;&\leq\; \frac{1+2\sigma^2}{m} \abs{ \innerprod{ \mb 1 }{ \exp\paren{  - \frac{1}{2\sigma^2} \abs{\mb g \conv \mb x }^2  } - \exp\paren{  - \frac{1}{2\sigma^2} \abs{\mb g' \conv \mb x }^2  } }  } \\
       \;&\leq\; \frac{1+2\sigma^2 }{m} \norm{ \exp\paren{  - \frac{1}{2\sigma^2} \abs{\mb g \conv \mb x }^2  } - \exp\paren{  - \frac{1}{2\sigma^2} \abs{\mb g' \conv \mb x }^2  } }{1} \\
       \;&\leq\; \frac{1+2\sigma^2}{ \sqrt{m} \sigma } \norm{ \mb C_{\mb x}(\mb g - \mb g') }{} \leq \frac{ 1+2\sigma^2 }{ \sqrt{m} \sigma } \norm{\mb C_{\mb x}}{}\norm{ \mb g - \mb g' }{},
    \end{align*}
    where we used the fact that $\exp\paren{-\frac{x^2}{2\sigma^2 }}$ is $ \frac{1}{\sigma} e^{-1/2} $-Lipschitz. By applying \edited{the} Gaussian concentration inequality in Lemma \ref{lem:gauss-concentration}, we have
    \begin{align*}
      \bb P\paren{ \abs{h(\mb g)} \geq t } \;=\; \bb P \paren{ \abs{ \frac{1+2\sigma^2}{m} \innerprod{\mb 1}{  \zeta_{\sigma^2}\paren{ \abs{\mb g \conv \mb x} }  } - 1 }   \geq t }\;\leq\; \exp \paren{ - \frac{ \sigma^2 mt^2}{2 \paren{ 2\sigma^2+1 }^2 \norm{\mb C_{\mb x}}{}^2   } },
    \end{align*}
    for any scalar $t\geq 0$. Thus, we can see that $h(\mb g)$ is a centered $\frac{ \paren{\sigma^2+1}^2 \norm{\mb C_{\mb x}}{}^2 }{\sigma^2 m }$-subgaussian random variable, by Lemma \ref{lem:sub-Gaussian}, we know that for any positive $p\geq 1$
    \begin{align*}
       \norm{  h(\mb g) }{L^p} \;\leq\;  3\frac{\paren{2 \sigma^2+1 } \norm{\mb C_{\mb x}}{} }{ \sigma \sqrt{m} }  \sqrt{p},
    \end{align*}
    as desired. For $h'(\wt{\mb g})$, we can obtain the result similarly.
\end{proof}


\end{document}